\declaretheorem[name=Theorem]{theorem}
\declaretheorem[name=Lemma,numberwithin=section]{lemma}
\newcommand{\floor}[1]{\left\lfloor #1\right\rfloor}
\newcommand{\ceil}[1]{\left\lceil #1\right\rceil}
\newcommand{\ang}[1]{\langle #1\rangle}
\newcommand{\RE}{\mathbb{R}}    
\newcommand{\eps}{\varepsilon}  
\newcommand{\SoftOh}{\widetilde{O}}
\newcommand{\SoftOmega}{\widetilde{\Omega}}
\newcommand{\OO}[1]{O\kern-2pt\left(#1\right)}  
\newcommand{\inv}[1]{\frac{1}{#1}}
\newcommand{\alg}{\textrm{SplitReduce}}
\newcommand{\diam}{\mathrm{diam}}
\newcommand{\radius}{\mathrm{radius}}
\newcommand{\vol}{\mathrm{vol}}
\newcommand{\area}{\mathrm{area}}
\newcommand{\conv}{\mathrm{conv}}
\newcommand{\Vor}{\mathrm{Vor}}
\newcommand{\etal}{\textit{et al.}}
\renewcommand{\P}{\kern+1pt}    
\newcommand{\N}{\kern-2pt}      
\newcommand{\NN}{\kern-4pt}     
\newcommand{\polar}[1]{\mathrm{polar}(#1)}            
\newcommand{\polarX}[2]{\mathrm{polar}_{#1}(#2)}      
\begin{document}

\title{Approximate Polytope Membership Queries%
\thanks{Preliminary results of this paper appeared in ``Approximate Polytope Membership Queries'', in
Proc. ACM Sympos. Theory Comput. (STOC), 2011, 579--586 and ``Polytope Approximation and the Mahler Volume'', in Proc. ACM-SIAM Sympos. Discrete Algorithms (SODA), 2012, 29--42.}
}

\author{%
	Sunil Arya\thanks{Research supported by the Research Grants Council of Hong Kong, China under project numbers 610108 and 16200014.}\\
		Department of Computer Science and Engineering \\
		Hong Kong University of Science and Technology \\
		Clear Water Bay, Kowloon, Hong Kong\\
		arya@cse.ust.hk \\
		\and
	Guilherme D. da Fonseca\thanks{Research supported by CNPq and FAPERJ grants.}\\
		Universit\'e Clermont Auvergne and LIMOS \\
		Clermont-Ferrand, France\\
		fonseca@isima.fr \\
		\and
	David M. Mount\thanks{Research supported by NSF grant CCF--1618866.}\\
		Department of Computer Science and \\
		Institute for Advanced Computer Studies \\
		University of Maryland \\
		College Park, Maryland 20742 \\
		mount@cs.umd.edu \\
}

\date{}

\maketitle

\begin{abstract}
In the polytope membership problem, a convex polytope $K$ in $\RE^d$ is given, and the objective is to preprocess $K$ into a data structure so that, given any query point $q \in \RE^d$, it is possible to determine efficiently whether $q \in K$. We consider this problem in an approximate setting. Given an approximation parameter $\eps$, the query can be answered either way if the distance from $q$ to $K$'s boundary is at most $\eps$ times $K$'s diameter. We assume that the dimension $d$ is fixed, and $K$ is presented as the intersection of $n$ halfspaces. Previous solutions to approximate polytope membership were based on straightforward applications of classic polytope approximation techniques by Dudley (1974) and Bentley {\etal} (1982). The former is optimal in the worst-case with respect to space, and the latter is optimal with respect to query time. 
 
We present four main results. First, we show how to combine the two above techniques to obtain a simple space-time trade-off. Second, we present an algorithm that dramatically improves this trade-off. In particular, for any constant $\alpha \ge 4$, this data structure achieves query time roughly $O\big(1/\eps^{(d-1)/\alpha}\big)$ and space roughly $O\big(1/\eps^{(d-1)(1 - \Omega(\log \alpha)/\alpha)}\big)$. We do not know whether this space bound is tight, but our third result shows that there is a convex body such that our algorithm achieves a space of at least $\Omega\big( 1/\eps^{(d-1)(1-O(\sqrt{\alpha})/\alpha} \big)$. Our fourth result shows that it is possible to reduce approximate Euclidean nearest neighbor searching to approximate polytope membership queries. Combined with the above results, this provides significant improvements to the best known space-time trade-offs for approximate nearest neighbor searching in $\RE^d$. For example, we show that it is possible to achieve a query time of roughly $O(\log n + 1/\eps^{d/4})$ with space roughly $O(n/\eps^{d/4})$, thus reducing by half the exponent in the space bound. 
\end{abstract}

\textbf{Keywords:}
Polytope membership, nearest-neighbor searching, geometric retrieval, space-time trade-offs, approximation algorithms, convex approximation, Mahler volume.

\section{Introduction} \label{sec:intro}

Convex polytopes are key structures in many areas of mathematics and computation. In this paper, we consider a fundamental search problem related to convex polytopes. Let $K$ denote a convex body in $\RE^d$, that is, a closed, convex set of bounded diameter that has a nonempty interior. We assume that $K$ is presented as the intersection of $n$ closed halfspaces. (Our results generally hold for any representation that satisfies the access primitives given at the start of Section~\ref{sec:split-reduce}.) The \emph{polytope membership problem} is that of preprocessing $K$ so that it is possible to determine efficiently whether a given query point $q \in \RE^d$ lies within $K$. Throughout, we assume that the dimension $d$ is a fixed constant that is at least $2$. 

It follows from standard results in projective duality that polytope membership is equivalent to answering halfspace emptiness queries for a set of $n$ points in $\RE^d$. In dimension $d \leq 3$, it is possible to build a data structure of linear size that can answer such queries in logarithmic time~\cite{textbook}. In higher dimensions, however, all known exact data structures with roughly linear space have a query time
of $\SoftOmega\big(n^{1-1/\floor{d/2}}\big)$%
\footnote{Throughout, we use $\SoftOh(\cdot)$ and $\SoftOmega(\cdot)$ as variants of $O(\cdot)$ and $\Omega(\cdot)$, respectively, that ignore logarithmic factors. We use ``$\lg$'' to denote base-2 logarithm.}
\cite{Mat92}, which is unacceptably high for many applications. Polytope membership is a special case of polytope intersection queries \cite{ChD87,DoK83,BaL15}. Barba and Langerman \cite{BaL15} showed that for any fixed $d$, it is possible to preprocess polytopes in $\RE^d$ so that given two such polytopes that have been translated and rotated, it can be determined whether they intersect each other in time that is logarithmic in their total combinatorial complexity. However, the preprocessing time and space grow as the combinatorial complexity of the polytope raised to the power $\floor{d/2}$.

The lack of efficient exact solutions motivates the question of whether polytope membership queries can be answered approximately. Let $\eps$ be a positive real parameter, and let $\diam(K)$ denote $K$'s diameter. Given a query point $q \in \RE^d$, an \emph{$\eps$-approximate polytope membership query} returns a positive result if $q \in K$, a negative result if the distance from $q$ to its closest point in $K$ is greater than $\eps \cdot \diam(K)$, and it may return either result otherwise (see Figure~\ref{fig:apxmembership}(a)). Polytope membership queries, both exact and approximate, arise in many application areas, such as linear-programming and ray-shooting queries~\cite{lp-Chan, lp-Chan2, lp-Ramos, lp-Mat, lp-Mat2}, nearest neighbor searching and the computation of extreme points~\cite{hull-Chan,Clarkson-ANN}, collision detection~\cite{collision2}, and machine learning~\cite{SVM}. 

\begin{figure}[htbp]
  \centerline{\includegraphics[scale=0.40]{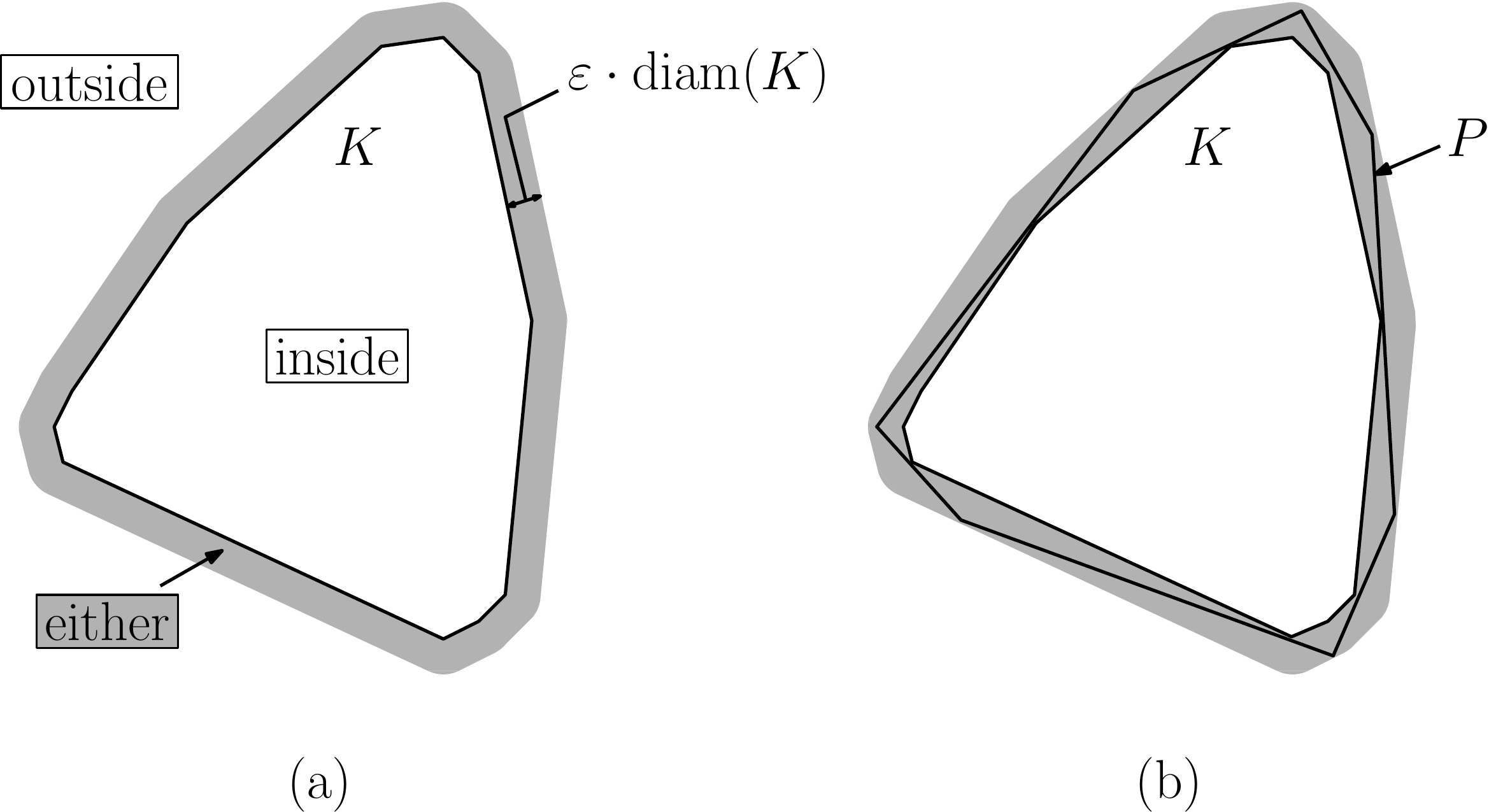}}
  \caption{Approximate polytope membership: (a) problem formulation, (b) outer $\eps$-approximation.}
  \label{fig:apxmembership}
\end{figure}

Existing solutions to approximate polytope membership queries have been based on straightforward applications of classic polytope approximation techniques. We say that a polytope $P$ is an \emph{outer $\eps$-approximation} of $K$ if $K \subseteq P$, and the Hausdorff distance between $P$ and $K$ is at most $\eps \cdot \diam(K)$ (see Figure~\ref{fig:apxmembership}(b)). An \emph{inner $\eps$-approximation} is defined similarly but with $P \subseteq K$. Dudley~\cite{Dudley} showed that there exists an outer $\eps$-approximating polytope for any bounded convex body in $\RE^d$ formed by the intersection of $O\big(1/\eps^{(d-1)/2}\big)$ halfspaces, and Bronshteyn and Ivanov~\cite{BrIv} proved an analogous bound on the number of vertices needed to obtain an inner $\eps$-approximation. Both bounds are known to be asymptotically tight in the worst case (see, e.g., \cite{Bro08}). These results have been applied to a number of problems, for example, the construction of coresets~\cite{kernel-survey}. By checking that a given query point lies within each of the halfspaces of Dudley's approximation, $\eps$-approximate polytope membership queries can be answered with space and query time of $O\big(1/\eps^{(d-1)/2}\big)$.

\medskip

The principal contribution of this paper is to show that it is possible to achieve nontrivial space-time trade-offs for approximate polytope membership. In order to motivate our methods, in Section~\ref{sec:prelim} we present a simple space-time trade-off (stated in the following theorem), which is based on a straightforward combination of the approximations of Dudley~\cite{Dudley} and Bentley~{\etal}~\cite{BFP}. Throughout, we will treat $n$ and $\eps$ as asymptotic quantities, while the dimension $d$ is a constant.

\begin{theorem}[Simple Trade-off] \label{thm:simple-trade-off}
Given a convex polytope $K$ in $\RE^d$, a positive approximation parameter $\eps$, and a real parameter $\alpha \geq 2$, there is a data structure for $\eps$-approximate polytope membership queries that achieves
\[
\hbox{Query time:~} O\left(1/\eps^{\frac{d-1}{\alpha}}\right)
\qquad
\textrm{Space:~} O\left( 1/\eps^{(d-1)\left(1 - \inv{\alpha} \right)} \right).
\]
The constant factors in the space and query time depend only on $d$ (not on $K$, $\alpha$, or $\eps$). 
\end{theorem}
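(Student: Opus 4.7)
The plan is to combine Dudley's outer $\eps$-approximation with a Bentley-style axis-aligned grid. Introduce a scale parameter $\delta = \eps^{1-2/\alpha}$, which interpolates between $\delta = 1$ at $\alpha = 2$ (a single cell, pure Dudley) and $\delta = \eps$ at $\alpha \to \infty$ (pure Bentley). Preprocessing proceeds in three steps: (i) compute Dudley's outer $\eps$-approximation $P = \bigcap_{i=1}^N H_i$, where $N = O(1/\eps^{(d-1)/2})$ and the outward normals $u_i$ form a $\sqrt{\eps}$-dense subset of $S^{d-1}$; (ii) overlay an axis-aligned grid of cell side $s = \delta \cdot \diam(K)$ on a bounding box of $K \oplus \eps\,\diam(K) \cdot B$; (iii) label each cell $C$ as \emph{inside} if $C \subseteq P$, \emph{outside} if $C \cap P = \emptyset$, and \emph{boundary} otherwise, storing in the last case the list $L(C)$ of those Dudley halfspaces whose bounding hyperplane actually crosses $C$. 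Only the boundary cells are stored explicitly, in a hash table keyed by integer coordinates; cells absent from the table default to \emph{inside} or \emph{outside} based on a single in-bounding-box test.

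To answer a query $q$, locate its cell in $O(1)$ time, return the label if the cell is \emph{inside} or \emph{outside}, and otherwise test $q$ against each halfspace in $L(C)$, returning \emph{inside} iff all succeed. Correctness follows because any $H_i \notin L(C)$ has its bounding hyperplane disjoint from $C$, so either all of $C$ lies in $H_i$ (and $q$ satisfies $H_i$ automatically) or all of $C$ lies outside $H_i$ (in which case $C \cap P = \emptyset$ and the cell was already labeled \emph{outside}).

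The analysis reduces to two counting bounds. First, the number of boundary cells is $O(1/\delta^{d-1})$, since $\partial P$ is a $(d-1)$-dimensional convex surface of total area $O(\diam(K)^{d-1})$ and each cell of side $s$ absorbs at most $O(s^{d-1})$ of this area. Second, and geometrically the key step, the number of Dudley halfspaces crossing a single cell $C$ is $O\bigl((\delta/\eps)^{(d-1)/2}\bigr)$: using the classical ``$\sqrt{\eps}$-cap'' property underlying Dudley's construction, a cell of side $s$ sitting within $O(s)$ of $\partial K$ is crossed by $\partial H_i$ only when $u_i$ lies in a spherical cap of angular radius $O(\sqrt{s/\diam(K)}) = O(\sqrt{\delta})$ about the outward normal direction at the nearest boundary point to $C$, and such a cap meets $O\bigl((\sqrt{\delta}/\sqrt{\eps})^{d-1}\bigr) = O\bigl((\delta/\eps)^{(d-1)/2}\bigr)$ points of the $\sqrt{\eps}$-dense direction set. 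Multiplying, total space is $O\bigl(1/\delta^{d-1} \cdot (\delta/\eps)^{(d-1)/2}\bigr) = O\bigl(1/\eps^{(d-1)(1-1/\alpha)}\bigr)$, and the query time is dominated by the per-cell list $O((\delta/\eps)^{(d-1)/2}) = O(1/\eps^{(d-1)/\alpha})$. Both endpoints check out: $\alpha = 2$ yields the pure Dudley bounds, and $\alpha \to \infty$ recovers the pure Bentley bounds.

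The main obstacle I anticipate is making the per-cell halfspace bound rigorous and uniform across all relevant positions of a boundary cell, including the awkward case where the cell straddles $\partial K$ itself (there, the cap radius estimate becomes $O(\sqrt{\delta + \eps}) = O(\sqrt{\delta})$ since $\delta \geq \eps$ for $\alpha \geq 2$). A secondary, routine point is the constant-time cell lookup: we use standard grid hashing on boundary cells only, to avoid paying $\Theta(1/\delta^d)$ for the ambient grid, which would otherwise exceed the claimed space bound for large $\alpha$.
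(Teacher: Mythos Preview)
Your overall structure—overlay a grid of diameter-$\delta$ cells and handle each boundary cell via Dudley—matches the paper's approach, and your choice $\delta = \eps^{1-2/\alpha}$ is correct. The gap lies in how you populate each cell: you compute a \emph{single global} Dudley approximation $P$ and then, for each boundary cell $C$, store those global halfspaces whose bounding hyperplane crosses $C$. The paper instead applies Dudley's construction \emph{locally}: it rescales each boundary cell to unit size (so the effective approximation parameter becomes $\Theta(\eps/\delta)$) and invokes Dudley's theorem afresh there, obtaining $O((\delta/\eps)^{(d-1)/2})$ halfspaces per cell directly, with no incidence counting needed.

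Your per-cell claim—that only $O((\delta/\eps)^{(d-1)/2})$ of the global Dudley hyperplanes cross a given cell—fails whenever $K$ has a vertex or any region of concentrated curvature. Take $K$ to be a hypercube and let $C$ be the grid cell of diameter $\delta$ containing a vertex $v$. Every Dudley sample point $x$ lying in the spherical orthant of $S$ associated with $v$ has $v$ itself as its nearest point on $\partial K$; the resulting supporting hyperplane passes through $v$ and hence through $C$. That is $\Theta(1/\eps^{(d-1)/2})$ hyperplanes crossing a single cell, independent of $\delta$, so your query-time bound is violated outright. The ``spherical cap of angular radius $O(\sqrt{\delta})$'' heuristic implicitly assumes a unique, slowly varying outward normal near $C$; at a vertex there is an entire solid cone of outward normals, and the relevant cap has constant angular radius rather than $O(\sqrt{\delta})$. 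The paper's local-Dudley construction sidesteps this entirely because the per-cell halfspace count is simply the number of sample points placed on the rescaled Dudley sphere, which is $O((\delta/\eps)^{(d-1)/2})$ by construction.
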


We will strengthen this trade-off significantly in Sections~\ref{sec:split-reduce} and~\ref{sec:firstbound}. We will show that it is possible to build a data structure with $O\big(1/\eps^{(d-1)/2}\big)$ space that allows polytope membership queries to be answered in roughly $O\big(1/\eps^{(d-1)/8}\big)$ time, thus reducing the exponent in the query time of Theorem~\ref{thm:simple-trade-off} (for $\alpha=2$) by $1/4$. Further, we will show that by iterating a suitable generalization of this construction, we can obtain the following trade-offs.

\begin{theorem} \label{thm:membership-ub}
Given a convex polytope $K$ in $\RE^d$, an approximation parameter $0 < \eps \le 1$, and a real constant $\alpha \geq 4$, there is a data structure for  $\eps$-approximate polytope membership queries that achieves
\[
\hbox{Query time:~} O\left(({\textstyle \log \inv{\eps}})/\eps^{\frac{d-1}{\alpha}}\right)
\qquad
\textrm{Space:~} \OO{1/\eps^{(d-1)\left(1 - \frac{2\floor{\lg \alpha} - 2}{\alpha} \right)}}.
\]
The constant factors in the space and query time depend only on $d$ and $\alpha$ (not on $K$ or $\eps$).
\end{theorem}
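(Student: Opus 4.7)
The plan is to iterate, with careful parameter control, a generalization of the SplitReduce construction developed in Sections~\ref{sec:split-reduce} and~\ref{sec:firstbound}. Recall that SplitReduce partitions the boundary of $K$ into caps of some width $w$ and uses a coarse ``locate'' structure to route a query to the appropriate cap, after which membership is tested in the sub-polytope associated with that cap. In the base construction the per-cap test is done by a brute-force scan of the halfspaces defining the cap; this realizes the case $\alpha = 8$ of Theorem~\ref{thm:membership-ub}, with query time $\SoftOh\bigl(1/\eps^{(d-1)/8}\bigr)$ and space $O\bigl(1/\eps^{(d-1)/2}\bigr)$. The observation enabling iteration is that the per-cap test need not be brute force: any approximate polytope membership data structure, invoked at an appropriately coarser scale, can be used in its place.

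My first step is to repackage Section~\ref{sec:firstbound} as a parameterized reduction. Given any data structure for $\eps'$-approximate polytope membership with space $S(\eps')$ and query time $T(\eps')$, and given a cap-width parameter $w \gg \eps$, the reduction produces a new data structure whose query time is roughly $\SoftOh\bigl(1/w^{(d-1)/2}\bigr) + T(c\eps/w)$ and whose space is roughly $\SoftOh\bigl(1/w^{(d-1)/2}\bigr) \cdot S(c\eps/w)$ for a constant $c = c(d)$. The ingredients needed (cap-covering bounds, per-cap complexity bounds, and stability of the Hausdorff error under the decomposition) are exactly those used for the base case, adapted to permit a non-brute-force inner call.

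Theorem~\ref{thm:membership-ub} is then obtained by applying the reduction $\Theta(\lg \alpha)$ times, each time choosing $w$ so as to halve the exponent of the query time. Starting from a suitable base case and iterating until the query-time exponent reaches $(d-1)/\alpha$, the space exponent obeys a geometric-series recurrence whose solution matches $(d-1)\bigl(1 - (2\lfloor \lg \alpha\rfloor - 2)/\alpha\bigr)$, with $\lfloor \lg \alpha\rfloor$ counting the number of query-time doublings performed. The extra $\lg(1/\eps)$ factor in the query time absorbs the $O(\lg(1/\eps))$ locate operations performed over all levels.

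The main obstacle will be the bookkeeping across the $\Theta(\lg \alpha)$ recursive levels. Each level operates at its own approximation scale $\Theta(w) \gg \eps$, introduces its own Hausdorff error, and contributes a dimension-dependent multiplicative constant. The analysis must verify that (i) the cumulative Hausdorff error across all levels remains $O(\eps \cdot \diam(K))$, so that the final structure correctly answers the original $\eps$-approximate query; (ii) the sub-polytope associated with each cap has sufficiently controlled combinatorial complexity to feed the inner data structure with the claimed parameters (this is where $\Theta(\lg \alpha)$ levels suffice instead of $\Theta(\alpha)$, and where the $O(\lg \alpha)/\alpha$ savings in the space exponent originates); and (iii) the product of per-level constants fits into the $O(\cdot)$ constants permitted to depend on $d$ and $\alpha$ in the theorem statement.
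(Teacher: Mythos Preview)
Your plan has a genuine gap: the recursion you propose is not supported by the ingredients you cite, and the missing ingredient is precisely the paper's main technical contribution.

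First, a mischaracterization: \alg\ does not ``partition the boundary of $K$ into caps of some width $w$.'' It is a quadtree decomposition of ambient space; the leaves that meet $\partial K$ number $\Theta(1/w^{d-1})$ at scale $w$, not $\Theta(1/w^{(d-1)/2})$. So the space recurrence you write, $S_{\text{new}}\approx (1/w^{(d-1)/2})\cdot S(c\eps/w)$, is unjustified: a naive pass over all boundary leaves would give a factor $1/w^{d-1}$, and iterating that just reproduces the simple trade-off of Theorem~\ref{thm:simple-trade-off}, not the $\Omega(\log\alpha)/\alpha$ improvement. The tools of Section~\ref{sec:firstbound} alone (Dudley's bound, Lemmas~\ref{lem:restricted-dudley}--\ref{lem:firstbound}) yield only the $\alpha=4$ point; they do not give the $\alpha=8$ base case you invoke, nor the factor $1/w^{(d-1)/2}$ in your reduction.

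What the paper actually does is subtler. At the $\sqrt{\eps}$ level of the quadtree it separates leaves into ``good'' cells $L_1$ (already $\le t$ halfspaces) and ``bad'' cells $L_2$ (need further subdivision). The good cells contribute $O(1/\eps^{(d-1)/2})$ in total by the Dudley-type argument. The crucial step is bounding $|L_2|$: Lemma~\ref{lem:aux2}(ii) shows $|L_2|=O\bigl(((1+\log t)/t)^2\cdot(1/\eps)^{(d-1)/2}\bigr)$, and this is where the Area-Product Bound (Lemma~\ref{lem:dual-basic}), proved via the Mahler volume, is essential. A bad cell has small boundary area but, by Mahler, a correspondingly large Voronoi patch on the Dudley sphere; a packing argument on the sphere then limits how many bad cells there can be. Only with this bound in hand does the induction of Lemma~\ref{lem:trade-off-ub} go through: one recurses on the few bad cells (now at scale $\sqrt{\eps}$, so effectively with $\alpha$ replaced by $\alpha/2$), and the product $|L_2|\cdot(\text{per-cell space})$ gives the stated exponent. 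Your outline omits dual caps, Voronoi patches, and the Mahler volume entirely; without them the recursion cannot be closed.
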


The above space bound is a simplification, and the exact bound is given in Lemma~\ref{lem:trade-off-ub}. Both bounds are piecewise linear in $1/\alpha$ (with breakpoints at powers of two), but the bounds of Lemma~\ref{lem:trade-off-ub} are continuous as a function of $\alpha$. The resulting space-time trade-off is illustrated below in Figure~\ref{fig:trade-offs}(a). (The plot reflects the more accurate bounds.) 

The above theorem is intentionally presented in a purely existential form. This is because our construction algorithm assumes the existence of a procedure that computes an $\eps$-approximating polytope whose number of bounding hyperplanes is at most a constant factor larger than optimal. Unfortunately, we know of no efficient solution to this problem. In Lemma~\ref{lem:preproc-time} we will show that, if the input polytope is expressed as the intersection of $n$ halfspaces, it is possible to build such a structure in time $O\big(n + 1/\eps^{O(1)}\big)$, such that the space and query times of the above theorem increase by an additional factor of $O(\log \inv{\eps})$.

Note that, in contrast to many complexity bounds in the area of convex approximation, which hold only in the limit as $\eps$ approaches zero (see, e.g.,~\cite{Gru93,Bor00}), Theorems~\ref{thm:simple-trade-off} and~\ref{thm:membership-ub} hold for any positive $\eps \le 1$. The data structure of Theorem~\ref{thm:membership-ub} is quite simple. It is based on a quadtree subdivision of space in which each cell is repeatedly subdivided until the combinatorial complexity of the approximating polytope within the cell is small enough to achieve the desired query time. 

We do not know whether the upper bounds presented in Theorem~\ref{thm:membership-ub} are tight for our algorithm. In Section~\ref{sec:lb}, we establish the following lower bound on the trade-off achieved by this algorithm.

\begin{theorem} \label{thm:lb}
In any fixed dimension $d \ge 2$ and for any constant $\alpha \ge 4$, there exists a polytope such that for all sufficiently small positive $\eps$, the data structure described in Theorem~\ref{thm:membership-ub} when generated to achieve query time $O\big(1/\eps^{(d-1)/\alpha}\big)$ has space
\[
	\Omega\left(1/\eps^{(d-1) \left(1 - \frac{2\sqrt{2\alpha} - 3}{\alpha}\right) - 1} \right).
\]
\end{theorem}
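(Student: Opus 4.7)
The plan is to exhibit as the hard instance the Euclidean unit ball $B \subset \RE^d$ (possibly with minor modifications) and to analyze the space consumed by the particular quadtree-based data structure of Theorem~\ref{thm:membership-ub} on $B$. Because the claim is algorithm-specific rather than information-theoretic, the argument must track what this construction actually does on $B$, not merely what an optimal data structure could do.

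The geometric workhorse would be a local Dudley lower bound. Any $\eps$-approximating polytope of the unit ball uses $\Theta(1/\eps^{(d-1)/2})$ facets, each of which covers a surface patch of $\partial B$ of $(d-1)$-area $O(\eps^{(d-1)/2})$ (because the footprint on $\partial B$ of a hyperplane within distance $\eps$ of the ball has radius $O(\sqrt{\eps})$). Consequently, if $c$ is a quadtree cell of side $r = \Omega(\sqrt{\eps})$ that meets $\partial B$, then any $\eps$-approximation of $B$ restricted to $c$ must have local complexity $\Omega((r^2/\eps)^{(d-1)/2})$; moreover, since $\partial B$ has $(d-1)$-area $\Theta(1)$, the number of cells of side $r$ meeting $\partial B$ is $\Omega(1/r^{d-1})$. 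This local counting principle is the engine of the proof, to be applied at every scale generated by the algorithm.

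Concretely, I would (i) verify that the polytope stored by the data structure is a valid $\eps$-approximation of $B$ when restricted to each cell, so that the local Dudley bound applies to what the algorithm stores; (ii) read off from the construction of Theorem~\ref{thm:membership-ub} the schedule of cell sides $r_1 > r_2 > \cdots > r_k$ forced on $B$ by the target query complexity $O(1/\eps^{(d-1)/\alpha})$; (iii) sum across the $\Theta(\log \alpha)$ levels of recursion the product $\Omega(1/r_i^{d-1}) \cdot \Omega((r_i^2/\eps)^{(d-1)/2})$ of cells and per-cell facets; and (iv) extract the claimed lower-bound exponent $1 - (2\sqrt{2\alpha} - 3)/\alpha$, with the additive $-1$ in the exponent absorbing boundary effects at the finest scale $r \approx \sqrt{\eps}$ where the local Dudley bound saturates.

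I expect the main obstacle to be step (ii) together with the optimization in (iv). The $\sqrt{\alpha}$ dependence appearing in the lower bound, in contrast to the $\log \alpha$ dependence of the matching upper bound, suggests that the dominant contribution to the sum comes from an intermediate level of the recursion at which an arithmetic-mean/geometric-mean inequality becomes sharp, and confirming that this imbalance is intrinsic to the algorithm rather than slack in the bounding argument is the delicate part of the proof. Should the ball turn out to be insufficient for large $\alpha$---since for uniformly bounded curvature the plain quadtree already attains space $\Theta(1/\eps^{(d-1)/2})$ regardless of the target query time, which may be too small to match the claimed bound when $\alpha$ is large---I would fall back on an inhomogeneous body whose curvature is concentrated near a lower-dimensional feature, forcing the construction to allocate cells at a wide range of scales; the local Dudley counting argument would carry over essentially unchanged.
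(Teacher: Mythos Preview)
Your primary instance, the unit ball, does not work, and you essentially concede this yourself: on a body of uniformly bounded curvature, {\alg} terminates at a single uniform scale and uses total space $\Theta(1/\eps^{(d-1)/2})$, which for large $\alpha$ (and $d\ge 4$) is strictly smaller than the bound you are trying to prove. The multi-level summation in your step~(iii) therefore has no target: the quadtree on the ball has leaves at one depth, not at $\Theta(\log\alpha)$ depths, and the $\sqrt{\alpha}$ in the exponent cannot arise from an AM--GM across scales because there is no range of scales to sum over.

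Your fallback---``an inhomogeneous body whose curvature is concentrated near a lower-dimensional feature''---is not merely vague, it points in the wrong direction. The paper's hard instance is a \emph{hypercylinder}: it is curved (like a $(k{+}1)$-dimensional ball of carefully chosen diameter $\Delta$) in $k{+}1$ coordinates and perfectly \emph{flat} in the remaining $d{-}1{-}k$ coordinates. The flat directions are what force the space blowup: along the axis one can place $\Omega(1/\Delta^{d-1-k})$ well-separated cross-sections, and by the local Dudley bound each cross-section alone already needs more than $t$ facets, so no leaf of diameter $\le\Delta$ can serve two of them. This gives space at least $t/\Delta^{d-1-k}$. The $\sqrt{\alpha}$ then emerges not from any recursion but from \emph{optimizing over the integer $k$}: setting $k\approx (d-1)\sqrt{2/\alpha}$ and $\Delta\approx \eps\, t^{2/k}$ balances the number of cross-sections against the per-cross-section hardness and yields the stated exponent; the additive $-1$ comes from rounding $k$ to an integer. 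Your local Dudley counting lemma is the right tool once you have the cylinder, but the choice of $k$ is the missing idea.
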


Although $\alpha$ is not an asymptotic quantity, for the sake of comparing the upper and lower bounds, let us imagine that it is. For roughly the same query time, the $\alpha$ dependencies appearing in the exponents of the upper bounds on space are $\big(1 - \inv{\alpha}\big)$ for Theorem~\ref{thm:simple-trade-off}, $\big( 1 - \frac{\Omega(\log \alpha)}{\alpha}\big)$ for Theorem~\ref{thm:membership-ub}, and the lower bound of Theorem~\ref{thm:lb} is roughly $\big(1 - \frac{O(\sqrt{\alpha})}{\alpha}\big)$. The trade-offs provided in these theorems are illustrated in Figure~\ref{fig:trade-offs}(a).

\begin{figure}[htbp]
	\centerline{\begin{tabular}{ccc}
		  \includegraphics[width=6.0cm]{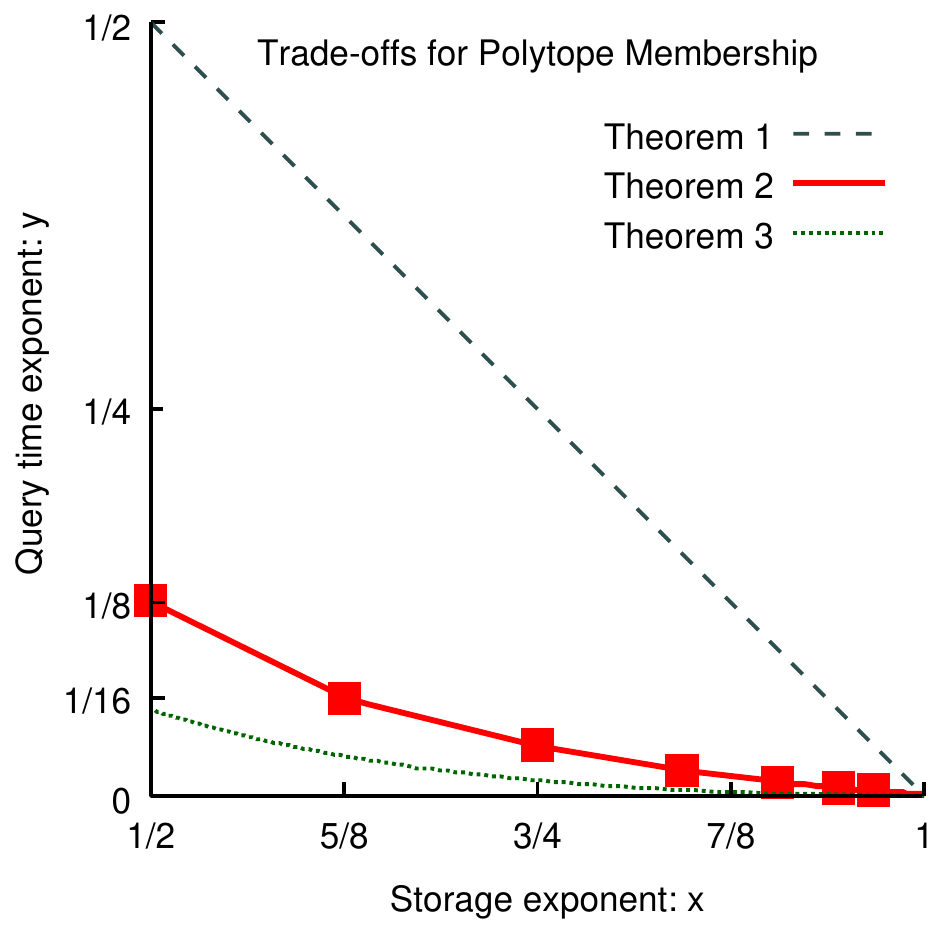} & \hspace*{0cm} &
		  \includegraphics[width=6.0cm]{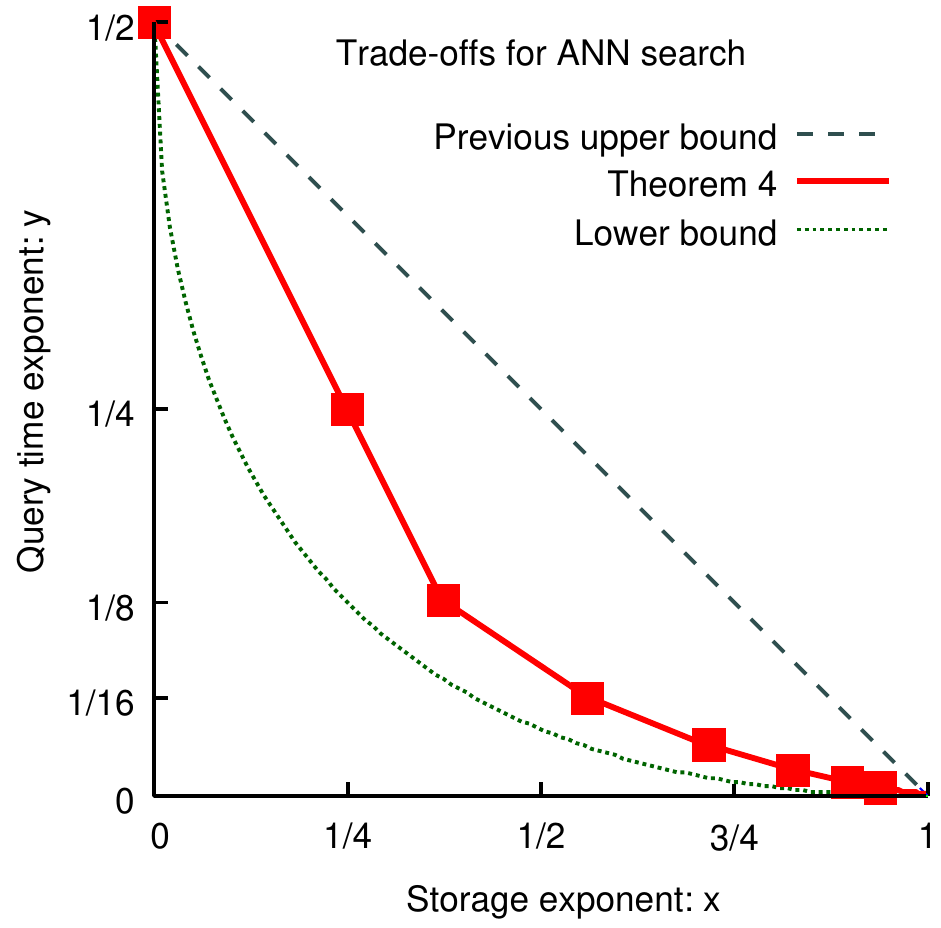} \\
		  (a) & \hspace{0cm} & (b)
	\end{tabular}}
  \caption{The multiplicative factors in the exponent of the $1/\eps$ terms for (a) polytope membership queries and (b) approximate nearest neighbor (ANN) queries. Each point $(x,y)$ represents a term of $1/\eps^{x(d \P \pm \P O(1))}$ for storage and $1/\eps^{y(d \P \pm \P O(1))}$ for query time, where the $O(1)$ term does not depend on $d$.}
  \label{fig:trade-offs}
\end{figure}

\medskip

The second major contribution of this paper is to demonstrate that our trade-offs for approximate polytope membership queries imply significant improvements to the best known space-time trade-offs for approximate nearest neighbor searching (ANN). We are given a set $X$ of $n$ points in $\RE^d$. Given any $q \in \RE^d$, an \emph{$\eps$-approximate nearest neighbor} of $q$ is any point of $X$ whose distance from $q$ is at most $(1+\eps)$ times the distance to $q$'s closest point in $X$. The objective is to preprocess $X$ in order to answer such queries efficiently. Data structures for approximate nearest neighbor searching (in fixed dimensions) have been proposed by several authors~\cite{Chan-ANN02, DGK01, HP-AVD, Clarkson-ANN, SSS06}. The best space-time trade-offs~\cite{AVD-JACM} have query times roughly $O\big(1/\eps^{d/\alpha}\big)$ with storage roughly $O\big(n/\eps^{d(1-2/\alpha)}\big)$, for $\alpha \geq 2$ (see the dashed line in Figure~\ref{fig:trade-offs}(b)). 

These results are based on a data structure called an \emph{approximate Voronoi diagram} (or AVD). In general, a data structure for approximate nearest neighbor searching is said to be in the \emph{AVD model} if it has the general form of decomposition of space (generally a covering) by hyperrectangles of bounded aspect ratio, each of which is associated with a set of representative points. Given any hyperrectangle that contains the query point, at least one of these representatives is an $\eps$-approximate nearest neighbor of the query point~\cite{AVD-JACM}. The AVD model is of interest because it is possible to prove lower bounds on the performance of such a data structure. In particular, the lower bounds proved in \cite{AVD-JACM} are shown in the dotted curve in Figure~\ref{fig:trade-offs}(b). By violating the AVD model, small additional improvements were obtained in~\cite{proximity}.

Our improvements to ANN searching are given in the following theorem.

\begin{theorem} \label{thm:ann-ub}
Let $0 < \eps \leq 1$ be a real parameter, $\alpha \geq 1$ be a real constant, and $X$ be a set of $n$ points in $\RE^d$. There is a data structure in the AVD model for approximate nearest neighbor searching that achieves
\begin{eqnarray*}
\text{Query time:} & & O\big(\log n + (1/\eps^{d/2 \alpha}) \cdot  {\textstyle \log^2 \inv{\eps}}\big) \\
\text{Space:} & & 
  \OO{n \cdot \max\left({\textstyle \log \inv{\eps}}, 1/\eps^{d\left(\inv{2} -\inv{2\alpha}\right)}\right)}\kern-2pt , \text{~for $1 \leq \alpha < 2$ and,} \\
  & &
  \OO{n/\eps^{d \left(1 - \frac{\floor{\lg\alpha}}{\alpha} - \inv{2 \alpha}\right)}}\kern-2pt , \text{~for $\alpha \geq 2$}.
\end{eqnarray*}
The constant factors in the space and query time depend only on $d$ and $\alpha$ (not on $\eps$). 
\end{theorem}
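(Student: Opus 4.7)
The plan is to reduce approximate nearest neighbor searching to approximate polytope membership via the classical paraboloid lifting, combine this reduction with a quadtree-based AVD decomposition, and apply Theorem~\ref{thm:membership-ub} cell by cell.

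First, I would build a hierarchical space decomposition of a bounding region of $X$, analogous to the AVD of~\cite{AVD-JACM}, with $O(n \log \inv{\eps})$ leaf cells of bounded aspect ratio. Each leaf cell $C$ is associated with a candidate set $X_C \subseteq X$ of size polynomial in $1/\eps$ (by a standard packing argument) such that for every $q \in C$ the true nearest neighbor of $q$ lies in $X_C$. Each candidate $p \in X_C$ is then replaced by the tangent hyperplane to the standard paraboloid at $p$. Because the squared Euclidean distance $\|q - p\|^2$ equals the vertical distance in $\RE^{d+1}$ from the lifted point $\hat{q} = (q, \|q\|^2)$ to this hyperplane, identifying the NN of $q \in C$ reduces to locating the facet of the convex polytope $K_C \subset \RE^{d+1}$ (the intersection of the upper halfspaces bounded by these hyperplanes) hit first by a vertical ray from $\hat{q}$. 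A perturbation of order $\sqrt{\eps} \cdot \diam(K_C)$ in the vertical direction corresponds to only an $O(\eps)$ relative change in Euclidean distance, so an $\eps'$-approximate polytope membership query on $K_C$ with $\eps' = \Theta(\sqrt{\eps})$ suffices to return an $\eps$-approximate NN. This quadratic-to-linear conversion of accuracy is the source of the factor of two appearing in the query-time exponent $d/(2\alpha)$.

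Second, I would instantiate Theorem~\ref{thm:membership-ub} in dimension $d+1$, with approximation parameter $\sqrt{\eps}$ and constant $\alpha$, on each $K_C$. The per-cell query time becomes $O\big((\log \inv{\eps})/\eps^{d/(2\alpha)}\big)$, and adding an $O(\log \inv{\eps})$ preprocessing factor from Lemma~\ref{lem:preproc-time} together with $O(\log n)$ for point location yields the stated query time. The total space is the sum over cells of the per-cell bound of Theorem~\ref{thm:membership-ub}; evaluating this sum under the substitutions $d \to d+1$ and $\eps \to \sqrt{\eps}$ produces the two regimes in the statement, with the max against $n \log \inv{\eps}$ in the range $1 \le \alpha < 2$ reflecting the baseline AVD overhead, and the refined expression for $\alpha \ge 2$ coming from the polytope-membership space summed over the cells.

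The main obstacle will be ensuring that the vertical-versus-Euclidean accuracy matching remains valid uniformly across all cells, regardless of whether a cell is close to or far from $X_C$: one must carefully normalize $K_C$ within each cell so that $\diam(K_C)$ reflects the correct local distance scale, and handle separately cells of differing scales (in particular cells whose diameter is much smaller, comparable to, or much larger than the distance from the cell to $X_C$). A secondary challenge is the precise arithmetic of the space bound, since the per-cell polytope-membership space depends on $\floor{\lg \alpha}$ through Theorem~\ref{thm:membership-ub}, and the aggregation over the $O(n \log \inv{\eps})$ cells must exactly reproduce the piecewise expression in the statement, including the $-1/(2\alpha)$ correction term and the breakpoint at $\alpha = 2$.
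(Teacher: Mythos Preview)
Your proposal contains two genuine gaps that would prevent the argument from going through.

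First, the claim that an $\eps'$-approximate polytope membership query with $\eps' = \Theta(\sqrt{\eps})$ suffices is incorrect. After normalizing so that the query cell has unit diameter and the representatives lie in a constant-radius annulus (as in the paper's Lemma~\ref{lem:avd}(ii)), the query-to-representative distance $r$ is $\Theta(1)$. The lifting sends distance $r$ to vertical height $r^2$, so an additive vertical error of $\delta$ yields a distance error of roughly $\delta/(2r)$; to make this at most $\eps r$ you need $\delta = \Theta(\eps r^2) = \Theta(\eps)$, not $\Theta(\sqrt{\eps})$. The paper accordingly invokes polytope membership with approximation parameter $\Theta(\eps)$ (see the proof of Lemma~\ref{lem:mini-reduction}, where $\eps' = \eps/6c_q$). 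There is no quadratic-to-linear accuracy conversion here.

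Second, and more seriously, your space accounting ``sum the per-cell space over all $O(n\log\inv{\eps})$ cells'' would give $O(n\log\inv{\eps})\cdot s_{d+1}(\eps)$, which is far too large. The actual mechanism behind the factor of two in the exponent is different: the paper applies Theorem~\ref{thm:membership-ub} in dimension $d+1$ with trade-off parameter $\alpha' = 2\alpha$ (so per-cell query time $t_{d+1}(\eps)\approx 1/\eps^{d/2\alpha}$), and crucially exploits that only the leaf cells with $|R_Q| > t_{d+1}(\eps)\log\inv{\eps}$ need a polytope structure; the rest are answered by brute force. Since the total number of representatives is $O(n\log\inv{\eps})$, a counting argument bounds the number of heavy cells by $O(n/t_{d+1}(\eps))$, yielding total space $O\big(n\cdot s_{d+1}(\eps)/t_{d+1}(\eps)\big)$ (Lemmas~\ref{lem:reduction} and~\ref{lem:avd-model}). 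It is this $s/t$ ratio, combined with the choice $\alpha' = 2\alpha$, that produces the $-1/(2\alpha)$ correction and the stated exponents. Your proposal misses this mechanism entirely.
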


The above space bound is a simplification of the more accurate bound given in Lemma~\ref{lem:ann-ub}. (Also see the remarks following the proof of this lemma for further minor improvements achievable by forgoing the AVD model.) As before, both bounds are piecewise linear in $1/\alpha$ (with breakpoints at powers of two), but the bounds of Lemma~\ref{lem:ann-ub} are continuous as a function of $\alpha$. The resulting space-time trade-off is illustrated in Figure~\ref{fig:trade-offs}(b). (The plot reflects the more accurate bounds of Lemma~\ref{lem:ann-ub}.)

As an example of the strength of the improvement that this offers, observe that in order for the existing AVD-based results to yield a query time of $\SoftOh\big(1/\eps^{d/4}\big)$ the required space would be roughly $\Omega\big(n/\eps^{d/2}\big)$. The exponent in the space bound is nearly twice that given by Theorem~\ref{thm:ann-ub}, which arises by setting $\alpha = 2$. The connection between the polytope membership problem and ANN has been noted before by Clarkson~\cite{Clarkson-ANN}. Unlike Clarkson's, our results hold for point sets with arbitrary aspect ratios.


\medskip

Our data structure is based on a simple quadtree-based decomposition of space. Let $t$ denote the desired query time. We begin by preconditioning $K$ so that it is fat and has at most unit diameter. We then employ a quadtree that hierarchically subdivides space into hypercube cells. The decomposition stops whenever we can declare that a cell is either entirely inside or outside of $K$, or (if it intersects $K$'s boundary) it is locally approximable by at most $t$ halfspaces. This procedure, called {\alg} is presented in Section~\ref{sec:split-reduce}. Queries are answered by descending the quadtree to determine the leaf cell containing the query point, and (if not inside or outside) testing whether the query point lies within the approximating halfspaces. 

Although the algorithm itself is very simple, the analysis of its space requirements is quite involved. In Section~\ref{sec:firstbound}, we begin with a simple analysis, which shows that it is possible to obtain a significant improvement over the Dudley-based approach (in particular, reducing the exponent in the query time by  half with no increase in space). While this simple analysis introduces a number of useful ideas, it is neither tight nor does it provide space-time trade-offs. 

Our final analysis requires a deeper understanding of the local structure of the convex body's boundary. In Section~\ref{sec:cap} we introduce local surface patches of $K$'s boundary, called $\eps$-dual caps. We relate the data structure's space requirements to the existence of a low cardinality hitting set of the dual caps. We present a two-pronged strategy for generating such a hitting set, one focused on dual caps of large surface area (intuitively corresponding to boundary regions of low curvature) and the other focused on dual caps of small surface area (corresponding to boundary regions of high curvature). We show that simple random sampling suffices to hit dual caps of high surface area, and so the challenge is to hit the dual caps of low surface area. To do this, we show that dual caps of low surface area generate Voronoi patches on a hypersphere enclosing $K$ of large surface area. We refer to this result as the \emph{area-product bound}, which is stated in Lemma~\ref{lem:dual-basic}. This admits a strategy based on sampling points randomly on this hypersphere, and then projecting them back to their nearest neighbor on the surface of $K$. 

The area-product bound is proved with the aid of a classical concept from the theory of convexity, called the \emph{Mahler volume}~\cite{BoMi,Santalo}. The Mahler volume of a convex body is a dimensionless quantity that involves the product of the body's volume and the volume of its polar body. We demonstrate that dual caps and their Voronoi patches exhibit a similar polar relationship. The proof of the area-product bound is quite technical and is deferred to Section~\ref{sec:proof}. 

Armed with the area-product bound, in Section~\ref{sec:membership} we establish our final bound on the space-time trade-offs of {\alg}, which culminates in the proof of Theorem~\ref{thm:membership-ub}. In Section~\ref{sec:preproc} we present details on how the data structure is built and discuss preprocessing time. In Section~\ref{sec:lb} we establish the lower bound result, which is stated in Theorem~\ref{thm:lb}. 

Finally, in Section~\ref{sec:ann} we show how these results can be applied to improve the performance of approximate nearest neighbor searching in Euclidean space. It is well known that (exact) nearest neighbor searching can be reduced to vertical ray shooting to a polyhedron that results by lifting points in dimension $d$ to tangent hyperplanes for a paraboloid in dimension $d+1$ \cite{ray-shooting-NN,edels}. We show how to combine approximate vertical ray shooting (based on approximate polytope membership) with approximate Voronoi diagrams to establish Theorem~\ref{thm:ann-ub}.

\section{Preliminaries} \label{sec:prelim}

Throughout, we will use asymptotic notation to eliminate constant factors. In particular, for any positive real $x$, let $O(x)$ denote a quantity that is at most $c \P x$, for some constant $c$. Define $\Omega(x)$ and $\Theta(x)$ analogously. We will sometimes introduce constants within a local context (e.g., within the statement of single lemma). To simplify notation, we will often use the same symbol ``$c$'' to denote such generic constants. Recall that we use ``$\lg$'' to denote the base-2 logarithm. We will use ``$\log$'' when the base does not matter. Some of our search algorithms involve integer grids, and for these we assume a model of computation that supports integer division. 

Let $K$ denote a full-dimensional convex body in $\RE^d$, and let $\partial K$ denote its boundary. For concreteness, we assume that $K$ is represented as the intersection of $n$ closed halfspaces. Our data structure can generally be applied to any representation that supports access primitives (i)--(iii) given at the start of Section~\ref{sec:split-reduce}.

\subsection{Absolute and Relative Approximations} \label{sec:prelim-abs}

Earlier, we defined approximation relative to $K$'s diameter, but it will be convenient to define the approximation error in absolute terms. Given a positive real $r$, define $K \oplus r$ to be set of points that lie within Euclidean distance $r$ of $K$. We say that a polytope $P$ is an \emph{absolute $\eps$-approximation} of a convex body $K$ if 
\[
	K ~\subseteq~ P ~\subseteq~ K \oplus \eps.
\]
When we wish to make the distinction clear, we refer to the definition in the introduction as a \emph{relative approximation}. Henceforth, unless otherwise stated approximations are in the absolute sense.

In order to reduce the general approximation problem into a more convenient absolute form, we will transform $K$ into a ``fattened'' body of bounded diameter. Given a parameter $0 < \gamma \le 1$, we say that a convex body $K$ is \emph{$\gamma$-fat} if there exist concentric Euclidean balls $B$ and $B'$, such that $B \subseteq K \subseteq B'$, and $\radius(B)/\radius(B') \ge \gamma$. We say that $K$ is \emph{fat} if it is $\gamma$-fat for a constant $\gamma$ (possibly depending on $d$, but not on $n$ or $\eps$). The following lemma shows that $K$ can be fattened without significantly altering the approximation parameter. Let $Q_0^{(d)}$ denote the $d$-dimensional axis-aligned hypercube of unit diameter centered at the origin. When $d$ is clear, we refer to this as $Q_0$.

\begin{lemma} \label{lem:fat}
Given a convex body $K$ in $\RE^d$ and $0 < \eps \le 1$, there exists an affine transformation $T$ such that $T(K)$ is $(1/d)$-fat and $T(K) \subseteq Q_0$. If $P$ is an absolute $(\eps/d \sqrt{d})$-approximation of $T(K)$, then $T^{-1}(P)$ is a relative $\eps$-approximation of $K$.
\end{lemma}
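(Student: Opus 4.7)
The plan is to use John's theorem, which furnishes an ellipsoid $E$ (the John ellipsoid of $K$) satisfying $E \subseteq K \subseteq d \cdot E$, where $d \cdot E$ is the $d$-fold dilation of $E$ about its center. Let $A$ be a linear map sending $E$ to a Euclidean ball $B$ of radius $\rho := 1/(2 d \sqrt{d})$, and let $T$ be the composition of $A$ with a translation placing the center of $B$ at the center of $Q_0$. Then $B \subseteq T(K) \subseteq d \cdot B$; the outer ball $d \cdot B$ has radius $1/(2\sqrt{d})$, which equals the inradius of $Q_0$, so $T(K) \subseteq Q_0$, and the ratio of inner to outer radii is $1/d$, certifying $(1/d)$-fatness.

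For the approximation claim, suppose $P$ is an absolute $\eps'$-approximation of $T(K)$ with $\eps' = \eps/(d\sqrt{d})$. Writing $\sigma$ for the smallest singular value of $A$, the map $T^{-1}$ expands Euclidean distances by at most a factor of $1/\sigma$, so applying $T^{-1}$ to the containment $T(K) \subseteq P \subseteq T(K) \oplus \eps'$ yields $K \subseteq T^{-1}(P) \subseteq K \oplus (\eps'/\sigma)$. To conclude that $T^{-1}(P)$ is a relative $\eps$-approximation of $K$, it remains to show $\eps'/\sigma \le \eps \cdot \diam(K)$, equivalently $\sigma \cdot \diam(K) \ge 1/(d\sqrt{d})$.

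For this last step, let $a_1 \ge \dots \ge a_d$ denote the semi-axis lengths of $E$. Since $A$ maps $E$ to $B$, the singular values of $A$ are precisely $\rho/a_i$, and in particular $\sigma = \rho/a_1$. On the other hand, $E \subseteq K$ contains the segment of length $2 a_1$ along the longest axis of $E$, so $\diam(K) \ge 2 a_1$. Combining, $\sigma \cdot \diam(K) \ge (\rho/a_1)(2 a_1) = 2\rho = 1/(d\sqrt{d})$, exactly as required. The only subtle point in the whole argument is this final step: tracking how the anisotropy of $A$ couples with the John ellipsoid's geometry to produce a clean lower bound on the distance distortion; everything else amounts to bookkeeping.
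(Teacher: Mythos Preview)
Your proof is correct and follows essentially the same approach as the paper's: invoke John's theorem to get $E \subseteq K \subseteq d\cdot E$, map $E$ affinely to a ball of radius $1/(2d\sqrt{d})$ centered at the origin so that $T(K)$ is sandwiched between concentric balls of radii $1/(2d\sqrt{d})$ and $1/(2\sqrt{d})$, and then bound the distortion of $T^{-1}$ via the smallest singular value of the linear part, using $\diam(K) \ge 2a_1$ to close the estimate. The paper phrases the distortion bound as $\|T^{-1}(v)\| \le \sigma\|v\|$ with $\sigma = d\sqrt{d}\cdot\diam(E)$, which is exactly the reciprocal of your smallest singular value, so the arguments are identical up to notation.
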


We omit the proof of this lemma for now, since it is subsumed by Lemma~\ref{lem:precondition-1} below. Our approach will be to map $K$ to $T(K)$, set $\eps' \gets \eps/d \sqrt{d}$, and then apply an absolute $\eps'$-approximation algorithm to $T(K)$ (or more accurately, to the result of applying $T$ to each of $K$'s defining halfspaces). Since $\eps'$ is within a constant factor of $\eps$, the asymptotic complexity bounds that we will prove for the absolute case will apply to the original (relative) approximation problem case as well.

\subsection{Concepts from Quadtrees} \label{sec:prelim-quadtree}

By the above reduction, it suffices to consider the problem of computing an absolute $\eps$-approxi\-mation to a fat convex body $K$ that lies within $Q_0$. Our construction will be based on a quadtree decomposition of $Q_0$. More formally, we define a \emph{quadtree cell} by the following well known recursive decomposition. $Q_0$ is a quadtree cell, and given any quadtree cell $Q$, each of the $2^d$ hypercubes that result by bisecting each of $Q$'s sides by an axis-orthogonal hyperplane is also a quadtree cell. A cell $Q'$ that results from subdividing $Q$ is a \emph{child} of $Q$. Clearly, the child's diameter is half that of its parent. The subdivision process defines a $(2^d)$-ary tree whose nodes are quadtree cells, and whose leaves are cells that are not subdivided.

It will be useful to define a notion of approximation that is local to a quadtree cell $Q$. An obvious definition would be to approximate $K \cap Q$. The problem with this is that a point $p \in Q$ that is close to $K$ need not be close to $K \cap Q$ (see Figure~\ref{fig:extended-region}(a)). To remedy this we say that a polytope $P$ is an \emph{$\eps$-approximation of $K$ within} $Q$ if
\[
	K \cap Q
		~ \subseteq ~ P \cap Q
		~ \subseteq ~ (K \oplus \eps) \cap Q
\]
(see Figure~\ref{fig:extended-region}(b)). This definition implies that for any query point $q \in Q$, we can correctly answer $\eps$-approximate polytope membership queries with respect to $K$ by checking whether $q \in P$. We do not care what happens outside of $Q$, and indeed $P$ may even be unbounded.

\begin{figure}[htbp]
  \centerline{\includegraphics[scale=0.40]{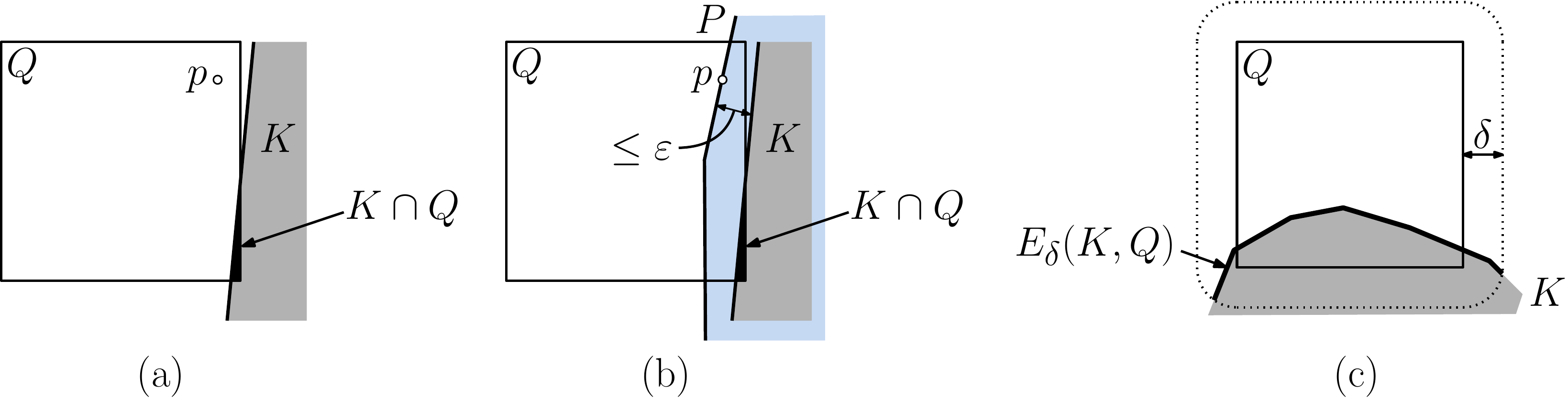}}
  \caption{(a) and (b): $\eps$-approximation of $K$ within $Q$ and (c): $E_{\delta}(K,Q)$.}
  \label{fig:extended-region}
\end{figure}

As we shall see later, computing an $\eps$-approximation of $K$ within a quadtree cell $Q$, will generally require that we consider $\partial K$ in a region that extends slightly beyond $Q$. We define $E_{\delta}(K,Q)$ to be the portion of $\partial K$ that lies within distance $\delta$ of $Q$ (see Figure~\ref{fig:extended-region}(c)). Because $\delta = \sqrt{\eps}$ will be of particular interest, we use $E(K,Q)$ as a shorthand for $E_{\sqrt{\eps}}(K,Q)$.

In order to apply constructions on quadtree cells of various sizes, it will be convenient to transform all such constructions into a common form. Given a quadtree cell $Q$, we define \emph{standardization} to be the application of an affine transformation that uniformly scales and translates space so that $Q$ is aligned with the standard quadtree cell $Q_0$. We transform $K$ using this same transformation, and apply the same scale factor to $\eps$. Although we assume that the input body is contained within $Q_0$, after standardization, the transformed image of $K$ need not be contained within $Q_0$.

\subsection{Polarity and the Mahler Volume} \label{sec:prelim-polar}

Some of our analysis will involve the well known concept of \emph{polarity}. Let us recall some general facts (see, e.g., Eggleston~\cite{Egg58}). Given vectors $u, v \in \RE^d$, let $\ang{u,v}$ denote their inner product, and let $\|v\| = \sqrt{\ang{v,v}}$ denote $v$'s Euclidean length. Given a convex body $K \in \RE^d$ define its \emph{polar} to be the convex set
\[
	\polar{K}
		~ = ~ \{ u : \ang{u,v} \le 1, \hbox{~for all $v \in K$} \}.
\]
If $K$ contains the origin then $\polar{K}$ is bounded. Given $v \in \RE^d$, $\polar{v}$ is simply the closed halfspace that contains the origin whose bounding hyperplane is orthogonal to $v$ and at distance $1/\|v\|$ from the origin (on the same side of the origin as $v$). The polar has the inclusion-reversing property that $v$ lies within $\polar{u}$ if and only if $u$ lies within $\polar{v}$. We may equivalently define $\polar{K}$ as the intersection of $\polar{v}$, for all $v \in K$. 

Generally, given $r > 0$, define 
\[
	\polarX{r}{K}
		~ = ~ \{ u : \ang{u,v} \le r^2, \hbox{~for all $v \in K$} \}.
\]
It is easy to see that for any $v \in \RE^d$, $\polarX{r}{v}$ is the closed halfspace at distance $r^2/\|v\|$ (see Figure~\ref{fig:polar}(a)). Thus, $\polarX{r}{K}$ is a uniform scaling of $\polar{K}$ by a factor of $r^2$. In particular, if $B$ is a Euclidean ball of radius $x$ centered at the origin, then $\polarX{r}{B}$ is a concentric ball of radius $r^2/x$.

\begin{figure}[htbp]
  \centerline{\includegraphics[scale=0.40]{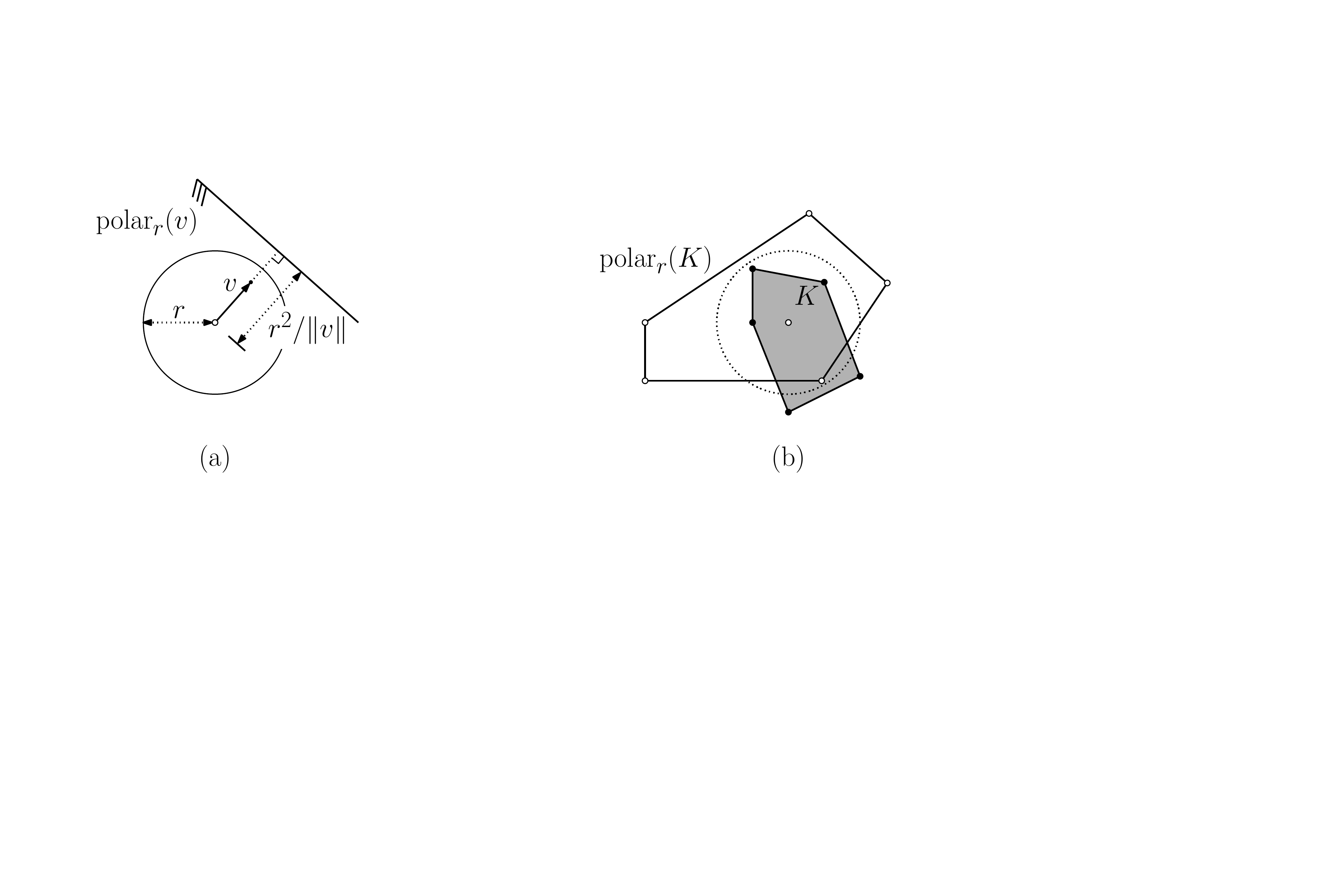}}
  \caption{The generalized polar transform and polar body.}
  \label{fig:polar}
\end{figure}

An important concept related to polarity is the \emph{Mahler volume}, which is defined to be the product of the volumes of a convex body and its polar. There is a large literature on the Mahler volume, mostly for centrally symmetric bodies. Later in the paper we will make use of the following bound on the Mahler volume for arbitrary convex bodies (see, e.g., Kuperberg~\cite{Kuperberg}). Given a convex body $K$ in $\RE^d$, let $\vol(K)$ denote its volume, or more formally, its $d$-dimensional Hausdorff measure.

\begin{lemma}[Mahler volume] \label{lem:mahler}
There is a constant $c_m$ depending only on $d$, such that given a convex body $K$ in $\RE^d$, $\vol(K) \cdot \vol(\polar{K}) \ge c_m$. More generally, given $r > 0$, $\vol(K) \cdot \vol(\polarX{r}{K}) \ge c_m \P r^{2 d}$.
\end{lemma}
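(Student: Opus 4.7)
The statement has two parts, and I would handle them in sequence. The first assertion ($r=1$) is essentially a restatement of the classical Bourgain--Milman lower bound on the Mahler volume for general convex bodies, refined by Kuperberg~\cite{Kuperberg}. My plan would be to invoke that result as a black box: for any convex body $K \subset \RE^d$ containing the origin in its interior, there is a constant $c_m = c_m(d) > 0$ such that $\vol(K) \cdot \vol(\polar{K}) \ge c_m$. Before quoting it, I would spend a sentence noting that if the origin is not in the interior of $K$ then $\polar{K}$ is unbounded and so $\vol(\polar{K}) = \infty$, making the inequality trivial; hence without loss of generality we may assume $0 \in \mathrm{int}(K)$, which puts us in the hypothesis of the cited theorem.

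For the generalization to arbitrary $r > 0$, I would reduce to the $r = 1$ case by a scaling argument. The key observation, already noted in the paragraph preceding the lemma in the paper, is that $\polarX{r}{K}$ is a uniform dilation of $\polar{K}$ by a factor of $r^2$. Concretely,
\[
\polarX{r}{K}
   ~=~ \{u \ST \ang{u,v} \le r^2 \text{ for all } v \in K\}
   ~=~ r^2 \cdot \{w \ST \ang{w,v} \le 1 \text{ for all } v \in K\}
   ~=~ r^2 \cdot \polar{K},
\]
where the middle equality uses the substitution $u = r^2 w$. Since a dilation by factor $r^2$ in $\RE^d$ scales $d$-dimensional volume by $r^{2d}$, we obtain $\vol(\polarX{r}{K}) = r^{2d} \cdot \vol(\polar{K})$.

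Combining the two ingredients yields the claim:
\[
\vol(K) \cdot \vol(\polarX{r}{K})
   ~=~ r^{2d} \cdot \vol(K) \cdot \vol(\polar{K})
   ~\ge~ c_m \P r^{2 d}.
\]
There is really no serious obstacle here; the entire proof is a citation plus a one-line dilation argument. The only point that deserves a moment's care is the origin-in-interior convention, which I would dispatch in a single sentence as described above so that the black-box hypothesis of Kuperberg's theorem is clearly satisfied.
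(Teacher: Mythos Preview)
Your proposal is correct and matches the paper's treatment: the paper does not give a self-contained proof of this lemma but simply cites Kuperberg for the $r=1$ bound, and the scaling identity $\polarX{r}{K} = r^2 \cdot \polar{K}$ (hence the $r^{2d}$ volume factor) is already stated in the paragraph immediately preceding the lemma. Your added remark about the origin-in-interior case is a reasonable bit of hygiene that the paper omits.
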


\subsection{Simple Approximation Trade-off} \label{sec:prelim-simple-approx}

Before presenting our results, it will be illuminating to see how to obtain simple data structures for approximate polytope membership by combining two existing approximation methods. Let us begin by describing Dudley's approximation. Assuming that $K$ is contained within $Q_0$, let $S$ denote the $(d-1)$-dimensional sphere of radius $3$ centered at the origin, which we call the \emph{Dudley hypersphere}. (The value $3$ is not critical; any sufficiently large constant suffices.) For $\delta > 0$, a set $\Sigma$ of points on $S$ is said to be \emph{$\delta$-dense} if every point of $S$ lies within distance $\delta$ of some point of $\Sigma$. Let $\Sigma$ be a $\sqrt{\eps}$-dense set of points on $S$ (see Figure~\ref{fig:dudley-bentley}(a)). By a simple packing argument there exists such a set of cardinality $\Theta\big(1/\eps^{(d-1)/2}\big)$. For each point $x \in \Sigma$, let $x_0$ be its nearest point on $K$'s boundary. For each such point $x_0$, consider the halfspace containing $K$ that is defined by the supporting hyperplane passing through $x_0$ that is orthogonal to the line segment $\overline{x x}_0$. Dudley shows that the intersection of these halfspaces is an outer $\eps$-approximation of $K$. We can answer approximate membership queries by testing whether $q$ lies within all these halfspaces (by brute force). This approach takes $O\big(1/\eps^{(d-1)/2}\big)$ query time and space.

\begin{figure}[htbp]
  \centerline{\includegraphics[scale=0.33]{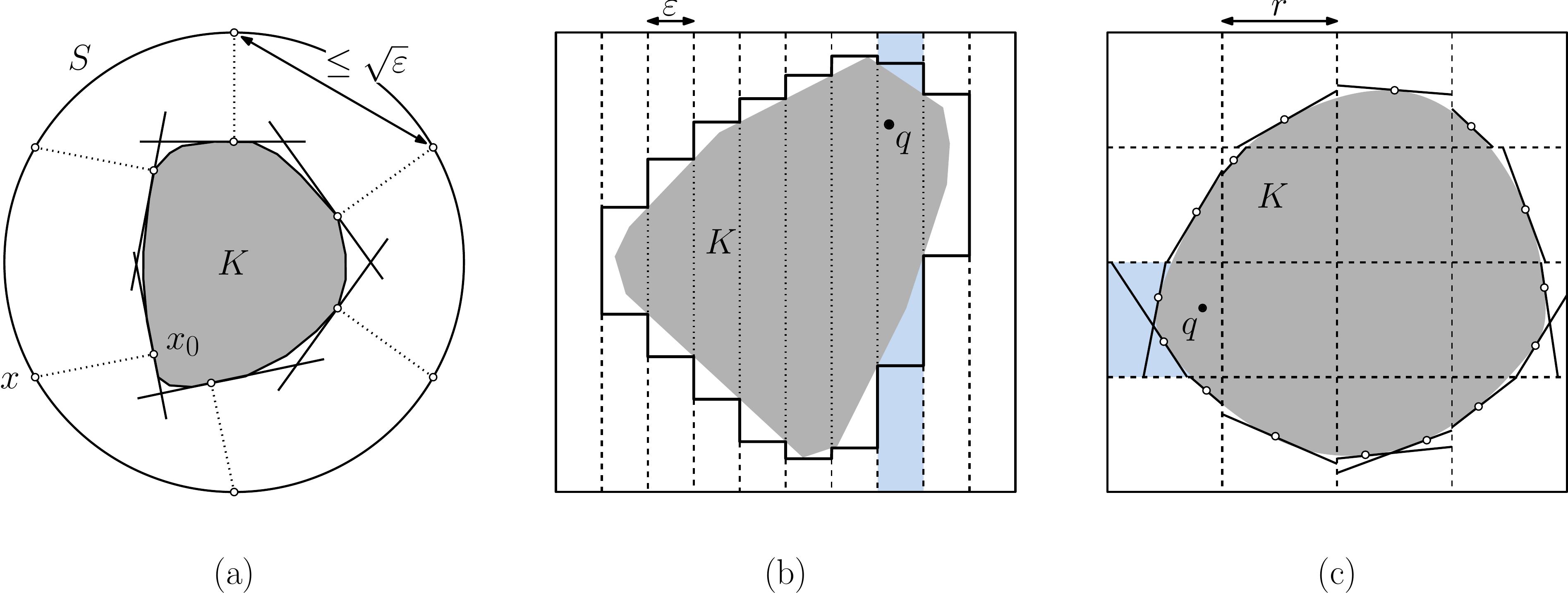}}
  \caption{The $\eps$-approximations of (a)~Dudley and (b)~Bentley {\etal}, and (c) the simple trade-off. (Not drawn to scale.)}
  \label{fig:dudley-bentley}
\end{figure}

An alternative solution is related to a grid-based approximation by Bentley~{\etal}~\cite{BFP}. Again, we assume that $K$ is contained within $Q_0$. For the sake of illustration, let us think of the $d$-th coordinate axis as pointing upwards. Partition the upper facet of $Q_0$ into a $(d-1)$-dimensional square grid with cells of diameter $\eps$. A packing argument implies that the number of cells is $O(1/\eps^{d-1})$. Extend each of these cells downwards to form a subdivision of $Q_0$ into vertical columns (see Figure~\ref{fig:dudley-bentley}(b)). Trim each column at the highest and lowest points at which it intersects $K$. Together, these trimmed columns define a collection of hyperrectangles whose union contains $K$. The resulting data structure has $O(1/\eps^{d-1})$ space. Given a query point $q$, in $O(1)$ time we can determine the vertical column containing $q$ (assuming a model of computation that supports integer division), and we then test whether $q$ lies within the trimmed column. In contrast to the method based on Dudley's construction, this method provides a better query time of $O(1)$ but with higher space of $O(1/\eps^{d-1})$.

It is possible to combine these two solutions into a simple unified approach that achieves a trade-off between space and query time. Given a parameter $r$, where $\eps \le r \le 1$, subdivide $Q_0$ into a grid of hypercube cells each of diameter $\Theta(r)$. For each cell $Q$ that intersects $K$'s boundary, apply Dudley's approximation to this portion of the polytope. By a straightforward packing argument, the number of grid cells that intersect $K$'s boundary is $O(1/r^{d-1})$ (see, for example, Lemma~{3} of \cite{ARS}). We apply standardization to $Q$ (thus mapping $Q$ to $Q_0$ and scaling $\eps$ to $\Omega(\eps/r)$) and apply Dudley's construction. By Dudley's results, the number of halfspaces needed to approximate $K$ within $Q$ is $O\big((r/\eps)^{(d-1)/2}\big)$. To answer a query, in $O(1)$ time we determine which hypercube of the grid contains the query point (assuming a model of computation that supports integer division). We then apply brute-force search to determine whether the query point lies within all the associated halfspaces in time $O\big((r/\eps)^{(d-1)/2}\big)$. The query time is dominated by this latter term. The space is dominated by the total number of halfspaces, which is $O\big((1/r^{d-1}) \cdot (r/\eps)^{(d-1)/2}\big) = O\big(1/(\eps r)^{(d-1)/2}\big)$. If we express $r$ in terms of a parameter $\alpha$, where $r = \eps^{1 - 2/\alpha}$, then Theorem~\ref{thm:simple-trade-off} follows as an immediate consequence. Note that the resulting trade-off interpolates nicely between the two extremes for $\eps \le r \le 1$.

\section{The Data Structure and Construction} \label{sec:split-reduce}

In this section we show how to improve the approach from the previous section by replacing the grid with a quadtree. The data structure is constructed by the recursive procedure, called {\alg}, whose inputs consist of a convex body $K$ and a quadtree cell $Q$. We are also given the approximation parameter $0 < \eps \le 1$, and a parameter $t \ge 1$ that controls the query time. Although we assume that $K$ is presented as the intersection of $n$ halfspaces, this procedure can be applied to any representation that supports the following access primitives:
\begin{enumerate}
\item[(i)] Determine whether $Q$ is disjoint from $K$.

\item[(ii)] Determine whether $Q$ is contained within $K \oplus \eps$.

\item[(iii)] Determine whether there exists a set of at most $t$ halfspaces whose intersection $\eps$-approximates $K$ within $Q$, and if so generate such a set.
\end{enumerate}

Recall that we assume that $K$ has been transformed so it is $(1/d)$-fat and lies within $Q_0$ (the hypercube of unit diameter centered at the origin). The data structure is built by the call $\alg(K,Q_0)$. In general, $\alg(K,Q)$ checks whether any of the above access primitives returns a positive result, and if so it terminates the decomposition and $Q$ is declared a \emph{leaf cell}. Otherwise, it makes a recursive call on the children of $Q$ (see Figure~\ref{fig:split-reduce}(a)). On termination, each leaf cell is labeled as either ``inside'' or ``outside'' or is associated with a set of at most $t$ approximating halfspaces (see Figure~\ref{fig:split-reduce}(b)).

\begin{figure}[htbp]
  \centerline{\includegraphics[scale=0.40]{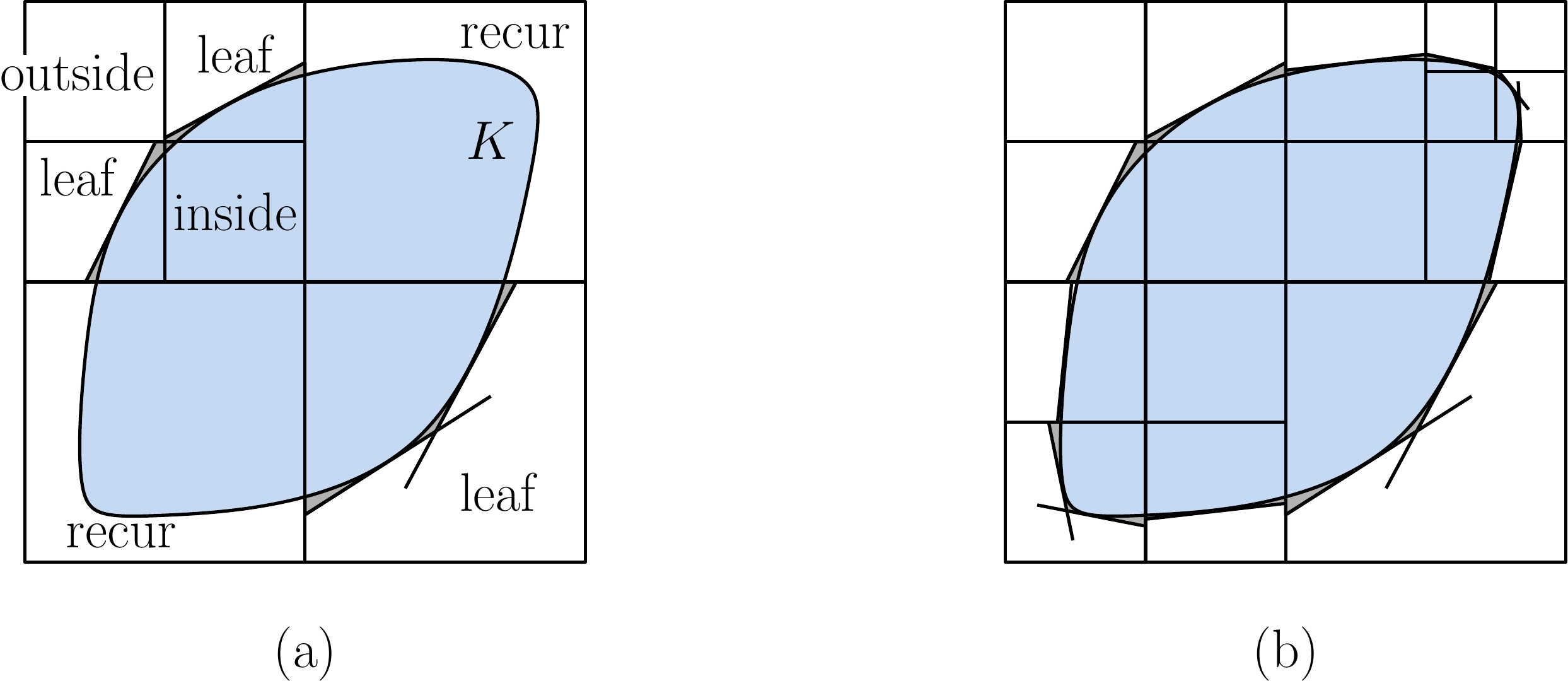}}
  \caption{(a) Cases arising in {\alg} for $t = 2$ and (b) the final subdivision.}
  \label{fig:split-reduce}
\end{figure}

\newcommand{\stepout}{1}  
\newcommand{\stepin}{2}
\newcommand{\stepapx}{3}
\newcommand{\steprecur}{4}
%
  \noindent{\alg}$(K,Q)$:
  \begin{enumerate}
  \setlength{\itemsep}{-0.5ex}%
  \setlength{\parsep}{0pt}%
	\item[(\stepout)] If $Q \cap K = \emptyset$,  label $Q$ as ``outside.''
	
	\item[(\stepin)] If $Q \subseteq K \oplus \eps$, label $Q$ as ``inside.''
  
	\item[(\stepapx)] If there exists a set at most $t$ halfspaces whose intersection provides an $\eps$-approximation to $K$ within $Q$, associate $Q$ with such a set $P(Q)$ of minimum size.
	
  \item[(\steprecur)] Otherwise, split $Q$ into $2^d$ quadtree cells and recursively invoke {\alg} on each. 
  \end{enumerate}

For the sake of our space-time trade-offs, we will usually assume that $t$ is reasonably large, say, $t = \Omega(\log \inv{\eps})$. Under our assumption that $t \ge 1$, steps (\stepout) and (\stepin) are not needed, since it is possible to $\eps$-approximate any cell satisfying these conditions with a single halfspace. This assumption on the value of $t$ is mainly a convenience to simplify the formulas of our mathematical analysis. Observe that even if $t = 0$, the procedure will terminate and provide a correct answer once the cell diameter falls below $\eps$.

It is easy to see that the recursion must terminate as soon as $\diam(Q) \le \eps$ (since, irrespective of whether it intersects $\partial K$, any such cell can be labeled either as ``inside'' or ``outside''). Of course, it may terminate much sooner. Since $Q_0$ is of unit diameter, it follows that the height of the quadtree is $O(\log \inv{\eps})$. The total space used by the data structure is the sum of the space needed to store the quadtree and the space needed to store the approximating halfspaces for the cells that intersect $K$'s boundary. Our next lemma shows that if the query time is sufficiently large, the latter quantity dominates the space asymptotically. For each leaf cell $Q$ generated by Step~(\stepapx), define $t(Q) = |P(Q)|$, and define $t(Q) = 1$ for all the other leaf cells.

\begin{lemma} \label{lem:total-space}
Given a convex body $K \subseteq Q_0$ in $\RE^d$. If $t$ is $\Omega(\log \inv{\eps})$, then the total space of the data structure produced by $\alg(K,Q_0)$ on $K$ for query time $t$ is asymptotically dominated by the sum of $t(Q)$ over all the leaf cells $Q$ that intersect $K$'s boundary.
\end{lemma}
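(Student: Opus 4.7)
The plan is to split the total space into two parts---the quadtree structure and the halfspace sets $P(Q)$ stored at Step~(\stepapx) leaves---and to show each is $O\big(\sum_{Q \in L_b} t(Q)\big)$, where I write $L_b$ for the set of leaves meeting $\partial K$. The halfspace storage is handled directly: a Step~(\stepapx) leaf is one where both Steps~(\stepout) and~(\stepin) failed, so it contains a point of $K$ and a point outside $K \oplus \eps$; by connectedness the cell meets $\partial K$, and hence every Step~(\stepapx) leaf lies in $L_b$. The total halfspace storage is therefore $\sum |P(Q)|$ over Step~(\stepapx) leaves, which is at most $\sum_{Q \in L_b} t(Q)$ since $t(Q) = |P(Q)|$ for such cells.

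For the quadtree structure, since the tree is a full $2^d$-ary tree its total size is $\Theta(N_L)$, where I write $N_L = N_B + N_O$ with $N_B = |L_b|$ and $N_O$ the number of non-boundary (in/out) leaves. The bound $N_B \le \sum_{Q \in L_b} t(Q)$ is immediate. For $N_O$, the parent of any non-boundary leaf must be internal and meet $\partial K$ (otherwise it would itself satisfy Step~(\stepout) or~(\stepin) and be a leaf), so $N_O \le 2^d N_I$ where $N_I$ is the number of internal nodes. Because $\partial K$ is connected, every internal cell has at least one child that meets $\partial K$, so every internal node is an ancestor of some leaf in $L_b$. Cells of diameter at most $\eps$ are automatically captured by Step~(\stepout) or~(\stepin), so the quadtree has depth $D = O(\log \inv{\eps})$; charging each internal node to one of its at most $D$ descendant boundary leaves yields $N_I \le N_B \cdot D = O(N_B \log \inv{\eps})$, and hence $N_O = O(N_B \log \inv{\eps})$.

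The main obstacle is to absorb this remaining factor of $\log \inv{\eps}$ into $\sum_{Q \in L_b} t(Q)$, since the trivial bound $\sum t(Q) \ge N_B$ is a factor of $\log \inv{\eps}$ too weak in the worst case---and this is exactly where the hypothesis $t = \Omega(\log \inv{\eps})$ is needed. My plan is an amortization over chains in the subtree of $\partial K$-meeting cells: a \emph{chain}, meaning a maximal sequence of internal nodes in which exactly one child meets $\partial K$, of length $\ell$ contributes $\ell$ internal nodes and $\ell(2^d{-}1)$ sibling in/out leaves, but terminates at a single leaf of $L_b$. Each chain cell fails to be $\eps$-approximable by $\le t$ halfspaces while its terminating descendant succeeds; combining this with Dudley's bound of $O\bigl((\diam(Q)/\eps)^{(d-1)/2}\bigr)$ halfspaces per cell and the $(1/d)$-fatness of $K$ forces each chain either to be short (length $O(1)$) or to generate boundary leaves whose aggregate $t(Q)$-values are $\Omega(\ell)$. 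Summed globally this absorbs the $\log \inv{\eps}$ slack. This amortization is the most delicate step, and the hypothesis $t = \Omega(\log \inv{\eps})$ cannot be dropped: a pointy convex body could otherwise produce a chain of length $\log \inv{\eps}$ terminating at a Step~(\stepin) leaf of $t$-value $1$, making the ratio of tree storage to $\sum t(Q)$ grow without bound.
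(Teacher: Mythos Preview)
Your setup is sound through the bound $N_I \le N_B \cdot D$, but the chain amortization that is supposed to absorb the extra $\log\inv{\eps}$ factor is where the argument breaks. A chain, as you define it, terminates at a \emph{single} boundary leaf $Q^*$, and you assert that Dudley's bound together with fatness forces either $\ell = O(1)$ or $t(Q^*) = \Omega(\ell)$. Neither ingredient gives this. Dudley's bound is an \emph{upper} bound on the number of halfspaces needed at the top cell; it relates $t$ to $\diam(Q_1)/\eps$, not to the chain length $\ell$ or to $t(Q^*)$. Fatness of $K$ controls global aspect ratio but says nothing about how approximation complexity decays along a chain. A long chain can perfectly well terminate at a leaf with $t(Q^*)$ bounded (say equal to $1$), even when $t = \Omega(\log\inv{\eps})$; nothing you have written rules this out. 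So the ``summed globally this absorbs the $\log\inv{\eps}$ slack'' step is an unjustified leap.

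The paper's argument sidesteps chains entirely. It introduces the set $M$ of internal nodes \emph{all of whose children are leaves}. For each $u \in M$, since $u$ was subdivided it needs more than $t$ halfspaces; but the union of the halfspace sets stored at $u$'s boundary-leaf children, together with the $2d$ bounding hyperplanes of each such child, $\eps$-approximates $K$ within $u$. Hence $\sum_{Q \in L,\, Q \text{ child of } u} (t(Q) + 2d) > t$. Summing over $u \in M$ (the children of distinct $M$-nodes are disjoint) yields $t(L) + 2d|L| \ge t\,|M|$, so $|M| = O(t(L)/t)$. Now every internal node is either in $M$ or an ancestor of an $M$-node, giving $N_I \le |M| \cdot \mathrm{height}(T) = O\big(t(L) \cdot (\log\inv{\eps})/t\big) = O(t(L))$ once $t = \Omega(\log\inv{\eps})$. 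The key point you missed is that the relevant count is not $N_B$ but $|M|$, and $|M|$ is small precisely because each $M$-node forces its leaf children to carry at least $t$ halfspaces in aggregate.
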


\begin{proof}
Let $T$ denote the quadtree produced by running {\alg} on $K$. As mentioned above, $T$ is of height $O(\log \inv{\eps} )$. By our hypothesis that $t$ is $\Omega(\log \inv{\eps} )$, there exists a constant $c$ such that $\textrm{height}(T) \le c \P t$. Let $L$ denote the set of leaves of $T$ that intersect the boundary of $K$, and let $M$ denote the internal nodes of $T$ that have the property that all their children are leaves. (These are the lowest internal nodes of the tree.) Let $t(L)$ denote the sum of $t(Q)$ over all $Q \in L$.

The fact that each node $u \in M$ was subdivided by {\alg} implies that more than $t$ halfspaces are needed to approximate $K$ within $u$'s cell. Therefore, the children of $u$ that intersect $K$'s boundary together require at least $t$ halfspaces. In addition to $P(Q)$, each quadtree leaf $Q$ can (implicitly) contribute its $2 d$ bounding hyperplanes to the approximation. Therefore, $t(L) + 2 d |L|$ hyperplanes suffice to approximate $K$ in all the cells of $M$, implying that $t(L) + 2 d |L| \ge t \cdot |M|$. Since $t(Q) \ge 1$, we have $t(L) \ge |L|$, and thus $(1 + 2 d) t(L) \ge t \cdot |M|$. 

Each internal node of $T$ is either in $M$ or is an ancestor of a node in $M$. Thus, the total number of internal nodes of $T$ is at most $|M| \cdot \textrm{height}(T)$. Since each internal node of a quadtree has $2^d$ children, the total number of nodes in the tree, excluding the root, is at most
\[
	2^d \cdot |M| \cdot \textrm{height}(T)
		~ \le ~ 2^d \cdot |M| \cdot (c \P t)
		~ \le ~ 2^d c  (1 + 2 d) t(L)
		~  =  ~ O(t(L)).
\]
Each internal node of $T$ and each leaf node that does not intersect $K$'s boundary contributes only a constant amount to the total space. Therefore, the space contribution of the nodes other than those of $L$ is at most a constant factor larger than the total number of nodes of $T$, which we have shown is $O(t(L))$. Therefore, the total space is $O(t(L))$, as desired.
\end{proof}

A query is answered by performing a point location in the quadtree to determine the leaf cell containing the query point. If the leaf cell is not labeled as being ``inside'' or ``outside'', we test whether the query point lies within all the associated halfspaces, and if so, we declare the point to be inside $K$. Otherwise it is declared to be outside. Clearly, the query time is $O(\log \inv{\eps} + t)$.

The algorithm is correct provided that the set of halfspaces $P(Q)$ computed in Step~(\stepapx) defines any $\eps$-approximation of $K$ within $Q$, but our analysis of the data structure's space requirements below (see the proof of Lemma~\ref{lem:restricted-dudley}) relies on the assumption that the size of $P(Q)$ is within a constant factor of the minimum number of halfspaces of any $\eps$-approximating polytope within $Q$. Unfortunately, we know of no constant-factor approximation to the problem of computing such a polytope. Thus, strictly speaking, the bounds stated in Theorem~\ref{thm:membership-ub} are purely existential. In Section~\ref{sec:preproc} we will show that through a straightforward modification of the greedy set-cover heuristic, it is possible to compute an approximation in which the number of defining halfspaces exceeds the optimum (for slightly smaller approximation parameter) by a factor of at most $\rho = O(\log \inv{\eps})$. From the following result it follows that this increases our space and query time bounds by $O(\log \inv{\eps})$.

\begin{lemma} \label{lem:weak-membership}
Given any $\rho \ge 1$ and any constant $0 < \beta \le 1$, if the number of halfspaces of $P(Q)$ computed in Step~(\stepapx) of {\alg} is within a factor $\rho$ of the minimum number of facets of any $(\beta \eps)$-approximating polytope within $Q$, then Theorems~\ref{thm:membership-ub} and~\ref{thm:ann-ub} hold but with the asymptotic space and query time bounds larger by a factor of $\rho$.
\end{lemma}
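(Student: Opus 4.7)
The plan is to modify Step~(\stepapx) of {\alg} so that it invokes the assumed $\rho$-approximation oracle for $(\beta\eps)$-approximation within $Q$, and declares $Q$ a leaf whenever the returned set $P(Q)$ has size at most $\rho \cdot t$, subdividing otherwise. Since $\beta \le 1$, any $(\beta\eps)$-approximation is also an $\eps$-approximation, so correctness of the basic procedure is preserved without any further change. Each leaf then carries at most $\rho t$ halfspaces, so a query costs $O(\log \inv{\eps} + \rho t) = O(\rho t)$ under the standing assumption $t = \Omega(\log \inv{\eps})$---precisely the promised factor-$\rho$ increase in query time.

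For the space, I would invoke Lemma~\ref{lem:total-space} to reduce matters to bounding $\sum_Q |P(Q)|$ over the boundary-crossing leaves. The oracle guarantee gives $|P(Q)| \le \rho \cdot t^*_{\beta\eps}(Q)$, where $t^*_{\beta\eps}(Q)$ denotes the true minimum number of halfspaces for any $(\beta\eps)$-approximation of $K$ within $Q$. The crucial observation is that the modified algorithm subdivides $Q$ only when $|P(Q)| > \rho t$, which forces $t^*_{\beta\eps}(Q) > t$---exactly the splitting rule of the idealized version of {\alg} run with parameter $t$ and approximation parameter $\beta\eps$. Consequently the modified quadtree is a prefix of the idealized one, and a short subadditivity argument (pasting children's approximations together with each child's $2d$ cell-bounding halfspaces---an $O(1)$ per-cell overhead absorbed by Lemma~\ref{lem:total-space}) bounds $\sum_{Q} t^*_{\beta\eps}(Q)$ over the modified leaves by the corresponding sum over the idealized leaves. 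Theorem~\ref{thm:membership-ub} applied to the idealized algorithm (with $\eps$ replaced by $\beta\eps$) in turn bounds that sum by $O(S(\beta\eps, t))$; because the upper bound of Theorem~\ref{thm:membership-ub} is polynomial in $1/\eps$ and $\beta$ is a constant, this equals $O(S(\eps, t))$. Multiplying through by the oracle's factor of $\rho$ delivers the claimed space $O(\rho \cdot S(\eps, t))$.

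The main obstacle, as I see it, is the subadditivity step used to compare the modified tree's leaf-sum to the idealized tree's: one must argue that halting the idealized algorithm early at a modified leaf does not inflate the optimum-sum by more than a constant factor. The standard trick---approximating a cell as the union of approximations of its $2^d$ quadrant children together with each child's axis-aligned bounding halfspaces---gives subadditivity with at most a constant additive overhead per cell, which Lemma~\ref{lem:total-space} absorbs. Finally, Theorem~\ref{thm:ann-ub} follows at once: the reduction from approximate nearest-neighbor searching to approximate polytope membership presented in Section~\ref{sec:ann} uses the polytope-membership structure as a black box whose space and query time enter the ANN bounds linearly, so the $\rho$-factor propagates through intact.
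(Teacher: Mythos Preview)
Your proof is correct and takes essentially the same approach as the paper: run the suboptimal variant with threshold $\rho t$, compare it against the idealized {\alg} run with threshold $t$ and approximation parameter $\beta\eps$, observe that the former's quadtree is a subtree of the latter's, and conclude that both space and query time inflate by at most a factor of~$\rho$. The paper's own proof is terser and does not spell out the leaf-sum comparison you flag as the ``main obstacle''; it simply asserts that ``each leaf node may contain up to a factor of $\rho$ more halfspaces'' and concludes directly. Your explicit subadditivity argument (which mirrors the device already used in the proof of Lemma~\ref{lem:total-space}) is a reasonable way to make that step rigorous, and your handling of Theorem~\ref{thm:ann-ub} via the black-box reduction matches the paper's reasoning.
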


\begin{proof}
Let us refer to the hypothesized version of {\alg} whose Step~(\stepapx) is suboptimal as ${\alg}'$. Consider an execution of ${\alg}'$ using $\rho \kern1pt t$ as the desired query time and $\eps$ as the approximation parameter, and let us compare this to an execution of {\alg} using $t$ and $\beta \eps$, respectively. Since $\beta$ is a constant, the asymptotic dependencies on $\eps$ are unaffected, and therefore the space and query times stated in Theorems~\ref{thm:membership-ub} and~\ref{thm:ann-ub} apply without modification to the execution of {\alg}. In this execution, if the subdivision declares some quadtree cell $Q$ to be a leaf, then $t$ halfspaces suffice to $(\beta \eps)$-approximate $K$ within $Q$, and so by our hypothesis in the corresponding execution of ${\alg}'$, Step~(\stepapx) returns at most $\rho \kern1pt t$ halfspaces, implying this execution also declares $Q$ to be a leaf. Therefore, the tree generated by ${\alg}'$ is a subtree of the tree generated by {\alg}, but each leaf node may contain up to a factor of $\rho$ more halfspaces. Thus, the asymptotic space and query time bounds for ${\alg}'$ are larger than those of ${\alg} $ by this same factor.
\end{proof}

\section{Simple Upper Bound} \label{sec:firstbound}

In this section, we present a simple upper bound of $O\big(1/\eps^{(d-1)/2}\big)$ on the storage of the data structure obtained by the {\alg} algorithm for any given query time $t \geq 1/\eps^{(d-1)/4}$. The tools developed in this section will be useful for the more comprehensive upper bounds, which will be presented in subsequent sections.

Throughout this section we do not necessarily assume that $K$ has been scaled to lie within $Q_0$ and may generally be much larger. Recall that $S$ denotes a hypersphere of radius $3$ centered at the origin. Let $X$ denote a surface patch of $K$ that lies within $S$. Let $\Vor(X)$ denote the set of points exterior to $K$ whose closest point on $\partial K$ lies within $X$. We refer to the surface patch $\Vor(X) \cap S$ (the points of $S$ whose closest point on $\partial K$ lies within $X$) as the \emph{Voronoi patch} of $X$. Voronoi patches are related to Dudley's construction. In particular, a sample point $x \in S$ from Dudley's construction generates a supporting halfspace at a point of $X$ if and only if $x \in \Vor(X) \cap S$. The following two lemmas are straightforward adaptations of Dudley's analysis~\cite{Dudley}. The first is just a restatement of Dudley's result.

\begin{lemma} \label{lem:Dudley1}
Given a convex body $K$ in $\RE^d$ that lies within $Q_0$ and $0 < \eps \le 1$, there exists an $\eps$-approximating polytope $P$ bounded by at most $c / \eps^{(d-1)/2}$ facets, where $c$ is a constant depending only on $d$.
\end{lemma}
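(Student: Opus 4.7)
The plan is to reprise Dudley's classical construction, already informally sketched in Section~\ref{sec:prelim-simple-approx}. First I would fix $S$, the hypersphere of radius $3$ centered at the origin, and take $\Sigma \subset S$ to be a $\sqrt{\eps}$-dense set (shrinking the density parameter by a suitable absolute constant if needed, which is absorbed into the hidden constants below). A standard volumetric packing argument on the $(d-1)$-sphere yields $|\Sigma| = O(1/\eps^{(d-1)/2})$, since a spherical cap of intrinsic radius $\Theta(\sqrt{\eps})$ has surface measure $\Theta(\eps^{(d-1)/2})$ while the total surface of $S$ is $\Theta(1)$.

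Next, for each $x \in \Sigma$ I would let $x_0$ be a nearest point of $K$ to $x$ and define $H(x)$ as the closed halfspace containing $K$ whose bounding hyperplane passes through $x_0$ orthogonally to $x - x_0$. The metric projection theorem guarantees that this hyperplane supports $K$ at $x_0$, so $K \subseteq H(x)$. Defining $P = \bigcap_{x \in \Sigma} H(x)$ immediately gives $K \subseteq P$ together with a facet count of at most $|\Sigma| = O(1/\eps^{(d-1)/2})$, matching the claim of the lemma.

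It then remains to verify $P \subseteq K \oplus \eps$. For any $p \in P$ with $h = d(p, K) > 0$, I would let $p_0$ be the foot of $p$ on $K$ and extend the outward ray from $p_0$ through $p$ until it hits $S$ at some point $y$; this is well defined since $K \subseteq Q_0 \subseteq B(0, 1/2)$ lies strictly inside $S$, and moreover $\ell = \|y - p_0\| \ge \radius(S) - \|p_0\| \ge 5/2$. Because $y$ lies on the outward normal ray at $p_0$, the point $p_0$ is also the foot of $y$ on $K$. I would then pick $x \in \Sigma$ with $\|x - y\| \le \sqrt{\eps}$, let $x_0$ be its foot on $\partial K$, and derive a contradiction with $p \in H(x)$ under the assumption $h > \eps$.

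The main obstacle will be this final geometric estimate. I would work in coordinates with $p_0$ at the origin and $p - p_0$ parallel to $e_d$, so that the supporting hyperplane at $p_0$ is $\{z_d \le 0\}$, forcing $(x_0)_d \le 0$. Expanding the constraint $p \in H(x) \Leftrightarrow \langle p - x_0, x - x_0\rangle \le 0$ with $p = h e_d$, $y = \ell e_d$, and $x = y + u$ where $\|u\| \le \sqrt{\eps}$, and then discarding the nonpositive term $(h + \ell)(x_0)_d$, should yield a necessary inequality whose right-hand side I can maximize as a quadratic in $\|x_0\|$. The upshot will be a bound of the form $h(\ell - \sqrt{\eps}) \le \tfrac{1}{4}\eps$, which combined with $\ell \ge 5/2$ contradicts $h > \eps$ (up to an absolute constant that is absorbed by rescaling the density of $\Sigma$), completing the proof.
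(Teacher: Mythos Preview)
Your proposal is correct and follows exactly Dudley's construction, which is the approach the paper sketches in Section~\ref{sec:prelim-simple-approx}; the paper itself does not prove this lemma but simply cites it as a restatement of Dudley's result, with the key geometric estimate deferred to Lemma~\ref{lem:Dudley}. Your inner-product expansion and quadratic maximization in $\|x_0\|$ give a clean self-contained derivation of that estimate (in fact yielding $h \le \eps/6$ directly, so no rescaling of the density is even needed).
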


The second lemma is a technical result that is implicit in Dudley's analysis. Given two points $x, y \in \RE^d$, let $\overline{x y}$ denote the segment between them, and let $\|x y\|$ denote the Euclidean length of this segment.

\begin{lemma} \label{lem:Dudley}
Let $K$ be a convex body, let $0 < \eps \le 1$, and let $z$ and $x$ be two points of $S$ such that $\|z x\| \le \sqrt{\eps} / 4$. Let $z_0$ and $x_0$ be the points on $\partial K$ that are closest to $z$ and $x$, respectively. If $z_0$ is within unit distance of the origin, then
\begin{enumerate}
\item[$(i)$] $\|z_0 x_0\| \leq \sqrt{\eps} / 4$ and

\item[$(ii)$] the supporting hyperplane at $x_0$ orthogonal to the segment $\overline{x x}_0$ intersects segment $\overline{z z}_0$ at distance less than $\eps$ from $z_0$ (see Figure~\ref{fig:dudley-lemma}).
\end{enumerate}
\end{lemma}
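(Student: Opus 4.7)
For part (i), the plan is to invoke the standard fact that the metric projection onto a closed convex set is $1$-Lipschitz. If $a_0, b_0 \in K$ are the nearest points in $K$ to $a,b \in \RE^d$, then expanding $\|a_0-b_0\|^2$ and using the first-order optimality conditions $\langle a - a_0, y - a_0\rangle \le 0$ and $\langle b - b_0, y - b_0\rangle \le 0$ valid for every $y \in K$ yields $\|a_0 - b_0\| \le \|a - b\|$. Applying this with $a = z$ and $b = x$ gives $\|z_0 - x_0\| \le \|z - x\| \le \sqrt\eps/4$ directly.

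For part (ii), I would introduce the outer unit normals $\hat u = (z - z_0)/\ell_z$ and $\hat v = (x - x_0)/\ell_x$, where $\ell_z = \|z - z_0\|$ and $\ell_x = \|x - x_0\|$, and parametrize the sought intersection point as $p = z_0 + s \hat u$. Solving $\langle p - x_0, \hat v\rangle = 0$ gives
\[
	s \;=\; \frac{\langle x_0 - z_0,\, \hat v\rangle}{\langle \hat u,\, \hat v\rangle},
\]
so the task becomes bounding this quotient above by $\eps$ and showing that it is nonnegative. Nonnegativity of the numerator follows from the supporting hyperplane at $x_0$ (which places $z_0 \in K$ on the opposite side from $x$), and once $s < \eps \le 1$ is established the intersection automatically lies on the segment $\overline{z z_0}$ since $\ell_z \ge 2$.

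The core of the argument is to show the denominator is close to $1$ while the numerator is only $O(\eps)$. For the denominator, $\|(z-z_0) - (x-x_0)\| \le \|z-x\| + \|z_0-x_0\| \le \sqrt\eps/2$ by part (i), and the reverse triangle inequality then gives $|\ell_z - \ell_x| \le \sqrt\eps/2$. Because $\|z\| = 3$ and $\|z_0\| \le 1$, we have $\ell_z \ge 2$; a short computation combining these estimates yields $\|\hat u - \hat v\| \le \sqrt\eps/2$, hence $\cos\theta \ge 1 - \eps/8 \ge 7/8$ and $\sin\theta \le \sqrt\eps/2$, where $\theta$ is the angle between $\hat u$ and $\hat v$. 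For the numerator, decompose $x_0 - z_0 = a\hat u + r$ with $r \perp \hat u$. Since $\hat u$ is an outer normal at $z_0$ and $x_0 \in K$, the scalar $a = \langle x_0 - z_0, \hat u\rangle$ is nonpositive, while $\|r\| \le \|x_0 - z_0\| \le \sqrt\eps/4$. Writing $\hat v = \cos\theta\, \hat u + \sin\theta\, \hat w$ for a unit vector $\hat w \perp \hat u$ then gives
\[
	\langle x_0 - z_0,\, \hat v\rangle \;=\; a\cos\theta + \sin\theta\,\langle r, \hat w\rangle \;\le\; 0 + \tfrac{\sqrt\eps}{2}\cdot\tfrac{\sqrt\eps}{4} \;=\; \tfrac{\eps}{8},
\]
and dividing by $\cos\theta \ge 7/8$ produces $s \le \eps/7 < \eps$, completing the argument.

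The main obstacle is extracting the cancellation that reduces the numerator from $O(\sqrt\eps)$ to $O(\eps)$: a naive Cauchy--Schwarz bound $\langle x_0 - z_0, \hat v\rangle \le \|x_0 - z_0\| \le \sqrt\eps/4$ falls short by a factor of $1/\sqrt\eps$. The improvement is driven by the supporting-hyperplane characterization at $z_0$, which forces the $\hat u$-component of $x_0 - z_0$ to be nonpositive; only the perpendicular component survives, and its contribution is further damped by the small tilt $\sin\theta = O(\sqrt\eps)$ between $\hat u$ and $\hat v$.
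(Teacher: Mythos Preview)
Your argument is correct. Part~(i) is the standard $1$-Lipschitz property of the metric projection, and your treatment of part~(ii) is a clean self-contained computation: the key step---using the supporting-hyperplane condition at $z_0$ to kill the $\hat u$-component of $x_0 - z_0$, so that only the $O(\sqrt\eps)$ transverse component survives and is then multiplied by $\sin\theta = O(\sqrt\eps)$---is exactly the right idea, and your constants check out.

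The paper itself does not give a proof of this lemma; it simply states that the result ``is implicit in Dudley's analysis'' and cites~\cite{Dudley}. So there is nothing to compare against beyond noting that your explicit write-up supplies what the paper leaves to the reader. If anything, your version is more transparent than what one typically extracts from Dudley's paper, since you work directly with the normals $\hat u,\hat v$ and isolate the two first-order optimality conditions (at $z_0$ and at $x_0$) that drive the $O(\eps)$ bound.
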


\begin{figure}[htbp]
  \centerline{\includegraphics[scale=0.40]{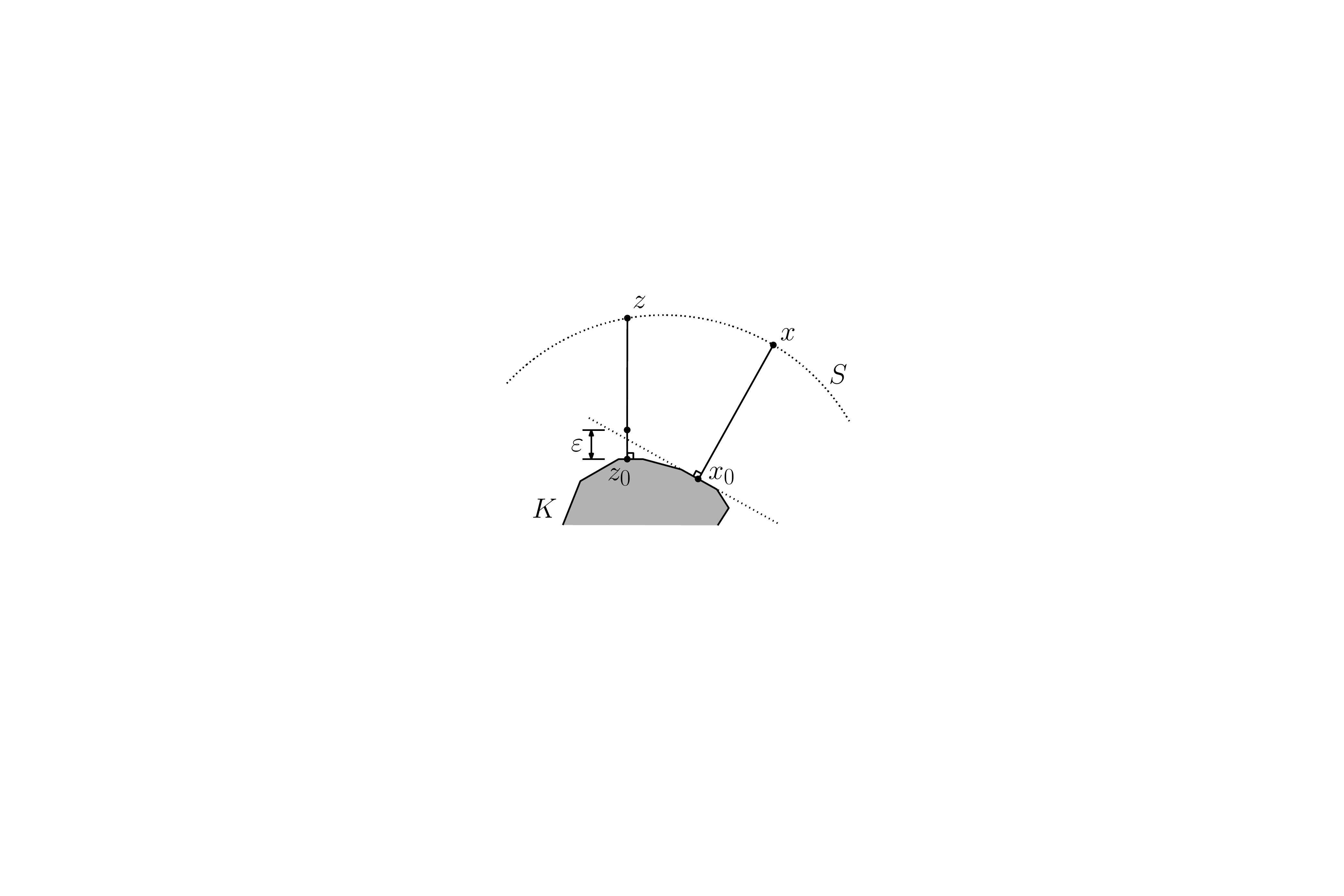}}
  \caption{Lemma~\ref{lem:Dudley}.}
  \label{fig:dudley-lemma}
\end{figure}

The following lemma is an extension of Dudley's results, which allows us to bound the complexity of an $\eps$-approximation of $K$ within a quadtree cell $Q$. Recall from Section~\ref{sec:prelim-quadtree} that $E(K,Q)$ denotes portion of $\partial K$ that lies within distance $\sqrt{\eps}$ of $Q$. 

\begin{lemma} \label{lem:restricted-dudley}
Let $K$ be a convex body, $Q \subseteq Q_0$ be a quadtree cell that intersects $\partial K$, and $0 < \eps \le 1/2$. Let $\Sigma$ denote a set of $(\sqrt{\eps}/4)$-dense points on the Dudley sphere $S$. Then $t(Q) \le |\Sigma \cap \Vor(E(K,Q))|$ (see Figure~\ref{fig:restricted-dudley}(a)).
\end{lemma}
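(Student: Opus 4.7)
The plan is to exhibit an explicit $\eps$-approximation of $K$ within $Q$ whose defining halfspaces are indexed by $\Sigma' := \Sigma \cap \Vor(E(K,Q))$, which would directly give $t(Q) \le |\Sigma'|$. For each $x \in \Sigma'$, let $x_0 \in E(K,Q)$ be its closest point on $\partial K$, and let $H_x$ be the closed halfspace containing $K$ whose bounding hyperplane passes through $x_0$ orthogonally to $\overline{x x_0}$, exactly as in Dudley's construction. Define $P = \bigcap_{x \in \Sigma'} H_x$; this polytope is bounded by at most $|\Sigma'|$ halfspaces.

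One direction of the approximation, $K \cap Q \subseteq P \cap Q$, is immediate because $K \subseteq H_x$ for every $x$. The bulk of the argument is the reverse inclusion $P \cap Q \subseteq (K \oplus \eps) \cap Q$: for every $z \in Q$ with $d(z, K) > \eps$ I must find some $\hat x \in \Sigma'$ for which $z \notin H_{\hat x}$. I would follow the Dudley ray-extension trick. Let $z_0 \in \partial K$ be the closest point on $\partial K$ to $z$, and extend the ray from $z_0$ through $z$ until it meets $S$ at a point $z^*$; by the outer-normal characterization of projection onto a convex set, $z_0$ is also the closest point of $\partial K$ to $z^*$. By density of $\Sigma$, there exists $\hat x \in \Sigma$ with $\|\hat x - z^*\| \le \sqrt{\eps}/4$. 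Lemma~\ref{lem:Dudley}(i) then gives $\|\hat x_0 - z_0\| \le \sqrt{\eps}/4$, and Lemma~\ref{lem:Dudley}(ii) guarantees that the bounding hyperplane of $H_{\hat x}$ meets the segment $\overline{z^* z_0}$ within distance $\eps$ of $z_0$. Since $z$ lies on this segment at distance strictly greater than $\eps$ from $z_0$, it lies on the outer side of the hyperplane, so $z \notin H_{\hat x}$.

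The main obstacle is verifying $\hat x \in \Sigma'$, that is, $\hat x_0 \in E(K,Q)$. From $z \in Q$ and the fact that $Q$ contains some $p \in \partial K$, we have $\|z - z_0\| \le \|z - p\| \le \diam(Q)$, and therefore $d(\hat x_0, Q) \le \|z - z_0\| + \sqrt{\eps}/4$. When $\|z - z_0\| \le 3\sqrt{\eps}/4$, this is at most $\sqrt{\eps}$, so $\hat x_0 \in E(K,Q)$ and we are done. The delicate case is $\|z - z_0\| > 3\sqrt{\eps}/4$, where $z$ sits far from $\partial K$ and $z_0$ may lie well outside any $\sqrt{\eps}$-neighborhood of $Q$; there I would replace $\hat x$ by a sample associated with a tangent point of $\partial K$ inside $Q$ itself. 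Specifically, the segment from $z$ to an arbitrary $p \in Q \cap \partial K$ lies in $Q$ by convexity and crosses $\partial K$ at some first point $q \in Q \cap \partial K \subseteq E(K,Q)$, which automatically lies in the shadow of $z$. Choosing $\hat x \in \Sigma$ near $q + t\nu_q$ on $S$ (for the unique $t > 0$ with $q + t\nu_q \in S$) gives $\hat x_0$ within $\sqrt{\eps}/4$ of $q$ and hence in $E(K,Q)$, and a short continuity argument based on Lemma~\ref{lem:Dudley} shows that $H_{\hat x}$ strictly separates $z$ from $K$ in this regime as well.
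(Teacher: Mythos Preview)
Your overall strategy matches the paper's, and the ``easy case'' ($\|z-z_0\|\le 3\sqrt{\eps}/4$) is correct and essentially identical to what the paper does. The problem is the ``delicate case.'' There you pick the first point $q$ where the segment $\overline{z p}$ meets $\partial K$, take the outward normal $\nu_q$, and sample near $q+t\nu_q$ on $S$. But the supporting hyperplane at $q$ orthogonal to $\nu_q$ need not separate $z$ from $K$ by more than an arbitrarily small margin: if the segment $\overline{zq}$ is nearly tangent to $K$ at $q$, then $\langle z-q,\nu_q\rangle$ can be as small as you like (even zero). Lemma~\ref{lem:Dudley}(ii) only tells you that $H_{\hat x}$ crosses the \emph{normal} segment $\overline{(q+t\nu_q)\,q}$ within $\eps$ of $q$; it gives no control over where $z$ sits relative to $H_{\hat x}$, since $z$ is not on that normal line. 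So the ``short continuity argument'' you allude to does not go through, and $H_{\hat x}$ may fail to exclude $z$.

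The paper avoids the case split altogether with a single clean trick. Let $q_0$ be the closest point of $K\cap Q$ (not of $\partial K$) to $z$. On the segment $\overline{z q_0}\subseteq Q$, pick the point $q_\eps$ at distance exactly $\eps$ from $\partial K$ (it exists by continuity). Let $z'_0$ be the nearest point of $\partial K$ to $q_\eps$; then $\|q_\eps - z'_0\|=\eps$, so $z'_0$ is within distance $\eps$ of $Q$, and crucially $q_\eps$ lies on the outward normal line at $z'_0$. Extending that normal to $S$ and sampling $\hat x\in\Sigma$ within $\sqrt{\eps}/4$ now makes Lemma~\ref{lem:Dudley}(ii) apply directly: $H_{\hat x}$ separates $q_\eps$ (hence $z$, which lies beyond $q_\eps$ on the line through $q_0$) from $K$, and $d(\hat x_0,Q)\le \|{\hat x}_0 - z'_0\| + \|z'_0 - q_\eps\| \le \sqrt{\eps}/4+\eps\le\sqrt{\eps}$, so $\hat x\in\Sigma'$. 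Replacing your two-case argument with this intermediate point $q_\eps$ fixes the gap.
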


\begin{proof}
We construct an approximating polytope $P$ by the following local variant of Dudley's construction. For each point $x \in \Sigma \cap \Vor(E(K,Q))$, let $x_0$ be its nearest point on the boundary of $K$. (Note that $x_0 \in E(K,Q)$.) For each point $x_0$, take the supporting halfspace to $K$ passing through $x_0$ that is orthogonal to the segment $\overline{x x}_0$. Let $P$ be the (possibly unbounded) intersection of these halfspaces.

\begin{figure}[htbp]
  \centerline{\includegraphics[scale=0.40]{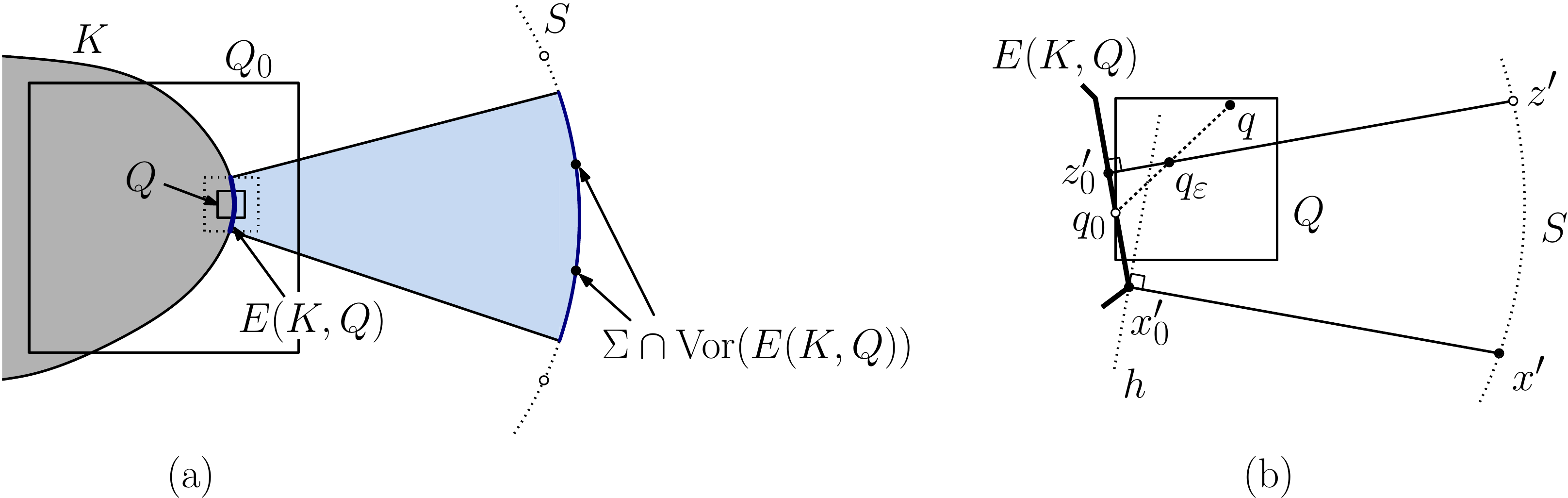}}
  \caption{Lemma~\ref{lem:restricted-dudley}. (Not drawn to scale.)}
  \label{fig:restricted-dudley}
\end{figure}

First, we show that $\Sigma \cap \Vor(E(K,Q))$ is nonempty. Consider any point $z_0$ on $\partial K \cap Q$. Let $z$ denote any point of $S \cap \Vor(E(K,Q))$ whose closest point on $\partial K$ is $z_0$. By definition of $\Sigma$, there is a point $x \in \Sigma$ whose distance from $z$ is at most $\sqrt{\eps}/4$. Letting $x_0$ denote $x$'s closest point on $\partial K$, by Lemma~\ref{lem:Dudley}(i), $\|z_0 x_0\| \le \sqrt{\eps}/4 < \sqrt{\eps}$. Thus, $x_0$ lies within $E(K,Q)$, which implies that $x \in \Sigma \cap \Vor(E(K,Q))$. It follows that $P$ is bounded by at least one halfspace.

We now show that $P$ is an (outer) $\eps$-approximation of $K$ within $Q$. Since $P$ is defined by supporting hyperplanes, $K$ is contained within $P$. Consider any $q \in Q$ that is at distance greater than $\eps$ from $K$. It suffices to show that $q \notin P$, that is, there exists a bounding hyperplane for $P$ that separates $q$ from $K$. Let $q_0$ denote the point of $K \cap Q$ that is closest to $q$ (see Figure~\ref{fig:restricted-dudley}(b)). Note that $q_0$ is constrained to lie within $Q$, and hence this may not be the closest point to $q$ on $\partial K$. By continuity, there must be a point on the segment $\overline{q q}_0$ that is at distance exactly $\eps$ from $\partial K$, which we denote by $q_{\eps}$. Since $Q$ is convex, this segment is contained in $Q$, and, hence, so is $q_{\eps}$. 

Let $z'_0$ be the point on $\partial K$ that is closest to $q_{\eps}$. (Note that $z'_0$ need not lie within $Q$.) Because $Q_0$ is centered at the origin, $z'_0$'s distance from the origin is at most $\diam(Q_0)/2 + \|q_{\eps} z'_0\| \le 1/2 + \eps \le 1$. Let $z'$ denote the point of intersection with the Dudley hypersphere $S$ of the ray emanating from $z'_0$ and passing through $q_{\eps}$. Let $x'$ be a point of $\Sigma$ that lies within distance $\sqrt{\eps} / 4$ of $z'$, and let $x'_0$ be its closest point on $\partial K$. By Lemma~\ref{lem:Dudley}(i) $\|x'_0 z'_0\| \le \sqrt{\eps} / 4$, and (ii) the supporting hyperplane $h$ at $x'_0$ orthogonal to the segment $\overline{x' x'_0}$ intersects segment $\overline{z' z'_0}$ at distance less than $\eps$ from $z'_0$. Thus, $h$ separates $q_{\eps}$ from $K$, and therefore it separates $q$ from $K$.

To complete the proof that $q \notin P$, it suffices to show that $h$ is indeed included in our construction of $P$. By the triangle inequality and our assumption that $\eps \le 1/2$, the distance from $x'_0$ to $Q$ is at most
\[
	\|x'_0 z'_0\| + \|z'_0 q_{\eps}\|
		~ \le ~ \frac{\sqrt{\eps}}{4} + \eps
		~ \le ~ \sqrt{\eps}.
\]
It follows that $x'_0 \in E(K,Q)$, and so $h$ is included in the construction of $P$. By our hypothesis that the set $P(Q)$ constructed in Step~(\stepapx) of ${\alg}$ is the minimum-sized set of halfspaces needed to $\eps$-approximate $K$ within $Q$, we have $t(Q) = |P(Q)| \le |P| = |\Sigma \cap \Vor(E(K,Q))|$. (Note that this works even if $Q$ is an ``inside'' cell that intersects $K$'s boundary. In such a case $t(Q) = 1$ by definition, and as argued above, $\Sigma \cap \Vor(E(K,Q))$ is nonempty.) This completes the proof.
\end{proof}

Next, we prove a useful technical lemma, which bounds the total complexity of a set of leaves whose cells are of a given minimum size. Recalling the definition of $\Sigma$ from the previous lemma, we may assume that $|\Sigma| = \Theta\big(1/\eps^{(d-1)/2}\big)$.

\begin{lemma} \label{lem:space-bound}
Let $K$ be a convex body in $\RE^d$, let $0 < \eps \le 1/2$, and let $L$ denote a set of disjoint quadtree cells contained within $Q_0$ such that each intersects $\partial K$ and is of diameter $\Omega(\sqrt{\eps})$. Then $\sum_{Q \in L} t(Q) = O\big(1/\eps^{(d-1)/2}\big)$.
\end{lemma}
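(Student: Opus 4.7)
The plan is to apply Lemma~\ref{lem:restricted-dudley} to each cell $Q \in L$, obtaining $t(Q) \le |\Sigma \cap \Vor(E(K,Q))|$, where $\Sigma$ is the $(\sqrt{\eps}/4)$-dense set of points on the Dudley hypersphere with $|\Sigma| = \Theta(1/\eps^{(d-1)/2})$. Summing over $L$ and swapping the order of summation yields
\[
\sum_{Q \in L} t(Q) ~\le~ \sum_{x \in \Sigma} N(x),
\qquad\text{where}\qquad
N(x) ~=~ |\{Q \in L : \mathrm{dist}(x_0, Q) \le \sqrt{\eps}\}|,
\]
and $x_0$ denotes the point of $\partial K$ closest to $x$. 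Since $|\Sigma| = \Theta(1/\eps^{(d-1)/2})$, the bound claimed by the lemma reduces to showing that $N(x) = O(1)$ for every $x \in \Sigma$, with the constant depending only on $d$.

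To bound $N(x)$, I would first note that by Lemma~\ref{lem:Dudley}(ii) together with $K \subseteq Q_0$, the point $x_0$ lies within unit distance of the origin, so $B(x_0, \sqrt{\eps})$ can be covered by $O(1)$ axis-aligned quadtree cells of side length $\Theta(\sqrt{\eps})$. For each such covering cell $C$, I would invoke the fundamental fact that any two quadtree cells within $Q_0$ are either nested or disjoint. A cell $Q \in L$ that intersects $C$ must therefore be nested with $C$ in one of two ways: either $Q \supseteq C$, in which case the disjointness of $L$ forces at most one such $Q$ per $C$; or $Q \subsetneq C$, in which case $Q$ has side $\Omega(\sqrt{\eps})$ while $C$ has side $O(\sqrt{\eps})$, so a volume-packing argument (disjoint sub-cells, each of volume $\Omega(\eps^{d/2})$, inside a region of volume $O(\eps^{d/2})$) gives only $O(1)$ such cells. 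Aggregating over the $O(1)$ covering cells produces $N(x) = O(1)$.

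Combining the pieces yields $\sum_{Q \in L} t(Q) \le O(|\Sigma|) = O(1/\eps^{(d-1)/2})$, as required. The main obstacle is the $O(1)$ bound on $N(x)$: a naive scale-by-scale count can suggest up to $O(\log \inv{\eps})$ cells of widely differing sizes clustering near $x_0$, but the strict nesting-or-disjointness property of quadtree cells, combined with the disjointness of $L$, collapses this count back to $O(1)$ once one aggregates the different scales inside each fixed covering cell. The remaining steps (the application of Lemma~\ref{lem:restricted-dudley}, the exchange of summations, and the density bound on $\Sigma$) are routine.
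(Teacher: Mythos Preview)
Your approach is correct and matches the paper's: apply Lemma~\ref{lem:restricted-dudley} cell by cell, swap the summation, and show each $x\in\Sigma$ lies in $\Vor(E(K,Q))$ for only $O(1)$ cells $Q\in L$ via disjointness and a packing argument (the paper compresses this last step into a single sentence). Two minor slips worth fixing: first, Lemma~\ref{lem:space-bound} does \emph{not} assume $K\subseteq Q_0$ (Section~\ref{sec:firstbound} explicitly allows $K$ to be larger), and Lemma~\ref{lem:Dudley}(ii) concerns the supporting hyperplane rather than the location of $x_0$---but neither claim is actually needed, since the packing bound on $N(x)$ holds for any $x_0$; second, your nesting dichotomy ``$Q\supseteq C$ or $Q\subsetneq C$'' omits the case where $Q$ and $C$ meet only along their boundaries---a cleaner route is to choose the covering level so that $\diam(C)$ is at most the constant in the $\Omega(\sqrt{\eps})$ hypothesis, which forces every $Q\in L$ to be a union of cells at that level and hence to \emph{contain} some covering cell, collapsing the case analysis entirely.
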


\begin{proof}
By applying Lemma~\ref{lem:restricted-dudley} to each $Q \in L$ we have
\[
  \sum_{Q \in L} t(Q) ~\le~ \sum_{Q \in L} |\Sigma \cap \Vor(E(K,Q))|.
\]
Since $|\Sigma| = O\big(1/\eps^{(d-1)/2}\big)$, to complete the proof it suffices to show that each $x \in \Sigma$ lies within $\Vor(E(K,Q))$ for at most constant number of $Q \in L$. To see this, let $x_0$ be the point on $\partial K$ that is closest to $x$. Since each cell $Q \in L$ has size at least $\Omega(\sqrt{\eps})$, by disjointness and a packing argument it follows that at most a constant number (depending on dimension) of such cells can lie within distance $\sqrt{\eps}$ of $x_0$, which establishes the claim.
\end{proof}

Combining the above results, we obtain the main result of this section.

\begin{lemma} \label{lem:firstboundunit}
Let $K$ be a convex body in $\RE^d$ and $0 < \eps \le 1/2$. The output of $\alg(K,Q_0)$ for $t \geq 1/\eps^{(d-1)/4}$ has total space $O\big(1/\eps^{(d-1)/2}\big)$.
\end{lemma}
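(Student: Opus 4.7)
The plan is to argue that {\alg}, when run with query bound $t \ge 1/\eps^{(d-1)/4}$, never produces leaves of diameter smaller than a constant multiple of $\sqrt{\eps}$; once this is known, Lemma~\ref{lem:space-bound} immediately gives the desired storage bound. By Lemma~\ref{lem:total-space}, since $t \ge 1/\eps^{(d-1)/4}$ dominates $\log(1/\eps)$ for small $\eps$, the total storage is asymptotically $\sum_{Q \in L} t(Q)$, where $L$ denotes the set of leaf cells intersecting $\partial K$. It thus suffices to bound this sum.

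The core step is a local complexity claim: there exists a constant $c_0 > 0$, depending only on $d$, such that every cell $Q$ with $\diam(Q) \le c_0 \sqrt{\eps}$ satisfies $t(Q) \le 1/\eps^{(d-1)/4}$. To prove this, I would apply Lemma~\ref{lem:restricted-dudley} in a standardized frame, rescaling $Q$ to $Q_0$ and thereby transforming the absolute error parameter from $\eps$ to $\eps' = \eps/\diam(Q)$. The lemma then yields
\[
	t(Q) \;\le\; |\Sigma'| \;=\; O\!\left(\frac{1}{(\eps')^{(d-1)/2}}\right) \;=\; O\!\left(\left(\frac{\diam(Q)}{\eps}\right)^{\!(d-1)/2}\right)\!,
\]
where $\Sigma'$ is the standardized $(\sqrt{\eps'}/4)$-dense set on the Dudley hypersphere. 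Substituting $\diam(Q) \le c_0\sqrt{\eps}$ gives $t(Q) = O\!\left(c_0^{(d-1)/2}/\eps^{(d-1)/4}\right)$, which is at most $1/\eps^{(d-1)/4}$ once $c_0$ is chosen small enough in terms of the constant from the Dudley bound. The boundary regime $\diam(Q) \le \eps$ (where the standardization would push $\eps'$ beyond $1/2$) is handled separately: any such cell intersecting $\partial K$ lies within $K \oplus \eps$ and is labeled ``inside'' at Step~(\stepin), with $t(Q)=1$.

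Given the claim, Step~(\stepapx) succeeds at every cell of diameter at most $c_0\sqrt{\eps}$, so no cell of diameter strictly less than $c_0\sqrt{\eps}/2$ can arise: its parent would have diameter at most $c_0\sqrt{\eps}$ and would itself have been declared a leaf, leaving no subdivision to create the child. It follows that every leaf of the quadtree has diameter $\Omega(\sqrt{\eps})$, so $L$ is a set of pairwise disjoint quadtree cells satisfying the hypotheses of Lemma~\ref{lem:space-bound}. Applying that lemma yields $\sum_{Q \in L} t(Q) = O(1/\eps^{(d-1)/2})$, which together with Lemma~\ref{lem:total-space} completes the proof. The main technical obstacle is the standardized application of Lemma~\ref{lem:restricted-dudley}: one must carefully track how the density parameter, the Dudley hypersphere, and the extended boundary $E(K,Q)$ transform under the affine rescaling, and verify that the constants permit a choice of $c_0$ for which the right-hand side above is cleanly at most $1/\eps^{(d-1)/4}$.
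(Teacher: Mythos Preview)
Your proposal is correct and follows essentially the same route as the paper: show that every internal node has diameter at least $c_0\sqrt{\eps}$ (hence every leaf has diameter $\Omega(\sqrt{\eps})$), then invoke Lemma~\ref{lem:space-bound} and Lemma~\ref{lem:total-space}. The only difference is that for the local complexity bound the paper applies Lemma~\ref{lem:Dudley1} directly to the standardized body $K\cap Q$ (which immediately gives at most $c_2/(\eps')^{(d-1)/2}$ facets with a clean constant), whereas you route through Lemma~\ref{lem:restricted-dudley} and then trivially bound $|\Sigma'\cap\Vor(E(K,Q_0))|$ by $|\Sigma'|$; the paper's choice sidesteps exactly the standardization technicalities you flag at the end.
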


\begin{proof}
Let $c_2$ be the constant of Lemma~\ref{lem:Dudley1}, and define $c_1 = (1/c_2)^{2/(d-1)}$. We may assume that $\eps \le c_1^2$, for otherwise $\eps = \Omega(1)$ and clearly ${\alg}$ will not generate more than a constant number of cells. 

Let $T$ denote the quadtree produced by the algorithm, and let $L$ denote the set of leaf cells of $T$ that intersect the boundary of $K$. Recall from Lemma~\ref{lem:total-space} that the data structure's total space is asymptotically bounded by the sum of $t(Q)$ for all $Q \in L$. Thus, it suffices to prove that
\[
  \sum_{Q \in L} t(Q) ~ = ~ O\big(1/\eps^{(d-1)/2}\big).
\]

Towards this end, we first prove a lower bound on the size of any leaf cell $Q$. We assert that the cell $Q$ associated with any internal node has diameter at least $\delta = c_1 \sqrt{\eps}$. It will then follow that each leaf cell has diameter at least $\delta/2$. Suppose to the contrary that $\diam(Q) < \delta$. Recall the standardization transformation from Section~\ref{sec:prelim-quadtree}, which maps $Q$ to $Q_0$ and scales $\eps$ to at least $\eps/\delta = \sqrt{\eps}/c_1$. Let us denote this value by $\eps'$. Since $\eps \le c_1^2$, we have $\eps' \le 1$. By applying Lemma~\ref{lem:Dudley1} to the transformed body (with $\eps'$ playing the role of $\eps$), it follows that the polytope $K \cap Q$ can be $\eps$-approximated by a polytope $P$ defined by the intersection of at most 
\[
  \frac{c_2}{(\eps')^{(d-1)/2}} 
	~  =  ~ c_2 \left( \frac{c_1}{\sqrt{\eps}} \right)^{\kern-2pt (d-1)/2}
	~ \le ~ \inv{\eps^{(d-1)/4}}
\] 
halfspaces. Since $K \cap Q \subseteq P$, it is easy to see that $P$ is an $\eps$-approximation of $K$ within $Q$. Since $t \ge 1/\eps^{(d-1)/4}$, the termination condition of our algorithm implies that such a cell is not further subdivided, contradicting our hypothesis that this is an internal node. Therefore, the cells of $L$ satisfy the conditions of Lemma~\ref{lem:space-bound}. The desired bound follows by applying this lemma.
\end{proof}

It is useful to contrast this with the Dudley-based approach described in Section~\ref{sec:prelim-simple-approx}. For $t = 1/\eps^{(d-1)/4}$, we obtain the same $O(1/\eps^{(d-1)/2})$ space in each case, but the exponent in the query time of {\alg} is only half that of the Dudley-based approach. Later, in Lemma~\ref{lem:baseunit}, we will present a more refined analysis showing that it is possible to reduce this further, achieving a query time of only $\SoftOh\big(1/\eps^{(d-1)/8}\big)$.

It will be useful in later sections to generalize the above lemma to quadtree cells of arbitrary size. By a direct application of standardization, we obtain the following.

\begin{lemma} \label{lem:firstbound}
Let $K$ be a convex body in $\RE^d$, $Q$ be a quadtree cell contained within $Q_0$, and let $0 < \eps \le \diam(Q)/2$. The output of the call $\alg(K,Q)$ for $t \geq (\diam(Q)/\eps)^{(d-1)/4}$ has total space $O\big((\diam(Q)/\eps)^{(d-1)/2}\big)$.
\end{lemma}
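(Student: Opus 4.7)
\medskip
\noindent\textbf{Proof plan for Lemma~\ref{lem:firstbound}.}
The plan is to reduce the general case to Lemma~\ref{lem:firstboundunit} by applying the standardization transformation of Section~\ref{sec:prelim-quadtree}. Let $s = \diam(Q)$, and let $T$ denote the affine map (a uniform scaling by $1/s$ composed with a translation) that sends $Q$ to the standard unit cell $Q_0$. Set $K' = T(K)$ and $\eps' = \eps/s$. Because $T$ scales Euclidean distance by $1/s$, the given hypothesis $\eps \le \diam(Q)/2$ becomes $\eps' \le 1/2$, and the hypothesis $t \ge (\diam(Q)/\eps)^{(d-1)/4}$ becomes $t \ge 1/(\eps')^{(d-1)/4}$, matching the hypotheses of Lemma~\ref{lem:firstboundunit}.

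Next I would verify that the recursive execution of $\alg(K,Q)$ is isomorphic, cell-by-cell, to $\alg(K',Q_0)$ under $T$. This reduces to checking that each of the three access primitives used in Steps~(\stepout)--(\stepapx) is preserved by $T$. Primitive~(i), disjointness of a child cell from $K$, is preserved because affine maps commute with intersection. For primitive~(ii), note that $T(K \oplus \eps) = K' \oplus \eps'$, since $T$ scales Euclidean distance by $1/s$; hence containment in the $\eps$-expansion of $K$ within a child cell $Q''$ is equivalent to containment in the $\eps'$-expansion of $K'$ within $T(Q'')$. For primitive~(iii), any halfspace $H$ bounds an $\eps$-approximation of $K$ within $Q''$ if and only if $T(H)$ bounds an $\eps'$-approximation of $K'$ within $T(Q'')$, so the minimum cardinality sets $P(Q'')$ are in bijection. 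Consequently, the quadtrees produced by the two calls are identical under $T$, and for every leaf cell $Q'' \subseteq Q$ the value $t(Q'')$ is the same in both executions.

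Finally, applying Lemma~\ref{lem:firstboundunit} to $K'$ with approximation parameter $\eps'$ and query-time parameter $t$ yields a total space bound of
\[
  O\!\left(1/(\eps')^{(d-1)/2}\right)
   ~=~ O\!\left((\diam(Q)/\eps)^{(d-1)/2}\right),
\]
which transfers back to the original call $\alg(K,Q)$ since the space is measured by the combinatorial counts $t(Q'')$, which are invariant under $T$ as noted above. The only potential obstacle is the verification that the three access primitives are genuinely $T$-equivariant, but this is a routine consequence of the fact that $T$ is a similarity and the primitives involve only incidence, convexity, and Euclidean-distance offsets that scale uniformly with $s$. With that verification in hand, the lemma follows immediately from Lemma~\ref{lem:firstboundunit}.
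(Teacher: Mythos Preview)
Your proposal is correct and takes essentially the same approach as the paper, which simply states that the lemma follows ``by a direct application of standardization'' to Lemma~\ref{lem:firstboundunit}. You have merely spelled out the details of that reduction (the rescaling $\eps' = \eps/\diam(Q)$ and the invariance of the access primitives under a similarity), which the paper leaves implicit.
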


\section{Dual Caps and Approximation} \label{sec:cap}

The bounds proved in the previous section apply to query times $t \geq 1/\eps^{(d-1)/4}$. In Section~\ref{sec:membership} we will show how to obtain good space bounds for smaller query times. This will involve analyzing the local geometry about small boundary patches of the convex body. In this section, we introduce the principal geometric underpinnings that will be needed for this more refined analysis. In particular, we discuss the concepts of dual caps and restricted dual caps and their role in polytope approximation.

Although we do not assume that $K$ is smooth, it will simplify the presentation to imagine that each boundary point has a unique supporting hyperplane and a unique normal vector. To achieve this, we employ an \emph{augmented representation} of the boundary points of $K$. In particular, each boundary point $p \in \partial K$ will be expressed as a pair $(p,h)$, where $h$ is a supporting hyperplane at $p$. We will often refer to $h$ as $h(p)$. When $h$ is clear from context or unimportant, we avoid explicit reference to it.

We first observe that computing an outer $\eps$-approximation of a convex body $K$ by halfspaces can be reduced to a hitting-set problem. Consider any point $p_{\eps}$ that is external to $K$ at distance $\eps$ from its boundary, and let $(p,h)$ denote the augmented boundary point consisting of the closest point $p \in \partial K$ to $p_{\eps}$ and the supporting hyperplane through $p$ that is orthogonal to the segment $p p_{\eps}$ (see Figure~\ref{fig:dual-cap}(a)). We define the \emph{$\eps$-dual cap} of $p$, denoted $D(p)$, to be the set of augmented boundary points $(q,h')$ such that the supporting hyperplane $h'$ through $q$ intersects the closed line segment $p p_{\eps}$. (Equivalently, these are the points of $\partial K$ that are visible to $p_{\eps}$.) 

\begin{figure}[htbp]
  \centerline{\includegraphics[scale=0.38]{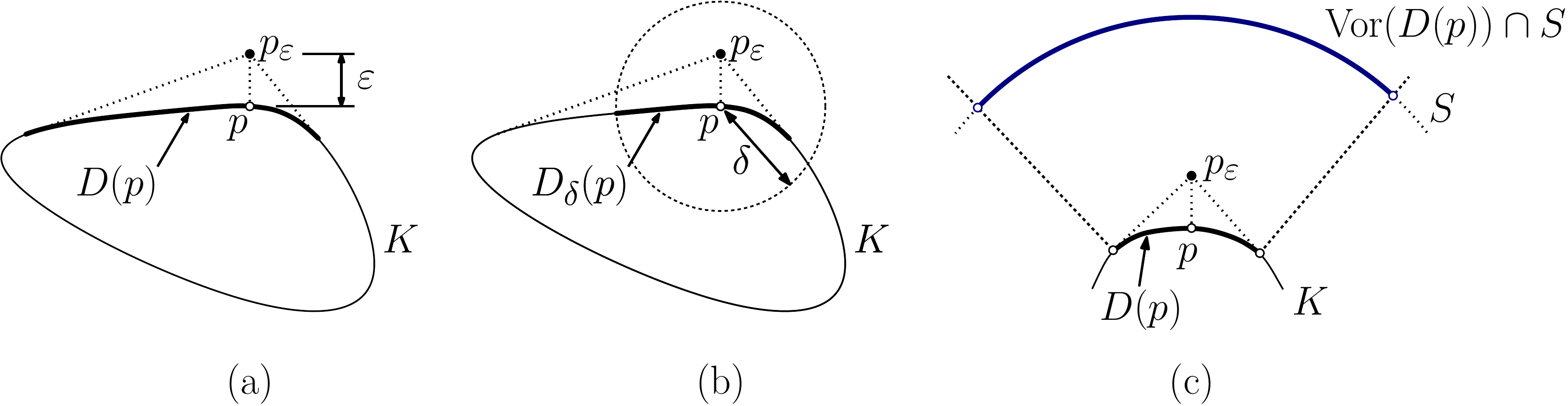}}
  \caption{(a) Dual caps, (b) restricted dual caps, and (c) the Voronoi patch of a dual cap.}
  \label{fig:dual-cap}
\end{figure}

Any outer $\eps$-approximation of $K$ by halfspaces must contain at least one halfspace that separates $p$ from $p_{\eps}$, and this can be achieved by including $h'$ for any pair $(q,h')$ within $D(p)$. A set of augmented points $\Sigma \subseteq \partial K$ is said to be an \emph{$\eps$-hitting set} for $K$ if for every $p \in \partial K$, $\Sigma \cap D(p) \ne \emptyset$. It follows directly that the intersection of the supporting halfspaces for any $\eps$-hitting set is an outer $\eps$-approximation of $K$. This observation will be formalized within our quadtree-based context in our next lemma. Before stating the lemma, we need to introduce one additional concept. In order to approximate $K$ within a given quadtree cell $Q$, we are interested only in the geometry of $K$'s boundary that lies close to $Q$. For this reason, it will be desirable to limit the diameter of dual caps. Given $\delta > 0$, let $B_{\delta}(p)$ denote the closed Euclidean ball of radius $\delta$ centered at $p$. Define the \emph{$\delta$-restricted dual cap}, denoted $D_{\delta}(p)$, to be the intersection of $D(p)$ with $B_{\delta}(p)$ (see Figure~\ref{fig:dual-cap}(b)).

\begin{lemma} \label{lem:patchouter}
Let $K$ be a convex body, $Q \subseteq Q_0$ be a quadtree cell that intersects $\partial K$, and $0 < \eps \le 1/2$. Let $\Sigma$ be any set of augmented points on $E(K,Q)$ that hits the set of all $\sqrt{\eps}$-restricted $\eps$-dual caps whose defining point is in $E(K,Q)$ (see Figure~\ref{fig:patchouter}(a)). Then there is a polytope $P$ defined as the intersection of $|\Sigma|$ halfspaces that $\eps$-approximates $K$ within $Q$.
\end{lemma}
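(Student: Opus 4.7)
The plan is to define $P$ directly from $\Sigma$: for each augmented point $(q', h') \in \Sigma$, include the closed halfspace $h'^+$ bounded by the supporting hyperplane $h'$ and containing $K$, and set $P := \bigcap_{(q', h') \in \Sigma} h'^+$. This is a polytope defined by at most $|\Sigma|$ halfspaces, and $K \subseteq P$ is immediate. It therefore remains only to verify the outer inclusion $P \cap Q \subseteq (K \oplus \eps) \cap Q$, which I would establish by contrapositive: fix $q \in Q$ with $\mathrm{dist}(q, K) > \eps$ and exhibit some $(q', h') \in \Sigma$ whose halfspace $h'^+$ excludes $q$.

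The construction of the relevant dual cap mirrors the proof of Lemma~\ref{lem:restricted-dudley}. Let $q_0$ be the point of the nonempty compact set $K \cap Q$ closest to $q$; a short local argument (pushing from $q_0$ slightly into $\mathrm{int}(K)$ would contradict minimality) shows $q_0 \in \partial K$. Convexity of $Q$ places the segment $\overline{q q_0}$ inside $Q$. Along this segment, the continuous function $\mathrm{dist}(\cdot, \partial K)$ takes value $0$ at $q_0$ and value strictly greater than $\eps$ at $q$, so there exists $q_\eps \in \overline{q q_0} \subseteq Q$ with $\mathrm{dist}(q_\eps, \partial K) = \eps$. Let $p$ be the nearest point of $\partial K$ to $q_\eps$; then $\|p - q_\eps\| = \eps$, and because $q_\eps \in Q$ and $\eps \le \sqrt{\eps}$, we conclude $p \in E(K, Q)$. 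Treating $q_\eps$ as the external point $p_\eps$ in the definition of the $\eps$-dual cap of $p$ matches the setup from Section~\ref{sec:cap} exactly.

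Now apply the hitting hypothesis to this $p \in E(K, Q)$: there is some $(q', h') \in \Sigma \cap D_{\sqrt{\eps}}(p)$, and in particular $h'$ meets the closed segment $\overline{p\, q_\eps}$. Since $p \in K \subseteq h'^+$ and $h'$ is crossed (or touched) somewhere along this segment, $q_\eps$ must lie in the complementary closed halfspace $h'^-$. To propagate this separation from $q_\eps$ out to $q$, I consider the affine signed-distance function $f$ to $h'$ along the line through $q_0$, $q_\eps$, and $q$: it is $\le 0$ at $q_0 \in K \subseteq h'^+$, $\ge 0$ at $q_\eps \in h'^-$, and, because $\mathrm{dist}(q, K) > \eps = \|p - q_\eps\|$ forces $q_\eps$ to lie strictly between $q_0$ and $q$ on the segment $\overline{q_0 q}$, a short linear calculation yields $f(q) > 0$ in the generic situation. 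Therefore $q \notin h'^+$, and hence $q \notin P$.

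The main technical obstacle is the degenerate subcase where $f$ vanishes identically along the line through $q_0$, $q_\eps$, and $q$; this occurs precisely when $h'$ contains the entire line, and the calculation then only gives $q \in h'$ rather than strict exclusion. I expect to handle this by a density argument: the set of $q \in Q$ triggering such an alignment is contained in a finite union of hyperplanes (namely those associated with $\Sigma$) and therefore has empty interior, while $P$ and $K \oplus \eps$ are both closed and $\mathrm{dist}(\cdot, K)$ is continuous, so the containment extends from the generic case to all $q \in Q$. Apart from this detail, the remainder of the proof amounts to a careful unpacking of the dual-cap definition together with elementary convex geometry.
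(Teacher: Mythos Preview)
Your proposal is correct and follows essentially the same route as the paper's proof: define $P$ from the supporting halfspaces of $\Sigma$, locate $q_\eps$ on $\overline{q q_0}$ at distance exactly $\eps$ from $\partial K$, and invoke the hitting hypothesis on the restricted dual cap at its nearest boundary point (which the paper calls $z_0$). The paper disposes of the separation step in a single clause (``$h_0$ separates $q_\eps$ (and therefore $q$) from $K$'') without isolating the boundary alignment you flag, so your treatment is if anything more careful than the paper's at that point.
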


\begin{proof}
Let $P$ be the polytope defined by the intersection of the supporting halfspaces associated with each augmented point of $\Sigma$ (see Figure~\ref{fig:patchouter}(b)). Clearly, $K \subseteq P$. Consider any point $q \in Q$ that is at distance greater than $\eps$ from $K \cap Q$. It suffices to show that $q \notin P$, that is, there exists a bounding hyperplane for $P$ that separates $q$ from $K$. 

\begin{figure}[htbp]
  \centerline{\includegraphics[scale=0.40]{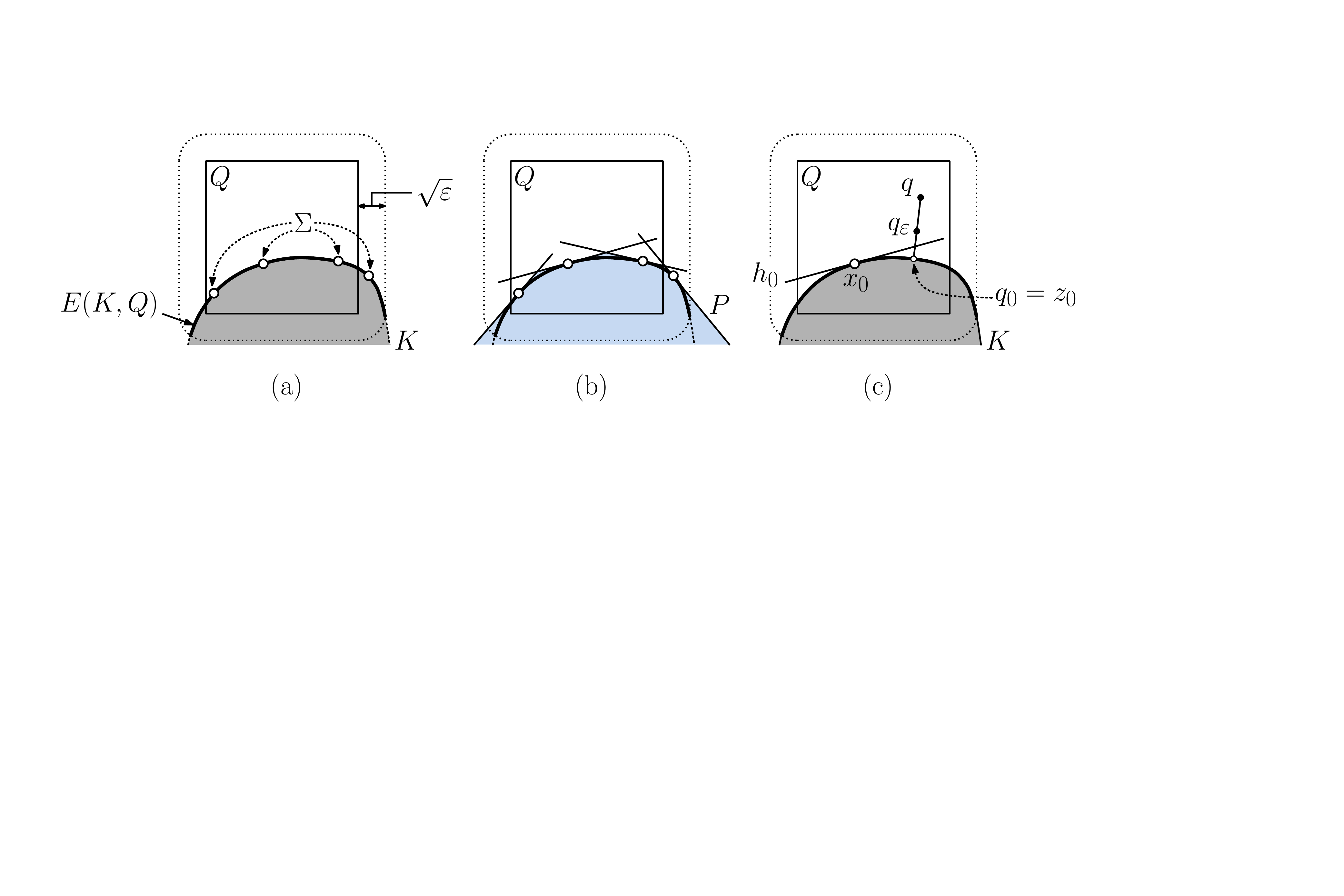}}
  \caption{Lemma~\ref{lem:patchouter}.}
  \label{fig:patchouter}
\end{figure}

We apply a similar argument to the one that we used in the proof of Lemma~\ref{lem:restricted-dudley}. Consider any $q \in Q$ that is at distance greater than $\eps$ from $K$ (see Figure~\ref{fig:patchouter}(c)). It suffices to show that there exists a bounding hyperplane for $P$ that separates $q$ from $K$. Let $q_0$ denote the point of $K \cap Q$ that is closest to $q$. By continuity, there must be a point on the segment $\overline{q q}_0$ that is at distance exactly $\eps$ from $\partial K$, which we denote by $q_{\eps}$. Since $Q$ is convex, this segment must be contained in $Q$, and, hence, so is $q_{\eps}$. 

Let $z_0$ be the point on $\partial K$ that is closest to $q_{\eps}$. (In our figure $z_0 = q_0$, but generally $z_0$ need not lie within $Q$.) Since $\eps \le 1$, we have $\|q_{\eps} z_0\| = \eps \le \sqrt{\eps}$. It follows that $z_0 \in E(K,Q)$. Therefore, there exists an augmented point $(x_0,h_0) \in \Sigma$ that hits the $\sqrt{\eps}$-restricted $\eps$-dual cap defined by $z_0$ (whose apex is at $q_{\eps}$). The supporting hyperplane $h_0$ separates $q_{\eps}$ (and therefore $q$) from $K$, as desired.
\end{proof}

Our analysis of the space bounds of {\alg} is based on the combined sizes of the $\eps$-hitting sets for $K$ within each quadtree cell $Q$. Dudley's construction can be viewed as one method of computing $\eps$-hitting sets. Unfortunately, Dudley's construction does not lead to the best bounds because it tends to over-sample in regions of very low or very high curvature. Our analysis will be based on a more refined, area-based approach to bounding the sizes of hitting sets. The key geometric observation is that the product of the areas of any $\eps$-dual cap and its associated Voronoi patch on the Dudley sphere $S$ must be large. Intuitively, if the surface area of an $\eps$-dual cap is small, then the total curvature of the patch must be high, and so the associated Voronoi patch must have relatively large area (see Figure~\ref{fig:dual-cap}(c)). More precisely, we show that (under certain conditions) the product of the areas of an $\eps$-dual cap and its Voronoi patch is $\Omega(\eps^{d-1})$. This result is stated formally in Lemma~\ref{lem:dual-basic} below. Given a $(d-1)$-dimensional manifold, let $\area(Y)$ denote its $(d-1)$-dimensional Hausdorff measure. Given a convex body $X$ in $\RE^d$, we use $\area(X)$ as a shorthand for $\area(\partial X)$. 

\begin{lemma}[Area-Product Bound] \label{lem:dual-basic}
Let $K$ be a convex body in $\RE^d$, let $0 < \eps \le 1/8$. Consider a pair $(p,h(p))$, where $p \in \partial K$ and $h(p)$ is a supporting hyperplane passing through $p$. Let $D$ denote the $\sqrt{\eps}$-restricted $\eps$-dual cap whose defining point is $p$. If $K$ is fat and of diameter at least $2 \eps$, there exists a constant $c_a$ (depending only on $d$) such that if $p$ lies within a unit ball centered at the origin, then $\area(D) \cdot \area(\Vor(D) \cap S) > c_a \cdot \eps^{d-1}$.
\end{lemma}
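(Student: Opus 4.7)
The product form of the target bound and the appearance of $\sqrt{\eps}$ in the restriction both point to the Mahler volume inequality (Lemma~\ref{lem:mahler}) in its scaled form $\vol(X) \cdot \vol(\polarX{r}{X}) \ge c_m r^{2d}$. My plan is to associate auxiliary convex bodies $X$ and $Y$ with $D$ and $\Vor(D) \cap S$ respectively, such that
\[
  \vol(X) ~=~ O(\eps \cdot \area(D)), \qquad \vol(Y) ~=~ O(\area(\Vor(D) \cap S)),
\]
and such that $Y$ contains a constant-factor dilation of $\polarX{\sqrt{\eps}}{X}$. Combining Lemma~\ref{lem:mahler} applied to $X$ at scale $r = \sqrt{\eps}$ with these comparisons yields
$\vol(X) \cdot \vol(Y) = \Omega(\eps^d)$, and dividing out the factor of $\eps$ gives $\area(D) \cdot \area(\Vor(D) \cap S) = \Omega(\eps^{d-1})$.

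\textbf{Auxiliary bodies.} Shift coordinates so that $p_\eps$ is the origin, and let $\nu(p)$ be the outward unit normal at $p$. A natural choice is
\[
  X ~=~ \conv(D \cup \{p_\eps\})
  \qquad\text{and}\qquad
  Y ~=~ \conv\bigl((\Vor(D) \cap S) \cup \{p_\eps\}\bigr).
\]
The body $X$ is a cap with apex at $p_\eps$ and base on $\partial K$; the defining property of an $\eps$-dual cap forces the supporting hyperplane at each $q \in D$ to meet the segment from $p$ to $p_\eps$, so $X$ has extent $O(\eps)$ in the direction $\nu(p)$, while the $\sqrt{\eps}$-restriction gives $\|q - p_\eps\| = O(\sqrt{\eps})$ transverse to $\nu(p)$. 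A pyramid estimate then gives $\vol(X) = O(\eps \cdot \area(D))$. For $Y$, since $p$ lies in the unit ball around the origin and $S$ has radius $3$, every point of $\Vor(D) \cap S$ is at distance $\Theta(1)$ from $p_\eps$, and the same pyramid estimate yields the volume bound for $Y$.

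\textbf{Polar duality.} The heart of the argument is the containment $\polarX{\sqrt{\eps}}{X} \subseteq O(1) \cdot Y$ (up to truncation; see below). Under the polar map centered at $p_\eps$ with $r = \sqrt{\eps}$, each extremal point $q \in D$ of $X$ produces a bounding halfspace of the polar at distance $\eps / \|q - p_\eps\| = \Theta(\sqrt{\eps})$, whose outward normal is $(q - p_\eps)/\|q - p_\eps\|$. By the dual-cap condition, the hyperplane $h(q)$ separates $K$ from $p_\eps$, so this direction is nearly antiparallel to $\nu(q)$. Correspondingly, each $z \in \Vor(D) \cap S$ lies in direction $\approx \nu(q(z))$ from $p_\eps$, so the bounding hyperplanes of the polar sit transverse to the rays of $Y$. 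Making this correspondence quantitative (with constants depending only on $d$) yields the required containment inside the relevant angular sector.

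\textbf{Main obstacle and role of the hypotheses.} The principal difficulty is that $p_\eps$ lies on $\partial X$, so $\polarX{\sqrt{\eps}}{X}$ is unbounded in the direction $+\nu(p)$ and no literal containment $\polarX{\sqrt{\eps}}{X} \subseteq c \cdot Y$ is possible. I expect to truncate $\polarX{\sqrt{\eps}}{X}$ to a slab of thickness $\Theta(\sqrt{\eps})$ transverse to $\nu(p)$ and argue two things: first, that the truncated body still carries volume $\Omega(\eps^d)$ (so the Mahler bound is not destroyed), and second, that it fits inside $O(1) \cdot Y$. The fatness of $K$ enters exactly at this step, by confining the normals $\nu(q)$ for $q \in D$ to a narrow cone around $\nu(p)$, which simultaneously controls the extent of $\polarX{\sqrt{\eps}}{X}$ within the slab and guarantees that the directional sector swept by $Y$ is matched by the polar's relevant halfspaces. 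The diameter assumption $\diam(K) \ge 2\eps$ prevents the local cap from being pinched against the global size of $K$, and $\eps \le 1/8$ together with $p$ in the unit ball keep the Dudley sphere at $\Theta(1)$ distance so that the pyramid estimate for $\vol(Y)$ is tight. Once the truncation is carried out and the polar containment is made quantitative, Lemma~\ref{lem:mahler} finishes the proof in one line.
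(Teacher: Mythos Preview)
Your strategy of reaching the bound through the Mahler inequality is exactly right, but the particular $d$-dimensional setup has a genuine gap at the Mahler step. You center the polar at $p_\eps$, which is a \emph{vertex} of $X = \conv(D \cup \{p_\eps\})$; as you note, $\polarX{\sqrt{\eps}}{X}$ is then unbounded, and Lemma~\ref{lem:mahler} gives no information (the product is trivially infinite). Your proposed fix is to truncate the polar to a slab and argue that the truncated part still has volume $\Omega(\eps^d/\vol(X))$, but this is not a consequence of the Mahler inequality and you offer no mechanism for proving it. Relatedly, the containment ``$\polarX{\sqrt{\eps}}{X} \subseteq O(1)\cdot Y$'' does not hold even after truncation: $Y$ is a cone with apex $p_\eps$, while the truncated polar is a wedge whose apex sits at distance $\Theta(1)$ from $p_\eps$ in the direction $+\nu(p)$, and dilating a cone from its own apex cannot change its opening angle to absorb a wider wedge. (Already for $K$ a unit ball in $\RE^2$ the opening angles differ by a constant factor and the shapes are centered at different points.) Finally, your account of where fatness enters is incorrect: fatness does \emph{not} confine the normals $\nu(q)$, $q\in D$, to a narrow cone---a fat polytope can have a vertex at $p$ with normals spread over a solid angle of order one.

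The paper avoids all three issues by dropping one dimension. It works with the $(d{-}1)$-dimensional \emph{base} $\Gamma_\delta = h(p)\cap\conv(K\cup\{p_\eps\})\cap B_\delta(p)$, for which $p$ is an interior point, so Lemma~\ref{lem:mahler} applies directly in $\RE^{d-1}$. Two comparison lemmas replace your pyramid estimates: $\area(D_\delta)\ge 2^{-(d-1)}\area(\Gamma_\delta)$, and $\polarX{r}{\Gamma_\delta}\subseteq \Vor(D_\delta)\cap H$ for $r=\sqrt{\eps/8}$ and any horizontal hyperplane $H$ at distance $\ge 1$ below $h(p)$ (this is the delicate step, proved by a planar reduction). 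Passing from $H$ to the sphere $S$ is where fatness is actually used: one first smooths $K$ to $K' = K\oplus(\eps/2)$, which forces $\Gamma'_\delta$ to contain a ball of radius $\Theta(\eps)$ so that its polar is bounded and fits under $S$; fatness (together with $\diam(K)\ge 2\eps$) is then used only to show that smoothing inflates $\area(\Gamma_\delta)$ by at most a constant factor. If you want to salvage your $d$-dimensional approach, the natural move is to shift the polar center to an interior point of $X$ (say the midpoint of $p p_\eps$), but you will then find that controlling the polar essentially forces you back to the base $\Gamma_\delta$.
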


The proof of the lemma is quite technical and will be deferred to Section~\ref{sec:proof}. The geometric basis of the proof involves the Mahler volume, which was introduced in Section~\ref{sec:prelim-polar}. The bound stated in the lemma holds if $K$ is $\gamma$-fat for any $\gamma$ in the interval $(0,1]$ under the assumption that $\gamma$ does not depend on $\eps$. In particular, the proof will reveal that $c_a = \Omega(\gamma^{d-1})$.

We will exploit this observation to demonstrate the existence of smaller $\eps$-hitting sets than those given by Dudley's construction. We will hit the restricted $\eps$-dual caps that have large surface area by sampling points randomly on the boundary of $K$, and we will hit those with small surface area by sampling points randomly on the Dudley hypersphere and then selecting their nearest neighbors on $\partial K$. In order to prove that such a random sampling strategy works to stab all the dual caps, we need to establish bounds on the VC-dimension of an appropriate range space based on dual caps. This is not surprising given that dual caps and restricted dual caps are defined by a constant number of parameters. The result is stated in the following lemma. The proof involves a straightforward application of basic geometric principles and appears in the appendix.

\begin{restatable}{lemma}{VCLemmaStmt}\label{lem:VC}
Let $K$ be a convex body in $\RE^d$ that lies within $Q_0$, and let $\eps$ and $\delta$ be positive real parameters. The following range spaces $(X_i,R_i)$ have constant VC-dimension (where the constant depends only on $d$):
\begin{enumerate}
	\setlength{\itemsep}{-0.5ex}%
	\setlength{\parsep}{0pt}%
	\item[$(1)$] $X_1 = \partial K$ and $R_1$ is the set of $\eps$-dual caps.
	\item[$(2)$] $X_2 = S$ and $R_2$ is the set of Voronoi patches of the $\eps$-dual caps.
	\item[$(3)$] $X_3 = \partial K$ and $R_3$ is the set of $\delta$-restricted $\eps$-dual caps.
	\item[$(4)$] $X_4 = S$ and $R_4$ is the set of Voronoi patches of the $\delta$-restricted $\eps$-dual caps.
\end{enumerate}
\end{restatable}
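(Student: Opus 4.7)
The plan is to show that each of the four range spaces is, after a bounded-dimension lift, a subfamily of halfspace ranges in Euclidean space, so that the VC-dimension is bounded by the standard halfspace bound (together with Dudley's combination lemma for intersections). Throughout it helps to unpack what a dual cap is: under the augmented representation each boundary point $q$ has a fixed outward unit normal $n(q)$ and offset $b(q)$ with $K \subseteq \{y : \langle n(q),y\rangle \le b(q)\}$, and the range $D(p)$ is fully specified by its ``apex'' $p_\eps \in \RE^d$, the point at distance $\eps$ from $p$ along the outward normal of $h(p)$.

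For range space $(1)$, the condition $(q,h(q)) \in D(p)$ says that $h(q)$ meets the segment $\overline{p\,p_\eps}$; since $p$ lies on the $K$-side of $h(q)$, this reduces to the single linear inequality $\langle n(q), p_\eps\rangle \ge b(q)$ in the parameter $p_\eps$. Lifting each augmented point to $(n(q),b(q)) \in \RE^{d+1}$ and indexing each range by $(p_\eps,-1) \in \RE^{d+1}$, $R_1$ becomes a subfamily of closed halfspaces in $\RE^{d+1}$, whose VC-dimension is at most $d+2$. For range space $(2)$, each $x \in S$ has a unique nearest point $\pi(x) \in \partial K$, and the supporting hyperplane at $\pi(x)$ orthogonal to $u(x) := x - \pi(x)$ is a natural valid choice (and the one actually employed in Dudley's construction). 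Then $x \in \Vor(D(p)) \cap S$ says precisely that this hyperplane separates $p_\eps$ from $K$, namely $\langle u(x), p_\eps\rangle \ge \langle u(x), \pi(x)\rangle$, which is again linear in $p_\eps$. Lifting $x \mapsto \bigl(u(x),\langle u(x),\pi(x)\rangle\bigr) \in \RE^{d+1}$ reduces $R_2$ to halfspaces in $\RE^{d+1}$.

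For range spaces $(3)$ and $(4)$, the $\delta$-restriction introduces the one extra ball condition $\|q-p\|^2 \le \delta^2$ (respectively $\|\pi(x)-p\|^2 \le \delta^2$). Using the standard paraboloid lift $q \mapsto (q,\|q\|^2) \in \RE^{d+1}$, this ball condition becomes a halfspace in $\RE^{d+1}$ parameterized by $(2p,-1)$ with appropriate offset depending on $p$ and $\delta$. Hence $R_3$ and $R_4$ are each the intersection of two range spaces of VC-dimension $O(d)$, and by Dudley's lemma on Boolean combinations of range spaces the resulting family has VC-dimension $O(d\log d)$, a constant depending only on $d$.

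I would expect the main care-point, rather than an obstacle, to be the choice of supporting hyperplane at $\pi(x)$ in the Voronoi-patch cases $(2)$ and $(4)$: using the hyperplane orthogonal to $x - \pi(x)$, rather than an arbitrarily prescribed augmented choice, is what makes the membership condition linear in $p_\eps$, and this is the hyperplane consistently used elsewhere in the paper for Voronoi-patch computations. Once this linearizing convention is fixed, every case collapses to routine VC counting, and the bound depends only on $d$, as claimed.
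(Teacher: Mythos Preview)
Your proof is correct and reaches the same conclusion by a route that differs more in presentation than in substance. The paper handles~(1) via polarity: the supporting hyperplanes of $K$ become the boundary points of $\polar{K}$, and an $\eps$-dual cap of $K$ corresponds exactly to a cap (a halfspace slice) of $\polar{K}$, reducing~(1) to the halfspace range space without an explicit coordinate lift. For~(2) the paper simply observes that the nearest-point map $S \to \partial K$ (sending $x$ to $\pi(x)$ with the hyperplane orthogonal to $x-\pi(x)$) is a bijection carrying Voronoi patches to dual caps, so $(X_2,R_2)$ and $(X_1,R_1)$ are isomorphic as range spaces and share the same VC-dimension. Your explicit linearizations do the same work; indeed your lift $(n(q),b(q))$ is essentially the polar point of $h(q)$, so the two arguments are dual views of one reduction. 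For~(3) and~(4) both proofs invoke the standard fact that intersecting two constant-VC families yields a constant-VC family; you make the ball constraint explicit via the paraboloid lift, while the paper just cites the combination result. Your version yields explicit bounds ($d+2$ and $O(d\log d)$); the paper's buys brevity and a cleaner geometric picture tied to the polarity machinery already set up in Section~\ref{sec:prelim-polar}.
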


In the next section we will exploit this result to establish the existence of small $\eps$-nets for these range spaces. Note that the range spaces defined in this lemma are defined over $\partial K$, a domain of infinite cardinality. However, for our purposes, it suffices to consider dual caps and restricted dual caps whose defining points are drawn from any sufficiently dense set of points on $\partial K$ (depending on $\eps$), and therefore the domains of the range spaces can be treated as finite sets.

\section{Final Upper Bound} \label{sec:membership}

In this section, we use the tools developed in Sections~\ref{sec:firstbound} and~\ref{sec:cap} to obtain better upper bounds for approximate polytope membership. In particular, we present a proof of Theorem~\ref{thm:membership-ub}. We will first show how to apply the area-based techniques described in the previous section to improve the simple upper bound from Lemma~\ref{lem:firstboundunit} at the low-space end of the trade-off spectrum. (This will be presented in Lemma~\ref{lem:baseunit}.) We will then apply this improvement repeatedly in an inductive manner to establish trade-offs throughout the spectrum. For technical reasons, many of the lemmas of this section assume constant upper bounds on the value of $\eps$. There is no loss of generality in doing so, since it is easy to show that if $\eps$ is bounded below by any fixed constant, the asymptotic space and query times of {\alg} are both $O(1)$.

Throughout this section, recall that $E_{\delta}(K,Q)$ is the portion of $\partial K$ that is within distance $\delta$ of $Q$, and $E(K,Q) = E_{\sqrt{\eps}}(K,Q)$. Also, define $E^+(K,Q) = E_{2\sqrt{\eps}}(K,Q)$. We will assume that $\diam(K) \ge 2 \eps$, for otherwise it is trivial to compute an $\eps$-approximation of constant size. Our first result establishes an area-based bound on the number of halfspaces needed to approximate $K$ within a quadtree cell $Q$.

\begin{lemma} \label{lem:Guilherme}
Let $K$ be a fat convex body in $\RE^d$, let $0 < \eps \le 1/8$, and let $Q \subseteq Q_0$ be a quadtree cell that intersects $\partial K$. Letting $c_a$ denote the constant of Lemma~\ref{lem:dual-basic}, define
\[
	r 
		~ = ~ \left(\frac{\area(E^+(K,Q)) \cdot \area\left(\Vor(E^+(K,Q)) \cap S\right)}{c_a \cdot \eps^{d-1}} \right)^{\N 1/2}.
\]
There is a polytope $P$ defined as the intersection of $O(r \log r)$ halfspaces that is an $\eps$-approximation of $K$ within $Q$.
\end{lemma}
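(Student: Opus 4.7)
The plan is to invoke Lemma~\ref{lem:patchouter}, which reduces the lemma to constructing a set $\Sigma$ of augmented boundary points that hits every $\sqrt{\eps}$-restricted $\eps$-dual cap $D_{\sqrt\eps}(p)$ whose defining point lies in $E(K,Q)$; we then take $P$ as the intersection of the supporting halfspaces of the points in $\Sigma$. I will build $\Sigma$ as a union of two random samples---one on $\partial K$ and one on the Dudley sphere $S$---with sizes balanced by means of the area-product bound of Lemma~\ref{lem:dual-basic}.

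For brevity, write $A = \area(E^+(K,Q))$ and $B = \area(\Vor(E^+(K,Q)) \cap S)$, so that $r = (AB/(c_a\eps^{d-1}))^{1/2}$. First I would observe that because $p \in E(K,Q)$ and the cap is restricted to $B_{\sqrt\eps}(p)$, we have $D_{\sqrt\eps}(p) \subseteq E^+(K,Q)$ and therefore $\Vor(D_{\sqrt\eps}(p)) \cap S \subseteq \Vor(E^+(K,Q)) \cap S$. Since $K$ is fat, $p$ lies close to the origin (within $Q_0$), and $\eps \le 1/8$, Lemma~\ref{lem:dual-basic} applies and gives $\area(D) \cdot \area(\Vor(D) \cap S) \ge c_a \eps^{d-1}$ for each such cap $D$. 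Fixing a threshold $\tau$ to be chosen, every cap is then either \emph{large} (with $\area(D) \ge \tau$) or \emph{small} (with $\area(\Vor(D) \cap S) \ge c_a \eps^{d-1}/\tau$).

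To hit the large caps, I draw a random sample $\Sigma_1$ of augmented boundary points from $E^+(K,Q)$ with density proportional to surface area. Each large cap has normalized area at least $\tau/A$ in $E^+(K,Q)$, so by Lemma~\ref{lem:VC}(3) and the standard $\eps$-net theorem, a sample of size $O((A/\tau)\log(A/\tau))$ hits every large cap with positive probability. To hit the small caps, I sample a set $\Sigma_2'$ uniformly by area from $\Vor(E^+(K,Q)) \cap S$ and, for each $x \in \Sigma_2'$, form the augmented point $(x_0,h)$, where $x_0$ is $x$'s nearest point on $\partial K$ and $h$ is the supporting hyperplane at $x_0$ orthogonal to $\overline{xx_0}$; let $\Sigma_2$ be the resulting set of augmented points. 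Each small cap's Voronoi patch has normalized area at least $c_a\eps^{d-1}/(\tau B)$ on $\Vor(E^+(K,Q)) \cap S$, and by Lemma~\ref{lem:VC}(4) the range space of such Voronoi patches has constant VC-dimension, so a sample of size $O\bigl((B\tau/(c_a\eps^{d-1}))\log(B\tau/(c_a\eps^{d-1}))\bigr)$ suffices to stab every small cap via this projection.

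Setting $\Sigma = \Sigma_1 \cup \Sigma_2$ and choosing $\tau = (c_a A \eps^{d-1}/B)^{1/2}$ to balance the two sample sizes yields $|\Sigma| = O(r\log r)$, and Lemma~\ref{lem:patchouter} then produces the desired polytope $P$. The main subtlety I anticipate is verifying that a sample point $x \in \Vor(D) \cap S$ does, through nearest-point projection, yield an augmented point genuinely lying in $D$---this follows by unwinding the definition of $\Vor$ for augmented sets together with the orthogonality of $h$ to $\overline{xx_0}$, but needs to be spelled out carefully---and confirming that the $\eps$-net theorem remains valid even though the defining points of the relevant caps lie in $E(K,Q)$ rather than in the slightly larger sampling domains $E^+(K,Q)$ and $\Vor(E^+(K,Q))\cap S$; the containment statements noted at the outset handle this gap.
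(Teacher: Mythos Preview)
Your proposal is correct and follows essentially the same approach as the paper: reduce to a hitting-set problem via Lemma~\ref{lem:patchouter}, split the restricted dual caps into ``large'' and ``small'' types according to the threshold $\tau = (c_a A \eps^{d-1}/B)^{1/2}$, and hit each type with a $(1/r)$-net built from the constant-VC range spaces of Lemma~\ref{lem:VC}(3) and~(4), respectively. The only points you gloss over that the paper spells out are (a) verifying $r = \Omega(1)$ so that the $\eps$-net bound $O(r\log r)$ is meaningful, and (b) checking explicitly that each $p \in E(K,Q)$ lies within the unit ball about the origin (needed to invoke Lemma~\ref{lem:dual-basic}); both are straightforward.
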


\begin{proof}
Letting $A_K = \area(E^+(K,Q))$ and $A_S = \area\left(\Vor(E^+(K,Q)) \cap S\right)$, we can express the value of $r$ more succinctly as $\big(A_K A_S/c_a \P\eps^{d-1}\big)^{1/2}$. First, we assert that $r = \Omega(1)$. To see this, we consider two cases. First, if $K$ lies entirely within distance $2 \sqrt{\eps}$ of $Q$, then $\Vor(E^+(K,Q)) \cap S = S$, which implies that $A_S = \Omega(1)$. Since $K$ is fat and by our assumption that $\diam(K) \ge 2 \eps$, it follows that $A_K = \Omega(\eps^{d-1})$. Therefore, $r = \Omega(1)$. On the other hand, if some part of $K$ lies at distance greater than $2 \sqrt{\eps}$ from $Q$, $E^+(K,Q)$ is a boundary patch of $K$ of diameter $\Omega(\sqrt{\eps})$. Since both $K$ and $Q$ are fat, it follows that $A_K = \Omega\big(\eps^{(d-1)/2}\big)$. By convexity, as we go from a boundary patch on $K$ to its Voronoi cell on $S$, distances cannot decrease. Therefore $A_S \ge A_K$, and again we have $r = \Omega(1)$. Through a minor adjustment to constant factor $c_a$ in $r$'s definition, we may assume that $\log r \ge 1$.

By Lemma~\ref{lem:patchouter}, in order to show the existence of an $\eps$-approximating polytope $P$ for $K$ within $Q$, it suffices to show that it is possible to hit all $\sqrt{\eps}$-restricted $\eps$-dual caps whose defining point lies in $E(K,Q)$ (not to be confused with $E^+(K,Q)$) using $O(r \log r)$ points. To do this, we distinguish between two types of such restricted dual caps. A restricted dual cap $D$ is of \emph{type~1}, if $\area(D) \ge \big(c_a \P \eps^{d-1} A_K / A_S\big)^{1/2}$, and otherwise it is of \emph{type~2}.

By assertions~(3) and~(4) of Lemma~\ref{lem:VC}, we know that $\sqrt{\eps}$-restricted $\eps$-dual caps and their Voronoi patches both have constant VC-dimension. The VC-dimension is no larger if we restrict the domain of the range space. Therefore, by standard machinery (see, e.g., \cite{probabilistic}) we can build a $(1/r)$-net for any restriction of these range spaces of size $O(r \log r)$ each by random sampling. 

For type-1 dual caps, consider the restriction $(E^+(K,Q),R_3)$ of the range space given in Lemma~\ref{lem:VC}(3). Let $\Sigma_1$ denote a $(1/r)$-net. Consider any type-1 dual cap $D$. Since $D$'s defining point lies within $E(K,Q)$ and it is $\sqrt{\eps}$-restricted, it lies entirely within $E^+(K,Q)$. Thus, we have
\begin{eqnarray*}
	\frac{\area(D \cap E^+(K,Q))}{\area(E^+(K,Q))} 
		&  =  & \frac{\area(D)}{\area(E^+(K,Q))} 
		~ \ge ~ \frac{\big(c_a \cdot \eps^{d-1} A_K / A_S\big)^{1/2}}{A_K} \\
		&  =  & \left(\frac{c_a \cdot \eps^{d-1}}{A_K A_S}\right)^{\N 1/2}
		~  =  ~ \inv{r}.
\end{eqnarray*}
Therefore $D$ contains at least one point of $\Sigma_1$. It follows that $\Sigma_1$ hits all type-1 dual caps.

For type-2 dual caps, let us consider the restriction $(\Vor(E^+(K,Q)) \cap S,R_4)$ of the range space of Lemma~\ref{lem:VC}(4). Let $\Sigma_2$ denote a $(1/r)$-net. Because $\eps \le 1/8$ and $Q \subseteq Q_0$, $E(K,Q)$ lies within a ball centered at the origin of radius $\diam(Q_0)/2 + \sqrt{\eps} \le 1$. Given any type-2 dual cap $D$ whose defining (augmented) point lies in $E(K,Q)$, we may apply Lemma~\ref{lem:dual-basic} to obtain
\[
	\area(\Vor(D) \cap S)
		~ \ge ~ \frac{c_a \cdot \eps^{d-1}}{\area(D)}
		~ \ge ~ \frac{c_a \cdot \eps^{d-1}}{\big(c_a \cdot \eps^{d-1} A_K / A_S\big)^{1/2}}
		~  =  ~ \left(\frac{c_a \cdot \eps^{d-1} A_S}{A_K}\right)^{\N 1/2}.
\]
As before, since $D$'s defining point lies within $E(K,Q)$, $D \subseteq E^+(K,Q)$. From this we have
\begin{eqnarray*}
	\frac{\area(\Vor(D \cap E^+(K,Q)) \cap S)}{\area(\Vor(E^+(K,Q)) \cap S)} 
		&  =  & \frac{\area(\Vor(D) \cap S)}{\area(\Vor(E^+(K,Q)) \cap S)}  \\
		& \ge & \frac{\big(c_a \cdot \eps^{d-1} A_S/A_K\big)^{1/2}}{A_S} 
		~  =  ~ \left(\frac{c_a \cdot\eps^{d-1}}{A_K A_S}\right)^{\N 1/2} \\
		&  =  & \inv{r}.
\end{eqnarray*}
Therefore $\Vor(D) \cap S$ contains at least one point of $\Sigma_2$, implying that $\Sigma_2$ hits the Voronoi patches of all type-2 dual caps. For each point of $\Sigma_2$, we select its nearest neighbor on $\partial K$, obtaining a set $\Sigma'_2 \subset E^+(K,Q)$. It follows directly that the set $\Sigma'_2$ hits all type-2 dual caps. Therefore, the union $\Sigma_1 \cup \Sigma'_2$ forms the desired set of size $O(r \log r)$ that hits all $\sqrt{\eps}$-restricted $\eps$-dual caps whose defining point lies within $E(K,Q)$.
\end{proof}

In order to establish our storage bounds, we analyze the behavior of the algorithm at a particular level of the decomposition. Given the query-time parameter $t$, recall that we stop the subdivision process in $\alg(K,Q)$ if the number of hyperplanes needed to approximate $K$ within $Q$ falls below $t$. Also recall that $t(Q)$ denotes the number of approximating halfspaces associated with $Q$. Let us consider the state of the subdivision process when the cell sizes reach roughly $\sqrt{\eps}$. Cells that have stopped subdividing by this point are ``good,'' since we can bound the total space requirements for all such cells by appealing to Lemma~\ref{lem:space-bound}. For the remaining ``bad'' cells, we will bound their space requirements on a cell-by-cell basis by using the simple upper bound from Lemma~\ref{lem:firstbound}. For our approach to work well, it is crucial to obtain a good bound on the number of such bad cells. We exploit the area bound of Lemma~\ref{lem:Guilherme} for this purpose. Whenever {\alg} subdivides a cell of size $O(\sqrt{\eps})$, we can infer that more than $t$ hyperplanes are required to approximate $K$ within this cell. Since the portion of $\partial K$ lying within this cell is small, the area of its Voronoi patch on the Dudley sphere must be large. A packing argument applied on the Dudley sphere will be used to limit the number of these bad cells.

In order to formalize the notion of good and bad cells, let $T$ denote the quadtree produced by $ \alg(K,Q_0)$, and let $T'$ denote the subtree of $T$ induced by cells of diameter at least $\sqrt{\eps}/2$. For the remainder of this section, let $L_1$ denote the (good) leaf cells of $T'$ that are not subdivided further by the algorithm, and let $L_2$ be the remaining (bad) leaf cells of $T'$. The cells of $L_1$ and $L_2$ are all of diameter $\Omega(\sqrt{\eps})$. Each cell in $L_1$ can be approximated using at most $t$ halfspaces, and those in $L_2$ require more. In our next lemma, we bound the total number of approximating halfspaces over all the good leaf cells and the total number of bad leaf cells. 

\begin{lemma} \label{lem:aux2}
Let $K$ be a fat convex body in $\RE^d$, let $0 < \eps \le 1/8$. Let $T$ denote the quadtree produced by $\alg(K,Q_0)$, for $t \ge 1$, and let $L_1$ and $L_2$ be as defined above. Then 
\begin{enumerate}
\item[$(i)$] $\sum_{Q  \in L_1} t(Q)  = O\big( 1/\eps^{(d-1)/2} \big)$,

\item[$(ii)$] $|L_2| = O\big( (1 + \log t) / t)^2 (1/\eps)^{(d-1)/2} \big)$.
\end{enumerate}
\end{lemma}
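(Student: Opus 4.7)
The plan is to treat the two parts of the lemma separately: (i) is a direct application of existing machinery, while (ii) is a global ``area-budget'' estimate built on Lemma~\ref{lem:Guilherme}.

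For part (i), I would split $L_1$ into cells that intersect $\partial K$ and those that do not. The first family is a disjoint collection of quadtree cells in $Q_0$ of diameter at least $\sqrt{\eps}/2$ meeting $\partial K$, so Lemma~\ref{lem:space-bound} immediately gives $\sum t(Q) = O(1/\eps^{(d-1)/2})$ over them. The remaining cells have $t(Q) = 1$ by definition, and I would bound their number by observing that each is a child (in $T$) of some internal node of $T'$ that $\alg$ subdivided, hence a cell of diameter at least $\sqrt{\eps}$ meeting $\partial K$ (otherwise Step~(\stepout) or Step~(\stepin) would have made the parent a leaf of $T$); a standard quadtree packing estimate across levels caps the number of such parents at $O(1/\eps^{(d-1)/2})$, and each has only $2^d$ children.

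For part (ii), the key step is a single estimate per cell. Because $\alg$ subdivided $Q \in L_2$, Step~(\stepapx) failed at $Q$, so more than $t$ halfspaces are required to $\eps$-approximate $K$ within $Q$. Writing $A_K(Q) = \area(E^+(K,Q))$ and $A_S(Q) = \area(\Vor(E^+(K,Q)) \cap S)$, Lemma~\ref{lem:Guilherme} forces $c \P r \log r > t$ with $r = \sqrt{A_K(Q) A_S(Q)/(c_a \eps^{d-1})}$; inverting the monotone relation $r \log r = \Omega(t)$ yields $r = \Omega(t/(1+\log t))$ (the $1+$ covers small $t$), whence
\[
A_K(Q) \cdot A_S(Q) \;=\; \Omega\!\left(\eps^{d-1}\left(\frac{t}{1+\log t}\right)^{2}\right).
\]
Summed over $L_2$ this gives a lower bound on $\sum_Q A_K(Q) A_S(Q)$ that is linear in $|L_2|$.

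To match this from above I would use $\sum_Q A_K(Q) A_S(Q) \le (\max_Q A_K)(\sum_Q A_S)$. Since each $Q \in L_2$ has diameter less than $\sqrt{\eps}$, the set $E^+(K,Q)$ lies in some Euclidean ball of radius $O(\sqrt{\eps})$; the intersection of $K$ with such a ball is a convex body of diameter $O(\sqrt{\eps})$, so its surface area is $O(\eps^{(d-1)/2})$, which bounds the area of $\partial K$ inside the ball and hence $\max_Q A_K$. For $\sum_Q A_S$, each point of $S$ has a unique nearest point $x_0 \in \partial K$ and lies in $\Vor(E^+(K,Q))$ exactly when $x_0$ is within distance $2\sqrt{\eps}$ of $Q$; a packing argument on disjoint cells of diameter at least $\sqrt{\eps}/2$ shows this happens for only $O(1)$ choices of $Q$, so $\sum_Q A_S = O(\area(S)) = O(1)$. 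Combining the two estimates rearranges to the claimed bound on $|L_2|$. The main obstacle is the surface-area-in-a-ball estimate for $\max_Q A_K$; everything else is a clean double count of the area budget released by Lemma~\ref{lem:Guilherme}.
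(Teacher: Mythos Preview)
Your proposal is correct and follows essentially the same route as the paper. For (i) the paper simply invokes Lemma~\ref{lem:space-bound} on the cells of $L_1$, while you additionally split off the cells not meeting $\partial K$ and bound their count by a packing argument on their (subdivided, hence boundary-intersecting) parents; this is a fair point the paper elides. For (ii) the paper absorbs the estimate $A_K(Q)=O(\eps^{(d-1)/2})$ into $r$ immediately to get $A_S(Q)=\Omega((t/(1+\log t))^2\eps^{(d-1)/2})$ per cell and then sums $A_S$, whereas you keep the product $A_K A_S$ intact and use $\sum A_K A_S \le (\max A_K)(\sum A_S)$ at the end; the two computations are equivalent and rest on the same three ingredients (Lemma~\ref{lem:Guilherme}, the $O(\sqrt{\eps})$-ball surface-area bound for $A_K$, and the bounded-overlap packing for $\sum A_S$).
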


\begin{proof}
Because the cells of $L_1$ are disjoint and each is of diameter $\Omega(\sqrt{\eps})$, assertion~(i) follows as a direct consequence of Lemma~\ref{lem:space-bound}. Thus, it remains to prove assertion~(ii). Let $Q$ be any cell of $L_2$. Since any child of a cell of $L_2$ is of diameter smaller than $\sqrt{\eps}/2$ and $Q$'s diameter is twice this, we have $\sqrt{\eps}/2 \le \diam(Q) < \sqrt{\eps}$. Recall that $E^+(K,Q) = E_{2\sqrt{\eps}}(K,Q)$. Also, let $A_K(Q)$ and $A_S(Q)$ denote the values of $A_K$ and $A_S$, respectively, from the proof of Lemma~\ref{lem:Guilherme}, when applied to $Q$. 

Because $\diam(Q) \le \sqrt{\eps}$ and $E^+(K,Q)$ involves a boundary patch of $K$ that intersects $Q$ and includes an additional expansion by distance $2 \sqrt{\eps}$, it follows that this boundary patch has diameter $O(\sqrt{\eps})$. Therefore, $A_K(Q) = O\big(\eps^{(d-1)/2}\big)$. By applying Lemma~\ref{lem:Guilherme} (and recalling the constant $c_a$ from Lemma~\ref{lem:dual-basic}), we have $t(Q) = O(r \log r)$, where
\begin{eqnarray*}
	r 
		& = & \NN\left(\frac{\area(E^+(K,Q)) \cdot \area\left(\Vor(E^+(K,Q)) \cap S\right)}{c_a \cdot \eps^{d-1}} \right)^{\N 1/2} \NN \\
		& = & \left( \frac{A_K(Q) A_S(Q)}{c_a \cdot \eps^{d-1}}\right)^{\N 1/2} \NN
		~ = ~ \OO{ \sqrt{\frac{A_S(Q)}{\eps^{(d-1)/2}}} }.
\end{eqnarray*}
In Lemma~\ref{lem:Guilherme} we showed that (after a suitable adjustment to $c_a$), we have $\log r \ge 1$. Since $Q$ is subdivided further, we know that $t(Q) > t$, which implies that $t = O(r \log r)$. Because $t \ge 1$, by simple manipulations we have $t/(1 + \log t) = O(r)$. By combining this with the upper bound on $r$ from above, we obtain $A_S(Q) = \Omega\big( (t/(1 + \log t))^{2} \eps^{(d-1)/2}\big)$, which yields the lower bound
\[
	\sum_{Q \in L_2} A_S(Q)
		~  =  ~ |L_2| \cdot \Omega\left( \left( \frac{t}{1 + \log t} \right)^{2} \eps^{\frac{d-1}{2}}\right).
\]

As shown in the proof of Lemma~\ref{lem:space-bound}, given any set of disjoint quadtree cells of diameter $\Omega(\sqrt{\eps})$ a point of $S$ can be in $\Vor(E^+(K,Q))$ for at most a constant number of these cells. Since the quadtree cells of $L_2$ satisfy these conditions,
\[
	\sum_{Q \in L_2} A_S(Q)
		~  =  ~ \sum_{Q \in L_2} \area(\Vor(E^+(K,Q)) \cap S)
		~  =  ~ O(\area(S)).
\]
Combining this with our lower bound, we have
\[
	|L_2|
		~  =  ~ \OO{ \area(S) \cdot \left(\frac{1 + \log t}{t}\right)^{\N 2} \cdot \left( \inv{\eps} \right)^{\NN \frac{d-1}{2}} }.
\]
Since $S$ is a hypersphere of constant radius, its area is bounded, and assertion~(ii) follows immediately.
\end{proof}

Recall that we showed in Lemma~\ref{lem:firstboundunit} that it is possible to answer approximate membership queries in $1/\eps^{(d-1)/4}$ time using space $O\big( 1/\eps^{(d-1)/2} \big)$. By using the above lemma, we show next that we can improve this to achieving query time roughly $O\big( 1/\eps^{(d-1)/8} \big)$ for the same space.

\begin{lemma} \label{lem:baseunit}
Let $K$ be a fat convex body in $\RE^d$, and let $0 < \eps \le 1/16$. For $t \geq (\lg \inv{\eps})/\eps^{(d-1)/8}$, the output of $\alg(K,Q_0)$ has total space $O\big(1/\eps^{(d-1)/2}\big)$.
\end{lemma}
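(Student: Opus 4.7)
The plan is to invoke Lemma~\ref{lem:total-space} to reduce the total space bound to the sum of $t(Q)$ over the leaf cells of the quadtree $T$ produced by $\alg(K,Q_0)$ that intersect $\partial K$; the hypothesis $t = \Omega(\log \inv{\eps})$ is clearly satisfied by our choice of $t$. I will then partition these leaves into two groups according to the subtree $T'$ and the sets $L_1, L_2$ defined immediately before Lemma~\ref{lem:aux2}: leaves of $T$ lying in $L_1$ (which stop subdividing at diameter $\ge \sqrt{\eps}/2$), and leaves of $T$ that are descendants of cells of $L_2$ (which were subdivided further, and so have diameter in $[\sqrt{\eps}/2, \sqrt{\eps})$).

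For the first group, Lemma~\ref{lem:aux2}$(i)$ directly gives $\sum_{Q \in L_1} t(Q) = O(1/\eps^{(d-1)/2})$. For the second group, I will apply Lemma~\ref{lem:firstbound} to each subtree rooted at a cell $Q \in L_2$. Its hypotheses are met: since $\diam(Q) < \sqrt{\eps}$ we have $(\diam(Q)/\eps)^{(d-1)/4} < 1/\eps^{(d-1)/8} \le t$, and the condition $\eps \le \diam(Q)/2$ follows from $\diam(Q) \ge \sqrt{\eps}/2$ together with $\eps \le 1/16$. This yields a total space of $O\big((\diam(Q)/\eps)^{(d-1)/2}\big) = O(1/\eps^{(d-1)/4})$ per subtree, so the total contribution from the second group is at most $|L_2| \cdot O(1/\eps^{(d-1)/4})$.

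It remains to check that plugging Lemma~\ref{lem:aux2}$(ii)$ and the choice of $t$ into this product yields $O(1/\eps^{(d-1)/2})$ as well. We have
\[
	|L_2| \cdot O\big(1/\eps^{(d-1)/4}\big) ~=~ O\!\left( \left(\frac{1+\log t}{t}\right)^{\N 2} \cdot \inv{\eps^{3(d-1)/4}} \right).
\]
So the goal reduces to showing $(1+\log t)/t = O(\eps^{(d-1)/8})$. For $t = (\lg \inv{\eps})/\eps^{(d-1)/8}$, one has $\log t = \log(\lg \inv{\eps}) + \frac{d-1}{8}\lg \inv{\eps} = \Theta(\lg \inv{\eps})$ as $\eps \to 0$, hence $(1+\log t)/t = \Theta(\eps^{(d-1)/8})$, and squaring gives precisely the $\eps^{(d-1)/4}$ needed to cancel the excess $1/\eps^{(d-1)/4}$ factor. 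For any larger $t$ the bound only improves. The main conceptual point is thus a balancing one: the $\lg \inv{\eps}$ factor in the threshold for $t$ is tuned exactly to absorb the $1 + \log t$ arising from Lemma~\ref{lem:aux2}$(ii)$, making the $L_1$ and $L_2$ contributions match at $O(1/\eps^{(d-1)/2})$.
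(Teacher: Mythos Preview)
Your proposal is correct and follows essentially the same approach as the paper's own proof: reduce to $\sum t(Q)$ via Lemma~\ref{lem:total-space}, split the leaves via $L_1$ and $L_2$, handle $L_1$ by Lemma~\ref{lem:aux2}$(i)$, and for each $Q\in L_2$ apply Lemma~\ref{lem:firstbound} (verifying its hypotheses exactly as you do) and multiply by the bound on $|L_2|$ from Lemma~\ref{lem:aux2}$(ii)$. Your computation that $(1+\log t)/t = O(\eps^{(d-1)/8})$ for the stated threshold on $t$, with monotonicity in $t$ handling larger values, is precisely the balancing the paper carries out (the paper states the conclusion $|L_2|=O(1/\eps^{(d-1)/4})$ a bit more tersely).
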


\begin{proof}
Let $T$ denote the quadtree produced by the algorithm. By Lemma~\ref{lem:total-space}, the data structure's total space is dominated by the space needed to store the hyperplanes in the leaf cells. Thus, it suffices to show that the sum of $t(Q)$ over all leaf cells $Q$ of $T$ is $O\big(1/\eps^{(d-1)/2}\big)$. Let $T'$, $L_1$, and $L_2$ be as defined just prior to Lemma~\ref{lem:aux2}. By Lemma~\ref{lem:aux2}(i), the total contribution of $t(Q)$ for all cells in $L_1$ is $O\big( 1/\eps^{(d-1)/2} \big)$. So, it suffices to bound the contribution due to $L_2$.

Let $Q$ be any cell of $L_2$. Recall from the proof of Lemma~\ref{lem:aux2} that $\sqrt{\eps} / 2 \le \diam(Q) \le \sqrt{\eps}$. Since $t \geq 1 / \eps^{(d-1)/8}$, it follows that $t \geq (\diam(Q) / \eps)^{(d-1)/4}$. Because $\eps \le 1/16$, we have $\eps \le \sqrt{\eps}/4 \le \diam(Q)/2$. By Lemma~\ref{lem:firstbound}, the output of $\alg(K,Q)$ has total space at most 
\[
	O\left(\left(\frac{\diam(Q)}{\eps}\right)^{\N \frac{d-1}{2}}\right)
		~ = ~ O\left(\left(\inv{\eps}\right)^{\N \frac{d-1}{4}}\right).
\]
By Lemma~\ref{lem:aux2}(ii), $|L_2| = O\big( ((1 + \log t) / t)^2 (1/\eps)^{(d-1)/2} \big)$. Since $t \geq (\lg \inv{\eps})/\eps^{(d-1)/8}$, we have $|L_2| = O\big(1/\eps^{(d-1)/4}\big)$. Summing up the space contributions of all $Q \in L_2$, the total space for these cells is 
\[
	|L_2| \cdot O\big(1/\eps^{(d-1)/4}\big)
	~ = ~ O\big(1/\big(\eps^{(d-1)/4} \cdot \eps^{(d-1)/4}\big)\big)
	~ = ~ O\big(1/\eps^{(d-1)/2}\big),
\]
as desired. 
\end{proof}

In order to extend the space-time trade-off to other query times, we will apply the previous result as the basis case in an induction argument. The induction will be controlled by a parameter $\alpha$, which we assume to be a constant. The proof is rather technical, but it involves a straightforward application of the earlier results of this section.

\begin{lemma} \label{lem:trade-off-ub}
Let $K$ be a fat convex body in $\RE^d$, and let $0 < \eps \le 1/16$. Let $\alpha \ge 4$ be a real-valued constant. For $t \ge (\lg \inv{\eps})/\eps^{(d-1)/\alpha}$, the output of $\alg(K,Q_0)$ has total space
\[
	\OO{1/\eps^{(d-1)\left(1 - 2 \left(\frac{\floor{\lg \alpha} - 2}{\alpha} + \inv{2^{\floor{\lg\alpha}}} \right) \right)}}.
\]
\end{lemma}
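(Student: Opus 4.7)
The plan is to proceed by induction on $k = \floor{\lg \alpha}$. Set
\[
	g(\alpha)
	~=~ 1 - 2\left(\frac{\floor{\lg\alpha}-2}{\alpha} + \inv{2^{\floor{\lg\alpha}}}\right),
\]
so the target bound is $\OO{1/\eps^{(d-1) g(\alpha)}}$. The base case $k = 2$ (that is, $\alpha \in [4, 8)$) is immediate: $g(\alpha) = 1/2$, and since $t \ge (\lg \inv{\eps})/\eps^{(d-1)/\alpha} \ge 1/\eps^{(d-1)/4}$, Lemma~\ref{lem:firstboundunit} yields the desired bound $\OO{1/\eps^{(d-1)/2}}$.

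For the inductive step, fix $\alpha$ with $\floor{\lg\alpha} = k \ge 3$ and let $\alpha' = \alpha/2$, so $\floor{\lg\alpha'} = k-1$. Before running the main argument, I would first record a routine generalization of the lemma to arbitrary cells $Q \subseteq Q_0$, analogous to how Lemma~\ref{lem:firstbound} generalizes Lemma~\ref{lem:firstboundunit}: if the lemma holds with root $Q_0$ for parameter $\alpha'$, then via standardization it yields a bound $\OO{(\diam(Q)/\eps)^{(d-1) g(\alpha')}}$ on $\alg(K,Q)$ under the correspondingly rescaled query-time hypothesis. No new ingredients are needed, since the area-product bound of Lemma~\ref{lem:dual-basic} is applied to the (unchanged) fat body $K$ throughout. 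Now run $\alg(K,Q_0)$ and, exactly as before Lemma~\ref{lem:aux2}, partition the leaves of the subtree $T'$ of cells of diameter at least $\sqrt{\eps}/2$ into $L_1$ (cells already declared leaves by the algorithm) and $L_2$ (cells that were split further). Lemma~\ref{lem:aux2}(i) handles $L_1$: the total contribution is $\OO{1/\eps^{(d-1)/2}}$, which is dominated by the target since $g(\alpha) \ge 1/2$.

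For each $Q \in L_2$ we have $\diam(Q) = \Theta(\sqrt{\eps})$, and the subtree beneath $Q$ is exactly what $\alg(K,Q)$ would produce. Since $\alpha = 2\alpha'$, the hypothesis $t \ge (\lg \inv{\eps})/\eps^{(d-1)/\alpha}$ matches, up to constants, the query-time condition needed to invoke the generalized inductive hypothesis with parameter $\alpha'$ inside $Q$. That hypothesis bounds the space below $Q$ by $\OO{(\diam(Q)/\eps)^{(d-1) g(\alpha')}} = \OO{1/\eps^{(d-1) g(\alpha')/2}}$. Lemma~\ref{lem:aux2}(ii) gives $|L_2| = \OO{((1+\log t)/t)^2 \cdot 1/\eps^{(d-1)/2}}$, and since $\alpha$ is constant we have $\log t = \Theta(\log \inv{\eps})$, so $((1+\log t)/t)^2 = \OO{\eps^{2(d-1)/\alpha}}$. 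Multiplying these, the total contribution of $L_2$ is
\[
	\OO{\inv{\eps^{(d-1)\left(\inv{2} + g(\alpha')/2 - 2/\alpha\right)}}}.
\]
The final step is to verify the doubling identity $\inv{2} + g(\alpha')/2 - 2/\alpha = g(\alpha)$, which follows by substituting $\alpha = 2\alpha'$ and $\floor{\lg\alpha} = \floor{\lg\alpha'} + 1$ into the definition of $g$ and simplifying.

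The main item to execute carefully is this algebraic identity; beyond it, the argument is a straightforward composition of Lemma~\ref{lem:aux2} with the generalized inductive hypothesis. A secondary technicality is ensuring that the smallness preconditions on $\eps$ (in particular $\eps/\diam(Q) \le 1/16$ inside each bad cell, so that the inductive hypothesis applies within $Q$) are met; these impose only a constant threshold on $\eps$ depending on $\alpha$, in keeping with the convention announced at the start of Section~\ref{sec:membership}.
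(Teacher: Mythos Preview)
Your approach is essentially identical to the paper's: induction on $k=\floor{\lg\alpha}$, the $L_1/L_2$ decomposition via Lemma~\ref{lem:aux2}, standardization to push the inductive hypothesis (with $\alpha'=\alpha/2$) into each bad cell of $L_2$, and the same exponent arithmetic. The doubling identity $\tfrac12 + g(\alpha')/2 - 2/\alpha = g(\alpha)$ you isolate is exactly the computation the paper carries out.

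There is, however, a concrete error in your base case. For $\alpha\in[4,8)$ you assert $t \ge (\lg\tfrac1\eps)/\eps^{(d-1)/\alpha} \ge 1/\eps^{(d-1)/4}$ and then invoke Lemma~\ref{lem:firstboundunit}. But when $\alpha>4$ the exponent $(d-1)/\alpha$ is strictly smaller than $(d-1)/4$, and the logarithmic factor cannot close a polynomial gap; the displayed inequality fails for small~$\eps$. The correct move (and the one the paper makes) is to invoke Lemma~\ref{lem:baseunit}, whose hypothesis is $t \ge (\lg\tfrac1\eps)/\eps^{(d-1)/8}$; this \emph{is} implied by $t \ge (\lg\tfrac1\eps)/\eps^{(d-1)/\alpha}$ for every $\alpha<8$, and it yields the same $O(1/\eps^{(d-1)/2})$ conclusion. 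With this single citation corrected, your argument goes through and matches the paper's proof.
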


\begin{proof}
Define $k = \floor{\lg\alpha}$, which implies that $k \ge 2$, and $2^{k} \leq \alpha < 2^{k+1}$. Expressed as a function of $k$, the desired space bound can be expressed as
\begin{equation} \label{eq:alt-bound}
	c_k \cdot \left( 1/\eps^{(d-1) \left(1 - 2 \left(\frac{k - 2}{\alpha} + \inv{2^k}\right) \right)} \right),
\end{equation}
for a constant $c_k$ (depending on $k$ but not on $\eps$).

We begin exactly as in the proof of the previous lemma. Let $T$ denote the quadtree produced by the algorithm, and by Lemma~\ref{lem:total-space}, it suffices to bound the sum of $t(Q)$ over all leaf cells of $T$. Given $T'$, $L_1$, and $L_2$ defined prior to Lemma~\ref{lem:aux2}, the space contribution due to the cells of $L_1$ is $O\big( 1/\eps^{(d-1)/2} \big)$. To see that this satisfies our space bound, observe that since $k \ge 2$ and $\alpha \ge 2^k$, we have
\[
	\inv{4}
		~ \ge ~ \frac{k-1}{2^{k}}
		~  =  ~ \frac{k-2}{2^{k}} + \inv{2^{k}}
		~ \ge ~ \frac{k-2}{\alpha} + \inv{2^{k}}.
\]
Therefore, the total contribution of $t(Q)$ for all cells in $L_1$ is
\begin{equation} \label{eq:ind-bound}
	O\big( 1/\eps^{(d-1)/2} \big)
		~  =  ~ \OO{1/\eps^{(d-1)\left(1 - 2\left(\inv{4}\right)\right)}}
		~ \le ~ \OO{1/\eps^{(d-1)\left(1 - 2\left(\frac{k-2}{\alpha} + \inv{2^{k}}\right)\right)}},
\end{equation}
which matches the desired bound given in Eq.~(\ref{eq:alt-bound}).

It remains to bound the contribution to the space of the cells of $L_2$. We do this by induction on $k$. For the basis case $k=2$, we have $4 \le \alpha < 8$. Therefore $t > (\lg \inv{\eps})/\eps^{(d-1)/8}$. By applying Lemma~\ref{lem:baseunit}, the total space of the data structure (which includes the contribution of $L_2$) is $O\big(1/\eps^{(d-1)/2}\big)$. It follows from Eq.~(\ref{eq:ind-bound}) (for the case $k = 2$) that this satisfies our storage bound.

For the induction step, we assume that the lemma holds for $k-1$ (that is, $2^{k-1} \leq \alpha/2 < 2^{k}$), and our objective is to prove it for $k$. It will be convenient to express the induction hypothesis in a form that holds for an arbitrary quadtree cell $Q \subseteq Q_0$. By applying standardization to $Q$ (thus mapping $Q$ to $Q_0$ and scaling $\eps$ to $\eps/\diam(Q)$), the induction hypothesis states that for
\begin{equation} \label{eq:time-hyp}
  0 ~ < ~ \eps ~ \le ~ \frac{\diam(Q)}{16} ~~\hbox{and}~~
	t ~ \ge ~  \left(\lg \frac{\diam(Q)}{\eps} \right) \cdot \left( \frac{\diam(Q)}{\eps} \right)^{\N \frac{d-1}{\alpha/2}},
\end{equation}
there is a constant $c_{k-1}$ such that the output of $\alg(K,Q)$ has total space at most
\begin{equation} \label{eq:sp-hyp}
	c_{k-1} \cdot (\diam(Q)/\eps)^{(d-1) \left(1 - 2 \left(\frac{k - 3}{\alpha/2} + \inv{2^{k-1}}\right) \right)}.
\end{equation}

Let $Q$ be any cell of $L_2$. In the proof of Lemma~\ref{lem:aux2} we showed that $\sqrt{\eps}/2 \le \diam(Q) < \sqrt{\eps}$. By the bound on $t$ from the statement of this lemma, we have
\[
	t
		~ \ge ~ \left( \lg \inv{\eps} \right) \left( \inv{\eps} \right)^{\NN \frac{d-1}{\alpha}} \NN
		~ \ge ~ \left( \lg \inv{\sqrt{\eps}} \right) \left( \inv{\sqrt{\eps}} \right)^{\NN \frac{2(d-1)}{\alpha}} \NN
		~ \ge ~ \left( \lg \frac{\diam(Q)}{\eps} \right) \left( \frac{\diam(Q)}{\eps} \right)^{\N \frac{d-1}{\alpha/2}} \NN ,
\]
implying that $t$ satisfies Eq.~(\ref{eq:time-hyp}). If $\eps$ is at most $\diam(Q)/16$, we may apply the induction hypothesis, yielding the space bound given in Eq.~(\ref{eq:sp-hyp}). Since $\diam(Q) < \sqrt{\eps}$, this can be simplified to
\begin{equation} \label{eq:sp-bound}
	c_{k-1} \cdot (1/\sqrt{\eps})^{(d-1) \left(1 - 2 \left(\frac{k - 3}{\alpha/2} + \inv{2^{k-1}}\right) \right)}
		~ = ~
	c_{k-1} \cdot 1/\eps^{(d-1) \left(\inv{2} - \frac{2(k - 3)}{\alpha} - \inv{2^{k-1}}\right)}.
\end{equation}

By combining Lemma~\ref{lem:aux2}(ii) with the lower bound on $t$ given in the statement of this lemma, the number of cells in $L_2$ satisfies
\begin{equation} \label{eq:l2-size}
	|L_2|
		~ = ~ \OO{\left(\frac{\lg t}{t}\right)^{\N 2} \left( \inv{\eps} \right)^{\NN \frac{d-1}{2}}}
		~ = ~ \OO{\eps^{\frac{2(d-1)}{\alpha}}  \left( \inv{\eps} \right)^{\NN \frac{d-1}{2}}}
		~ = ~ \OO{\left( \inv{\eps} \right)^{\N (d-1)\left(\inv{2} - \frac{2}{\alpha}\right)}}.
\end{equation}
The total contribution to the space by the cells of $L_2$ is the product of the space requirements for each cell of $L_2$, given in Eq.~(\ref{eq:sp-bound}), and the number of such cells, given in Eq.~(\ref{eq:l2-size}). There exists a constant $c_k$ (depending on $k$ but not on $\eps$) such the total space is at most
\[
	c_k \cdot \left( 1/\eps^{(d-1) \left( \left(\inv{2} - \frac{2(k - 3)}{\alpha} - \inv{2^{k-1}}\right) + \left(\inv{2} - \frac{2}{\alpha}\right) \right)} \right)	
		~ = ~
	c_k \cdot \left( 1/\eps^{(d-1) \left(1 - 2 \left(\frac{k - 2}{\alpha} + \inv{2^k}\right) \right)} \right).
\]
On the other hand, if $\eps$ exceeds $\diam(Q)/16$, then since $\diam(Q) \ge \sqrt{\eps}/2$ it follows that $\eps$ is $\Omega(1)$, and we can adjust to $c_k$ to satisfy this bound. In either case, we achieve the bound in Eq.~(\ref{eq:alt-bound}). 
\end{proof}

Observe that the exponent in the space bound in the preceding lemma is a piecewise linear function in $1/\alpha$, whose breakpoints coincide with powers of two. It is easily verified that the exponent is a continuous function of $\alpha$. (In particular, observe that $\lim_{\delta \rightarrow 0} f(2^{k-\delta}) = f(2^{k})$, where $f(\alpha) = 1/2^{\floor{\lg \alpha}} + (\floor{\lg \alpha}-2)/\alpha$.)

\medskip

We can now present the proof of Theorem~\ref{thm:membership-ub}. Recall that $K$ is a convex polytope in $\RE^d$. By Lemma~\ref{lem:fat}, we can precondition $K$ so that it is $(1/d)$-fat and is contained within $Q_0$, thus allowing us to approximate $K$ absolutely. Also, if $1/16 < \eps \le 1$, we set $\eps = 1/16$. (Both of these changes result in a constant factor decrease to $\eps$, which will not affect the asymptotic bounds.) We then set $t = (\lg \inv{\eps})/\eps^{(d-1)/\alpha}$ and invoke $\alg(K, Q_0)$. Let $T$ denote the resulting data structure. Given the preconditioning of $K$ and the alteration of $\eps$, we may apply Lemma~\ref{lem:trade-off-ub} to show that the total space for $T$ is 
\[
	\OO{1/\eps^{(d-1)\left(1 - 2 \left(\frac{\floor{\lg \alpha} - 2}{\alpha} + \inv{2^{\floor{\lg\alpha}}}\right) \right)}}.
\]
Using the fact that $1/2^{\floor{\lg\alpha}} \ge 1/\alpha$, this is
\[
	\OO{1/\eps^{(d-1)\left(1 - 2 \left(\frac{\floor{\lg \alpha} - 2}{\alpha} + \inv{\alpha}\right) \right)}}
		~ = ~
	\OO{1/\eps^{\left(d-1\right)\left(1 - \frac{2\floor{\lg \alpha} - 2}{\alpha} \right)}},
\]
which matches the space bound of Theorem~\ref{thm:membership-ub}.

Recall that a query is answered by locating the leaf node of $T$ that contains the query point, followed by an inspection of the (at most) $t$ halfspaces stored in this leaf node. By our remarks following the presentation of {\alg}, $T$ is of height $O(\log \inv{\eps})$, which implies that the query time is dominated by the value of $t$. This completes the proof of Theorem~\ref{thm:membership-ub}.

\section{Preprocessing} \label{sec:preproc}

Our principal focus so far has been in establishing the existence of trade-offs between space and query time, without considering how to construct the data structure. In this section we discuss preprocessing issues. We first discuss the preconditioning of $K$ as described in Lemma~\ref{lem:fat} and then discuss the implementation of the access primitives (i)--(iii) needed for {\alg} as presented at the start of Section~\ref{sec:split-reduce}. We assume that the input convex body $K$ is presented as the intersection of a set $\mathcal{H}$ of $n$ halfspaces in $\RE^d$. Throughout, let $Q$ denote an arbitrary quadtree cell.

Let $t$ denote the query-time parameter in {\alg}. As observed in Section~\ref{sec:split-reduce}, under our assumption that $t \ge 1$, Steps~{\stepout} and {\stepin} are not needed, since we can rely entirely on Step~{\stepapx}, and therefore access primitives~(i) and~(ii) are not needed.%
\footnote{If we wished to we could implement access primitive~(i) in linear time by linear programming. Also, by testing the membership of each of $Q$'s vertices in $K$, we could implement a stronger version of access primitive~(ii), namely that of determining whether $Q \subseteq K$ (as opposed to $K \oplus \eps$).}
The remainder of this section will be focused on preconditioning (Section~\ref{sec:precondition}) and the implementation of access primitive~(iii), which locally approximates $K$ within $Q$ (Section~\ref{sec:apx-cover}).

\subsection{Preconditioning} \label{sec:precondition}

Recall that we assume that $K$ is a (full-dimensional) convex polytope in $\RE^d$ that is presented as the intersection of a set of $n$ closed halfspaces. Also recall that $Q_0$ is the axis-aligned hypercube of unit diameter that is centered at the origin. Our objective is to precondition $K$ by computing an affine transformation that both fattens $K$ and maps it to lie within $Q_0$. $Q_0$ has a side length of $1/\sqrt{d}$, and therefore it contains a ball of radius $1/2\sqrt{d}$ centered at the origin. Let $B_0$ denote this ball, and let $r_0$ denote its radius. For $0 < \gamma \le 1$, let $\gamma B_0$ denote the concentric ball of radius $\gamma \P r_0 = \gamma/2\sqrt{d}$. We say that a polytope is in \emph{$\gamma$-canonical position} if it is nested between $\gamma B_0$ and $B_0$ (see Figure~\ref{fig:canonical}). Clearly, a polytope that is in canonical position is contained within $Q_0$ and is $\gamma$-fat. The following lemma shows that $K$ can be efficiently mapped into this form, and furthermore an absolute approximation to the transformed body can be easily mapped to a relative approximation of $K$. (Lemma~\ref{lem:fat} of Section~\ref{sec:prelim-abs} follows as an immediate consequence of this.) Such fattening operations are commonplace in geometric approximation algorithms (see, e.g., \cite{AHV-coreset,Chan-coreset,HP-book,BHP-bbox}), and we employ the standard approach based on minimum enclosing volumes, the John Ellipsoid in particular. 

\begin{figure}[htbp]
  \centerline{\includegraphics[scale=0.40]{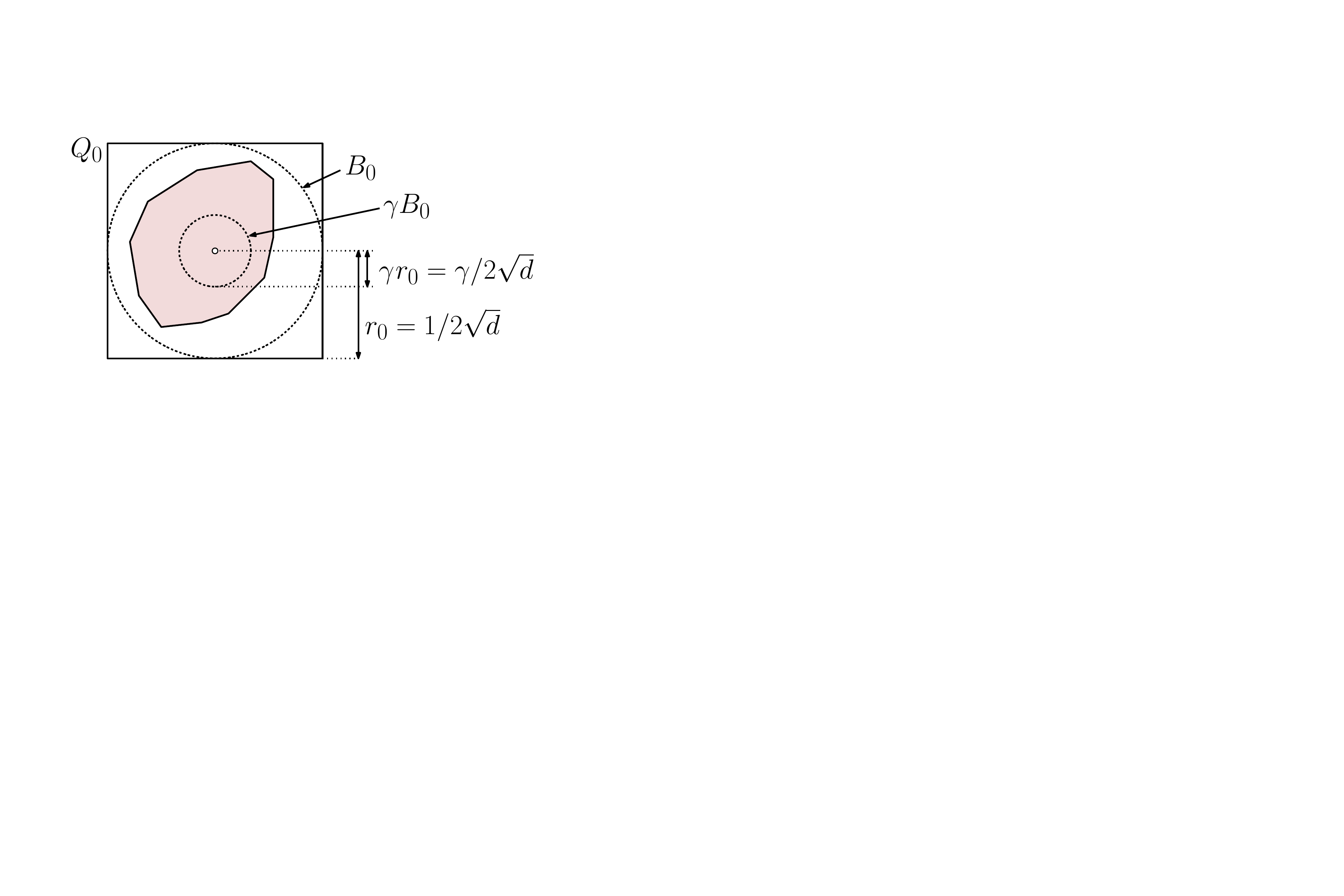}}
  \caption{A polytope in $\gamma$-canonical position.}
  \label{fig:canonical}
\end{figure}

\begin{lemma} \label{lem:precondition-1}
Let $K$ be a convex polytope in $\RE^d$ defined as the intersection of a set $\mathcal{H}$ of $n$ halfspaces, and let $0 < \eps \le 1$. There is an algorithm that, given $\mathcal{H}$ and $\eps$, in $O(n)$ time computes an affine transformation $T$ that maps $K$ into $(1/d)$-canonical position, such that if $P$ is an absolute $\eps/(d \sqrt{d})$-approximation of $T(K)$, then $T^{-1}(P)$ is a relative $\eps$-approximation of $K$.
\end{lemma}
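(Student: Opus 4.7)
The plan is to use the classical John ellipsoid to simultaneously fatten $K$ and bring it inside $Q_0$. Let $E$ denote the maximum-volume ellipsoid inscribed in $K$, with center $c$ and principal semi-axes $r_1 \geq \cdots \geq r_d > 0$ in orthonormal directions $v_1,\ldots,v_d$. John's theorem guarantees $E \subseteq K \subseteq c + d(E-c)$. I define $T$ as the composition of a translation by $-c$, a rotation aligning $(v_1,\ldots,v_d)$ with the coordinate axes, and a diagonal scaling with factors $s_i = r_0/(d\, r_i)$, where $r_0 = 1/(2\sqrt{d})$ is the inradius of $Q_0$. By construction, $T$ maps $E$ to the ball of radius $r_0/d$ at the origin (which is $(1/d)B_0$), and maps $c + d(E-c)$ to $B_0$. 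Therefore $(1/d)B_0 \subseteq T(K) \subseteq B_0 \subseteq Q_0$, so $T(K)$ lies in $(1/d)$-canonical position, as required.

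Next I verify the approximation correspondence. The maximum singular value of $T^{-1}$ is $1/\min_i s_i = d\,r_1/r_0 = 2 d\sqrt{d}\cdot r_1$, so $T^{-1}$ is Lipschitz with this constant. Since $E \subseteq K$, we have $\diam(K) \geq \diam(E) = 2 r_1$, hence $r_1 \leq \diam(K)/2$. Suppose $P$ is an absolute $(\eps/(d\sqrt{d}))$-approximation of $T(K)$. The inclusion $T(K) \subseteq P$ pulls back to $K \subseteq T^{-1}(P)$. For any $p \in T^{-1}(P)$, there is $q_0 \in K$ with $\|T(p) - T(q_0)\| \leq \eps/(d\sqrt{d})$, and applying $T^{-1}$ gives
\[
\|p - q_0\| ~\leq~ 2 d\sqrt{d}\cdot r_1 \cdot \frac{\eps}{d\sqrt{d}} ~=~ 2\eps\,r_1 ~\leq~ \eps\cdot\diam(K).
\]
Thus $T^{-1}(P) \subseteq K \oplus \eps\cdot\diam(K)$, establishing that $T^{-1}(P)$ is a relative $\eps$-approximation of $K$. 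This part of the argument is routine linear algebra once the transformation is in place, and Lemma~\ref{lem:fat} of Section~\ref{sec:prelim-abs} will follow as an immediate corollary.

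The main obstacle is achieving the $O(n)$ preprocessing time. The exact John ellipsoid is defined by at most $d(d{+}3)/2$ contact conditions on $\partial K$, so in fixed dimension the problem is of LP-type with constant combinatorial dimension, and the ellipsoid can be extracted in $O(n)$ randomized time by standard LP-type (Seidel/Clarkson/Welzl-style) machinery applied to the $n$ halfspaces defining $K$. A deterministic alternative is to compute any constant-factor approximation to the inscribed ellipsoid (for example, an ellipsoid $E'$ with $E' \subseteq K \subseteq c'\,E'$ for a constant $c'$ depending only on $d$), which still delivers $\Theta(1)$-fatness and only changes constants hidden in the approximation parameter; the same algebra then goes through with $\eps/(d\sqrt{d})$ replaced by an appropriately adjusted (but still $\Theta(\eps)$) absolute tolerance. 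In either case, once the ellipsoid is at hand the transformation $T$ and its inverse are computed in $O(1)$ time, so the overall preprocessing cost remains $O(n)$, completing the proof.
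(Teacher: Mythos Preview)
Your proposal is correct and follows essentially the same approach as the paper: compute the John ellipsoid, map it to $(1/d)B_0$, invoke John's theorem for the $(1/d)$-canonical position, and bound the Lipschitz constant of $T^{-1}$ by $d\sqrt{d}\cdot\diam(E)$ to convert absolute error into relative error. The only minor difference is that the paper cites the deterministic $O(n)$ algorithm of Chazelle and Matou\v{s}ek (and, as an alternative, the Har-Peled--Barequet construction) for the ellipsoid computation, whereas you appeal to the LP-type framework; this affects only whether the algorithm is deterministic, not the substance of the argument.
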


\begin{proof}
Chazelle and Matou\v{s}ek~\cite{ChM96} show that in any fixed dimension, there exists an $O(n)$ time algorithm that, given a convex polytope $K$ presented as the intersection of $n$ halfspaces, computes an ellipsoid $E$ of maximum volume contained within $K$, also known as the \emph{John Ellipsoid}~\cite{Bal97}. (At the expense of an increase in the constant factors, we can apply the simpler construction by Har-Peled and Barequet~\cite{BHP-bbox}.) Since $d$ is fixed, in constant time we can compute an affine transformation $T$ that maps $E$ to the ball $B_0/d$ (that is, $\inv{d}B_0$). (Because $K$ is full dimensional, both $T$ and its inverse $T^{-1}$ are well-defined.) It is well known from John's Theorem (see, e.g., \cite{Bal97}) that $K$ is contained within a uniform scaling of $E$ by a factor of $d$, which we denote by $d \P E$. Therefore, we have $E \subseteq K \subseteq d \P E$, which implies that $B_0/d \subseteq T(K) \subseteq B_0$.

Let $\sigma = d \sqrt{d} \cdot \diam(E)$. Because $T$ maps an ellipse of diameter $\diam(E)$ to a ball of diameter $2 r_0/d = 1/d \sqrt{d}$, it follows that $T$ maps any vector $v$ to a vector of length at least $\|v\|/\sigma$. (The principal axis aligned with $E$'s diameter is scaled by exactly $1/\sigma$, and all other principal axes, which form a basis for the space, are scaled by at least this much.) Therefore, $\|T^{-1}(v)\| \le \sigma \|v\|$. 

If $P$ is any absolute $\eps/(d \sqrt{d})$-approximation to $T(K)$, then by definition 
\begin{equation}
	T(K) 
		~ \subseteq ~ P 
		~ \subseteq ~ T(K) \oplus \frac{\eps}{d \sqrt{d}}. \label{eqn:inclusion}
\end{equation}
To show that $T^{-1}(P)$ is a relative $\eps$-approximation to $K$, observe first that by applying $T^{-1}$ to the first inclusion of Eq.~(\ref{eqn:inclusion}), we have $K \subseteq T^{-1}(P)$. Also, by the second inclusion of Eq.~(\ref{eqn:inclusion}) we know that for any $p \in P$, there exists $q \in T(K)$ such that the vector $p-q$ is of length at most $\eps/(d \sqrt{d})$. Therefore, 
\[
	\|T^{-1} (p-q)\|
		~ \le ~ \sigma \|p-q\|
		~ \le ~ \sigma \frac{\eps}{d \sqrt{d}}
		~  =  ~ \eps \cdot \diam(E)
		~ \le ~ \eps \cdot \diam(K).
\]
We conclude that
\[
	K
		~ \subseteq ~ T^{-1}(P) 
		~ \subseteq ~ K \oplus (\eps \cdot \diam(K)),
\]
and therefore $T^{-1}(P)$ is a relative $\eps$-approximation of $K$.
\end{proof}

In order to make subsequent processing more efficient, we adapt a standard coreset construction to reduce the number of halfspaces to a function depending only on $\eps$ and $d$. The process will involve some further scaling, which will slightly modify the parameters.

\begin{lemma} \label{lem:precondition-2}
Let $K$ be a convex polytope in $\RE^d$ defined as the intersection of a set $\mathcal{H}$ of $n$ halfspaces, and let $0 < \eps \le 1$. There is an algorithm that, given $\mathcal{H}$ and $\eps$, in $O\big(n + 1/\eps^{d-1}\big)$ time computes an affine transformation $T'$ and a subset $\mathcal{H}' \subseteq \mathcal{H}$ of size $O\big(1/\eps^{(d-1)/2}\big)$ such that: 
\begin{enumerate}
\setlength{\itemsep}{-0.5ex}%
\setlength{\parsep}{0pt}%
\item[$(i)$] applying $T'$ to the intersection of $\mathcal{H}'$ results in a convex polytope $K'$ that is in $(1/2d)$-canonical position;

\item[$(ii)$] furthermore, if $P$ is an absolute $\eps/(4 d \sqrt{d})$-approximation of $K'$, then ${T'}^{-1}(P)$ is a relative $\eps$-approximation of $K$.
\end{enumerate}
\end{lemma}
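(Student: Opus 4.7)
My plan is to decompose the construction into three phases---initial preconditioning via Lemma~\ref{lem:precondition-1}, extraction of a Dudley-style coreset from $\mathcal{H}$, and a small corrective scaling to reach $(1/2d)$-canonical position---and then check that the approximation errors compose correctly to satisfy the pull-back guarantee of Lemma~\ref{lem:precondition-1}.

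The first phase is a direct invocation of Lemma~\ref{lem:precondition-1} on $K$ in $O(n)$ time, producing an affine map $T$ with $\tilde{K} = T(K)$ in $(1/d)$-canonical position; denote by $\tilde{\mathcal{H}}$ the image halfspaces, so $B_0/d \subseteq \tilde{K} \subseteq B_0$, and recall that any absolute $(\eps/(d\sqrt{d}))$-approximation of $\tilde{K}$ pulls back through $T^{-1}$ to a relative $\eps$-approximation of $K$. For the second phase, set $\eps^* = \eps/(8 d\sqrt{d})$ and place a $\sqrt{\eps^*}$-dense set $U$ of $O(1/\eps^{(d-1)/2})$ unit directions on the sphere. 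For each $u \in U$, select from $\tilde{\mathcal{H}}$ a halfspace whose bounding hyperplane supports $\tilde{K}$ in direction $u$---i.e., contains a facet of $\tilde{K}$ that is extremal in direction $u$---which can be done by a single bucketed scan of $\tilde{\mathcal{H}}$ over a directional grid on the sphere; this runs in $O(n + 1/\eps^{d-1})$ time. By the standard Dudley argument (cf.\ Lemma~\ref{lem:Dudley1} and Lemma~\ref{lem:restricted-dudley}), $\tilde{K} \subseteq \bigcap \tilde{\mathcal{H}}' \subseteq \tilde{K} \oplus \eps^*$. Let $\mathcal{H}' \subseteq \mathcal{H}$ be the $T$-preimage of $\tilde{\mathcal{H}}'$, of size $O(1/\eps^{(d-1)/2})$.

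In the third phase, set $c_s = r_0/(r_0 + \eps^*) = 1/(1 + \eps/(4d)) \geq 8/9$, take $T' = c_s \P T$, and let $K' = T'(\bigcap \mathcal{H}') = c_s \bigcap \tilde{\mathcal{H}}'$. From $\tilde{K} \subseteq \bigcap \tilde{\mathcal{H}}' \subseteq B_0 \oplus \eps^*$, scaling by $c_s$ yields $K' \subseteq c_s(B_0 \oplus \eps^*) = B_0$; from $\bigcap \tilde{\mathcal{H}}' \supseteq B_0/d$, it yields $K' \supseteq c_s B_0/d \supseteq B_0/(2d)$, so $K'$ is in $(1/2d)$-canonical position, establishing (i). For (ii), given an absolute $(\eps/(4 d\sqrt{d}))$-approximation $P$ of $K'$, set $P^* = (1/c_s) P$, so that ${T'}^{-1}(P) = T^{-1}(P^*)$. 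An inclusion chase gives $\tilde{K} \subseteq P^*$ and
\[
	P^*
	~ \subseteq ~ \bigcap \tilde{\mathcal{H}}' \oplus \frac{\eps}{4 d\sqrt{d}\,c_s}
	~ \subseteq ~ \tilde{K} \oplus \left(\eps^* + \frac{\eps}{2 d\sqrt{d}}\right)
	~ \subseteq ~ \tilde{K} \oplus \frac{\eps}{d\sqrt{d}},
\]
so $P^*$ is an absolute $(\eps/(d\sqrt{d}))$-approximation of $\tilde{K}$ and Lemma~\ref{lem:precondition-1} finishes the argument.

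The main obstacle is the correctness of the coreset extraction. The halfspace selected from $\tilde{\mathcal{H}}$ need not coincide with the ``ideal'' Dudley supporting hyperplane (perpendicular to $x - \tilde{x}_0$), and at lower-dimensional boundary faces of $\tilde{K}$ the chosen hyperplane can tilt. Because the fattening from the first phase gives $\tilde{K} \supseteq B_0/d$, the support function of $\tilde{K}$ is uniformly Lipschitz on the sphere with constant depending only on $d$; a constant-factor increase in the density of $U$ then absorbs the angular deviation and preserves the $\eps^*$-approximation bound.
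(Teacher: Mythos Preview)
Your three-phase skeleton (precondition via Lemma~\ref{lem:precondition-1}, extract a small subset of $\mathcal{H}$, rescale to restore canonical position) matches the paper exactly, and your phases~1 and~3 are correct. The divergence is entirely in phase~2, and there the paper takes a different and more complete route: it passes to the polar, turning the halfspaces of $T(\mathcal{H})$ into a point set $S$ with $\conv(S)=\polar{T(K)}$, applies Chan's $\eps'$-coreset for the extent measure to obtain $S'\subseteq S$ in $O(n+1/\eps^{d-1})$ time, and then uses Chan's one-sided extent observation together with the nesting $B_0/d\subseteq T(K)\subseteq B_0$ to convert the extent guarantee into a Hausdorff bound of $\eps/(2d\sqrt{d})$ between $T(K)$ and the intersection of the dual halfspaces.

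Your primal Dudley approach has a real gap at exactly this point. First, ``select from $\tilde{\mathcal{H}}$ a halfspace whose bounding hyperplane supports $\tilde K$ in direction $u$'' is not well-defined: for generic $u$ no halfspace in $\tilde{\mathcal{H}}$ has outward normal $u$, and if you mean a facet incident to the extremal vertex in direction $u$, identifying that facet requires solving an LP per direction, not a bucketed scan. Bucketing halfspaces by their \emph{own} normals does not help, because redundant halfspaces (whose hyperplanes do not touch $\tilde K$) can occupy every bucket and contribute nothing to the approximation. Second, the Lipschitz argument you invoke does not rescue correctness: Lipschitz continuity of the support function controls how the ideal supporting hyperplane varies with $u$, but it says nothing about the angle between $u$ and the normal of an adjacent facet of $\tilde K$, which can be $\Theta(1)$ at any vertex regardless of fatness. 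Consequently Lemmas~\ref{lem:Dudley1} and~\ref{lem:restricted-dudley}, which are stated for perpendicular supporting hyperplanes, do not transfer to facet hyperplanes without further work. The polar/extent route the paper takes is precisely what supplies both the missing algorithm (Chan's grid construction gives the time bound directly) and the missing correctness lemma (the one-sided extent bound replaces your Lipschitz hand-wave).
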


\begin{proof}
Given $\mathcal{H}$, we begin by computing the transformation $T$ of Lemma~\ref{lem:precondition-1} in $O(n)$ time. Let $T(K)$ denote the resulting polytope, which is in $(1/d)$-canonical position (see Figure~\ref{fig:precondition-2}(a)).

Given a set $S$ of points $\RE^d$, the \emph{extent measure} associates each unit vector $u \in \RE^d$ with the minimum distance between two hyperplanes orthogonal to $u$ that contain $S$ between them (see Figure~\ref{fig:coreset}(a)). More formally, define $w_u(S) = \max_{p,q \in S} \ang{p-q,u}$ (recalling that $\ang{\cdot, \cdot}$ denotes inner product). A subset $S' \subseteq S$ is said to be an \emph{$\eps$-coreset} for the extent measure if for all unit vectors $u$, $w_u(S') \ge (1-\eps) w_u(S)$ (see Figure~\ref{fig:coreset}(b)). Agarwal {\etal}~\cite{AHV-coreset} showed that, given a set of $n$ points in $\RE^d$, it is possible to construct an $\eps$-coreset for the extent measure of size $O\big(1/\eps^{(d-1)/2}\big)$. We will employ an improvement of this result due to Chan, who presented an algorithm to compute such a coreset in $O\big(n + (1/\eps)^{d-1}\big)$ time \cite{Chan-coreset}.

\begin{figure}[htbp]
  \centerline{\includegraphics[scale=0.40]{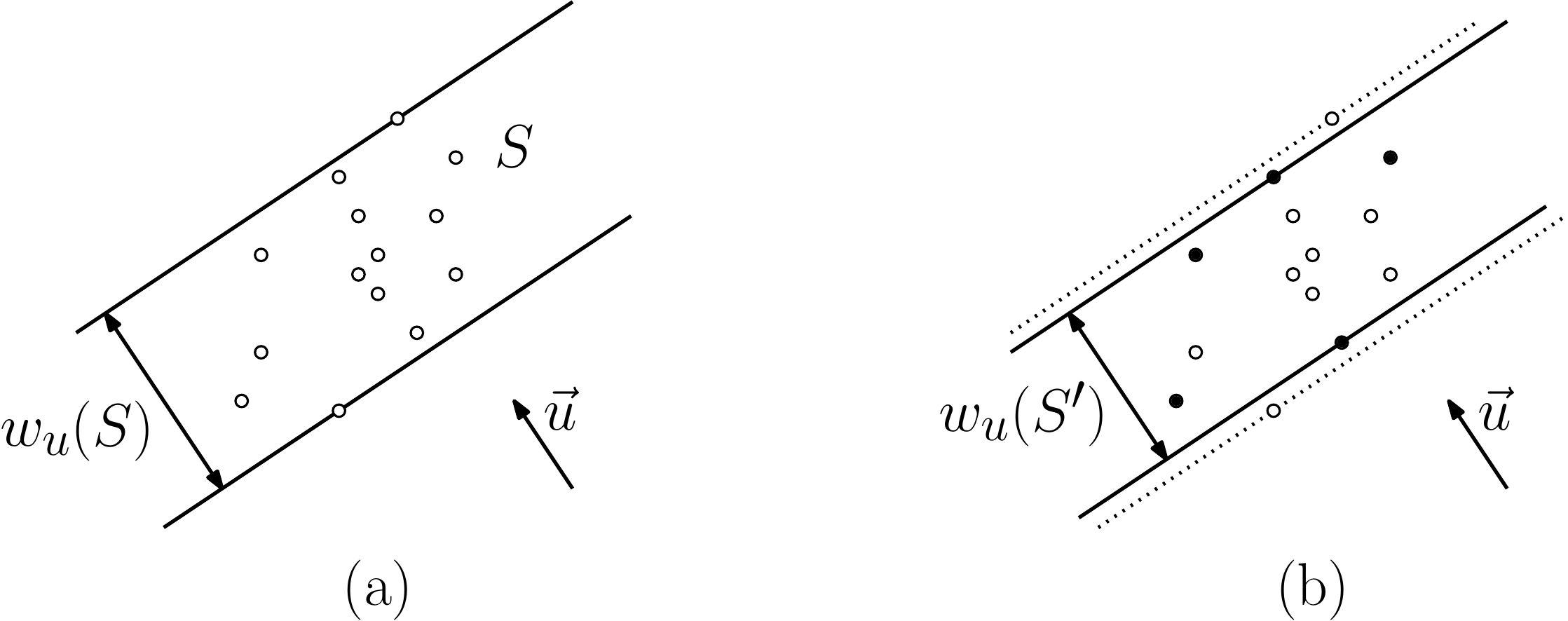}}
  \caption{(a) The extent measure $w_u(S)$ and (b) a coreset.}
  \label{fig:coreset}
\end{figure}

Since $T(K)$ is in $(1/d)$-canonical position, we have $B_0/d \subseteq T(K) \subseteq B_0$. Let $S$ denote the set of $n$ points in $\RE^d$ that result by applying the polar transformation (see Section~\ref{sec:prelim-polar}) to each hyperplane of $T(\mathcal{H})$. It follows from the definition of the polar transformation that $\conv(S) = \polar{T(K)}$ (see Figure~\ref{fig:precondition-2}(b)). As mentioned in Section~\ref{sec:prelim-polar}, the polar transformation maps an origin-centered ball of radius $r$ to a ball of radius $1/r$. Thus, $\conv(S)$ is nested between an inner ball of radius $2 \sqrt{d}$ and an outer ball of radius $2 d \sqrt{d}$. Given $\mathcal{H}$ and $T$, we can easily compute the set $S$ in $O(n)$ time. Let $\eps' = \eps/4 d^2$. We then apply Chan's algorithm to compute an $\eps'$-coreset $S' \subseteq S$ in time $O\big(n + (1/\eps')^{d-1}\big) = O\big(n + (1/\eps)^{d-1}\big)$ (see Figure~\ref{fig:precondition-2}(c)). Let $\mathcal{H'}$ be the subset of $\mathcal{H}$ that results by taking the polar duals of the points of $S'$, and let $K'$ be the convex body that results from intersecting these halfspaces (see Figure~\ref{fig:precondition-2}(d)).

\begin{figure}[htbp]
  \centerline{\includegraphics[scale=0.40]{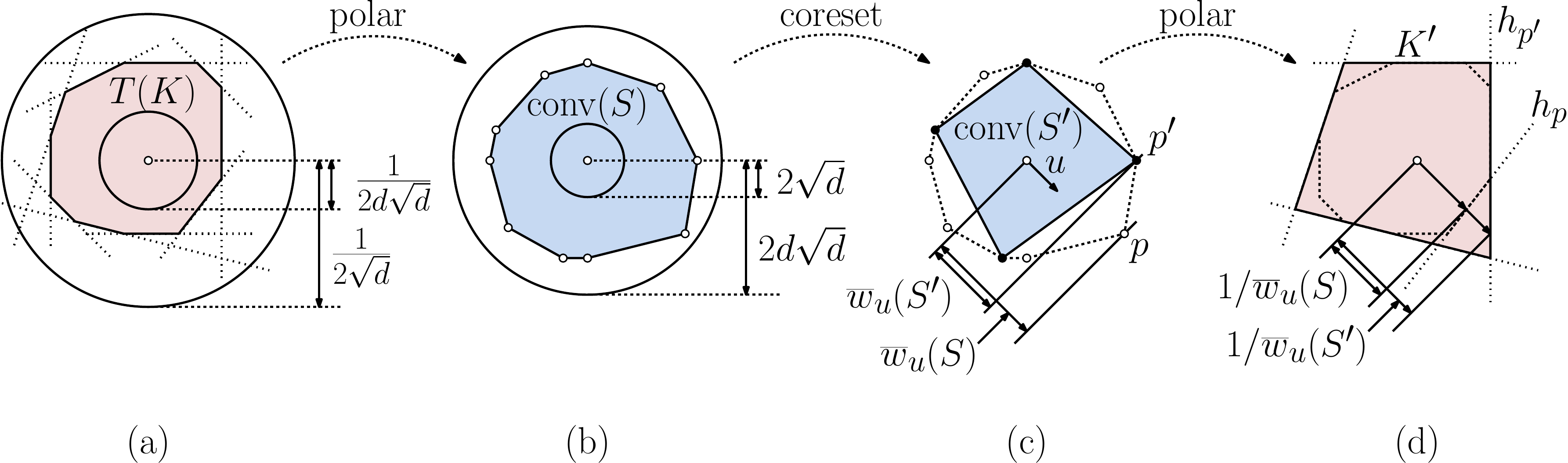}}
  \caption{Proof of Lemma~\ref{lem:precondition-2}. (Not drawn to scale.)}
  \label{fig:precondition-2}
\end{figure}

Clearly, $T(K) \subseteq K'$ and $|\mathcal{H'}| = O\big(1/\eps^{(d-1)/2}\big)$. We assert that the Hausdorff distance between $T(K)$ and $K'$ is at most $\eps/2d\sqrt{d}$. To prove this, we apply an observation due to Chan \cite{Chan-coreset}. Define the \emph{one-sided extent measure}, denoted $\overline{w}_u(S)$ to be $\max_{p \in S} \ang{p, u}$. (This is the distance from the origin to $S$'s closest supporting hyperplane orthogonal to and on the same side as $u$.) In Observation~{1.4} of \cite{Chan-coreset} Chan shows that if $S'$ is an $\eps'$-coreset for the extent measure, then $\overline{w}_u(S') \ge \overline{w}_u(S) - \eps' \cdot w_u(S)$. Given the nesting properties of $S$ and the fact that $u$ is a unit vector we have 
\[
	2 \sqrt{d} ~ \le ~ \overline{w}_u(S) ~ \le ~ 2 d \sqrt{d}
		\qquad\mbox{and}\qquad
	4 \sqrt{d} ~ \le ~ w_u(S) ~ \le ~ 4 d \sqrt{d}. 
\]
Therefore, $w_u(S) \le 2 d \cdot \overline{w}_u(S)$, and with Chan's observation this yields $\overline{w}_u(S') \ge (1 - 2 d \eps') \overline{w}_u(S)$. Under our assumption that $\eps \le 1$, we have $\eps' \le 1/4 d^2$, and so $1 - 2 d \eps' \ge 1 - 1/2 d \ge 1/2$.

Let $p \in S$ be the point that determines $\overline{w}_u(S)$ (see Figure~\ref{fig:precondition-2}(c)). Treating $p$ as a vector, its polar dual is a hyperplane, which we denote by $h_p$ (see Figure~\ref{fig:precondition-2}(d)). It follows directly from the definition of polarity that if we shoot a bullet from the origin parallel to $u$ until it hits $h_p$, the length of the resulting segment is $1/\overline{w}_u(S)$. (To see this, observe that $h_p = \{v : \ang{p,v} = 1\}$, and so $c \cdot u$ lies on $h_p$, for $c = 1/\ang{p,u} = 1/\overline{w}_u(S)$.) Analogously, letting $p' \in S'$ denote the point that determines $\overline{w}_u(S')$, the length of the segment parallel to $u$ that hits the associated polar dual hyperplane $h_{p'}$ is of length $1/\overline{w}_u(S')$. Applying this to every unit vector in $\RE^d$, the Hausdorff distance between $T(K)$ and $K'$ is at most the supremum over all unit vectors of 
\begin{eqnarray*}
	\inv{\overline{w}_u(S')} - \inv{\overline{w}_u(S)}
		& \le & \inv{(1 - 2 d \eps') \cdot \overline{w}_u(S)} - \inv{\overline{w}_u(S)}
		~  =  ~ \frac{2 d \eps'}{(1 - 2 d \eps') \cdot \overline{w}_u(S)} \\
		& \le & \frac{2 d \eps'}{(1/2) \cdot 2 \sqrt{d}}
		~ \le ~ 2 \sqrt{d} \eps'
		~ \le ~ \frac{\eps}{2d \sqrt{d}},
\end{eqnarray*}
which establishes our assertion.

Therefore, if $P$ is any absolute $(\eps/2 d \sqrt{d})$-approximation to $K'$, then by the triangle inequality (applied to the Hausdorff distance) $P$ is an absolute $(\eps/2 d \sqrt{d}) + (\eps/2 d \sqrt{d}) = \eps/d \sqrt{d}$ approximation to $T(K)$. By Lemma~\ref{lem:precondition-1}, $P$ is a relative $\eps$-approximation of $K$.

We are almost done, but the canonical-position condition fails, because $K'$ need not lie within $B_0$ of radius $r_0 = 1/2 \sqrt{d}$ (even though $T(K)$ does). Since the Hausdorff distance between $K'$ and $T(K)$ is at most $\eps/2 d \sqrt{d} \le 1/2\sqrt{d} = r_0$, it follows that $K'$ lies within $2 B_0$. The simple fix is to apply a uniform scaling of space by a factor of $1/2$. In particular, define $T'$ to be the composition of $T$ with such a scaling transformation, and apply the above coreset construction to $\mathcal{H}$ transformed by $T'$, but now with $\eps'$ scaled accordingly to $\eps/8 d^2$. The resulting transformed and reduced polytope is nested between $B_0$ and $B_0/2 d$, and so it is in $(1/2 d)$-canonical position. Also, if $P$ is an absolute $(\eps/4 d \sqrt{d})$-approximation of the transformed and reduced polytope, then ${T'}^{-1}(P)$ is a relative $\eps$-approximation of $K$, as desired.
\end{proof}

\subsection{Efficient Local Approximations} \label{sec:apx-cover}

Next, we consider the implementation of access primitive~(iii), which given a convex body $K$ in $\gamma$-canonical position, a quadtree cell $Q$, and query-time $t$, determines whether there exist $t$ halfspaces whose intersection $\eps$-approximates $K$ within $Q$. The space and query times stated in Theorem~\ref{thm:membership-ub} are based on the assumption that the number of bounding halfspaces of this local approximating polytope is within a constant factor of optimal. However, we know of no efficient algorithm that can achieve this. In this section we show how to efficiently implement Step~(\stepapx) of {\alg} approximately in the sense that the number of halfspaces in the approximation exceeds the optimum (for a slightly smaller approximation parameter) by a factor of $O(\log \inv{\eps} )$. As shown in Lemma~\ref{lem:weak-membership}, this will lead to an increase in the space and query times stated in Theorem~\ref{thm:membership-ub} by a factor of only $O(\log \inv{\eps})$.

A natural approach would be to adapt Clarkson's algorithm for polytope approximation \cite{Clarkson-polytope}. There are a few messy technical issues involved with such an adaptation. (For example, Clarkson's algorithm applies to the convex hull of a set of points, rather than the intersection of halfspaces.) Since we do not require the strong approximation bounds provided by Clarkson's algorithm, we will instead present a simple direct solution based on a reduction to the set-cover problem. Our approach is to construct a set system where the point set consists of a dense set of points of spacing $\Theta(\eps)$ that covers the portion of $Q$ that is external to $K \oplus c' \P \eps$, for a suitable constant $c' < 1$. We associate each bounding halfspace of $K$ with the set of grid points that lie \emph{outside} of this halfspace. We will show that the halfspaces associated with a minimum set cover for this system produces the desired local approximation. We use the greedy set cover heuristic to construct this cover.

Recall that $K \oplus r$ denotes the set of points that lie within Euclidean distance $r$ of $K$. In order to avoid the complexities of determining whether a point lies outside of $K \oplus c' \eps$, it will suffice for our purposes to perform the simpler test of whether a point lie outside a scaled copy of $K$.

\begin{lemma} \label{lem:scaled-growth}
For $0 < \gamma \le 1$ and $0 < \eps \le 1$, let $K$ be a polytope in $\RE^d$ that is in $\gamma$-canonical position, and let $K^+ = \big(1 + 2 \sqrt{d} \eps\big) K$. Then 
\[
	K \oplus \gamma \eps
		~ \subseteq ~ K^+
		~ \subseteq ~ K \oplus \eps.
\]
\end{lemma}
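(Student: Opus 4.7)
The plan is to verify the two inclusions separately, using the fact that $\gamma$-canonical position gives the nesting $\gamma B_0 \subseteq K \subseteq B_0$, where $B_0$ is the ball of radius $r_0 = 1/(2\sqrt{d})$ centered at the origin. Set $\lambda = 1 + 2\sqrt{d}\,\eps$ throughout, so that $K^+ = \lambda K$ and $(\lambda-1)\,r_0 = \eps$ while $(\lambda-1)\,\gamma r_0 = \gamma\eps$.

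For the easier inclusion $K^+ \subseteq K \oplus \eps$, I would use the outer nesting $K \subseteq B_0$: every point of $K$ has norm at most $r_0$. Given $q \in K^+$, write $q = \lambda p$ with $p \in K$; then $\|q - p\| = (\lambda - 1)\|p\| \le 2\sqrt{d}\,\eps \cdot r_0 = \eps$, and since $p \in K$ itself, $q$ lies within distance $\eps$ of $K$, as required.

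The main step is the inclusion $K \oplus \gamma\eps \subseteq K^+$, for which I would establish the following general fact: if a convex body $C$ contains the ball $rB^d$ centered at the origin, then for any $\lambda \ge 1$,
\[
	C \oplus (\lambda - 1)r ~\subseteq~ \lambda C.
\]
The proof is a one-line convexity argument: given $p \in C$ and $v$ with $\|v\| \le (\lambda-1)r$, the point $(p+v)/\lambda$ can be written as the convex combination $\frac{1}{\lambda}\, p + \frac{\lambda-1}{\lambda} \cdot \frac{v}{\lambda-1}$, in which both $p$ and $v/(\lambda-1)$ lie in $C$ (the latter because $\|v/(\lambda-1)\| \le r$), so $(p+v)/\lambda \in C$ and hence $p+v \in \lambda C$. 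Applying this with $C = K$ and $r = \gamma r_0 = \gamma/(2\sqrt{d})$ gives $(\lambda-1)r = \gamma\eps$, yielding $K \oplus \gamma\eps \subseteq \lambda K = K^+$.

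No step poses any real difficulty; the only point requiring care is the convex-combination identity in the second inclusion, which crucially uses that $K$ contains the origin-centered ball of radius $\gamma r_0$ (so that $v/(\lambda-1)$ can be absorbed into $K$). The two constants $\gamma$ on the inner side and $1$ on the outer side of the canonical nesting are exactly what produce the asymmetric bounds $\gamma\eps$ and $\eps$ in the statement.
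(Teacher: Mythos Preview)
Your proof is correct. The second inclusion $K^+ \subseteq K \oplus \eps$ is argued exactly as in the paper: for $q = \lambda p$ with $p \in K$, the displacement $\|q-p\| = (\lambda-1)\|p\| \le (\lambda-1)r_0 = \eps$.

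For the first inclusion $K \oplus \gamma\eps \subseteq K^+$ your argument differs from the paper's. The paper works on the dual side: since every supporting hyperplane of $K$ lies at distance at least $\gamma r_0$ from the origin, scaling by $\lambda$ pushes each such hyperplane outward by at least $(\lambda-1)\gamma r_0 = \gamma\eps$, and $K \oplus \gamma\eps$ is contained in the intersection of the correspondingly translated halfspaces. You instead give a primal convex-combination argument: write $(p+v)/\lambda = \tfrac{1}{\lambda}\,p + \tfrac{\lambda-1}{\lambda}\cdot \tfrac{v}{\lambda-1}$ and absorb $v/(\lambda-1)$ into $K$ via the inner ball $\gamma r_0 B^d \subseteq K$. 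Both routes are short and standard; yours is slightly more self-contained (no appeal to the description of $K \oplus r$ as an intersection of translated supporting halfspaces), while the paper's hyperplane picture makes the role of the polytope's facets more visible, which is natural given that the lemma is used in a context where $K$ is manipulated halfspace-by-halfspace.
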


\begin{proof}
By definition of $\gamma$-canonical position, the distance of every point of $\partial K$ to the origin lies between $\gamma/2 \sqrt{d}$ and $1/2 \sqrt{d}$ (see Figure~\ref{fig:scaled-growth}(a)). To prove the first inclusion ($K \oplus \gamma \eps \subseteq K^+$), notice that the distance between any supporting hyperplane of $K$ and its parallel supporting hyperplane of $K^+$ is at least
\[
	\frac{\gamma}{2\sqrt{d}} \cdot 2 \sqrt{d} \eps
		~ = ~ \gamma \eps.
\]
The Minkowski sum $K \oplus \gamma \eps$ is equivalent to translating every supporting hyperplane of $K$ away from the origin by distance $\gamma \eps$ and intersecting the associated (infinite) set of halfspaces. If we apply this to just the (finitely many) defining hyperplanes of $K$, we obtain a polytope that contains $K \oplus \gamma \eps$, and therefore $K \oplus \gamma \eps \subseteq K^+$ (see Figure~\ref{fig:scaled-growth}(b) and~(c)). 

\begin{figure}[htbp]
  \centerline{\includegraphics[scale=0.40]{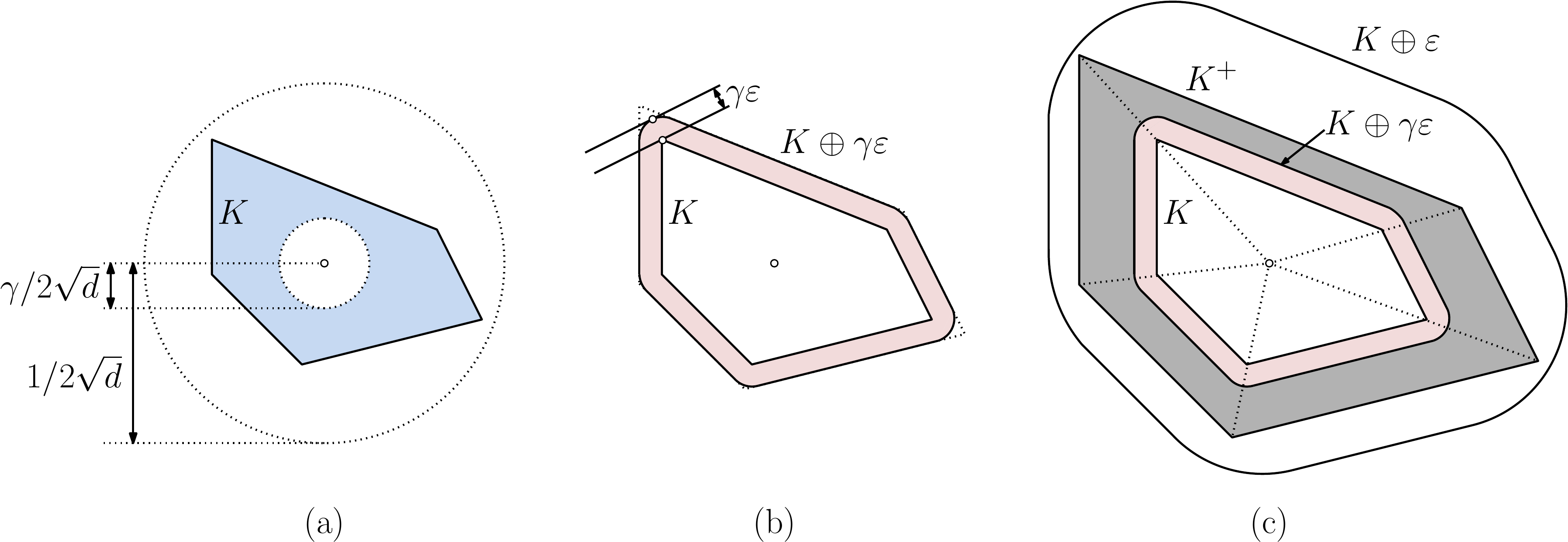}}
  \caption{Proof of Lemma~\ref{lem:scaled-growth}.}
  \label{fig:scaled-growth}
\end{figure}

To establish the second inclusion ($K^+ \subseteq K \oplus \eps$), observe that each point $p \in K$ is in 1--1 correspondence with the point $p' = \big(1 + 2 \sqrt{d} \eps\big) p \in K^+$. Since $p$ is within distance $1/2 \sqrt{d}$ of the origin, $p'$ is within distance $2 \sqrt{d} \eps/2 \sqrt{d} = \eps$ of $p$. This implies that $K^+ \subseteq K \oplus \eps$, as desired.
\end{proof}

While access primitive~(iii) does not place any restrictions on the halfspaces used when computing an $\eps$-approximation to $K$ within $Q$, when the query point $q$ lies outside of $K$, it may be useful to add further restrictions. In particular, when the query point lies outside of $K$, it is desirable to obtain a \emph{witness} to nonmembership in the form of a bounding halfspace of $K$ that does not contain $q$. (This will be exploited in Section~\ref{sec:ann} in the reduction of approximate nearest neighbor searching to approximate polytope membership. The witness hyperplane is used to identify the approximate nearest neighbor.) To achieve this, we would like to use bounding halfspaces from the original polytope in our approximation. By a simple application of Carath\'{e}odory's Theorem, we can show that we sacrifice only a constant factor by adding this restriction. The following is a straightforward generalization of Lemma~{3.1} from Mitchell and Suri \cite{MS}. 

\begin{lemma} \label{lem:discrete-apx}
Let $K$ be a convex polytope in $\RE^d$ defined as the intersection of a set $\mathcal{H}$ of halfspaces, and let $Q \subseteq Q_0$ be a quadtree cell. If there exists an $\eps$-approximation of $K$ within $Q$ bounded by $m$ halfspaces, then there exists a subset of $\mathcal{H}$ of size at most $d \P m$ that $\eps$-approximates $K$ within $Q$. 
\end{lemma}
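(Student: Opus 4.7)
The plan is a Carath\'eodory-style reduction via LP duality: for each halfspace of the hypothesized $\eps$-approximating polytope, I would exhibit at most $d$ halfspaces of $\mathcal{H}$ whose intersection already lies inside it. Gluing these local replacements across the $m$ halfspaces then yields the desired subset of $\mathcal{H}$ of size at most $d \P m$.

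Concretely, let $P' = \bigcap_{i=1}^{m} h_i'$ be the hypothesized approximating polytope, where $h_i' = \{x \ST \langle a_i, x\rangle \le b_i\}$, and write $\mathcal{H} = \{\langle u_j, x\rangle \le v_j\}_{j=1}^{n}$. Since $K \subseteq P'$, each $h_i'$ contains $K$, so the linear program $\max_{x \in K} \langle a_i, x\rangle$ is bounded above by $b_i$. Because $K$ is a bounded, nonempty polytope, this LP is feasible and bounded, and strong LP duality gives
\[
  \min \Bigl\{ {\textstyle \sum_{j}} \lambda_j v_j \ST \lambda \ge 0, ~ {\textstyle \sum_{j}} \lambda_j u_j = a_i \Bigr\} ~\le~ b_i.
\]
The dual feasible region is defined by $d$ linear equalities and non-negativity, so it admits an optimal basic feasible solution $\lambda^*$ whose support $I_i$ has size at most $d$.

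Next I would verify that $R_i := \bigcap_{j \in I_i} \{\langle u_j, x\rangle \le v_j\} \subseteq h_i'$: for every $x \in R_i$,
\[
  \langle a_i, x\rangle ~=~ \Bigl\langle {\textstyle \sum_{j \in I_i}} \lambda_j^* u_j,\, x \Bigr\rangle ~=~ \sum_{j \in I_i} \lambda_j^* \langle u_j, x\rangle ~\le~ \sum_{j \in I_i} \lambda_j^* v_j ~\le~ b_i,
\]
the inequalities using $\lambda_j^* \ge 0$ and the dual optimality bound. Setting $P'' = \bigcap_{i=1}^{m} R_i$, we obtain a polytope defined by at most $d \P m$ halfspaces from $\mathcal{H}$ with $K \subseteq P'' \subseteq P'$. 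Intersecting with $Q$ gives
\[
  K \cap Q ~\subseteq~ P'' \cap Q ~\subseteq~ P' \cap Q ~\subseteq~ (K \oplus \eps) \cap Q,
\]
so $P''$ is an $\eps$-approximation of $K$ within $Q$, as required.

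The argument is essentially routine; the only delicate point is ensuring the dual support has size at most $d$ rather than $d+1$. This follows from choosing an optimal vertex of the dual polyhedron $\{\lambda \ge 0 \ST \sum_j \lambda_j u_j = a_i\}$: since the affine hull is cut out by $d$ equalities, basic feasible solutions have at most $d$ nonzero coordinates. No geometric property of the quadtree cell $Q$ beyond the definition of ``$\eps$-approximation within $Q$'' is used, so the bound is tight to the combinatorial content of Carath\'eodory's theorem.
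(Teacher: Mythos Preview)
Your argument is correct and is a genuinely different route from the paper's. The paper first translates each approximating halfspace $h$ until it supports $K \cap Q$, picks a vertex $v$ of $K \cap Q$ on $\partial h$, and then invokes Carath\'eodory's theorem at $v$ with respect to the enlarged family $\mathcal{H}' = \mathcal{H} \cup \{\text{facets of }Q\}$; a final step discards any $Q$-facets that slipped in, observing that since everything is intersected with $Q$ this does not affect the approximation. Your LP-duality argument works directly with $K$ and $\mathcal{H}$, never introduces $Q$'s bounding halfspaces, and never needs the translation-to-supporting step: the bound $\max_{x\in K}\langle a_i,x\rangle \le b_i$ and a basic optimal dual solution immediately give $d$ halfspaces of $\mathcal{H}$ whose intersection lies in $h_i'$. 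Both proofs rest on the same underlying combinatorial fact (expressing $a_i$ as a nonnegative combination of at most $d$ of the $u_j$), but your formulation is cleaner here because it sidesteps the detour through $K \cap Q$ and the subsequent removal of extraneous $Q$-halfspaces. The only implicit hypothesis you are using is that $K$ is bounded and nonempty (so the primal is feasible and bounded, the $u_j$ span $\RE^d$, and the dual polyhedron has an optimal vertex with at most $d$ positive coordinates); in the paper's setting $K \subseteq Q_0$ is a convex body, so this is satisfied.
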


\begin{proof}
The cases $K \cap Q = \emptyset$ and $Q \subseteq K$ are both trivial, so let us assume that the boundary of $K$ intersects $Q$. Let $P$ be an $\eps$-approximation of $K$ within $Q$ that is bounded by $m$ halfspaces, and let $h$ be any one of these halfspaces. By the definition of  such an approximation, $K \cap Q \subseteq P \cap Q \subseteq (K \oplus \eps) \cap Q$. We may assume that $h$ is a supporting halfspace of $K \cap Q$, for otherwise we can translate it until it is. Let $v$ be any vertex of $K \cap Q$ on $h$'s boundary. Let $\mathcal{H}'$ denote the union of $\mathcal{H}$ and $Q$'s bounding halfspaces. By Carath\'{e}odory's Theorem there exist $d$ halfspaces from $\mathcal{H}'$ such that the complement of $h$ is contained within the union of the complement of these $d$ halfspaces. If we replace $h$ with these $d$ halfspaces, the resulting polytope still $\eps$-approximates $K$ within $Q$. 

After repeating this for each of the $m$ halfspaces bounding $P$, we obtain an $\eps$-approximating polytope by a subset of at most $d m$ halfspaces of $\mathcal{H}'$. Let $P'$ be the result of removing from this polytope all the halfspaces that are not in $\mathcal{H}$ (and hence must bound $Q$). We have
\[
  K \cap Q 
	~ \subseteq ~ P' \cap Q 
	~ \subseteq ~ P \cap Q
	~ \subseteq ~ (K \oplus \eps) \cap Q,
\]
and therefore $P'$ is the desired $\eps$-approximation.
\end{proof}

We are now in a position to present our set-cover-based local approximation. This is a bi-criteria approximation since it is suboptimal with respect to both the number of bounding halfspaces and the approximation parameter.

\begin{lemma} \label{lem:apx-cover}
For $0 < \gamma \le 1$ and $0 < \eps \le 1$, let $K$ be a polytope in $\RE^d$ in $\gamma$-canonical position that is given as the intersection of a set $\mathcal{H}$ of $n$ halfspaces. Let $Q \subseteq Q_0$ be a quadtree cell. In $O(n/\eps^d)$ time, it is possible to compute a subset $\mathcal{H'} \subseteq \mathcal{H}$ such that  
\begin{enumerate}
\setlength{\itemsep}{-0.5ex}%
\setlength{\parsep}{0pt}%
\item[$(i)$] The intersection of the halfspaces of $\mathcal{H'}$ is an $\eps$-approximation of $K$ within $Q$

\item[$(ii)$] If $m$ denotes the minimum number of halfspaces needed to $(\gamma \eps/2)$-approximate $K$ within $Q$, then $|\mathcal{H'}|$ is $O(m \log \inv{\eps})$.
\end{enumerate}
\end{lemma}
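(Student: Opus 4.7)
The plan is to reduce local approximation of $K$ within $Q$ to a set cover problem on a finely spaced grid. Using Lemma~\ref{lem:scaled-growth} with parameter $\eps/2$, I take $K^+ = (1 + \sqrt{d}\,\eps)K$, so that $K \oplus \gamma\eps/2 \subseteq K^+ \subseteq K \oplus \eps/2$. I overlay $Q$ with an axis-aligned grid of spacing $\delta = \Theta(\gamma\eps)$ (with a hidden constant chosen small enough to make the later inequalities close), and let $G$ be the grid points lying in $Q \setminus K^+$. Since $Q \subseteq Q_0$ has bounded diameter, $|G| = O(1/\eps^d)$, and $G$ is constructed in $O(n/\eps^d)$ time by testing each grid point against the $n$ uniformly scaled halfspaces that define $K^+$.

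I set up a set cover with universe $G$ in which each $h \in \mathcal{H}$ contributes the set $S_h = \{g \in G : g \notin h \text{ and } \mathrm{dist}(g, \partial h) > r\}$, where $r = \delta\sqrt{d}/2$ is the grid's covering radius. Running the standard greedy set cover heuristic selects $\mathcal{H}'$ in $O(n|G|) = O(n/\eps^d)$ time. Property~(i) then follows from a local argument: for any $q \in Q$ with $\mathrm{dist}(q, K) > \eps$, the nearest grid point $g$ satisfies $\mathrm{dist}(g, K^+) \ge \eps/2 - r > 0$ and hence $g \in G$; the halfspace $h \in \mathcal{H}'$ covering $g$ excludes the entire ball $B(g,r) \ni q$, so $q \notin h$, and combined with $K \subseteq \bigcap \mathcal{H}'$ this yields the required $\eps$-approximation.

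Property~(ii) follows from the greedy's $O(\log|G|) = O(\log \inv{\eps})$ guarantee once I bound the optimum cover by $O(m)$. Lemma~\ref{lem:discrete-apx} applied at accuracy $\gamma\eps/2$ produces a family $\mathcal{H}^{**} \subseteq \mathcal{H}$ of size at most $dm$ that $(\gamma\eps/2)$-approximates $K$ within $Q$. For each $g \in G$, the inclusion $K \oplus \gamma\eps/2 \subseteq K^+$ gives $g \notin K \oplus \gamma\eps/2$, and since $\bigcap \mathcal{H}^{**} \cap Q \subseteq K \oplus \gamma\eps/2$ and $g \in Q$, some halfspace of $\mathcal{H}^{**}$ excludes $g$. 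To promote this weak exclusion to $r$-margin exclusion, I express the outward vector from $g$'s projection onto $\bigcap \mathcal{H}^{**}$ as a nonnegative combination of at most $d$ facet normals (Carath\'eodory in the normal cone at the projection); one of these facet halfspaces then witnesses $g$ at distance at least a $d$-dependent fraction of $\mathrm{dist}(g, \bigcap \mathcal{H}^{**}) = \Omega(\gamma\eps)$ from its boundary. Shrinking $\delta$ so that $r$ is smaller than this quantity gives $g \in S_h$ for some $h \in \mathcal{H}^{**}$, so $\mathcal{H}^{**}$ is a valid cover.

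The hard part is the geometric step in the previous paragraph: converting the purely set-theoretic exclusion $g \notin \bigcap \mathcal{H}^{**}$ into a guaranteed margin-$r$ exclusion by a single halfspace of $\mathcal{H}^{**}$. The essential ingredient is the gap between the benchmark accuracy $\gamma\eps/2$ and the target $\eps$, which provides the slack needed for the Carath\'eodory-style decomposition to yield a facet witness at the required margin. Balancing the constants---the scaling in $K^+$, the grid spacing $\delta$, and the margin $r$---so that $G$ simultaneously has cardinality $O(1/\eps^d)$, is covered by $\mathcal{H}^{**}$ with margin $r$, and forces an $\eps$-approximation, is the main quantitative task.
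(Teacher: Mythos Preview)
Your overall plan (grid, set cover, greedy) matches the paper's, and your argument for~(i) is essentially correct. The gap is in your argument for~(ii), specifically the claimed lower bound $\mathrm{dist}(g,\bigcap\mathcal{H}^{**}) = \Omega(\gamma\eps)$ that feeds the Carath\'eodory step.

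With your setup you only know $g \notin K^+$, hence $\mathrm{dist}(g,K) > \gamma\eps/2$; and you know $\bigcap\mathcal{H}^{**}\cap Q \subseteq K\oplus\gamma\eps/2 \subseteq K^+$. But this only gives $\mathrm{dist}(g,\bigcap\mathcal{H}^{**}\cap Q) \ge \mathrm{dist}(g,K\oplus\gamma\eps/2) = \mathrm{dist}(g,K)-\gamma\eps/2$, which can be arbitrarily close to~$0$ when $g$ is just barely outside $K^+$. So there is no uniform lower bound on $\mathrm{dist}(g,\bigcap\mathcal{H}^{**})$, and hence no way to choose the grid spacing $\delta$ so that the margin $r$ is dominated by it for \emph{every} $g\in G$. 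The normal-cone/Carath\'eodory idea (which, incidentally, needs the inner ball $\gamma B_0 \subseteq \bigcap\mathcal{H}^{**}$ to bound the cone's aperture, not merely~$d$) never gets off the ground without that distance bound.

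The paper avoids this entirely by a scaling trick. It introduces \emph{two} dilations $K^+=(1+\beta)K$ and $K^{++}=(1+\beta)^2K$ (with $\beta=\sqrt{d}\,\eps/2$), takes the grid $R$ outside $K^{++}$, and---crucially---defines the set associated to $h\in\mathcal{H}$ using the \emph{scaled} halfspace $h^+=(1+\beta)h$ rather than a margin condition on $h$ itself. Because $K$ is in $\gamma$-canonical position, the gap between $\partial h$ and $\partial h^+$ is uniformly at least $\gamma\eps/4$, which handles~(i). For~(ii), one takes a $(\gamma\eps/4)$-approximation $P^+$ (so $P^+\subseteq K\oplus\gamma\eps/4\subseteq K^+$), dilates to get $P^{++}=(1+\beta)P^+\subseteq(1+\beta)K^+=K^{++}$, and concludes that every $g\in R$ lies outside $P^{++}$, hence outside some $h^+$. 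No distance lower bound and no normal-cone argument are needed: the two-level scaling does all the work. Your approach can be repaired along exactly these lines, but as written the single scaling $K^+$ with benchmark accuracy $\gamma\eps/2$ leaves no slack.
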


\begin{proof}
First, we may assume without loss of generality that $\eps \le 2/\sqrt{d}$. Otherwise, setting $\eps = 2/\sqrt{d}$ will certainly satisfy (i) and will only affect the constant factors in the asymptotic bounds of claim (ii) and the construction time. Define $\beta = \sqrt{d}\eps/2$. By the above assumption, we have
\[
	(1 + \beta)^2
		~  =  ~ \left( 1 + \sqrt{d}\eps + \frac{d \eps^2}{4} \right)
		~ \le ~ 1 + \frac{3}{2}\sqrt{d}\eps.
\]
Let $K^+ = (1 + \beta) K$ and let $K^{++} = (1 + \beta) K^+ = (1 + \beta)^2 K$. By applying Lemma~\ref{lem:scaled-growth} but with $\eps$ taking on the values $\eps/4$ and $3\eps/4$, respectively, we have
\begin{equation}
	K \oplus \frac{\gamma \eps}{4}
		~ \subseteq ~ K^+
		~ \subseteq ~ K^{++}
		~ \subseteq ~ K \oplus \frac{3\eps}{4} \label{eq:apx-cover}
\end{equation}
(see Figure~\ref{fig:apx-cover}(a)). Let $\delta = \gamma \eps/4$ and let $G$ denote the vertices of a hypercube grid of diameter $\delta$. Let $R$ be the set of grid points that lie within $Q$ but outside of $K^{++}$, that is, $R = G \cap (Q \setminus K^{++})$. Since $Q \subseteq Q_0$, the resulting set is of size $O(1/\eps^d)$, and hence it can be computed in time $O(1/\eps^d) \cdot |\mathcal{H}| = O(n/\eps^d)$, by testing each grid point against each halfspace of $\mathcal{H}$. Because $\gamma \le 1$, we have $\delta \le \eps/4$.

\begin{figure}[htbp]
  \centerline{\includegraphics[scale=0.40]{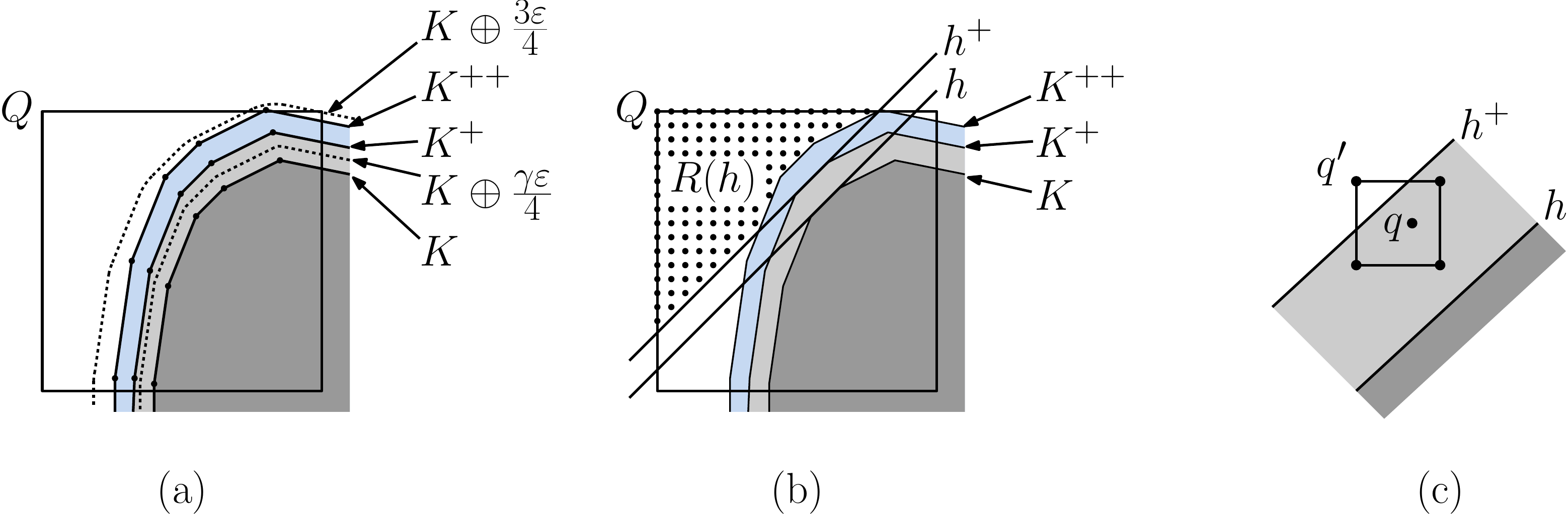}}
  \caption{Proof of Lemma~\ref{lem:apx-cover}.}
  \label{fig:apx-cover}
\end{figure}

Next, we define a set system to model the approximation process. For each $h \in \mathcal{H}$ we define a subset $R(h) \subseteq R$ as follows. First, let $h^+ = (1 + \beta) h$ denote the corresponding bounding halfspace of the scaled body $K^+$ (see Figure~\ref{fig:apx-cover}(b)). Define $R(h)$ to be the subset of points of $R$ that lie outside of $h^+$. Consider a set system consisting of the points of $R$ and the sets $R(h)$ for all $h \in \mathcal{H}$. Since every point of $R$ lies outside of $K^{++}$, and hence outside of $K^+$, together these sets cover $R$. The resulting collection of sets has total cardinality $O(n/\eps^d)$. 

Consider any set cover $C$ of the resulting set system. Let $P(C)$ denote the polyhedron that results by intersecting the halfspaces $h$ whose associated set $R(h)$ is included in this cover. (Note that the sets $R(h)$ are based on the halfspaces bounding the scaled body $K^+$ while $P(C)$ is based on the halfspaces bounding the original body $K$.) We assert that $P(C)$ $\eps$-approximates $K$ within $Q$. It suffices to show that for any point $q \in Q \setminus (K \oplus \eps)$, $q$ is not in $P(C)$. First, observe that for such a point $q$, all the vertices of the grid cell in which it lies are within distance $\delta$ of $q$. Therefore, by the triangle inequality, each such vertex is at distance at least $\eps - \delta \ge 3\eps/4$ from $K$. Since by Eq.~\eqref{eq:apx-cover}, $K^{++} \subseteq K \oplus 3\eps/4$, these vertices are all exterior to $K^{++}$, which implies that they are all members of $R$. Let $q'$ be any of these vertices. Since $C$ is a cover, there exists a halfspace $h \in \mathcal{H}$ such that $R(h)$ is in the cover and contains this point. This implies that $q'$ lies outside the associated halfspace $h^+$ (see Figure~\ref{fig:apx-cover}(c)). Because $K$ is in $\gamma$-canonical position, the minimum distance between $h$'s bounding hyperplane and the origin is at least $\gamma/2\sqrt{d}$. Therefore the distance between any point in $h$ to any point exterior to $h^+$ is at least
\[
	\frac{\gamma}{2\sqrt{d}} ( (1 + \beta) - 1)
		~ = ~ \frac{\gamma}{2\sqrt{d}} \cdot \frac{\sqrt{d}\eps}{2}
		~ = ~ \frac{\gamma\eps}{4}
		~ = ~ \delta.
\]
It follows by the triangle inequality that $q$ is exterior to $h$, and therefore it lies outside of $P(C)$, as desired.

Let $C'$ denote a set cover that results by running the greedy heuristic \cite{CLRS} on the aforementioned set system. By standard results on the greedy heuristic, the size of the resulting cover exceeds that of an optimal cover by a factor of at most $\ln |R| = O(\log \inv{\eps})$. $C'$ can be computed in time that is proportional to the total cardinality of the sets of the set system, which is $O(n/\eps^d)$. Let $\mathcal{H}'$ denote the associated set of halfspaces, and let $P(C')$ denote the intersection of these halfspaces. By the above remarks, $P(C')$ is an $\eps$-approximation to $K$ within $Q$, which establishes claim~(i).

To establish~(ii), consider a $(\gamma \eps/4)$-approximation of $K$ within $Q$ that is bounded by the minimum number $m$ of halfspaces. By Lemma~\ref{lem:discrete-apx} there exists such an approximation that uses only the bounding halfspaces of $K$, such that the number of halfspaces is larger by a factor of at most $d$. Let $P^+$ denote this approximation, and let $\mathcal{H}^+ \subseteq \mathcal{H}$ denote its bounding halfspaces. By Eq.~\eqref{eq:apx-cover}, we have $P^+ \subseteq K \oplus \gamma\eps/4 \subseteq K^+$. Let $P^{++} = (1 + \beta) P^+$. Clearly, $P^{++} \subseteq (1 + \beta) K^+ = K^{++}$. Therefore, every point of $R$ lies outside of $P^{++}$. It follows that the sets $R(h)$ associated with the halfspaces $h$ that bound $P^+$ form a set cover of $R$ within our system. Letting $C^{++}$ denote this cover, we have $|C'| \le O(\log \inv{\eps}) \cdot |C^{++}| \le O(\log \inv{\eps}) \cdot d m = O(m \log \inv{\eps})$, as desired.
\end{proof}

We can now present the main result of this section, which summarizes the preprocessing time.

\begin{lemma} \label{lem:preproc-time}
Given a full-dimensional convex polytope $K$ in $\RE^d$ defined as the intersection of a set of $n$ halfspaces, approximation parameter $0 < \eps \le 1$, and query time parameter $t \ge 1$, there is an algorithm that runs in time $O\big(n + 1/\eps^{c_p \P d}\big)$ for some constant $c_p$ (which does not depend on $d$) that constructs a data structure satisfying Theorem~\ref{thm:membership-ub} but with an additional factor of $O(\log \inv{\eps})$ in both the space and query times.
\end{lemma}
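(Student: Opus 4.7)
The plan is to chain together the three preprocessing tools built in this section. First I would invoke Lemma~\ref{lem:precondition-2} on the input in $O(n + 1/\eps^{d-1})$ time, yielding an affine transformation $T'$ and a reduced halfspace set $\mathcal{H}' \subseteq \mathcal{H}$ of size $n' = O(1/\eps^{(d-1)/2})$ whose intersection $K' = T'(K)$ is in $(1/2d)$-canonical position, where any absolute $\eps''$-approximation of $K'$ at scale $\eps'' = \eps/(4 d \sqrt{d}) = \Theta(\eps)$ pulls back through $T'^{-1}$ to a relative $\eps$-approximation of $K$. From this point on the algorithm operates entirely on $K'$ at scale $\eps''$; since $\eps'' = \Theta(\eps)$, all asymptotic bounds in $1/\eps$ are preserved up to constant factors.

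Next I would run $\alg(K', Q_0)$ with approximation parameter $\eps''$ and the input query-time parameter $t$, implementing access primitive~(iii) by Lemma~\ref{lem:apx-cover} applied with $\gamma = 1/2d$. Each invocation runs in time $O(n'/(\eps'')^d) = O(1/\eps^{(3d-1)/2})$ and returns a halfspace set whose cardinality is within a factor $\rho = O(\log \inv{\eps})$ of the minimum number needed to $(\gamma \eps''/2)$-approximate $K'$ within the current cell. The total number of quadtree cells produced by $\alg$ is asymptotically bounded by its storage (by Lemma~\ref{lem:total-space}, whose proof showed the overall tree size is $O(t(L))$), and the storage is at most $O(1/\eps^{d-1}) \cdot \rho$ in the worst case, inherited from Dudley's bound (Lemma~\ref{lem:firstboundunit}) composed with Lemma~\ref{lem:weak-membership}. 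Aggregating, the cost of all calls to Step~(\stepapx) is at most $O(1/\eps^{d-1}) \cdot O(1/\eps^{(3d-1)/2}) = O(1/\eps^{(5d-3)/2})$, and together with preconditioning this yields the total preprocessing bound $O(n + 1/\eps^{c_p \P d})$ with $c_p = 5/2$, which is an absolute constant independent of $d$.

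The quality bounds follow directly from Lemma~\ref{lem:weak-membership} applied with $\beta = \gamma/2 = 1/4d$ (a constant depending only on $d$, not on $\eps$) and $\rho = O(\log \inv{\eps})$: the constructed data structure satisfies the space and query-time guarantees of Theorem~\ref{thm:membership-ub} inflated by a single factor of $O(\log \inv{\eps})$ each. Each query is answered by applying $T'$ to the query point in $O(1)$ time and then performing the usual quadtree descent.

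The main obstacle is the careful bookkeeping of approximation parameters along the composition: the preconditioning requires an absolute approximation at scale $\eps/(4 d \sqrt{d})$, Lemma~\ref{lem:apx-cover} compares the greedy cover against the optimum at the further-shrunk scale $\gamma \eps''/2$, and Lemma~\ref{lem:weak-membership} demands that the resulting contraction factor $\beta$ be independent of $\eps$. All these transitions scale $\eps$ by constants depending only on $d$, so the asymptotic $1/\eps$ behavior is unaffected. A secondary subtlety is that one must use the trivial $O(1/\eps^{d-1})$ bound on the number of quadtree nodes rather than the sharper trade-off of Theorem~\ref{thm:membership-ub}; this is sufficient because the statement only requires $c_p$ to be some dimension-independent constant.
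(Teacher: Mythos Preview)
Your proposal follows essentially the same route as the paper: precondition via Lemma~\ref{lem:precondition-2}, run {\alg} on the reduced polytope using Lemma~\ref{lem:apx-cover} to implement Step~(\stepapx), and invoke Lemma~\ref{lem:weak-membership} with $\beta = \gamma/2$ to absorb the $O(\log\inv{\eps})$ suboptimality. The parameter bookkeeping and the final value $c_p = 5/2$ match.

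The one place where you diverge is in bounding the number of quadtree cells. The paper simply observes that, irrespective of $t$, {\alg} never subdivides below diameter $\eps$, so there are at most $O(1/\eps^d)$ cells in total; multiplying by the per-cell cost $O(n'/(\eps')^d) = O(1/\eps^{3d/2})$ gives $O(1/\eps^{5d/2})$. You instead try to bound the cell count by the storage via Lemma~\ref{lem:total-space}, claiming storage is at most $O(1/\eps^{d-1})\cdot\rho$ ``from Dudley's bound (Lemma~\ref{lem:firstboundunit}).'' That citation is off: Lemma~\ref{lem:firstboundunit} gives $O(1/\eps^{(d-1)/2})$ storage and only when $t \ge 1/\eps^{(d-1)/4}$, not the $O(1/\eps^{d-1})$ you quote, and Lemma~\ref{lem:total-space} additionally requires $t = \Omega(\log\inv{\eps})$. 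A correct version of your argument would either observe that the boundary-intersecting cells across all levels sum geometrically to $O(1/\eps^{d-1})$, or---more simply---just use the paper's $O(1/\eps^d)$ bound, which you yourself flag as the ``secondary subtlety'' at the end. Either way the conclusion $c_p = 5/2$ stands, so this is a minor wrinkle in an otherwise correct argument.
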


\begin{proof}
Given $K$'s bounding halfspaces, we apply Lemma~\ref{lem:precondition-2}. In $O(n + 1/\eps^{d-1})$ time we obtain a polytope $K'$, such that $K'$ is in $\gamma$-canonical position for $\gamma = 1/2 d$. $K'$ is bounded by a subset $\mathcal{H}'$ of halfspaces of size $n' = O\big(1/\eps^{(d-1)/2}\big)$, and the problem of computing a relative $\eps$-approximation of $K$ reduces to the problem of computing an absolute $\eps'$-approximation of $K'$, where $\eps' = \eps/4 d \sqrt{d}$. 

Ideally, we would like to invoke {\alg} on $K'$ using $\eps'$ as the approximation parameter and $t$ as the query time parameter. Since we do not know how to determine minimum-sized convex approximations efficiently, we will need to relax our expectations. For any quadtree cell $Q$ generated by {\alg}, we apply Lemma~\ref{lem:apx-cover} on the set $\mathcal{H}'$ of halfspaces. By claim~(i) of this lemma, after $O(n'/(\eps')^{d}) = O(1/\eps^{3 d/2})$ time, a subset $\mathcal{H}'' \subseteq \mathcal{H}'$ can be computed that is an $\eps'$-approximation of $K'$ within $Q$. Irrespective of the choice of the query time, the maximum number of quadtree cells generated by {\alg} is $O(1/\eps^d)$, and therefore (after preconditioning) the overall running time of {\alg} is $O(1/\eps^{5 d/2})$. Combined with the $O(n + 1/\eps^d)$ time for preconditioning, the algorithm's overall running time is $O\big(n + 1/\eps^{c_p \P d}\big)$, where $c_p = 5/2$.

Let $\eps'' = \gamma\eps'/2 = \gamma\eps/8 d \sqrt{d}$. By Lemma~\ref{lem:apx-cover}(ii) the number of halfspaces in $\mathcal{H}''$ is within a factor of $\rho = O(\log \inv{\eps'}) = O(\log \inv{\eps})$ of the size of the minimum-sized $\eps''$-approximation of $K'$ within $Q$. Since $\eps'' = \beta \eps'$ for a constant $\beta$, Lemma~\ref{lem:weak-membership} implies that the conclusions of Theorem~\ref{thm:membership-ub} hold but with an additional factor of $\rho = O(\log \inv{\eps})$ in both the space and query times.
\end{proof}

\section{Lower Bound} \label{sec:lb}

In this section, we establish lower bounds on the space-time trade-offs obtained by {\alg} for polytope membership. In particular, we will prove Theorem~\ref{thm:lb} from Section~\ref{sec:intro}. Our approach is similar to the lower bound proof of~\cite{AVD-JACM}. (Note that this is a lower bound on the performance of {\alg}, not on the problem complexity. It applies to the stronger existential version of the algorithm.) It is based on analyzing the performance of the algorithm on a particular convex body, a generalized hypercylinder that is curved in $k+1$ dimensions and flat in $d-1-k$ dimensions. We select the value of $k$ that produces the best lower bound on the storage as a function of $t$, $\eps$, and $d$. Throughout, we use the term $\eps$-approximation in the absolute sense, as defined in Section~\ref{sec:prelim-abs}. 

As mentioned earlier, it is well known that $\Omega\big(1/\eps^{(d-1)/2}\big)$ facets are required to $\eps$-approximate a Euclidean ball of unit radius (see, e.g., \cite{Bro08}), and this holds for any polytope that that is sufficiently close to a ball in terms of Hausdorff distance. The following utility lemma generalizes this observation to different diameters. The proof is straightforward, but for the sake of completeness we include its proof in the appendix.

\begin{restatable}{lemma}{BallLemmaStmt}\label{lem:ball}
Let $\eps$ and $\Delta$ be real parameters, where $0 < \eps \le \Delta/4$. There exists a constant $c_b$ and a polytope $P$ in $\RE^d$ of diameter at most $\Delta$ such that any outer $\eps$-approximation of $P$ requires at least $c_b (\Delta / \eps)^{(d-1)/2}$ facets.
\end{restatable}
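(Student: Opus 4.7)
The plan is to reduce the claim to the classical lower bound for approximating a Euclidean ball by polytopes, via a scaling argument and a small perturbation to produce a polytope $P$ of the required diameter. The classical result (see, e.g.,~\cite{Bro08}) asserts that there is a constant $c_0 > 0$ depending only on $d$ and a threshold $\eps_0^* > 0$ such that any outer $\eps_0$-approximation of the Euclidean unit-diameter ball $B_1$ by a polytope requires at least $c_0 / \eps_0^{(d-1)/2}$ facets whenever $\eps_0 \le \eps_0^*$. By a uniform scaling, for a ball $B_\Delta$ of diameter $\Delta$, any outer $\eps$-approximation requires at least $c_0 (\Delta/\eps)^{(d-1)/2}$ facets whenever $\eps/\Delta \le \eps_0^*$.

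First, I would construct $P$ as a fine polytopal shell around a slightly smaller ball. Let $B$ be a Euclidean ball of diameter $\Delta/2$, and let $P$ be any polytope satisfying $B \subseteq P \subseteq B \oplus (\eps/2)$; Dudley's construction provides such a $P$ (with some large but finite number of facets, which plays no role here). Since $\eps \le \Delta/4$, we have $\diam(P) \le \Delta/2 + \eps \le 3\Delta/4 \le \Delta$, so the diameter condition is met.

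Second, I would show that any outer $\eps$-approximation $P'$ of $P$ is automatically an outer $(3\eps/2)$-approximation of the inscribed ball $B$. The containment $B \subseteq P \subseteq P'$ is immediate, and the triangle inequality (applied to Hausdorff distances) gives $P' \subseteq P \oplus \eps \subseteq B \oplus (\eps/2) \oplus \eps = B \oplus (3\eps/2)$. Applying the scaled classical bound to $B$ (a ball of diameter $\Delta/2$, with absolute approximation $3\eps/2$), the number of facets of $P'$ is at least
\[
	c_0 \left(\frac{\Delta/2}{3\eps/2}\right)^{\N (d-1)/2}
	~ = ~ c_0 \cdot 3^{-(d-1)/2} \left(\frac{\Delta}{\eps}\right)^{\N (d-1)/2}.
\]
Setting $c_b = c_0 \cdot 3^{-(d-1)/2}$ yields the required bound.

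The only technicality is the threshold condition $3\eps/\Delta \le \eps_0^*$ in the classical result. Since $\eps \le \Delta/4$ gives $3\eps/\Delta \le 3/4$, this condition may fail if $\eps_0^* < 3/4$. However, in that regime $\Delta/\eps$ is bounded above by a constant depending only on $d$, so $(\Delta/\eps)^{(d-1)/2}$ is itself $O(1)$, and the lower bound is trivially met by any polytope (which has at least $d+1$ facets) once $c_b$ is chosen sufficiently small. This edge case therefore poses no real obstacle; the substance of the proof is the scaling-and-thickening reduction above.
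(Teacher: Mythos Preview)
Your proposal is correct and follows essentially the same approach as the paper: take $P$ to be a close polytopal approximation of a ball of diameter $\Theta(\Delta)$, use the triangle inequality for Hausdorff distance to show that any outer $\eps$-approximation of $P$ is an outer $O(\eps)$-approximation of the ball, and then invoke the lower bound for approximating a ball. The only notable difference is that the paper, aiming to be self-contained, proves the ball lower bound from scratch via a packing argument on $\partial B$ (placing $\Theta(\sqrt{\Delta\eps})$-separated points and showing with a Pythagorean calculation that their images on $\partial P'$ must lie on distinct facets), whereas you cite it as a black box from~\cite{Bro08} and handle the resulting threshold condition separately; both routes are fine.
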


Intuitively, in order to produce a polytope that is hard to approximate, it should have high curvature. If the curvature is high in all dimensions, however, the polytope will have a small surface area, and this will make it easier to approximate. Our approach is to consider polytopes based on generalized cylinders, which have constant curvature in some dimensions but are flat in others. Our next lemma introduces such a cylindrical polytope where the number of curved dimensions has been carefully chosen to maximize the space needed by our algorithm for a given query time. Theorem~\ref{thm:lb} is an immediate consequence.

\begin{lemma} \label{lem:cyl}
There exists a polytope $P$ in $\RE^d$ such that for all sufficiently small positive $\eps$ (depending on $d$ and $\alpha$) and  $t = 1/\eps^{(d-1)/\alpha}$, the output of $\alg(K,Q_0)$ on $P$ has total space
\[
	\Omega\left(1/\eps^{(d-1) \left(1 - \frac{2\sqrt{2\alpha} - 3}{\alpha}\right) - 1} \right).
\]
\end{lemma}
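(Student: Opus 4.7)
The plan is to exhibit an explicit polytope $P$ (depending on $d$ and $\alpha$ only) on which $\alg$ is provably forced to use the claimed amount of space. Following the outline in the preamble of Section~\ref{sec:lb}, I take $P$ to be a generalized hypercylinder
\[
P ~=~ B^{k+1}(r) \times [-h,h]^{d-1-k} ~\subset~ \RE^d,
\]
where $k \in \{1,\ldots,d-1\}$ is an integer parameter to be chosen later as a function of $\alpha$ and $d$, and $r,h$ are absolute constants chosen so that $P$ sits in $\gamma$-canonical position inside $Q_0$ for some $\gamma = \Omega(1)$. The curved portion $S^{k}(r) \times [-h,h]^{d-1-k}$ of $\partial P$ is then a $(d-1)$-dimensional manifold of constant Hausdorff measure.

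The first substantive step is a local facet lower bound: every $\eps$-approximation of $P$ within a quadtree cell $Q$ of diameter $\Delta$ that meets the curved boundary requires $\Omega\bigl((\Delta/\sqrt{\eps})^{k}\bigr)$ halfspaces. The argument is a local version of Lemma~\ref{lem:ball}: up to lower-order terms, $\partial P \cap Q$ is a $k$-dimensional spherical cap of angular diameter $\Omega(\Delta)$ on a sphere of bounded radius in the curved subspace, extruded along the remaining $d-1-k$ flat directions. Halfspaces whose outward normals lie in the flat factor give no separation for the curved factor, and a standard packing argument shows that each ``useful'' halfspace covers only an angular patch of diameter $O(\sqrt{\eps})$ on the $k$-sphere, yielding the stated bound.

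This forces $\alg$ to keep subdividing any such cell until $(\Delta/\sqrt{\eps})^{k} \le t$, so every leaf meeting the curved boundary has diameter $\Delta^{*} = \Theta(\sqrt{\eps}\cdot t^{1/k})$ and carries $\Theta(t)$ halfspaces. A covering/packing argument of the kind used in the proof of Lemma~\ref{lem:space-bound}, together with the fact that the curved boundary has $(d-1)$-dimensional measure $\Theta(1)$, produces $\Omega(1/\Delta^{*(d-1)})$ such leaves. Combining these estimates with Lemma~\ref{lem:total-space} yields
\[
S ~=~ \Omega\!\left(\frac{t}{\Delta^{*(d-1)}}\right)
  ~=~ \Omega\!\left(\frac{t^{\,1-(d-1)/k}}{\eps^{(d-1)/2}}\right),
\]
which, after substituting $t = 1/\eps^{(d-1)/\alpha}$, becomes $\Omega\bigl(1/\eps^{E(k)}\bigr)$ for an explicit exponent $E(k)$ depending on $k$, $\alpha$ and $d$.

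The final step is to choose the integer $k \in [1,d-1]$ that maximizes $E(k)$, subject also to the Dudley-termination constraint $\Delta^{*} \ge c\sqrt{\eps}$ coming from Section~\ref{sec:firstbound}. Carrying out this optimization---balancing the contribution from the number of leaves against the contribution from the facets per leaf, rounding $k$ to an admissible integer, and carefully tracking the slack from the canonical-position scaling of $P$---produces the bound in the statement of the lemma; the additive $-1$ in the exponent is the slack coming from the integrality of $k$ and from the constants absorbed into the local facet lower bound. The main obstacle is precisely this last step: obtaining the numerical form $(2\sqrt{2\alpha}-3)/\alpha$ (in particular the $\sqrt{\alpha}$ dependence) requires keeping the constants in both the spherical-cap lower bound and the packing estimate explicit, and solving a somewhat delicate piecewise optimization problem in $k$ before re-integerizing.
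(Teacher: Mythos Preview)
Your construction has a genuine gap that prevents it from reaching the stated bound. With a cross-section of \emph{constant} radius $r$, your own formula
\[
S ~=~ \Omega\!\left(\frac{t^{\,1-(d-1)/k}}{\eps^{(d-1)/2}}\right)
\]
is monotone increasing in $k$ (since $1-(d-1)/k \le 0$ for $k \le d-1$), so the optimum is $k = d-1$ and gives only $S = \Omega(1/\eps^{(d-1)/2})$. There is no balancing to perform, and no $\sqrt{\alpha}$ term can emerge from this expression. For large $\alpha$ and $d \ge 4$ the target exponent $(d-1)\bigl(1-(2\sqrt{2\alpha}-3)/\alpha\bigr)-1$ strictly exceeds $(d-1)/2$, so your argument cannot establish the lemma in that regime.

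The missing idea is that the cross-section must be \emph{thin}: the paper takes the $(k{+}1)$-dimensional cross-section to be a polytope of diameter $\Delta = \Theta(\eps\, t^{2/k})$, chosen (via Lemma~\ref{lem:ball}) so that any $\eps$-approximation of the cross-section alone already requires more than $(2^d+1)t$ facets. The space is then counted not by packing leaves on the curved boundary, but by placing $\Omega(1/\Delta^{d-1-k})$ well-separated cross-sections along the flat axis and arguing that, for each one, the ``small'' leaf cells (side $<\Delta$) must together contribute at least $t$ halfspaces, while small cells from different cross-sections are disjoint. This yields $S = \Omega(t/\Delta^{d-1-k})$, whose exponent $(d-1)-k + \frac{d-1}{\alpha}(3 - 2(d-1)/k)$ genuinely balances a $-k$ term against a $-2(d-1)^2/(\alpha k)$ term; the optimum $k \approx (d-1)\sqrt{2/\alpha}$ is what produces the $\sqrt{\alpha}$ in the final bound. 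Your counting via the $(d-1)$-dimensional area of the boundary washes out exactly this interplay between the curved and flat factors.
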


\begin{proof}
To start, as a function of $\alpha$, we wish to compute an integer dimension $k$ in order to apply Lemma~\ref{lem:ball}. Define reals $\delta = \sqrt{\alpha/2}/(d-1)$, $\kappa = (d-1)\sqrt{2/\alpha}$ and $\kappa' = \kappa(1+\delta)$. We observe first that
\[
	\kappa' - \kappa
		~  =  ~ \delta (d-1) \sqrt{2/\alpha}
		~  =  ~ 1.
\]
Let $k = \ceil{\kappa}$, implying that $\kappa \le k \le \kappa'$. (Although we do not include the derivation here, $\kappa$ has been chosen to produce the best lower bound, but since it is not necessarily an integer, $k$ is obtained by rounding to a nearby integer.) Since $\alpha \ge 4$ and $d \ge 2$, we have $1 \le k \le d-1$.

Let $c_b$ denote the constant of Lemma~\ref{lem:ball}, and let $\Delta = \eps((2^d+1)t/c_b)^{2/k}$. By our assumptions about $d$ and $\alpha$, we have $t = 1/\eps^{\Theta(1)}$ and $\Delta = \eps \cdot t^{\Theta(1)}$. It follows that for all sufficiently small $\eps$, $\Delta/4 \ge \eps$. Let $h$ denote the linear subspace spanned by the first $k+1$ coordinate axes. We apply Lemma~\ref{lem:ball} in $\RE^{k+1}$ for this value of $\Delta$. The resulting polytope $P$ (lying in $h$) has the property that the number of facets of any $\eps$-approximation is at least
\[
	c_b \left( \frac{\Delta}{\eps} \right)^{\NN k/2}	
		~  =  ~ c_b \left( \frac{ \eps \left( \frac{(2^d+1)t}{c_b} \right)^{\N 2/k} }{\eps} \right)^{\N\NN k/2}
		~  =  ~ (2^d + 1) t.
\]
We can bound $P$'s diameter by observing that for all sufficiently small $\eps$
\begin{eqnarray*}
	\diam(P)
		& \le & \Delta 
		~  =  ~ \eps \left( \frac{(2^d + 1)t}{c_b} \right)^{\NN 2/k}
		~ \le ~ \eps \left( \frac{2^d + 1}{c_b \cdot \eps^{(d-1)/\alpha}} \right)^{\NN 2/\kappa} \\
		&  =  & \eps \left( \frac{2^d + 1}{c_b \cdot \eps^{(d-1)/\alpha}} \right)^{\NN \sqrt{2\alpha}/(d-1)}.
\end{eqnarray*}
(Here we made use of the fact that for all sufficiently small $\eps$, the quantity raised to power of $2/k$ is greater than $1$.) Letting $c_b' = ((2^d + 1)/c_b)^{\sqrt{2\alpha}/(d-1)}$, we obtain
\[
	\diam(P) 
		~ \le ~ c_b' \eps \left( \inv{\eps^{(d-1)/\alpha}} \right)^{\NN\sqrt{2\alpha}/(d-1)}
		~  =  ~ c_b' \eps^{1-\sqrt{2/\alpha}}.
\]
Since $\alpha \ge 4$, for all sufficiently small $\eps$, we have $\diam(P) \le 1/\sqrt{d}$. Therefore, $P$ can be enclosed within $Q_0^{(k+1)}$.

Returning to $\RE^d$, consider an infinite polyhedral hypercylinder whose ``axis'' is the $(d-1-k)$-dimensional orthogonal complement of $h$, and whose ``cross-section'' (i.e., intersection with any $(k+1)$-dimensional hyperplane parallel to $h$) is $P$. Define the polytope $C$ to be the truncated cylinder obtained by intersecting the infinite hypercylinder with hypercube $Q_0^{(d)}$ (see Figure~\ref{fig:cylinder}(a)). Let $T$ denote the output of $\alg(K,Q_0^{(d)})$ for $C$, $\eps$, and $t$. We will show that $T$'s total space satisfies the bound given in the lemma's statement. To do this, let $\Sigma$ denote any set of points placed on $C$'s axis such that the distance between each pair of points is at least $2\Delta\sqrt{d}$. (In the degenerate case where $k = d-1$ the axis is $0$-dimensional and $\Sigma$ degenerates to a single point.) By a simple packing argument, there exists such a set having $\Omega(1/\Delta^{d-1-k})$ points.

\begin{figure}[htbp]
  \centerline{\includegraphics[scale=0.40]{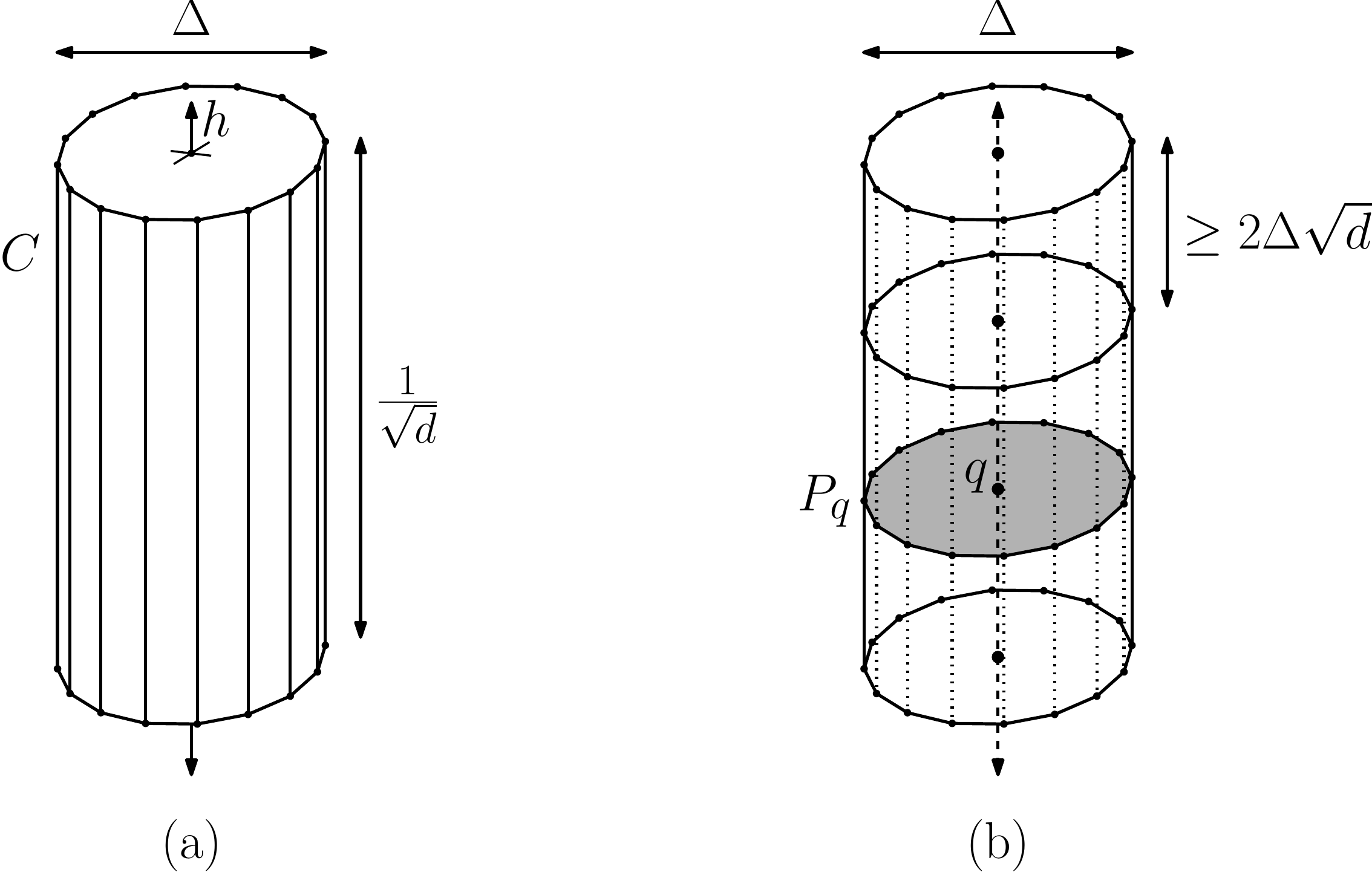}}
  \caption{Lemma~\ref{lem:cyl} for $d = 3$ and $k = 2$.}
  \label{fig:cylinder}
\end{figure}

For any $q \in \Sigma$, let $P_q$ denote the cross-section of $C$ passing through $q$ (see Figure~\ref{fig:cylinder}(b)). Consider the set of leaf cells of $T$ that intersect $P_q$. By applying Lemma~\ref{lem:ball} to the $(k+1)$-dimensional hyperplane on which $P$ lies, it follows that these cells together must contain at least $(2^d + 1) t$ halfspaces. We count the contributions of these cells by classifying them into two types. We say that a leaf cell of $T$ is \emph{large} if its side length is at least $\Delta$, and otherwise it is \emph{small}. By a simple packing argument, the number of large leaf cells intersecting $P_q$ is at most $2^d$. Since each leaf cell contains at most $t$ halfspaces, the large leaf cells can together contain at most $2^d t$ halfspaces. 

Therefore, the small leaf cells intersecting $P_q$ together contain at least $(2^d + 1) t - 2^d t = t$ halfspaces. Because the points of $\Sigma$ are separated from each other by distance at least $2\Delta\sqrt{d}$, which is strictly larger than the diameter of any small leaf cell, each small leaf cell can intersect $P_q$ for at most one $q \in \Sigma$. Therefore, the total space contribution of all the small leaf cells for all points of $\Sigma$ is at least $t \cdot |\Sigma|$. Let $c_b'' = (c_b/(2^d+1))^{2(d-1-k)/k}$. $T$'s total space can be asymptotically bounded from below as
\[
	\frac{t}{\Delta^{d-1-k}}
		~ = ~ \frac{t}{\left(\eps\left(\frac{(2^d+1)t}{c_b}\right)^{2/k}\right)^{d-1-k}}
		~ = ~ \frac{c_b'' \cdot t}{\left(\eps \cdot t^{2/k}\right)^{d-1-k}}
		~ = ~ \frac{c_b'' \cdot t^{1 - 2 (d-1-k)/k}}{\eps^{d-1-k}}.
\]
Clearly, $c_b'' = \Theta(1)$. Recall that $t = 1/\eps^{(d-1)/\alpha}$. Then, $T$'s total space is asymptotically bounded from below as
\begin{equation}
	\left(\inv{\eps}\right)^{(d-1) - k + \frac{d-1}{\alpha}\left(1 - \frac{2 (d-1-k)}{k}\right)}
	~ = ~ \left(\inv{\eps}\right)^{(d-1) - k + \frac{d-1}{\alpha}\left(3 - \frac{2(d-1)}{k}\right)} \label{eq:expo}
\end{equation}
Let $E(\alpha)$ denote this exponent. In order to complete the proof, we provide a lower bound on $E(\alpha)$. We use the fact that $\kappa \le k \le \kappa'$, apply the definitions of $\kappa$, $\kappa'$, and $\delta$, and straightforward manipulations to obtain
\begin{eqnarray*}
	E(\alpha)
		& \ge & (d-1) - \kappa' + \frac{d-1}{\alpha}\left(3 - \frac{2(d-1)}{\kappa}\right)
		~  =  ~ (d-1) \left(1 - \frac{2\sqrt{2\alpha} - 3}{\alpha} \right) - 1.
\end{eqnarray*}
Substituting this value for the exponent in Eq.~(\ref{eq:expo}) completes the proof.
\end{proof}

\section{Approximate Nearest Neighbor Searching} \label{sec:ann}

In this section, we present a reduction from approximate nearest neighbor searching to approximate polytope membership, which will allow us to prove Theorem~\ref{thm:ann-ub} from Section~\ref{sec:intro}. Our reduction will involve the following additional assumptions regarding the implementation of {\alg}. First, (as in Section~\ref{sec:preproc}) we assume that $K$ is presented as the intersection of $n$ halfspaces. Second, we assume that a leaf node is labeled as ``inside'' only if it lies entirely within $K$ (as opposed to lying within $K \oplus \eps$ as described in {\alg}). Third, we assume that leaf cells that store halfspaces use only bounding halfspaces of $K$.

Clearly, these assumptions do not affect the data structure's correctness. We assert that they do not affect the data structure's asymptotic query time or space bounds. Regarding the second assumption, observe that for any cell $Q$ that lies within $K \oplus \eps$, $K$ can be $\eps$-approximated within $Q$ using a single halfspace (any halfspace that contains $Q$ suffices). Regarding the third assumption, recall that Lemma~\ref{lem:discrete-apx} shows that we may assume that the approximating halfspaces for each node are drawn from the input halfspaces at the expense of a constant factor increase in the query time.

The reduction from approximate nearest neighbor searching to approximate polytope membership is based on the approximate Voronoi diagram (AVD) construction from~\cite{AVD-JACM}. The AVD employs a height balanced variant of a quadtree, a balanced box decomposition (BBD) tree~\cite{ARS} to be precise. Each cell of a BBD tree corresponds to the set theoretic difference of two quadtree cells, an \emph{outer box} and an optional \emph{inner box}. Each leaf cell of the tree stores a set of \emph{representative points} with the property that for any query point $q$ lying within this cell, at least one of these representatives is an $\eps$-nearest neighbor of $q$. A query is answered by locating the leaf cell that contains the query point and then determining the nearest representative from this cell (by brute force). The AVD's space is dominated by the total number of representatives over all the leaf cells. The query time is the height of the tree plus the number of representatives in the leaf cell. A data structure for nearest neighbor searching is said to be in the \emph{AVD model} if it has this general form, that is, a covering of the query region by hyperrectangles of bounded aspect ratio, each of which is associated with a set of representative points~\cite{AVD-JACM}. Lower bounds on the performance of any data structure in the AVD model were given in~\cite{AVD-JACM}.

The reader need not be familiar with the details of the AVD data structure. The next lemma encapsulates the important technical information needed for our reduction. It follows easily from the proofs of Lemmas~{6.1} and~{8.1} in~\cite{AVD-JACM}. Given a cell $Q$ in a BBD tree, let $B_Q$ denote the ball of radius $2 \cdot \diam(Q)$ whose center coincides with the center of $Q$'s outer box (see Figure~\ref{fig:separation}(a)). Given a Euclidean ball $B$ of radius $r$ and positive $c$, let $c \P B$ denote the ball concentric with $B$ of radius $c \P r$.

\begin{lemma} \label{lem:avd}
Let $0 < \eps \leq 1/2$ be a real parameter and $X$ be a set of $n$ points in $\RE^d$. It is possible to construct a BBD tree $T$ with $O(n \cdot \log \inv{\eps})$ nodes, where each leaf cell $Q$ stores a subset $R_Q \subset X$ satisfying the following properties:

\begin{enumerate}
\setlength{\itemsep}{-0.5ex}%
\setlength{\parsep}{0pt}%
\item[$(i)$] For any point $q$ in $Q$, one of the points in $R_Q$ is an $\eps$-approximate nearest neighbor of $q$.

\item[$(ii)$] At most one point of $R_Q$ is contained in the ball $B_Q$, and the remaining points of $R_Q$ are contained in $c_q B_Q \setminus B_Q$ for some constant $c_q$ (which depends on the dimension).

\item[$(iii)$] The total number of representative points over all the leaf cells of $T$ is $O(n \cdot \log \inv{\eps})$.
\end{enumerate}

Moreover, it is possible to compute the tree $T$ and the sets $R_Q$ for all the leaf cells in total time $O(n \cdot \log n \cdot \log \inv{\eps})$, and the cell that contains a query point can be located in time $O(\log n + \log \log \inv{\eps})$.
\end{lemma}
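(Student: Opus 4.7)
The statement is essentially a restatement of the AVD construction of~\cite{AVD-JACM}, so the plan is to import the tree produced by Lemmas~6.1 and~8.1 of that paper and then verify that the additional structural property~(ii) can be enforced with only minor post-processing. The imported construction already gives a BBD tree with $O(n \log \inv{\eps})$ nodes, total representative count $O(n \log \inv{\eps})$, property~(i), construction time $O(n \log n \log \inv{\eps})$, and leaf-location time $O(\log n + \log \log \inv{\eps})$. Thus the only content that needs to be added is property~(ii), the separation between representatives near $Q$ and those far from $Q$.

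For the ``at most one representative inside $B_Q$'' part of~(ii), I would post-process each leaf's representative set by retaining only a single point inside $B_Q$, namely the one minimizing distance to the center of $Q$'s outer box (ties broken arbitrarily). Any discarded point $p \in R_Q \cap B_Q$ lies within distance $3\P\diam(Q)$ of every $q \in Q$, so by the triangle inequality the retained representative $p^*$ satisfies $\|q p^*\| \le \|q p\| + 3\P\diam(Q)$. Combined with the fact that the proof of Lemma~6.1 of~\cite{AVD-JACM} already guarantees at least one representative at distance $O(\diam(Q))$ from $q$, this shows that the $\eps$-approximation guarantee~(i) is preserved up to a constant slack, which the AVD construction can absorb by scaling $\eps$ by a constant factor at the outset.

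For the second half of~(ii), I would argue that the AVD construction never needs to store a representative that is farther than a constant multiple of $\diam(Q)$ from $Q$'s center. The representatives in the proof of Lemma~6.1 are drawn from a WSPD-like decomposition of $X$ whose pieces have diameters comparable to $\diam(Q)$ and whose distances to $Q$ are themselves $O(\diam(Q))$, since a piece too far away would be dominated in distance by a piece placed closer to $Q$ and could be pruned without affecting any $\eps$-ANN. Let $c_q$ be the maximum such multiplicative constant (depending only on $d$) arising in those distance bounds; any point of $R_Q$ falling outside $c_q B_Q$ can then be dropped from $R_Q$ without violating~(i), leaving a set that satisfies the second inclusion in~(ii).

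The main obstacle is extracting the correct value of $c_q$ from the analysis of~\cite{AVD-JACM} and verifying that the two post-processing steps (collapsing multiple in-ball representatives to one, and discarding the far-away representatives) are simultaneously compatible with that paper's packing arguments used to bound $|R_Q|$ and the total representative count. Once $c_q$ is pinned down, properties~(i) and~(iii), the construction time, and the point-location time all transfer unchanged from~\cite{AVD-JACM}, since the modifications only ever delete representatives; the extra post-processing work is linear in the total representative count and so does not dominate.
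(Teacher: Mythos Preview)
The paper does not give a proof of this lemma; it simply cites it as following from the proofs of Lemmas~6.1 and~8.1 of~\cite{AVD-JACM}. You correctly identify this, and your plan to import the tree, the node count, the total representative count, the construction time, and the point-location time from that paper is exactly what the authors intend.

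However, your post-processing argument for property~(ii) has a genuine gap. The issue is with collapsing multiple representatives inside $B_Q$ down to one. Your triangle-inequality bound $\|q p^*\| \le \|q p\| + O(\diam(Q))$ is an \emph{additive} error, not a multiplicative one. Nothing prevents a data point from lying arbitrarily close to a query point $q$ inside its leaf cell $Q$: if the true nearest neighbor is at distance $\delta \ll \diam(Q)$ from $q$, then replacing it by another point of $R_Q \cap B_Q$ at distance $\Theta(\diam(Q))$ yields an approximation ratio of $\Theta(\diam(Q)/\delta)$, which is unbounded. Rescaling $\eps$ by a constant at the outset cannot absorb this, because the error is not controlled by any multiple of the true nearest-neighbor distance. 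Your appeal to ``at least one representative at distance $O(\diam(Q))$'' does not help here, since that is an upper bound on some representative's distance, not a lower bound on the true nearest-neighbor distance.

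The point is that property~(ii) is not obtained by post-processing at all; it is a structural consequence of the AVD subdivision rule in~\cite{AVD-JACM}. The BBD-tree refinement there continues until each leaf cell $Q$ satisfies a separation condition guaranteeing that at most one data point (or one sufficiently tight cluster, collapsed to a single representative) lies inside $B_Q$, with the remaining representatives drawn from the annulus $c_q B_Q \setminus B_Q$. That is what Lemma~8.1 of~\cite{AVD-JACM} delivers, and it is why the paper simply cites it rather than arguing it afresh.
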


\begin{figure}[htbp]
  \centerline{\includegraphics[scale=0.40]{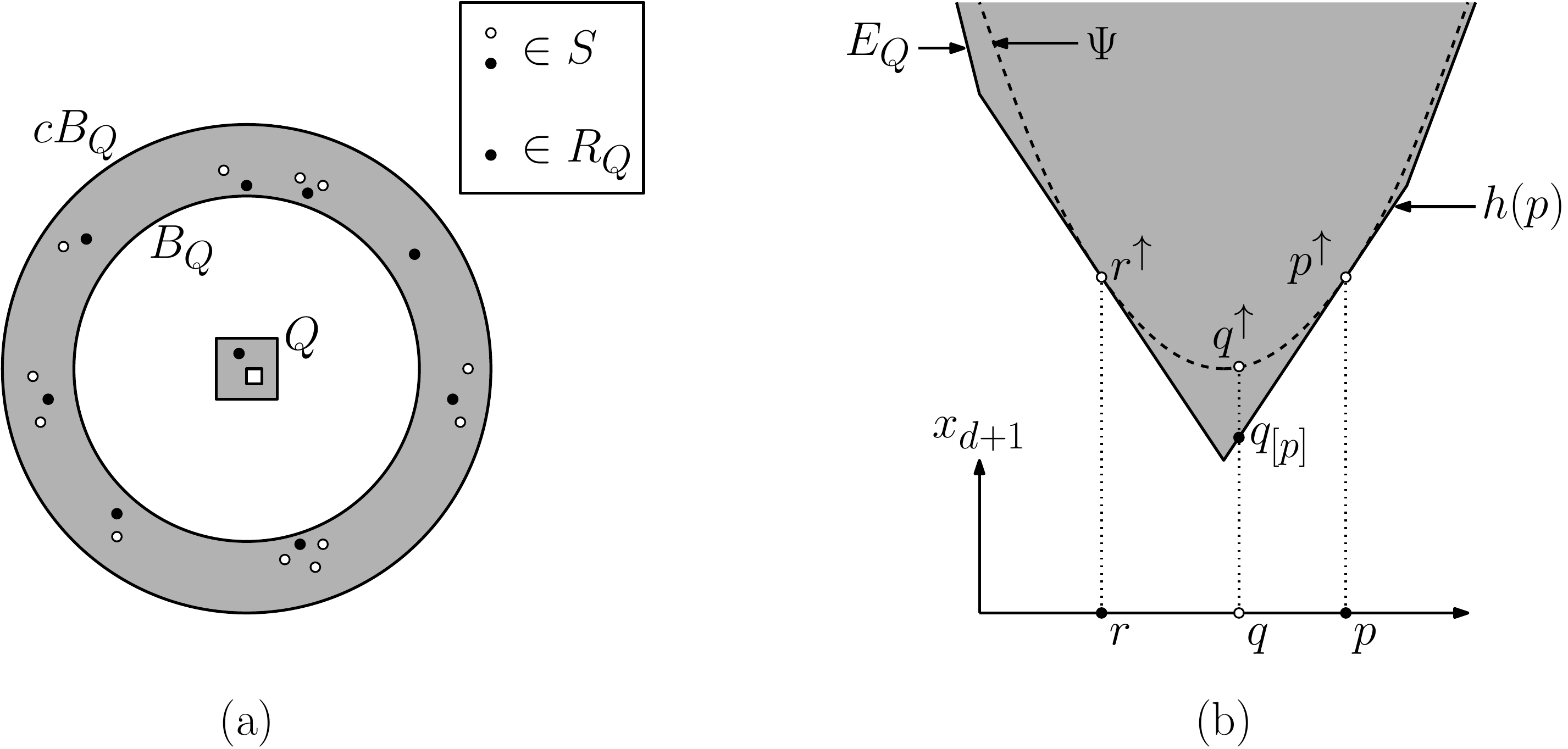}}
  \caption{Approximate nearest neighbor searching: (a) Lemma~\ref{lem:avd} (black points are members of $R_Q$), (b) the lifting transformation. (Note that the figure is not drawn to scale, and the paraboloid in (b) has been translated to aid legibility.)}
  \label{fig:separation}
\end{figure}

In the AVD data structure of \cite{AVD-JACM} the closest representative point to a query point is determined by brute-force enumeration of the elements of $R_Q$. We consider whether it is possible search them more efficiently by reduction to polytope approximation. The following lemma explains how to connect Lemma~\ref{lem:avd} with approximate polytope membership queries. Our construction uses the well known \emph{lifting transformation}~\cite{ray-shooting-NN,edels}. Let $(x_1, \ldots, x_{d+1})$ denote the coordinates of $\RE^{d+1}$, and let us think of $(d+1)$st coordinate axis as being directed vertically upwards. Let $\Psi$ denote the paraboloid $x_{d+1} = \sum_{i=1}^d x_i^2$. Given a point $p \in \RE^d$, let $p^{\uparrow}$ denote the vertical projection of $p$ onto $\Psi$ (see Figure~\ref{fig:separation}(b)), and let $h(p)$ denote the hyperplane tangent to $\Psi$ at $p^{\uparrow}$. That is, the points of $h(p)$ satisfy $x_{d+1} = \sum_{i=1}^d 2 p_i x_i - \|p\|^2$. Given $q \in \RE^d$, let $q_{[p]}$ denote the point on $h(p)$ hit by a vertical ray shot downwards from $q^{\uparrow}$. A straightforward consequence of the definition of $\Psi$ is that the squared distance between $q$ and $p$ in $\RE^d$ is equal to the length of this vertical segment, that is, $\|q p\|^2 = \|q^{\uparrow} q_{[p]}\|$. 

This suggests the following approach to computing the closest representative point through vertical ray shooting. Consider the (unbounded) convex polyhedron that results by taking the upper envelope of the hyperplanes $h(p)$ associated with the lifted representatives. Given the query point $q \in \RE^d$, a ray shot vertically downward from $q^{\uparrow}$ hits some facet of this polyhedron. It follows from the above remarks, that the representative associated with this hyperplane is the closest to $q$. We can simulate ray shooting by applying polytope membership queries in concert with binary search. Of course, some care will be needed to map this problem into our context, which assumes a bounded polytope and approximation.

\begin{lemma} \label{lem:mini-reduction}
Let $0 < \eps \leq 1/2$ be a real parameter and consider a quadtree cell $Q$ and a set of representative points $R_Q$ as in Lemma~\ref{lem:avd}. Given a data structure for $\eps$-approximate polytope membership in $d$-dimensional space with query time $t_{d}(\eps)$ and space $s_{d}(\eps)$, it is possible to preprocess $R_Q$ into an ANN data structure for query points in $Q$ with query time $O(t_{d+1}(\eps) \cdot \log \inv{\eps})$ and space $O(s_{d+1}(\eps))$.
\end{lemma}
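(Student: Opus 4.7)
The plan is to realize nearest-neighbor search among $R_Q$ as a (simulated) vertical ray shooting problem on a convex polyhedron in $\RE^{d+1}$, and to simulate that ray shooting by binary search built on top of approximate polytope membership queries. The key observation from the preceding discussion is that $\|qp\|^2 = \|q\|^2 - h(p)(q)$, so the best representative is the one that maximizes $h(p)(q)$, and this coincides with the height of the upper envelope $f$ of $\{h(p)\}_{p \in R_Q}$ at $x = q$. Let $K = \bigcap_{p \in R_Q} \{(x,z) \in \RE^{d+1} : z \geq h(p)(x)\}$, intersected with a sufficiently large axis-aligned box to make it a convex polytope of bounded diameter; then the smallest $z$ with $(q,z) \in K$ is exactly $f(q)$.

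First I would separate out the unique representative $p_0$, if any, that lies inside the ball $B_Q$ of Lemma~\ref{lem:avd}(ii): this point is stored alongside the cell and compared in $O(1)$ time with the point returned by the polytope-based search. The remaining representatives all lie in $c_q B_Q \setminus B_Q$, and since $q \in Q \subseteq B_Q$, every such $p$ satisfies $\|qp\| = \Theta(\diam(Q))$. After translating and scaling so that the center of $B_Q$ is the origin and $\diam(Q) = 1$, the polytope $K$ constructed from these ``far'' representatives has diameter $\Theta(1)$ and can be preprocessed into the $(d+1)$-dimensional $\eps$-approximate polytope membership structure of Theorem~\ref{thm:membership-ub}, contributing $O(s_{d+1}(\eps))$ space.

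To answer a query $q \in Q$, I would first compute $\|qp_0\|$ explicitly, and then binary-search for the smallest $z$ for which the membership oracle reports $(q,z) \in K$. Since $|f(q)| = O(1)$ after scaling, $O(\log \inv{\eps})$ probes suffice to localize $f(q)$ within additive error $O(\eps)$; each probe costs $O(t_{d+1}(\eps))$, yielding total query time $O(t_{d+1}(\eps) \log \inv{\eps})$. Once the binary search has located $(q,\tilde z)$ near the boundary of $K$, I would invoke the third implementation assumption stated at the start of this section: the leaf cell of the membership structure containing $(q,\tilde z)$ stores only bounding halfspaces of $K$, each of which is either a face of the artificial bounding box or $h(p)$ for some $p \in R_Q$. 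Enumerating the $O(t_{d+1}(\eps))$ hyperplanes in this leaf, computing $\|qp\|$ for each associated representative, and comparing against $\|qp_0\|$ returns the candidate nearest neighbor within the stated time bound.

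For correctness, the representative $p'$ returned by the polytope phase satisfies $h(p')(q) \geq f(q) - O(\eps)$, and hence $\|qp'\|^2 \leq \|qp_{\mathrm{far}}^*\|^2 + O(\eps)$, where $p_{\mathrm{far}}^*$ is the true nearest among the far representatives. Because $\|qp_{\mathrm{far}}^*\|^2 = \Theta(1)$ in our normalization, this upgrades to $\|qp'\| \leq (1+O(\eps))\|qp_{\mathrm{far}}^*\|$; taking the closer of $p'$ and $p_0$ then yields an $\eps$-approximate nearest neighbor over all of $R_Q$ (after a constant-factor rescaling of $\eps$ absorbed into the $O(\cdot)$). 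The main subtlety I expect lies precisely in this error analysis: the additive approximation error $\Theta(\eps \cdot \diam(K))$ of the membership oracle translates into an additive error in \emph{squared} Euclidean distance, and only the $\Theta(\diam(Q))$ lower bound on $\|qp\|$ guaranteed by Lemma~\ref{lem:avd}(ii) keeps this absolute error small relative to the true distance. Without first isolating $p_0$ the reduction would fail, since a representative inside $B_Q$ could be arbitrarily close to $q$ and the approximation slop would then dominate.
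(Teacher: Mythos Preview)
Your proposal is correct and follows essentially the same approach as the paper: separate the single close representative, lift the remaining points to $\RE^{d+1}$, intersect the upper envelope with a bounding box of constant diameter, binary-search the vertical segment using $O(\log\inv{\eps})$ membership probes, and read off a witness halfspace from the leaf cell near the boundary. The paper's error analysis is carried out with specific constants ($\eps' = \eps/6c_q$, and a slope bound on $h(p)$ to convert orthogonal error to vertical error), but your outline captures the same mechanism and the same crucial use of the $\Theta(1)$ lower bound on $\|qp\|$ from Lemma~\ref{lem:avd}(ii).
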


\begin{proof}
Since at most one point of $R_Q$ is contained in $B_Q$, the corresponding point may be inspected separately without increasing the complexity bounds. Therefore, we may assume that all points of $R_Q$ are contained in $c_q B_Q \setminus B_Q$. 

Although we assume that the errors in polytope membership are absolute (because of standardization), errors in approximate nearest neighbor searching are relative. That is, a point $r$ is an $\eps$-approximate nearest neighbor of $q$ if $\|q r\| \le (1+\eps) \|q p\|$, where $p$ is $q$'s true nearest neighbor. Because errors are relative, we may assume that space has been translated and uniformly scaled so that $Q$ is mapped to $Q_0^{(d)}$, the hypercube of unit diameter centered at the origin in $\RE^d$. As a result, $B_Q$ is mapped to a ball of radius $2$. It follows that the distance from any point of $Q$ to any point of $R_Q$ is greater than $1$. Therefore, an absolute error of $\eps$ implies a relative error of at most $\eps$.

In order to reduce nearest neighbor searching among the points of $R_Q$ to polytope membership, let $E_Q$ denote the upper envelope, that is, the intersection of the upper halfspaces, of the hyperplanes $h(p)$, for all $p \in R_Q$ (the shaded region in Figure~\ref{fig:separation}(b)). As mentioned above, the facet of $E_Q$ hit by shooting a ray vertically downward from $q^{\uparrow}$ corresponds to the closest point of $R_Q$ to $q$. 

Since the upper envelope is unbounded, we first compute a bounded convex polytope on which to perform approximate membership queries. Because the query points lie in $Q$, we are only interested in the portion of $E_Q$ that projects vertically onto $Q$. Given that the distance of any point $p \in R_Q$ to the origin is at most $2 c_q = O(1)$, it follows that the portion of $E_Q$ of interest fits within an axis-aligned $(d+1)$-dimensional hypercube of constant diameter that is centered at the origin. Let $Q'$ denote such a hypercube, let $K_Q = E_Q \cap Q'$, and let $\eps' = \eps/6 c_q$. We invoke {\alg} to construct an $\eps'$-approximate membership data structure for $K_Q$. (More formally, we first scale $Q'$ into standard form, and we scale $\eps'$ by the same factor. We then apply {\alg} with the scaled value of $\eps'$. Since $Q'$ is of constant diameter, the scale factor will also be a constant, and therefore only the constant factors in the analysis will be affected. We then apply an inverse scaling to obtain the desired $\eps'$-approximating polytope for $K_Q$.)

We simulate the ray shooting process by a binary search to locate the contact point approximately. Consider the vertical segment formed by intersecting $Q'$ with the vertical line passing through $q^{\uparrow}$. The upper endpoint of this segment is clearly inside $K_Q$ and its lower endpoint is outside. We repeatedly split the segment at its midpoint, perform an approximate polytope membership query, and retain the subsegment whose upper endpoint is (approximately) inside $K_Q$ and whose lower endpoint is (approximately) outside. We terminate the search when the length of the segment falls below $\eps'$. Since $Q'$ is of constant diameter, the search terminates after $O(\log \inv{\eps})$ membership queries. Let us denote the endpoints of this final segment as $q^+$ (upper) and $q^-$ (lower).

Recall our assumption that cells are labeled by {\alg} as ``inside'' or ``outside'' only if they lie entirely inside or outside $K_Q$, respectively. It follows that as we traverse the cells that intersect the segment $q^+ q^-$ from top to bottom, we cannot transition directly from an ``inside'' cell to an ``outside'' cell. Therefore, at least one of these cells must contain a set of representative hyperplanes. Let $h(r)$ denote the hyperplane having the topmost intersection with the vertical ray. We return $r$ as the approximate nearest neighbor (see Figure~\ref{fig:paraboloid}). It is easy to see that this algorithm satisfies the desired time and space bounds. 

\begin{figure}[htbp]
  \centerline{\includegraphics[scale=0.40]{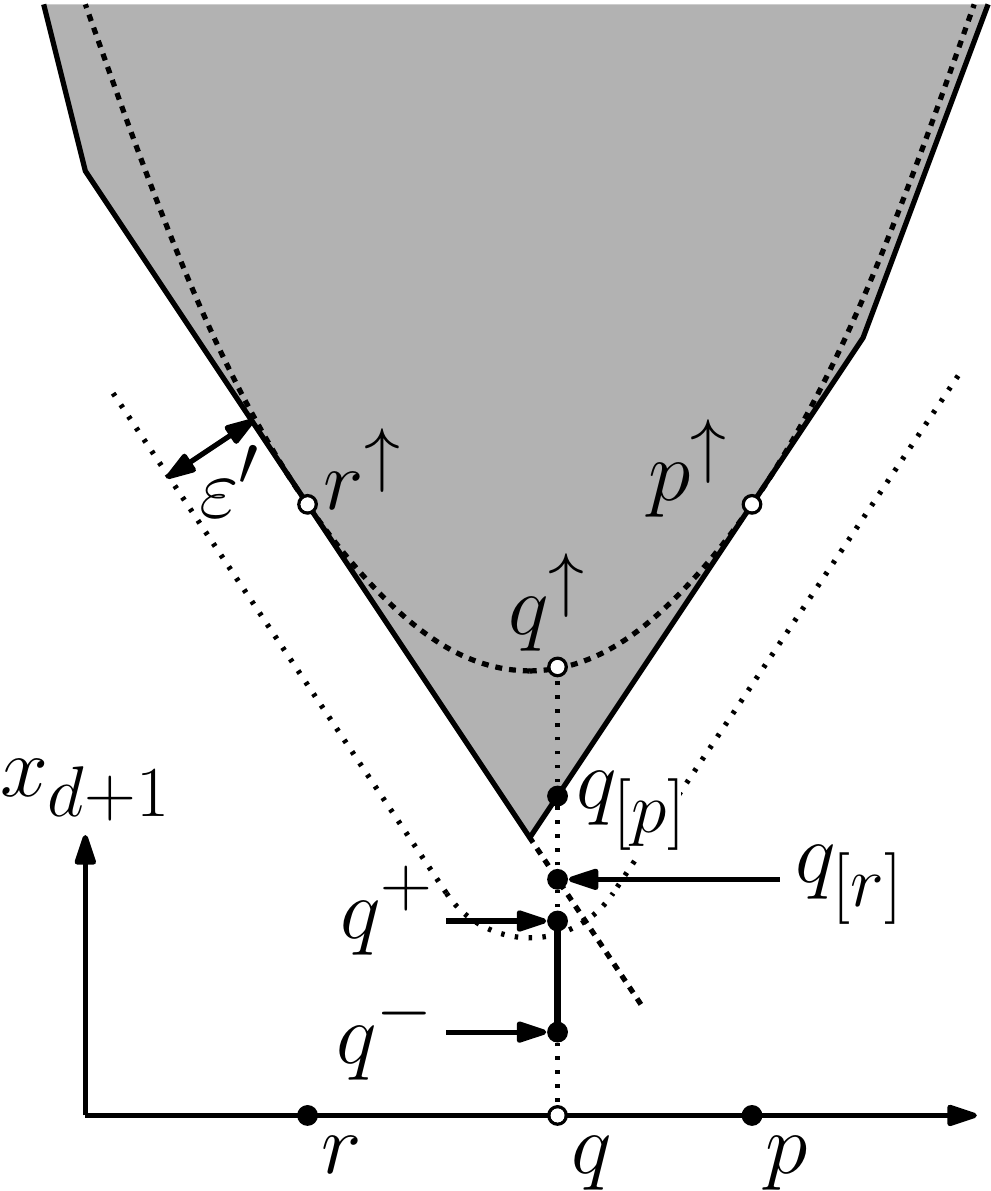}}
  \caption{Proof of Lemma~\ref{lem:mini-reduction}. (Not drawn to scale.)}
  \label{fig:paraboloid}
\end{figure}

All that remains is to establish correctness, by showing that $r$ is indeed an $\eps$-approximate nearest neighbor of $q$. In order to do this, let $p$ be $q$'s true nearest neighbor in $R_Q$. Due to the nature of the binary search, $q^+$ lies within distance $\eps'$ of $K_Q$. (Note that it might lie within $K_Q$.) Thus, the distance from $q^+$ to the upper halfspace bounded by $h(p)$ is at most $\eps'$. By the triangle inequality, the distance from $q^-$ to this halfspace is at most $\eps' + \eps' = 2 \eps'$. Since $p$ is $q$'s true nearest neighbor, $q_{[p]}$ lies on $\partial K_Q$, and so the hyperplane $h(p)$ separates $q^-$ from $K_Q$. This implies that the distance from $q^-$ to $h(p)$ is also not greater than $2 \eps'$. 

We claim that the vertical distance from $q^-$ to $q_{[p]}$ is at most $\eps$. To see why, recall that $p$ lies within a ball of radius $2 c_q$ centered at the origin. This implies that $h(p)$ cannot be too steep, that is, the angle formed between $h(p)$'s normal vector and the vertical axis can be bounded away from $\pi/2$ by a constant. By basic linear algebra, it can be shown that the ratio of the vertical and orthogonal distances of any point to $h(p)$ is bounded above by $\sqrt{4 c_q^2 + 1} < 3 c_q$. Therefore, we have $\|q_{[p]} q^-\| \le 3 \P c_q \P (2 \eps') = \eps$, as desired. 

Because $r$ is the witness produced by the algorithm, $h(r)$ separates $q^-$ from $K_Q$, which implies that $q_{[r]}$ lies above $q^-$. Thus, we have $\|q_{[p]} q_{[r]}\| \le \|q_{[p]} q^-\| \le \eps$. Therefore,
\[
	\|q r\|^2
		~  =  ~ \|q^{\uparrow} q_{[r]}\|
		~  =  ~ \|q^{\uparrow} q_{[p]}\| + \|q_{[p]} q_{[r]}\|
		~ \le ~ \|q^{\uparrow} q_{[p]}\| + \eps.
\]
By the lifting transformation, we have $\|q^{\uparrow} q_{[p]}\| = \|q p\|^2$, and combining this with the fact that $\|q p\| \ge 1$, we have
\[
	\|q r\|^2
		~ \le ~ \|q p\|^2 + \eps 
		~ \le ~ \|q p\|^2 + \|q p\|^2 \eps
		~  =  ~ \|q p\|^2 (1 + \eps)
		~ \le ~ \left( \|q p\| (1 + \eps) \right)^2.
\]
Therefore, $r$ is an $\eps$-approximate nearest neighbor of $p$, which completes the proof.
\end{proof}

The above lemma shows how to apply approximate polytope membership to efficiently answer approximate nearest neighbor queries within each cell of the AVD. To obtain a complete data structure for approximate nearest neighbor searching we apply this to every leaf cell of the AVD.

\begin{lemma} \label{lem:reduction}
Let $0 < \eps \leq 1/2$ be a real parameter and $X$ be a set of $n$ points in $\RE^d$. Given a data structure for approximate polytope membership in $d$-dimensional space with query time at most $t_{d}(\eps)$ and storage $s_{d}(\eps)$, it is possible to preprocess $X$ into an ANN data structure with query time $O(\log n + t_{d+1}(\eps) \cdot \log \inv{\eps})$ and space
\[
	\OO{n\,\log\inv{\eps} + n \, \frac{s_{d+1}(\eps)}{t_{d+1}(\eps)}}.
\]
\end{lemma}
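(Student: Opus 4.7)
My plan is to combine the AVD of Lemma~\ref{lem:avd} with the per-cell reduction of Lemma~\ref{lem:mini-reduction}, handling leaf cells of different sizes by different methods so that the $O(s_{d+1}(\eps))$ per-cell cost is amortized against the total representative count.

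First I would build the BBD tree $T$ of Lemma~\ref{lem:avd}, which has $O(n \log \inv{\eps})$ nodes, supports point location in $O(\log n + \log \log \inv{\eps})$ time, and equips each leaf cell $Q$ with a representative set $R_Q$ satisfying $\sum_Q |R_Q| = O(n \log \inv{\eps})$ by property~(iii). I would then introduce a threshold $\tau = t_{d+1}(\eps) \cdot \log \inv{\eps}$ and classify each leaf cell into \emph{small} ($|R_Q| \le \tau$) and \emph{large} ($|R_Q| > \tau$). For a small cell I would simply store $R_Q$ explicitly, costing $O(|R_Q|)$ space, and answer its queries by brute-force enumeration of the distances to the points of $R_Q$ in $O(\tau)$ time. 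For a large cell I would invoke Lemma~\ref{lem:mini-reduction} on $R_Q$ to obtain a per-cell ANN substructure with $O(s_{d+1}(\eps))$ space and $O(t_{d+1}(\eps) \log \inv{\eps})$ query time.

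A query then locates the leaf cell $Q$ containing $q$ via $T$ and dispatches to its substructure; correctness follows from property~(i) of Lemma~\ref{lem:avd}, which ensures that at least one member of $R_Q$ is an $\eps$-approximate nearest neighbor of $q$. The total query time is $O(\log n + \log\log \inv{\eps}) + O(\tau + t_{d+1}(\eps) \log \inv{\eps}) = O(\log n + t_{d+1}(\eps) \log \inv{\eps})$, as claimed.

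For the space accounting, the small-cell substructures together occupy $O(\sum_Q |R_Q|) = O(n \log \inv{\eps})$ space by property~(iii). Each large cell accounts for more than $\tau$ of the same sum, so there are at most $O(n \log \inv{\eps}/\tau) = O(n / t_{d+1}(\eps))$ large cells, contributing a total of $O\bigl(n \cdot s_{d+1}(\eps)/t_{d+1}(\eps)\bigr)$. Combined with the $O(n \log \inv{\eps})$ cost of storing $T$ itself, this matches the claimed bound. The only real degree of freedom in the proof is the choice of $\tau$: balancing brute-force query cost against the per-cell polytope-membership query cost forces $\tau = \Theta(t_{d+1}(\eps) \log \inv{\eps})$, and this same choice is exactly what converts the $O(n \log \inv{\eps})$ bound on the total number of representatives into the desired $O(n \cdot s_{d+1}(\eps)/t_{d+1}(\eps))$ term, so there is no genuine obstacle once the threshold is set correctly.
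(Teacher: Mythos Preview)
Your proposal is correct and follows essentially the same approach as the paper: build the AVD of Lemma~\ref{lem:avd}, set the threshold $\tau = t_{d+1}(\eps)\cdot\log\inv{\eps}$, handle small cells by brute force and large cells via Lemma~\ref{lem:mini-reduction}, and use the counting argument on $\sum_Q |R_Q| = O(n\log\inv{\eps})$ to bound the number of large cells by $O(n/t_{d+1}(\eps))$. The paper's proof is identical in structure and detail.
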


\begin{proof}
Following Lemma~\ref{lem:avd}, construct a BBD-tree $T$, and for each leaf cell $Q$ of $T$, construct the set of representative points $R_Q$. For each leaf cell such that $|R_Q| \leq t_{d+1}(\eps) \cdot \lg \inv{\eps}$, simply store the set $R_Q$ and answer the corresponding queries by brute force. For the nodes with $|R_Q| > t_{d+1}(\eps) \cdot \lg \inv{\eps}$, use the construction from Lemma~\ref{lem:mini-reduction}. 

To answer an ANN query we search the AVD of Lemma~\ref{lem:avd} to find the leaf cell containing the query point and then apply Lemma~\ref{lem:mini-reduction}. Thus, the query time is 
\[
	O\left(\log n + \log \log \inv{\eps} + t_{d+1}(\eps) \cdot \log \inv{\eps}\right) 
		~ = ~ O\left(\log n + t_{d+1}(\eps) \cdot \log \inv{\eps}\right).
\]
To bound the total space, observe from Lemma~\ref{lem:avd}(iii) that the total number of representative points is $O(n \log \inv{\eps})$. Thus, by a simple counting argument, the number of leaf cells with more than $t_{d+1}(\eps) \cdot \lg \inv{\eps}$ representatives is $O(n/t_{d+1}(\eps))$. Therefore, the total space of the data structure is $O(n \log \inv{\eps} + n (s_{d+1}(\eps)/t_{d+1}(\eps)))$.
\end{proof}

Because of its reliance on binary search, the generic reduction given in Lemmas~\ref{lem:mini-reduction} and~\ref{lem:reduction} is not formally in the AVD model. Recall that the AVD model is important because lower bounds have been established in this model~\cite{AVD-JACM}, and thus these lower bounds do not apply here. However, by sacrificing generality and a factor of $O(\log \inv{\eps})$ in the space bound, we can exploit the properties of {\alg} to obtain a data structure that is in the AVD model.

\begin{lemma} \label{lem:avd-model}
Let $0 < \eps \leq 1/2$ be a real parameter and $X$ be a set of $n$ points in $\RE^d$. Given a split-reduce data structure for approximate polytope membership in $d$-dimensional space with query time at most $t_{d}(\eps)$ and storage $s_{d}(\eps)$, it is possible to preprocess $X$ into an ANN data structure in the AVD model with query time $O(\log n + t_{d+1}(\eps) \cdot \log \inv{\eps})$ and space
\[
	\OO{n \left( 1 + \frac{s_{d+1}(\eps)}{t_{d+1}(\eps)} \right) \log \inv{\eps}}.
\]
\end{lemma}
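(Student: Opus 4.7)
The goal is to convert the reduction of Lemma~\ref{lem:reduction} into a pure AVD-model data structure; the only obstacle is that the mini-reduction of Lemma~\ref{lem:mini-reduction} uses an $O(\log \inv{\eps})$-step binary search inside each BBD leaf. My plan is to ``unroll'' this binary search into the spatial subdivision itself, paying a factor of $O(\log \inv{\eps})$ in the space for large BBD cells. First I would construct the BBD tree $T$ of Lemma~\ref{lem:avd} and, as in Lemma~\ref{lem:reduction}, handle each leaf cell $Q$ with $|R_Q| \le t_{d+1}(\eps) \lg \inv{\eps}$ by brute force, which clearly conforms to the AVD model and contributes $O(n \log \inv{\eps})$ representatives overall. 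For each remaining cell $Q$, I would build the lifted polytope $K_Q \subseteq \RE^{d+1}$ and run {\alg} on it exactly as in Lemma~\ref{lem:mini-reduction}, obtaining a quadtree $\mathcal{Q}_Q$ of height $O(\log \inv{\eps})$ whose boundary leaves each store at most $t_{d+1}(\eps)$ halfspaces $h(p)$, which (by the preprocessing assumption stated at the start of Section~\ref{sec:ann}) correspond to at most $t_{d+1}(\eps)$ concrete points of $R_Q$.

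Next, I would refine $Q$ into a local BBD subtree $T_Q$ whose leaves each determine a unique boundary leaf $C$ of $\mathcal{Q}_Q$, namely the \emph{topmost} boundary leaf in the vertical column above the refined cell. To build $T_Q$, I would enumerate the boundary leaves of $\mathcal{Q}_Q$ in order of increasing depth and insert the vertical projection $\pi(C) \subseteq \RE^d$ of each into $T_Q$ as an inner box of the containing BBD cell, using standard BBD operations~\cite{ARS} that preserve bounded aspect ratio and split the containing cell into $O(1)$ new cells. Each new BBD leaf is tagged with the representatives corresponding to $C$'s stored halfspaces. Since $\mathcal{Q}_Q$ has height $O(\log \inv{\eps})$ and $K_Q$ is convex, any vertical line pierces at most $O(\log \inv{\eps})$ boundary leaves of $\mathcal{Q}_Q$ (at most one per level, by the convex-cross-section argument), so each boundary leaf of $\mathcal{Q}_Q$ creates at most $O(\log \inv{\eps})$ leaves of $T_Q$. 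A query is answered by locating the leaf of $T_Q$ containing $q$ and brute-force searching its tagged representatives; correctness follows from the observation that the topmost boundary leaf $C(q)$ identified by $T_Q$ is precisely the cell whose halfspaces dominate the vertical ray shoot in the proof of Lemma~\ref{lem:mini-reduction}.

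Accounting for costs: the brute-force cells contribute $O(n \log \inv{\eps})$; each large cell contributes at most $O((s_{d+1}(\eps)/t_{d+1}(\eps)) \cdot \log \inv{\eps})$ leaves of $T_Q$, each holding $O(t_{d+1}(\eps))$ representatives, for $O(s_{d+1}(\eps) \log \inv{\eps})$ per large cell, and there are $O(n/t_{d+1}(\eps))$ such cells, summing to $O(n\,(s_{d+1}(\eps)/t_{d+1}(\eps))\,\log \inv{\eps})$. Query time is dominated by point location in the combined BBD tree, taking $O(\log n + \log \inv{\eps})$ time, plus brute-force inspection of $O(t_{d+1}(\eps) \log \inv{\eps})$ representatives, which matches the claimed bounds.

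The main technical hurdle is the BBD refinement itself: I must argue that inserting projections of boundary leaves in order of increasing depth simultaneously guarantees (a) that every BBD leaf is assigned the correct topmost boundary leaf throughout its region, and (b) that the total number of BBD leaves within $Q$ is only $O(\log \inv{\eps})$ times the number of boundary leaves of $\mathcal{Q}_Q$. The nesting structure of the quadtree $\mathcal{Q}_Q$, in which projections at the same depth are pairwise disjoint and projections at different depths are strictly nested or disjoint, makes this amenable to a level-by-level insertion scheme analogous to those used in standard BBD tree constructions, but verifying the above-cell masking carefully is the delicate part of the argument.
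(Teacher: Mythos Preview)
Your high-level strategy---project the $(d{+}1)$-dimensional {\alg} quadtree down to a $d$-dimensional subdivision and attach representatives to each projected leaf---is the same as the paper's. The gap is in which representatives you attach.

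You tag each refined BBD leaf with the halfspaces of a \emph{single} boundary leaf $C$, namely the ``topmost'' one in the vertical column, and assert that $C$'s halfspaces dominate the ray shoot of Lemma~\ref{lem:mini-reduction}. This is not correct. Consider a vertical line through $q$ that crosses $\partial K_Q$ inside some active leaf $L'$. There may be another active leaf $L$ higher up along this line: $L$ intersects $\partial K_Q$ (so it is a boundary leaf), but only at horizontal positions away from $q$; along $q$'s column, $L$ lies entirely inside $K_Q$. Then $L$ is your ``topmost boundary leaf,'' yet its stored halfspaces only $\eps$-approximate $K_Q$ \emph{within $L$}, where the column is interior; they need not separate any point near the actual boundary crossing (which lies in $L'$) from $K_Q$. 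So the witness produced by $L$'s halfspaces can fail. Your ``one boundary leaf per level'' stabbing claim is also unjustified: at a fixed quadtree level the vertical line can pass through many cells, several of which may be active leaves.

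The paper avoids both issues with a simpler construction. It takes the natural $d$-dimensional quadtree $T'_Q$ induced by vertically projecting the cells of $T_Q$; for each leaf $Q'$ of $T'_Q$ it collects representatives from \emph{all} $(d{+}1)$-dimensional leaves in $Q'$'s column that lie within vertical distance $\eps$ of $\partial K_Q$. The key observation you are missing is that, because the lifted points lie in a bounded annulus, the tangent hyperplanes have uniformly bounded slope, and hence the number of quadtree leaves of side length at least that of $Q'$ that can meet $\partial K_Q$ inside $Q'$'s column is bounded by a constant $c_\ell$ depending only on $d$. This gives $O(t_{d+1}(\eps))$ representatives per projected leaf and makes correctness immediate (every cell the binary search could visit contributes its halfspaces). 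The space bound then follows from the node-counting argument of Lemma~\ref{lem:total-space} applied to $T_Q$, with no need for a BBD insertion scheme.
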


\begin{proof}
As in Lemma~\ref{lem:reduction}, construct a BBD-tree $T$, and for each leaf cell $Q$ of $T$, construct the set of representative points $R_Q$. We may assume that $|R_Q| > t_{d+1}(\eps) \cdot \lg \inv{\eps}$, since otherwise we just use the points of $R_Q$ as the representatives. In order to handle query points lying within $Q$, we apply Lemma~\ref{lem:mini-reduction}, where queries are answered using the tree produced by {\alg}. Let $T_Q$ denote the resulting tree. We exploit the fact that the {\alg} data structure associates a collection of hyperplanes with each leaf cell of $T_Q$, and by the nature of our reduction, each of these hyperplanes corresponds to a lifted point of $R_Q$. These lifted points will play the role of nearest neighbor representatives. Intuitively, our approach is to ``undo'' the lifting transformation by projecting the leaf cells of $T_Q$ vertically from $\RE^{d+1}$ down to $\RE^d$ and then building a $d$-dimensional AVD structure based on this projection. 

The projection of the cells of $T_Q$ onto $\RE^d$ naturally defines a quadtree subdivision of $\RE^d$, which we denote by $T'_Q$ (see Figure~\ref{fig:avd-model}(a)). For each leaf cell $Q'$ of $T'_Q$, let $C_{Q'}$ denote the infinite vertical cylinder in $\RE^{d+1}$ whose cross section is $Q'$ (see Figure~\ref{fig:avd-model}(b)). Because $Q'$ is a leaf, any leaf cell of $T_Q$ that intersects this cylinder projects onto a hypercube that contains $Q'$. 

\begin{figure}[htbp]
  \centerline{\includegraphics[scale=0.40]{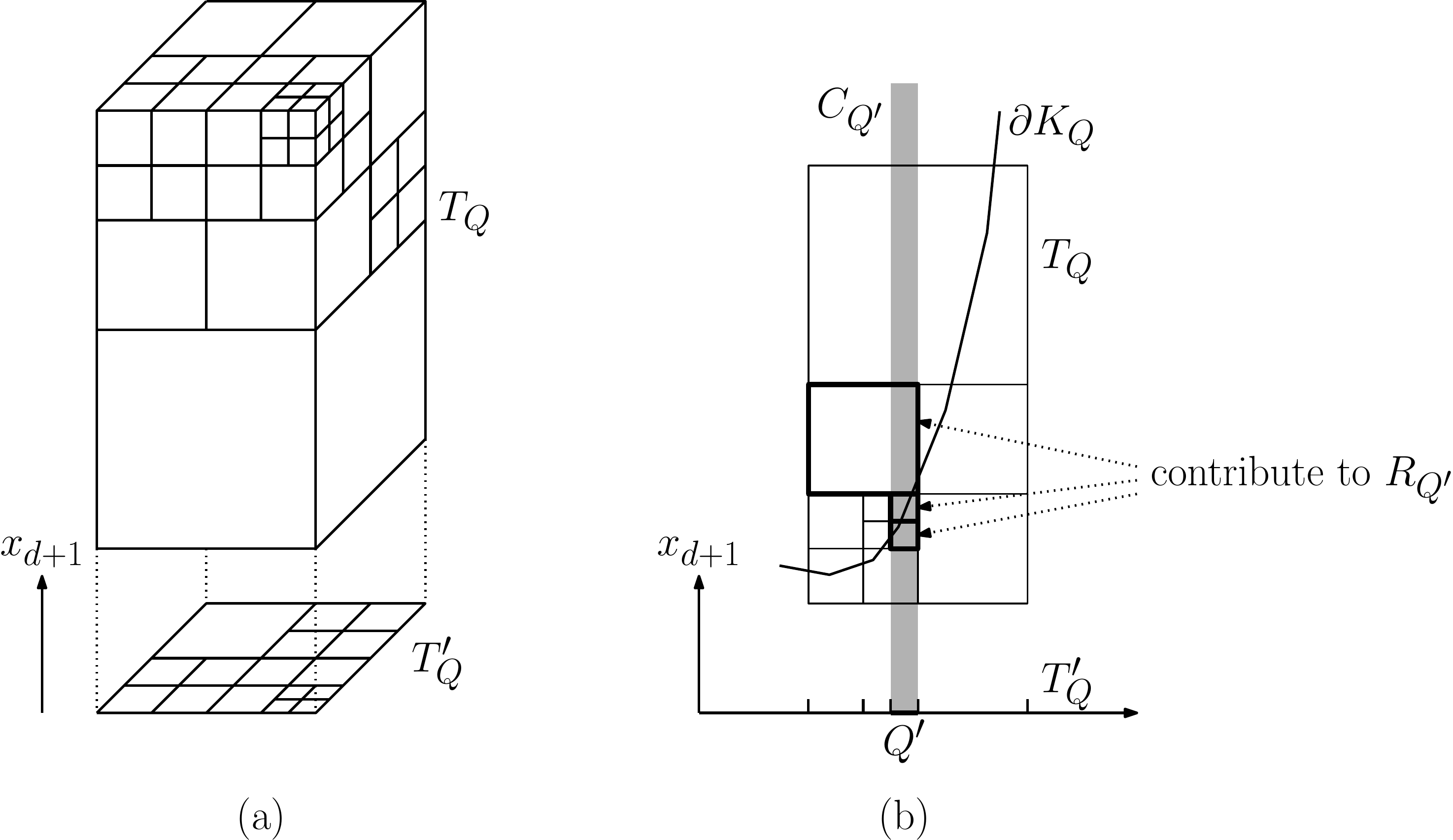}}
  \caption{Producing an ANN data structure in the AVD model.}
  \label{fig:avd-model}
\end{figure}

Recall the lifted polytope $K_Q$ of Lemma~\ref{lem:mini-reduction}. For each leaf cell of $T_Q$ that contains a point whose vertical distance from $\partial K_Q$ is at most $\eps$, we create a representative point corresponding to each of the hyperplanes that {\alg} associates with this leaf cell. We denote the resulting collection of representatives by $R_{Q'}$. These are the only hyperplanes that are relevant to the binary search of Lemma~\ref{lem:mini-reduction}, and therefore one of them will provide the final witness in the binary search (the point $r$ in the proof of Lemma~\ref{lem:mini-reduction}). This implies that $R_{Q'}$ constitutes a valid representative set for $\eps$-approximate nearest neighbor searching for any query point that lies in $Q'$. Thus, the resulting data structure is a valid AVD structure. 

In order to bound the query time we recall some of the observations made in the proof of Lemma~\ref{lem:mini-reduction}. Since $K_Q$ is contained within a hypercube of constant diameter centered at the origin, the absolute slopes of the hyperplanes of the approximating polytope are bounded above by some constant. Recall that the leaf cells of $T_Q$ that contribute a point to $R_{Q'}$ have side lengths at least as large as that of $Q'$. By the same reasoning used in Lemma~{3} of \cite{ARS}, the number of such quadtree leaf cells that can intersect $\partial K_Q$ is bounded by a constant, which we denote by $c_{\ell}$. (This constant depends on the dimension $d$ and the largest possible slope.) Therefore, the total number of cells contributing a representative to $R_{Q'}$ is at most $c_{\ell}$. Since each cell contributes at most $t_{d+1}(\eps)$ representatives, the total number of representatives associated with any leaf cell of $T'_Q$ is at most $c_{\ell} \cdot t_{d+1}(\eps) = O(t_{d+1}(\eps))$.

The bound on the total space is complicated by the fact that a large cell that intersects $\partial K_Q$ may overlap the columns of many small leaf cells, and hence a large cell's representatives may be replicated many times. Let $M$ denote the set of internal nodes of $T_Q$ all of whose children are leaves. We encountered this set earlier in the proof of Lemma~\ref{lem:total-space}. As we saw in that earlier lemma, because each node of $M$ was split by {\alg}, it follows that each such cell requires more than $t_{d+1}(\eps)$ halfspaces to approximate $K(Q)$, and thus, the children of $M$ together require at least as many representatives. Therefore we have $|M| \cdot t_{d+1}(\eps) \le s_{d+1}(\eps)$. Reasoning as we did in Lemma~\ref{lem:total-space}, every internal node of $T_Q$ is either in $M$ or is an ancestor of a node in $M$. Thus, the number of internal nodes is at most $|M| \cdot \textrm{height}(T_Q)$. Since every internal node has $2^d$ children, the total number of nodes in $T_Q$ is at most $2^d \cdot |M| \cdot \textrm{height}(T_Q)$. Clearly, the number of leaf cells of $T'_Q$ can be no larger. As we saw in the previous paragraph, each leaf cell of $T'_Q$ is associated with at most $c_{\ell} \cdot t_{d+1}(\eps)$ representatives. Since the tree is of height $O(\log \inv{\eps})$, the total number of representatives over all these cells is at most
\begin{eqnarray*}
	(2^d \cdot |M| \cdot \textrm{height}(T_Q)) (c_{\ell} \cdot t_{d+1}(\eps))
		&  =  & c_{\ell} \cdot 2^d \cdot \textrm{height}(T_Q) \cdot (|M| \cdot t_{d+1}(\eps)) \\
		& \le & (c_{\ell} \cdot 2^d \cdot {\textstyle \log \inv{\eps}}) \cdot s_{d+1}(\eps) \\
		&  =  & O(s_{d+1}(\eps) \cdot {\textstyle \log \inv{\eps}}).
\end{eqnarray*}
By Lemma~\ref{lem:avd}(iii), the total number of representatives in $T_Q$ is $O(n \log \inv{\eps})$. By a counting argument, the number of leaf cells with more than $t_{d+1}(\eps) \cdot \log \inv{\eps}$ representatives is $O(n/t_{d+1}(\eps))$. Therefore, the total space is
\[
	\OO{n \log \inv{\eps} + \frac{n}{t_{d+1}(\eps)} \cdot s_{d+1}(\eps) \cdot \log \inv{\eps}}
	~ = ~ \OO{n  \left( 1 + \frac{s_{d+1}(\eps)}{t_{d+1}(\eps)} \right) \log\inv{\eps} }
\]
as desired.
\end{proof}

By combining this with Theorem~\ref{thm:membership-ub} (applying the more accurate space bounds from Lemma~\ref{lem:trade-off-ub}) we obtain the main result of this section.

\begin{lemma} \label{lem:ann-ub}
Let $0 < \eps \leq 1$ be a real parameter, $\alpha \geq 1$ be a real constant, and $X$ be a set of $n$ points in $\RE^d$. There is a data structure in the AVD model for approximate nearest neighbor searching that achieves
\begin{eqnarray*}
\text{Query time:} & & O\big(\log n + (1/\eps^{d/2 \alpha}) \cdot \log^2 {\textstyle \inv{\eps}}\big) \\
\text{Space:} & & 
  \OO{n \cdot \max\left(\log {\textstyle \inv{\eps}}, 1/\eps^{d\left(\inv{2} -\inv{2\alpha}\right)}\right)}\N , \text{~for $1 \leq \alpha < 2$ and,} \\
  & &
  \OO{n/\eps^{d \left(1 - \frac{\floor{\lg\alpha}}{\alpha} - \inv{2^{\floor{\lg \alpha}}} + \inv{2 \alpha}\right)}}\N , \text{~for $\alpha \geq 2$}.
\end{eqnarray*}
The constant factors in the space and query time depend only on $d$ and $\alpha$ (not on $\eps$). At the expense of increasing the query time and space by a factor of $O(\log \inv{\eps})$ it is possible to construct the data structure in time $O(n (\log n + 1/\eps^{c \P d}) \log\inv{\eps})$, for some constant $c$ (that does not depend on $d$ or $\alpha$).
\end{lemma}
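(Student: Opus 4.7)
The plan is to combine the AVD-based reduction of Lemma~\ref{lem:avd-model} with the polytope-membership trade-offs of Section~\ref{sec:membership}. Because the lifting transformation in Lemma~\ref{lem:mini-reduction} takes us into $\RE^{d+1}$, every polytope-membership bound will be invoked with $d$ replaced by $d+1$. The key observation is that if I instantiate the polytope-membership trade-off with parameter $\alpha' = 2\alpha$, then the resulting query time $t_{d+1}(\eps) = O\big((\log \inv{\eps})/\eps^{d/(2\alpha)}\big)$, multiplied by the extra $O(\log \inv{\eps})$ factor from the binary search in Lemma~\ref{lem:mini-reduction}, produces exactly the target query time $O\big(\log n + (1/\eps^{d/(2\alpha)}) \log^2 \inv{\eps}\big)$.

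For the case $\alpha \ge 2$ we have $\alpha' \ge 4$, so Lemma~\ref{lem:trade-off-ub} applies in dimension $d+1$ and gives
\[
	s_{d+1}(\eps)
		~ = ~ O\left(1/\eps^{d \left(1 - 2 \left(\frac{\floor{\lg 2\alpha}-2}{2\alpha} + \inv{2^{\floor{\lg 2\alpha}}}\right)\right)}\right).
\]
Setting $k = \floor{\lg \alpha}$ and using $\floor{\lg 2\alpha} = k + 1$, a direct simplification reduces the exponent of $s_{d+1}(\eps)/t_{d+1}(\eps)$ to $d\big(1 - k/\alpha - 1/2^k + 1/(2\alpha)\big)$, which when substituted into the space bound $O(n (1 + s_{d+1}/t_{d+1}) \log \inv{\eps})$ of Lemma~\ref{lem:avd-model} matches the claimed formula exactly. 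For the case $1 \le \alpha < 2$, the hypothesis $\alpha' \ge 4$ fails, so I would instead invoke Lemma~\ref{lem:baseunit} in dimension $d+1$; since $d/(2\alpha) \ge d/4 \ge d/8$ in this range, the chosen $t_{d+1}(\eps)$ satisfies its hypothesis and yields the uniform bound $s_{d+1}(\eps) = O(1/\eps^{d/2})$. Then $s_{d+1}/t_{d+1} = O\big(1/(\eps^{d(1/2 - 1/(2\alpha))} \log \inv{\eps})\big)$, and the total space becomes $O\big(n \max(\log \inv{\eps}, 1/\eps^{d(1/2 - 1/(2\alpha))})\big)$, as required.

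For the preprocessing claim I would combine the $O(n \log n \log \inv{\eps})$ cost of building the AVD (Lemma~\ref{lem:avd}) with a per-leaf invocation of Lemma~\ref{lem:preproc-time} applied to the lifted halfspaces $\{h(p) \ST p \in R_Q\}$. Only the $O(n/t_{d+1}(\eps))$ leaves with $|R_Q| > t_{d+1}(\eps)\log\inv{\eps}$ require such a structure, each costing $O(|R_Q| + 1/\eps^{c_p(d+1)})$; since $\sum_Q |R_Q| = O(n \log \inv{\eps})$ by Lemma~\ref{lem:avd}(iii), the total comes out to $O\big(n(\log n + 1/\eps^{cd})\log\inv{\eps}\big)$ for a suitable constant $c$, and the additional $O(\log\inv{\eps})$ factor in space and query time is precisely the penalty supplied by Lemma~\ref{lem:weak-membership}. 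The only delicate step in the entire argument is the algebraic reduction of the exponent of $s_{d+1}/t_{d+1}$ under the substitution $\alpha \mapsto 2\alpha$ and the verification that this agrees with the stated piecewise-linear formula at the breakpoints $\alpha \in \{2,4,8,\ldots\}$; this is routine bookkeeping and presents no real obstacle.
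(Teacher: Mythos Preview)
Your proposal is correct and follows essentially the same route as the paper: invoke the polytope-membership trade-off in dimension $d+1$ with parameter $\alpha' = 2\alpha$ for $\alpha \ge 2$, feed the resulting $t_{d+1}$ and $s_{d+1}$ into Lemma~\ref{lem:avd-model}, and simplify the exponent via $\floor{\lg 2\alpha} = \floor{\lg\alpha}+1$. For $1 \le \alpha < 2$ the paper invokes Theorem~\ref{thm:membership-ub} with $\alpha' = 4$ rather than Lemma~\ref{lem:baseunit}, but since the $\alpha'=4$ instance of Lemma~\ref{lem:trade-off-ub} reproduces exactly the bound of Lemma~\ref{lem:baseunit}, your choice is equivalent (and arguably a touch more direct). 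Your preprocessing argument also matches the paper's; the paper actually sums the $O(|R_Q| + 1/\eps^{c_p(d+1)})$ cost over \emph{all} $O(n\log\inv{\eps})$ leaf cells rather than just the heavy ones, but this only affects the constant $c$ in the exponent, which is absorbed anyway.
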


\begin{proof}
Given $X$ and $\eps$, we first observe that if $1/16 < \eps \le 1$, we may set $\eps = 1/16$, since this will only affect the constant factors in the asymptotic bounds. We consider two cases based on the value of $\alpha$. 

If $1 \le \alpha < 2$, we will apply Theorem~\ref{thm:membership-ub} with the values of $d$ and $\alpha$ of the theorem set to $d' = d+1$ and $\alpha' = 4$, respectively. The theorem states that there is a data structure that achieves query time 
\begin{equation}
	O\left(({\textstyle \log \inv{\eps}})/\eps^{\frac{d'-1}{\alpha'}}\right)
		~ = ~ O\left((1/\eps^{d/4}) \cdot \log {\textstyle \inv{\eps}}\right)
		~ = ~ O\left((1/\eps^{d/2 \alpha}) \cdot \log {\textstyle \inv{\eps}}\right) \label{eq:ann-ub-time}
\end{equation}
and space
\begin{equation}
	\OO{1/\eps^{(d'-1)\left(1 - \frac{2\floor{\lg \alpha'} - 2}{\alpha'} \right)}}
		~ = ~ O\big( 1/\eps^{d/2} \big). \label{eq:ann-ub-space}
\end{equation}
Letting $t_{d+1}(\eps)$ and $s_{d+1}(\eps)$ denote the quantities of Eqs.~\eqref{eq:ann-ub-time} and~\eqref{eq:ann-ub-space}, respectively, we apply Lemma~\ref{lem:avd-model} to obtain a data structure in the AVD-model with query time $O\big(\log n + (1/\eps^{d/2 \alpha}) \cdot \log^2 \inv{\eps}\big)$ and space
\[
	\OO{n \left( 1 + \frac{1/\eps^{d/2}}{(1/\eps^{d/2 \alpha}) \cdot \log \inv{\eps}} \right) \log \inv{\eps}}
	~ = ~
	\OO{n \cdot \max\left(\log {\textstyle \inv{\eps}}, 1/\eps^{d\left(\inv{2} -\inv{2\alpha}\right)}\right)},
\]
as desired.

Otherwise, if $\alpha \ge 2$, we apply Theorem~\ref{thm:membership-ub} (but using the more accurate space bounds from Lemma~\ref{lem:trade-off-ub}) in dimension $d' = d+1$ and with trade-off parameter $\alpha' = 2 \alpha$. (Observe that $\alpha' \ge 4$, as required by Theorem~\ref{thm:membership-ub} and Lemma~\ref{lem:trade-off-ub}.) This yields an approximate polytope membership data structure with query time $t_{d+1}(\eps) = O\big((1/\eps^{d/2 \alpha}) \cdot \log \inv{\eps} \big)$ and space
\[
  s_{d+1}(\eps) 
	~ = ~ \OO{1/\eps^{d \left(1 - 2 \left(\frac{\floor{\lg (2\alpha)} - 2}{2\alpha} + \inv{2^{\floor{\lg(2\alpha)}}} \right) \right)}}
	~ = ~ \OO{1/\eps^{d \left(1 - \frac{\floor{\lg \alpha} - 1}{\alpha} - \inv{2^{\floor{\lg \alpha}}} \right)}}.
\]
By Lemma~\ref{lem:avd-model} this implies the existence of a data structure in the AVD-model with the desired query time of $O\big(\log n + (1/\eps^{d/2 \alpha}) \cdot \log^2 \inv{\eps}\big)$ and space
\[
	\OO{n \left( 1 + \frac{1/\eps^{d \left(1 - \frac{\floor{\lg \alpha} - 1}{\alpha} - \inv{2^{\floor{\lg \alpha}}} \right)}}{({\textstyle \log \inv{\eps}})/\eps^{d/2 \alpha}} \right) \log\inv{\eps}}.
\]
Since $\alpha \ge 2$, we may ignore the ``$1+$'' term in the inner parenthetical factor. After some simplification we obtain the desired space bound of
\[
	\OO{n/\eps^{d \left(1 - \frac{\floor{\lg \alpha}}{\alpha} - \inv{2^{\floor{\lg \alpha}}} + \inv{2 \alpha} \right)}}.
\]

The preprocessing involves first computing the AVD, which by Lemma~\ref{lem:avd} takes $O(n \cdot \log n \cdot \log \inv{\eps})$ time. For each of the $O(n \log \inv{\eps})$ leaf cells $Q$ of the AVD, we apply {\alg} in dimension $d+1$ to its associated set $R_Q$ of representatives. By Lemma~\ref{lem:preproc-time} this takes $O(n_Q + 1/\eps^{c_p(d+1)})$ time, where $n_Q = |R_Q|$, and $c_p$ is a constant that does not depend on $d$. Summing over all the leaf cells of the AVD and recalling that the total number of representatives is $O(n \cdot \log \inv{\eps})$, it follows that the total preprocessing time is on the order of
\begin{eqnarray*}
  n \cdot \log n \cdot \log \inv{\eps} ~+~ \sum_Q \big(n_Q + 1/\eps^{c_p(d+1)}\big)
	& = & n \cdot \log n \cdot \log \inv{\eps} + n \cdot \log \inv{\eps} \cdot \left(\inv{\eps}\right)^{\N c_p(d+1)} \\
	& = & n \left(\log n + \left( \inv{\eps}\right)^{\N c \P d} \right) \log \inv{\eps},
\end{eqnarray*}
where $c = c_p(d+1)/d$, as desired. Because of the reliance on approximate set cover in the processing of Lemma~\ref{lem:preproc-time}, the query time and space are larger by a factor of $O(\log \inv{\eps})$.
\end{proof}

Note that the above proof uses the AVD-based reduction given in Lemma~\ref{lem:avd-model}. If instead we had used Lemma~\ref{lem:reduction}, we would obtain a slight improvement in the space, by a factor of $\Theta(\log \inv{\eps})$, at the loss of having a data structure in the AVD model. By the simple observation that $1/2^{\floor{\lg \alpha}} \ge 1/\alpha$, the above space bound for the $\alpha \ge 2$ case simplifies to $O\big( n/\eps^{d (1 - \frac{\floor{\lg\alpha}}{\alpha} - \inv{2 \alpha})}\big)$, and this establishes Theorem~\ref{thm:ann-ub}.

\section{Proof of the Area-Product Bound} \label{sec:proof}

In this section, we present lower bounds for the product of the area of (restricted) $\eps$-dual caps and the associated Voronoi patches, and in particular, we present a proof of Lemma~\ref{lem:dual-basic}, which appeared at the end of Section~\ref{sec:cap}.

We begin by recalling some notation. We are given a convex body $K$ in $\RE^d$, and a pair $(p,h(p))$ where $p \in \partial K$ and $h(p)$ is a supporting hyperplane passing through $p$, such that $p$ lies within a unit ball centered at the origin. Also recall that $p_{\eps}$ denotes the point lying at distance $\eps$ from $p$ in the direction of the outward normal orthogonal to $h(p)$ at $p$. $S$ denotes the Dudley hypersphere, which is centered at the origin and is of radius $3$. For $y \ge 1$, let $H^{(y)}(p)$ be any hyperplane that is parallel to $h(p)$ and translated away from $K$ by distance $y$. (This is illustrated in Figure~\ref{fig:area-bound-setup}. Note that the figures of this section are not drawn to scale.) To simplify our descriptions, we consider the directed line segment from $p$ to $p_{\eps}$ to be ``vertically downwards,'' so that the hyperplanes $h(p)$ and $H^{(y)}(p)$ are ``horizontal'' with $h(p)$ above $H^{(y)}(p)$.

\begin{figure}[htbp]
  \centerline{\includegraphics[scale=0.40]{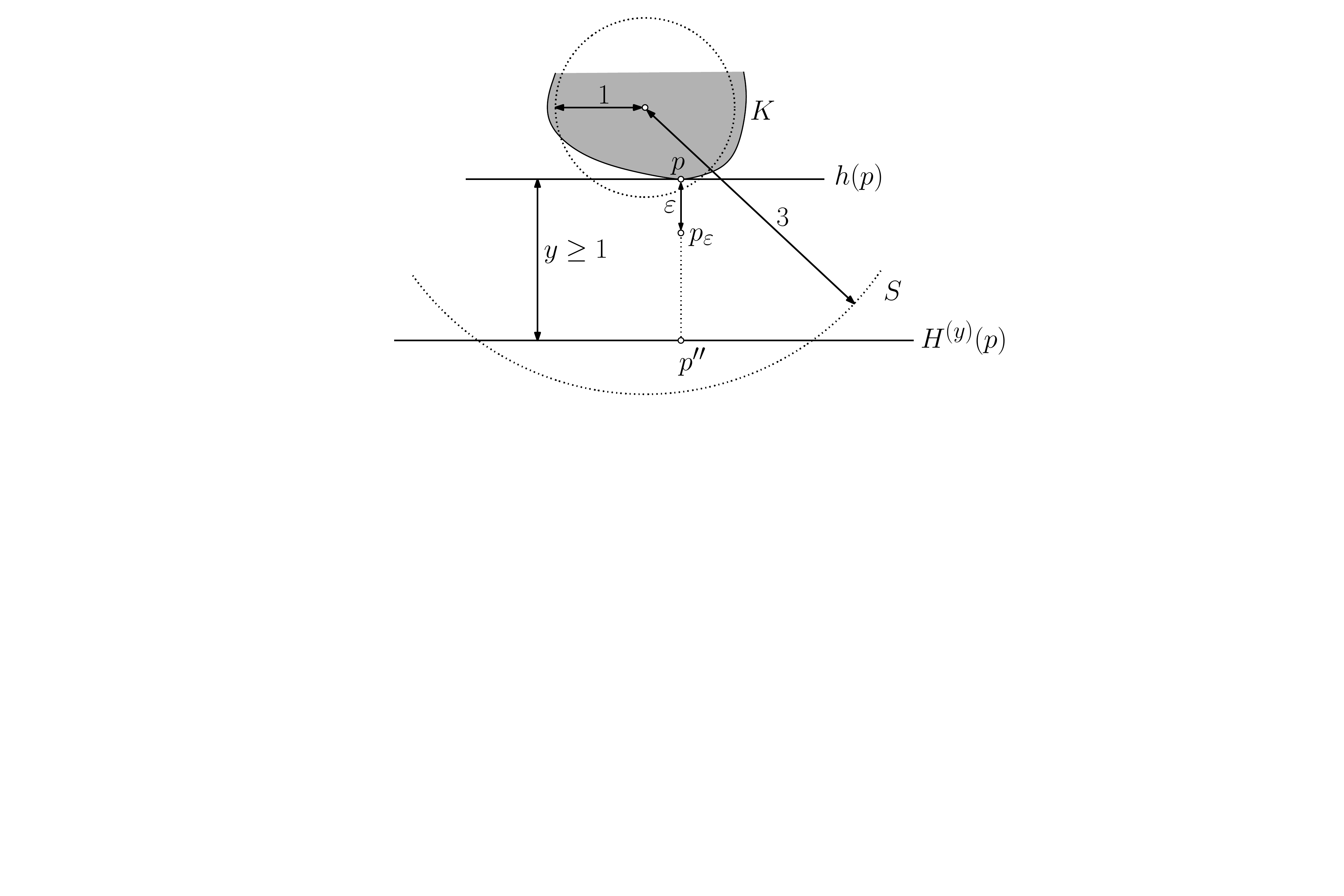}}
  \caption{Definitions of $h(p)$, $H^{(y)}(p)$ and $S$.}
  \label{fig:area-bound-setup}
\end{figure}

Recall that the \emph{$\eps$-dual cap} defined by $p$, denoted $D(p)$, is the portion of $\partial K$ that is visible from $p_{\eps}$ (see Figure~\ref{fig:voronoi}(a)). Also, recall that $\Vor(D(p))$ consists of the points that are exterior to $K$ whose closest point on $\partial K$ lies within $D(p)$. Define the \emph{base} of $D(p)$, denoted $\Gamma(p)$, to be the intersection of $h(p)$ with the convex hull of $K \cup \{p_{\eps}\}$. 

For $\delta > 0$, recall that the \emph{$\delta$-restricted $\eps$-dual cap} defined by $p$, denoted $D_{\delta}(p)$, is $D(p) \cap B_{\delta}(p)$, where $B_{\delta}(p)$ is the Euclidean ball of radius $\delta$ centered at $p$ (see Figure~\ref{fig:voronoi}(b)). As before, $\Vor(D_{\delta}(p))$ is the set of points that are exterior to $K$ whose closest point on $\partial K$ lies within $D_{\delta}(p)$. Also, the \emph{$\delta$-restricted base}, denoted $\Gamma_{\delta}(p)$ is $\Gamma(p) \cap B_{\delta}(p)$.

Our objective in this section is to establish bounds on the product of the area of a $\sqrt{\eps}$-restricted $\eps$-dual cap and its Voronoi patch on the Dudley hypersphere. It will be easier to start with hyperplane patches on $H^{(y)}(p)$ and then generalize to spherical patches on $S$. The main result of this section is given in the following lemma. Part~(ii) is equivalent to Lemma~\ref{lem:dual-basic}, which is our main objective. Part~(i) is a useful intermediate result.

\begin{figure}[htbp]
  \centerline{\includegraphics[scale=0.38]{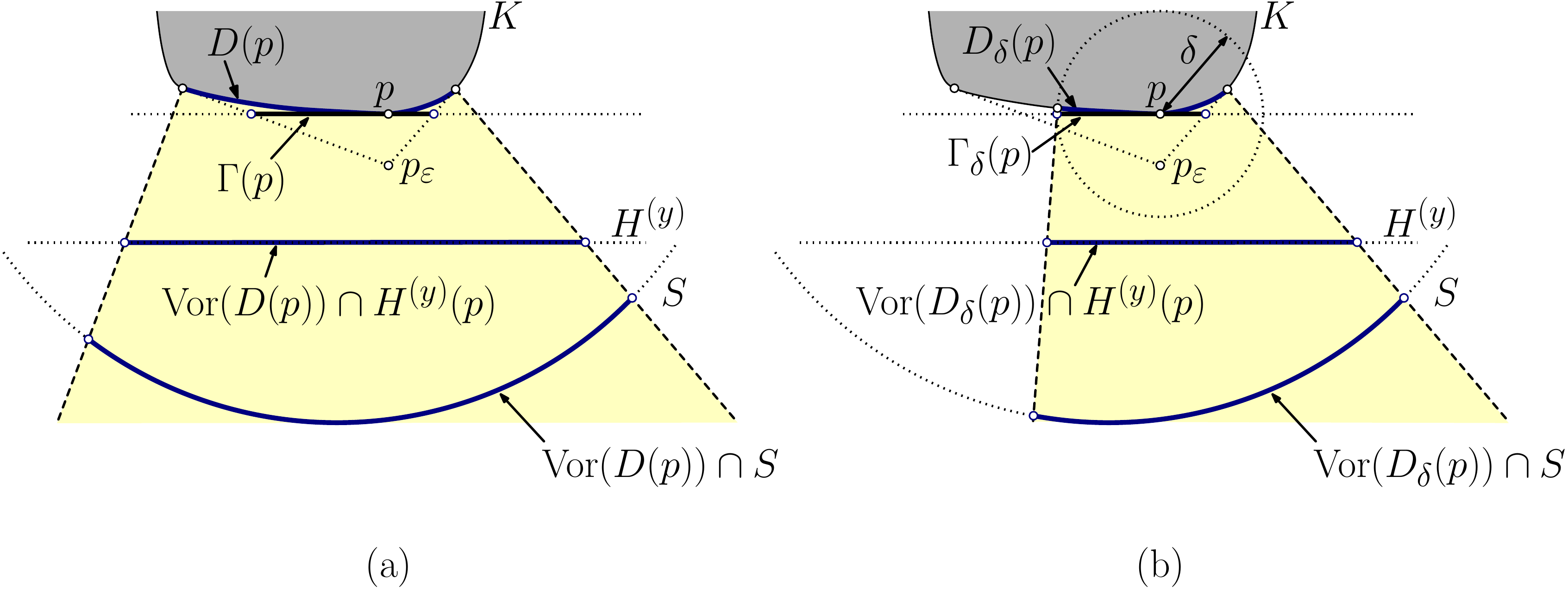}}
  \caption{Dual caps, bases, and Voronoi regions for the (a) unrestricted and (b) restricted cases.}
  \label{fig:voronoi}
\end{figure}

\begin{lemma} \label{lem:dual}
Let $K$ be a convex body in $\RE^d$, and let $0 < \eps \le 1/8$ and $\delta = \sqrt{\eps}$. There are constants $c_a$ and $c'_a$ (depending only on $d$) such that for any point $p \in \partial K$:
\begin{enumerate}
\item[$(i)$]  Given any $y \ge 1$, $\area(D_{\delta}(p)) \cdot \area(\Vor(D_{\delta}(p)) \cap H^{(y)}(p)) \ge c'_a \cdot \eps^{d-1}$.

\item[$(ii)$] If $K$ is fat and has diameter at least $2 \eps$, and $p$ lies within a unit ball centered at the origin, then $\area(D_{\delta}(p)) \cdot \area(\Vor(D_{\delta}(p)) \cap S) \ge c_a \cdot \eps^{d-1}$.
\end{enumerate}
\end{lemma}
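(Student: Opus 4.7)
The plan is to derive the area-product bound from the Mahler volume inequality (Lemma~\ref{lem:mahler}) applied to a $(d-1)$-dimensional convex body built from the projected dual cap together with a polar dual related to its Voronoi patch. For part~(i), set coordinates with $p$ at the origin, $h(p) = \{x_d = 0\}$, $K \subseteq \{x_d \le 0\}$, $p_\eps = \eps \P e_d$, and $H^{(y)}(p) = \{x_d = y\}$. Let $\pi$ denote orthogonal projection onto $h(p)$, and define
\[
  A ~=~ \pi(D_\delta(p)) \subseteq h(p), \qquad B ~=~ \pi(\Vor(D_\delta(p)) \cap H^{(y)}(p)) \subseteq h(p).
\]
Both sets are convex and $(d-1)$-dimensional, and since $\pi$ is $1$-Lipschitz and acts as a translation when restricted to $H^{(y)}(p)$, one has $\area(D_\delta(p)) \ge \vol_{d-1}(A)$ and $\area(\Vor(D_\delta(p)) \cap H^{(y)}(p)) = \vol_{d-1}(B)$. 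It therefore suffices to lower-bound $\vol_{d-1}(A) \cdot \vol_{d-1}(B)$.

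The key step is a polarity-type inclusion relating $A$ and $B$. For any $q = (q', q_d) \in D_\delta(p)$ with outer unit normal $n(q) = (n'(q), n_d(q))$, the visibility condition $\ang{p_\eps - q, n(q)} \ge 0$ translates to $\ang{q', n'(q)/n_d(q)} \le \eps - q_d$. If $v = (v', y) \in H^{(y)}(p)$ lies on the outward normal ray at $q$, then $v' - q' = (y - q_d) \P n'(q)/n_d(q)$. Eliminating the normal direction and using $|q_d| \le \sqrt\eps$ and $y \ge 1$ gives $\ang{q', v'} \le \|q'\|^2 + (\eps - q_d)(y - q_d) = O(y \P \sqrt\eps)$. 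After translating $A$ and $B$ by suitable offsets so that a translated copy $A_0$ of $A$ contains the origin in its interior in a quantitatively robust way, this converts into an inclusion $\polarX{r}{A_0} \subseteq B_0$ with $r = \Theta(\sqrt\eps)$ (the extra factor of $y$ is absorbed by a linear scaling, which is legitimate since $y \ge 1$). Invoking Lemma~\ref{lem:mahler} in dimension $d-1$ then yields
\[
  \vol_{d-1}(A_0) \cdot \vol_{d-1}(B_0)
  ~\ge~ \vol_{d-1}(A_0) \cdot \vol_{d-1}(\polarX{r}{A_0})
  ~\ge~ c_m \P r^{2(d-1)}
  ~=~ \Omega(\eps^{d-1}),
\]
which establishes part~(i).

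For part~(ii), the plan is to reduce to part~(i) by replacing the sphere $S$ with a parallel hyperplane $H^{(y_0)}(p)$ for a suitable $y_0 = \Theta(1)$. Under the fatness hypothesis together with $\diam(K) \ge 2\eps$, the body $K$ contains a Euclidean ball of radius $\Omega(\gamma)$ and lies within a ball of constant radius around the origin, so for every $q \in D_\delta(p)$ the outward normal ray at $q$ crosses $S$ at a point lying at distance $\Theta(1)$ from $K$. Consequently one can define a radial-style map between $\Vor(D_\delta(p)) \cap S$ and $\Vor(D_\delta(p)) \cap H^{(y_0)}(p)$ whose Jacobian is bounded above and below by constants depending only on $d$ and the fatness parameter $\gamma$. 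Part~(i) applied with this $y_0$ then yields the desired bound with $c_a = \Omega(\gamma^{d-1})$, matching the remark following the statement of Lemma~\ref{lem:dual-basic}.

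The main obstacle is the polarity step in part~(i). Establishing the inclusion $\polarX{r}{A_0} \subseteq B_0$ (rather than the easier reverse inclusion $B_0 \subseteq \polarX{r}{A_0}$, which follows directly from the support-function computation above) is delicate, because one must argue that every point of $\polarX{r}{A_0}$ is the vertical projection of a point of $H^{(y)}(p)$ whose nearest boundary point lies within the \emph{restricted} dual cap $D_\delta(p)$, not merely in $D(p)$. The restriction $\|q - p\| \le \sqrt\eps$ is essential here to prevent the nearest point from escaping beyond the restricted cap, and the argument must exploit it together with a careful choice of the recentering so that the polar transform matches the geometry of the inward-facing supporting hyperplanes near $p$. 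Getting the various constants to align so that the factor of $y$ collapses into an $O(\sqrt\eps)$ scaling of the polar radius, while preserving the Mahler lower bound, is the main technical challenge.
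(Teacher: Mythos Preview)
Your overall plan---reduce to a $(d-1)$-dimensional Mahler inequality by showing that a suitable polar of (something like) the projected dual cap sits inside the Voronoi patch---is exactly the paper's strategy for part~(i). Two remarks on part~(i): the paper works not with $A=\pi(D_\delta)$ but with the \emph{base} $\Gamma_\delta = h(p)\cap\conv(K\cup\{p_\eps\})\cap B_\delta(p)$, which is manifestly convex and contains the origin, so the polar is well-defined without any recentering; separately it shows $\area(D_\delta)\ge 2^{-(d-1)}\area(\Gamma_\delta)$. The delicate inclusion $\polarX{r}{\Gamma_\delta}\subseteq \Vor(D_\delta)\cap H^{(y)}$ (with $r=\sqrt{\eps/8}$, independent of $y$) is established by reducing to the $2$-plane through $p$, $p_\eps$, and the boundary point under consideration and doing a case analysis on the slope $t_\theta$ of the tangent line; the ``absorb the factor of $y$ by scaling'' idea in your sketch does not appear---the point is simply that for $y\ge 1$ the hyperplane Voronoi patch already contains the polar at radius $\Theta(\sqrt\eps)$.

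Your approach to part~(ii), however, has a real gap. The claimed constant-Jacobian ``radial-style map'' between $\Vor(D_\delta)\cap S$ and $\Vor(D_\delta)\cap H^{(y_0)}$ need not exist. Fatness is a global condition and does \emph{not} prevent $\partial K$ from having a nearly vertical portion in the $\sqrt\eps$-neighborhood of $p$ (think of a fat polytope with $p$ a vertex); for boundary points $q\in D_\delta$ on such a portion the outward normal is nearly horizontal, so the normal ray meets $H^{(y_0)}$ arbitrarily far away or not at all. Consequently $\area(\Vor(D_\delta)\cap H^{(y_0)})$ can be arbitrarily large while $\area(\Vor(D_\delta)\cap S)$ is bounded by $\area(S)$, and no bi-Lipschitz comparison is possible. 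The paper confronts this explicitly: it \emph{smooths} $K$ to $K'=K\oplus(\eps/2)$, which forces the corresponding base $\Gamma'_\delta$ to contain a $(d-1)$-ball of radius $\Omega(\eps)$, hence its polar is bounded (diameter $O(1)$). The argument then needs three further lemmas---one comparing $\area(\Gamma'_\delta)$ to $\area(\Gamma_\delta)$ (this is where fatness and $\diam(K)\ge 2\eps$ are actually used), one showing $\Vor(D'_\delta)\cap S\subseteq \Vor(D_\delta)\cap S$, and one showing that the bounded polar $\polarX{r'}{\Gamma'_\delta}$ lies under the spherical patch in the sense that $\area(\Vor(D'_\delta)\cap S)\ge\area(\polarX{r'}{\Gamma'_\delta})$---before Mahler can be invoked. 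This smoothing step is the missing idea in your proposal.
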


This lemma holds generally for any $\delta \ge \sqrt{\eps}$, but it suffices for our purposes to consider the restricted case of $\delta = \sqrt{\eps}$. Note that the additional assumptions on fatness and diameter of part~(ii) are necessary for establishing a lower bound. If $K$ is not fat or not of sufficiently large diameter, then $\area(D_{\delta}(p))$ can be arbitrarily small. Since the Dudley hypersphere is bounded, it would not be possible to establish any lower bound on the product of their areas. 

The remainder of this section is devoted to proving this lemma. Because $p$ will be fixed throughout, in order to simplify the notation, we will drop references to $p$. For example, we will use $h$, $H^{(y)}$, $D_{\delta}$, $\Gamma_{\delta}$, and $B_{\delta}$ in place of $h(p)$, $H^{(y)}(p)$, $D_{\delta}(p)$, $\Gamma_{\delta}(p)$, and $B_{\delta}(p)$, respectively.

Since it will be useful to relate sets on $h$ with sets on $H^{(y)}$, we observe that each of these hyperplanes can be consistently identified with $\RE^{d-1}$ by endowing them with parallel coordinate frames, one centered at $p$ (for $h$) and one centered at $p$'s orthogonal projection onto $H^{(y)}$. Thus, a point on $h$ and its vertical projection onto $H^{(y)}$ have the same coordinates.

We start by proving Lemma~\ref{lem:dual}(i). Since the value of $y$ will be fixed throughout this part of the proof, we refer to $H^{(y)}$ simply as $H$. Let $p''$ denote the origin of $H$'s coordinate system (the vertical projection of $p$ onto $H$). (See Figure~\ref{fig:area-bound-setup}.) In order to exploit Lemma~\ref{lem:mahler} on the Mahler volume, rather than considering $\Vor(D_{\delta}) \cap H$ directly, we will find it convenient to instead analyze the polar dual of the base $\Gamma_{\delta}$. Using the aforementioned coordinate frame, we can think of $\Gamma_{\delta}$ as a body in $\RE^{d-1}$. For $r = \sqrt{\eps/8}$, consider the generalized polar of the dual base, $\polarX{r}{\Gamma_{\delta}}$, which we can think of as a convex subset of $H$. Because $\Gamma_{\delta}$ contains the origin of $h$ (namely, $p$), it follows directly that $\polarX{r}{\Gamma_{\delta}}$ is bounded, convex, and also contains the origin of $H$ (namely, $p''$). In order to obtain a lower bound on $\area(\Vor(D_{\delta}) \cap H)$, we will first show that $\polarX{r}{\Gamma_{\delta}}$ is a subset of $\Vor(D_{\delta}) \cap H$ and then derive a lower bound on $\area(\polarX{r}{\Gamma_{\delta}})$. The first assertion is established by the following lemma.

\begin{lemma} \label{lem:dual-subset}
Given the preconditions of Lemma~\ref{lem:dual} and $r = \sqrt{\eps/8}$, we have $\polarX{r}{\Gamma_{\delta}} \subseteq \Vor(D_{\delta}) \cap H$.
\end{lemma}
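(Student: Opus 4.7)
Set up coordinates with $p$ at the origin so that $h = \{x_d = 0\}$, $K \subseteq \{x_d \ge 0\}$, $p_\eps = (0,\ldots,0,-\eps)$, and $H^{(y)} = \{x_d = -y\}$. Under the parallel-frame identification between $h$ and $H^{(y)}$, any element $u \in \polarX{r}{\Gamma_\delta}$ is viewed as the point $q = (u, -y) \in H^{(y)}$, and satisfies $\langle u, v \rangle \le r^2 = \eps/8$ for every $v \in \Gamma_\delta$. Let $z = (z', z_d) \in \partial K$ be the unique nearest point of $K$ to $q$; then $z_d \ge 0$. To prove $q \in \Vor(D_\delta) \cap H^{(y)}$ it suffices to show $z \in D_\delta$, i.e.\ that $\|z - p\| \le \delta$ and that the supporting hyperplane of $K$ at $z$ orthogonal to $q - z$ crosses $\overline{p\, p_\eps}$.

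The first-order optimality condition for projection onto the convex body $K$, applied with the test point $p \in K$, gives $\langle q - z, p - z\rangle \le 0$, which expands in coordinates to the key inequality
\[
    \|z - p\|^2 + y\, z_d \;\le\; \langle u, z'\rangle.
\]
Now the segment from $p_\eps$ to $z$ crosses $h$ at the point $\pi(z) = (\eps/(z_d + \eps))\, z'$, which therefore lies in $\Gamma = h \cap \conv(K \cup \{p_\eps\})$. I split into two cases according to whether $\pi(z) \in B_\delta$. In the principal case $\|\pi(z)\| \le \delta$, we have $\pi(z) \in \Gamma_\delta$, so the polar inequality applied to $\pi(z)$ reads $\langle u, \pi(z)\rangle \le \eps/8$; unwinding yields $\langle u, z'\rangle \le (z_d + \eps)/8$. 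Combined with the key inequality and $y \ge 1$, this forces $z_d \le \eps/7$ and $\|z - p\|^2 \le \eps/7$, so $\|z - p\| < \sqrt{\eps} = \delta$.

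To rule out the other case $\|\pi(z)\| > \delta$, I apply the polar inequality to the rescaled point $v = \delta\, \pi(z)/\|\pi(z)\|$, which lies on the segment from $p = 0$ to $\pi(z)$ and hence belongs to $\Gamma \cap B_\delta = \Gamma_\delta$ (using $p \in \Gamma$ and convexity). This yields $\langle u, z'\rangle \le \|z'\|\sqrt{\eps}/8$; substituting into the key inequality and using $\|z'\| \le \|z - p\|$ gives $\|z - p\| \le \sqrt{\eps}/8$, and then $\|\pi(z)\| \le \|z'\| \le \|z - p\| \le \sqrt{\eps}/8 < \delta$, contradicting the case hypothesis.

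For the visibility condition, a direct expansion in coordinates yields
\[
    \langle p_\eps - z,\, q - z\rangle \;=\; y(\eps + z_d) + \|z - p\|^2 + \eps\, z_d - \langle u, z'\rangle.
\]
Inserting the principal-case bound $\langle u, z'\rangle \le (\eps + z_d)/8$ together with $y \ge 1$ shows this is at least $7(\eps + z_d)/8 > 0$, so $p_\eps$ lies on the non-$K$ side of the supporting hyperplane at $z$ orthogonal to $q - z$; that hyperplane therefore crosses $\overline{p\, p_\eps}$, confirming $z \in D$. Combined with $\|z - p\| < \delta$, we conclude $z \in D_\delta$, which completes the proof. The main technical delicacy is the bookkeeping in the scaled case; the choice $r = \sqrt{\eps/8}$ and the hypothesis $y \ge 1$ are exactly what make the resulting inequalities self-contradict.
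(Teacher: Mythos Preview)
Your proof is correct and takes a genuinely different route from the paper's. The paper argues by contrapositive: it fixes a point $z \in H$ whose nearest boundary point $w$ lies outside $D_\delta$, reduces to the $2$-plane through $p$, $p_\eps$, $w$, and then runs a case analysis on the slope $t_\theta$ of the tangent line from $p_\eps$ (formalized in the helper Lemma~\ref{lem:dual-subset-helper}) to show $z \notin \polarX{r}{\Gamma_\delta}$. You instead argue directly in $\RE^d$: starting from $u \in \polarX{r}{\Gamma_\delta}$, you let $z$ be the metric projection of $q=(u,-y)$ onto $K$ and exploit the variational inequality $\langle q-z,\,p-z\rangle \le 0$ together with the polar constraint applied to the point $\pi(z)\in\Gamma$ (or its radial rescaling onto $\partial B_\delta$). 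The two cases $\|\pi(z)\|\le\delta$ and $\|\pi(z)\|>\delta$ replace the paper's slope dichotomy, and the second case self-destructs cleanly.

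What you gain is a shorter, fully $d$-dimensional argument that dispenses with the planar reduction and the separate helper lemma; the projection inequality does the work that the paper extracts from the $2$-dimensional tangent geometry. What the paper's approach buys is a more explicit geometric picture (the tangent point $t$ and the comparison point $v'$ on $H$), which it later reuses when adapting the argument to the smoothed body $K'$ in Lemma~\ref{lem:smooth-to-polar}; your coordinate argument would transfer there as well, but the paper's version makes that transfer visually immediate.
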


The proof is rather technical and involves a reduction to the problem in 2-dimensional space. Before giving the proof, it will help to provide some intuition regarding the relationship between $\Vor(D_{\delta}) \cap H$ and the polar of $\Gamma_{\delta}$. 

For the sake of simplicity, let us consider just the 2-dimensional setting. Let $t$ denote a point of tangency on $\partial K$ with respect to $p_{\eps}$ (see Figure~\ref{fig:dual-subset-intuition}), and let $v$ be the intersection of the line segment $p_{\eps} t$ with $h$. Shoot a ray from $t$ perpendicular to $\partial K$ until it intersects $H$. Let $q$ denote this intersection point. Since $K$ is convex, all the points on the segment $p'' q$ have their nearest neighbor on the portion of $\partial K$ between $p$ and $t$, that is, they all lie within $\Vor(D)$. Observe that if we translate this perpendicular line so that it emanates from $p_{\eps}$ instead of $t$, it will hit $H$ at a point $q'$ that is closer to $p''$. Therefore, the segment $p'' q'$ also lies within $\Vor(D)$. Let $\ell$ denote the distance between $p_{\eps}$ and $p''$. By similar triangles, it is easy to see that the length of $p'' q'$ is $\ell \cdot \eps/\|p v\|$. Since $v \in \Gamma$, $q'$ lies within $\polarX{r'}{\Gamma}$, where $r' = \sqrt{\ell \cdot \eps}$. Because $y \ge 1$ and $\eps \le 1/8$, we have $r' = \Omega(\sqrt{\eps})$. This observation generalizes readily to higher dimensions, and it follows that $\polarX{r'}{\Gamma} \subseteq \Vor(D) \cap H$. We will show how to generalize this intuition to higher dimensions and the $\delta$-restricted setting.

\begin{figure}[htbp]
  \centerline{\includegraphics[scale=0.40]{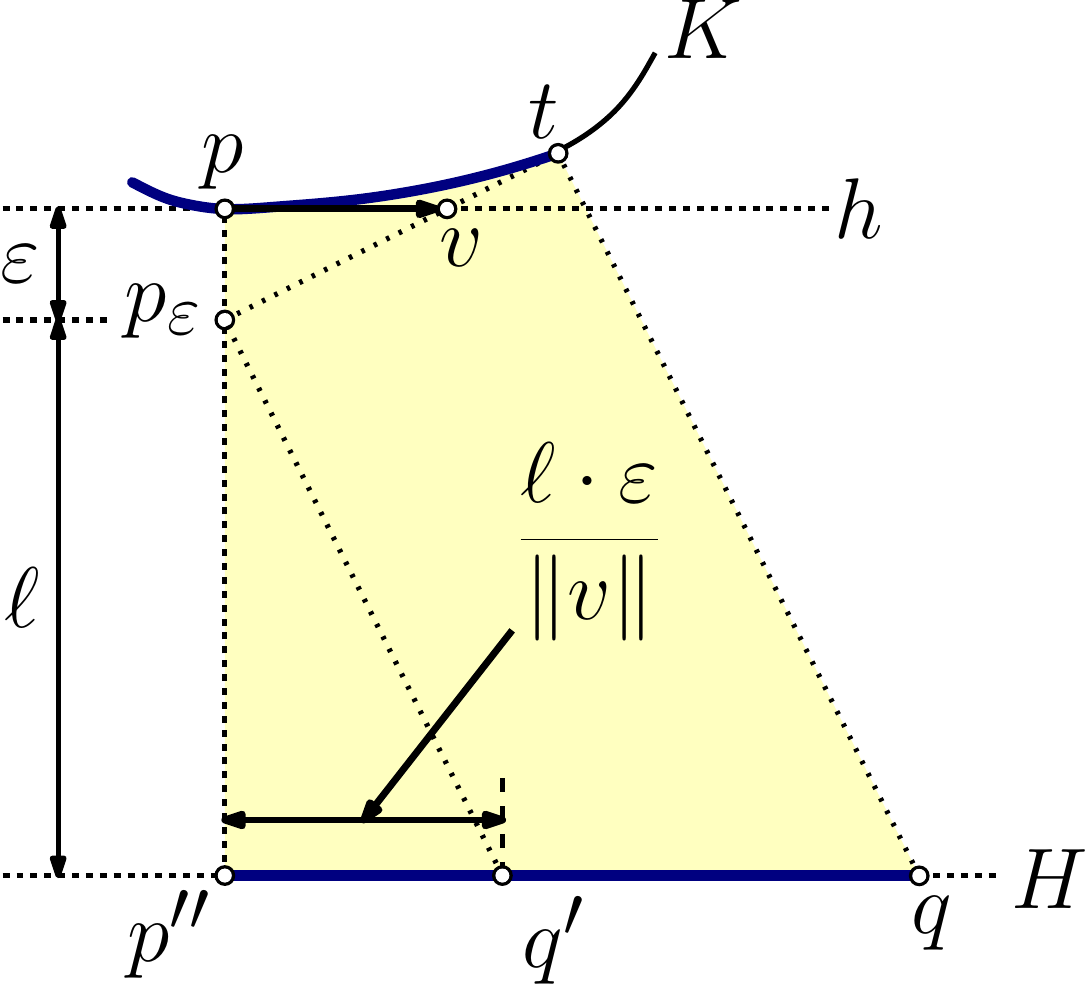}}
  \caption{The relationship between $\Vor(D_{\delta}) \cap H$ and the polar of $\Gamma_{\delta}$.}
  \label{fig:dual-subset-intuition}
\end{figure}

For any $z \in H$ let $w$ denote its nearest neighbor on $\partial K$. In order to prove Lemma~\ref{lem:dual-subset}, it suffices to show that if $w \notin D_{\delta}$ (implying that $z \notin \Vor(D_{\delta}) \cap H$), then $z \notin \polarX{r}{\Gamma_{\delta}}$. By our assumption that $H$ lies below $K$ it follows that $w$ lies on the ``lower surface'' of $\partial K$ (meaning that a vertical ray directed downwards from $w$ does not intersect the interior of $K$). Since $w$ is not in the restricted cap, we know that either $w \notin D$ or $w \notin B_{\delta}$.

It will simplify the analysis to reduce the problem to a 2-dimensional setting. Consider the plane $\Phi$ that contains the points $p$, $p_{\eps}$, and $w$. (Note that these points are not collinear.) Let $t$ be the point of tangency on $\partial K \cap \Phi$ with respect to $p_{\eps}$ that lies on the same side as $w$ (see Figure~\ref{fig:dual-subset}(a)). Let $v$ be the intersection of the line segment $p_{\eps} t$ with $h$. We may identify $\Phi$ with $\RE^2$ by imposing a coordinate system on $\Phi$ where the origin is at $p$, the $y$-axis is directed upwards away from $p_{\eps}$ and the $x$-axis is parallel to the vector from $p$ to $v$. Given a point $u \in \Phi$, let $u_x$ and $u_y$ denote its coordinates relative to this coordinate system. Further, if $u \in \partial K \cap \Phi$, let $u_{\theta}$ denote the slope of the (unique) supporting line on $\Phi$ passing through $u$. Note that $z$ need not lie on $\Phi$. Let $z'$ be the orthogonal projection of $z$ onto $\Phi$. Observe that $t_{\theta} = \eps/\|p v\|$, and therefore $\|p v\| = \eps/t_{\theta}$. By our choice of coordinate system and assumptions about orientations, the coordinates of $w$, $t$, and the slopes $w_{\theta}$ and $t_{\theta}$ are all nonnegative quantities. 

\begin{figure}[htbp]
  \centerline{\includegraphics[scale=0.40]{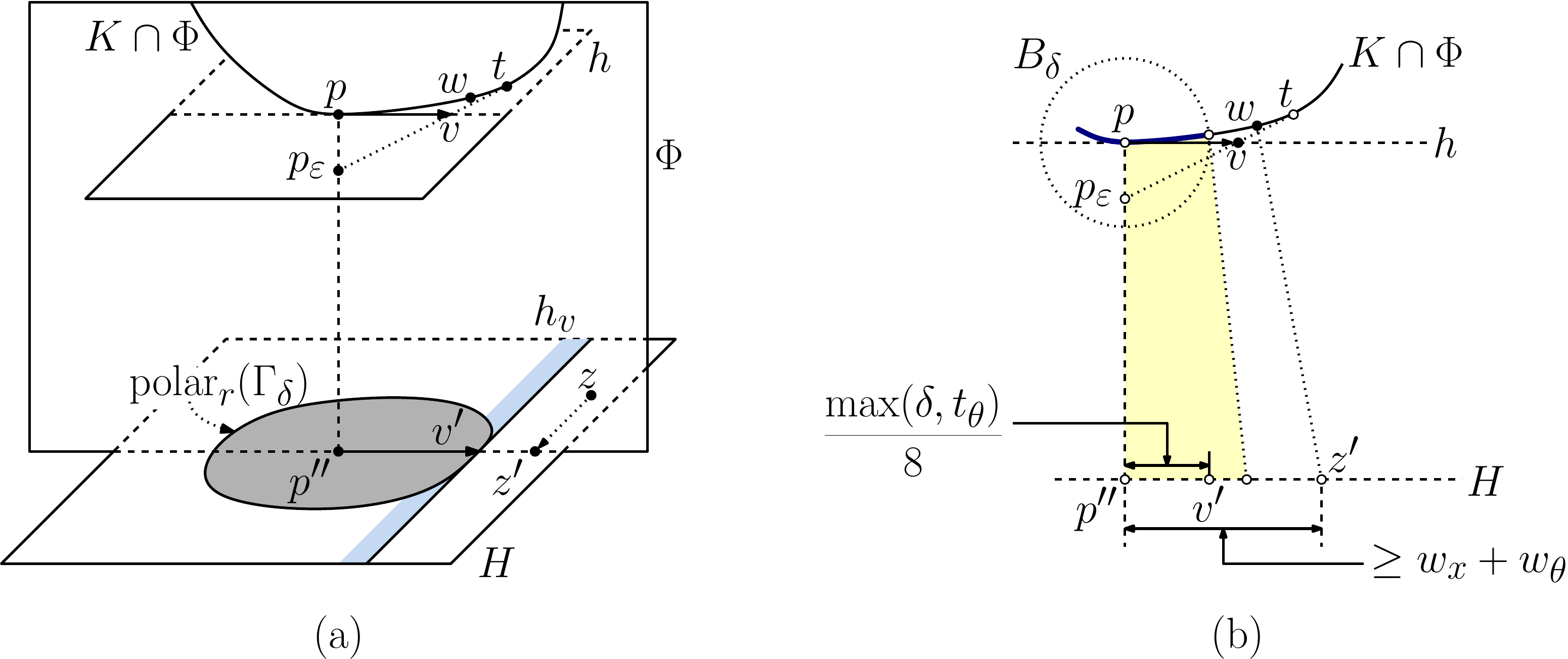}}
  \caption{The reduction to the plane $\Phi$.}
  \label{fig:dual-subset}
\end{figure}

The point $v$ lies on the base $\Gamma$ of $p$'s unrestricted dual cap. By employing our coordinate system on $h$, we can identify $v$ with a vector in $\RE^{d-1}$ (emanating from $p$). If $\|p v\| \le \delta$, then $v$ contributes a bounding halfspace to $\polarX{r}{\Gamma_{\delta}}$. This halfspace is bounded by a hyperplane that is orthogonal to $v$ and lies at distance $r^2/\|p v\|$ from the origin. Let us think of this halfspace, which we denote by $h_v$, as lying on $H$ (see Figure~\ref{fig:dual-subset}(a)). Recalling that $r = \sqrt{\eps/8}$, the distance of $h_v$'s bounding hyperplane to the origin $p''$ is $(\eps/8)/\|p v\| = t_{\theta}/8$. On the other hand, if $\|p v\| > \delta$, then $v$ lies outside the restricted base. In this case $v$'s subvector of length $\delta$ lies on the boundary of the restricted base and contributes to $\polarX{r}{\Gamma_{\delta}}$ a halfspace whose bounding hyperplane is at distance $(\eps/8)/\delta$ from the origin. Recalling that $\delta = \sqrt{\eps}$, this is equal to $\delta/8$. Thus, in either case, $\polarX{r}{\Gamma_{\delta}}$ is bounded by a halfspace whose defining hyperplane is orthogonal to $v$ and lies at distance $\max(\delta, t_{\theta})/8$ from the origin. This hyperplane intersects the horizontal line $H \cap \Phi$ at some point $v'$ that lies to the right of $p''$ at distance $v'_x = \max(\delta, t_{\theta})/8$ (see Figure~\ref{fig:dual-subset}(b)). 

Because the hyperplane passing through $v'$ is orthogonal to $v$, in order to show that $z \notin \polarX{r}{\Gamma_{\delta}}$, it suffices to show that $z'$ does not lie within $h_v$, which is equivalent to showing that $z'_x > v'_x$. We have thus reduced the problem to a two-dimensional setting.

Recall that $w$ is the closest point to $z$ on $\partial K$. We assert that $w$ is also the closest point to $z'$ on $\partial K \cap \Phi$. The reason is that the squared distance from $z$ to any point on $\partial K \cap \Phi$ can be expressed as the sum of the squared distance from $z'$ to this point and the squared distance from $z$ to $z'$. Since the latter quantity is the same for all points on $\Phi$, the closest point to $z$ is also the closest point to $z'$. From basic properties of convexity, it follows that the line $w z'$ is orthogonal to the support line passing through $w$ on $\partial K \cap \Phi$. Therefore, the slope of $w z'$ (in $\Phi$'s coordinate system) is $-1/w_{\theta}$, and in particular we have $(z'_x - w_x)/(z'_y - w_y) = -w_{\theta}$. Since $h$ and $H$ are separated by at least unit distance (with $h$ above $H$), we have $w_y - z'_y \ge 1$, and so $z'_x \ge w_x + w_{\theta}$. 

Thus, to complete the proof of Lemma~\ref{lem:dual-subset}, it suffices to show that if $w \notin D_{\delta}$ then $v'_x < w_x + w_{\theta}$. We first establish two useful technical results. These results will be applied in a context where $w$ lies within the unrestricted dual cap but outside the restricted dual cap. That is, when $w_x \le t_x$ but $w \notin B_{\delta}$. The first result shows that if $t_{\theta}$ is sufficiently small, the slope of the line $p w$ is at most unity. The second shows that if $t_{\theta}$ is sufficiently large, the slope of $p w$ is not much smaller than the slope of $t$'s supporting line.

\begin{restatable}{lemma}{DualSubsetHelperStmt}\label{lem:dual-subset-helper}
Given the preconditions of Lemma~\ref{lem:dual} and the aforementioned 2-dimensional reduction, and given $w$ and $t$ as introduced above, where $w_x \le t_x$ and $w \notin B_{\delta}$:
\begin{enumerate}
\item[$(i)$] if $t_{\theta} \le \delta\sqrt{8}$, then $w_y/w_x \le 1$, and

\item[$(ii)$] if $t_{\theta} > \delta\sqrt{8}$, then $w_y/w_x \ge t_{\theta}/2$.
\end{enumerate}
\end{restatable}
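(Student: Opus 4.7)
My plan is to work entirely within the 2-dimensional setup on $\Phi$, using the coordinate system with $p$ at the origin and $h \cap \Phi$ along the $x$-axis. Write the arc of $\partial K \cap \Phi$ between $p$ and $t$ as the graph of a convex function $f \colon [0, t_x] \to [0, t_y]$ with $f(0)=0$ and $f'(0)=0$ (since $h$ is tangent to $K$ at $p$). The key geometric fact is that the tangent line to $\partial K \cap \Phi$ at $t$ passes through $p_{\eps} = (0,-\eps)$ and has slope $t_{\theta}$, so its equation is $y = t_{\theta} x - \eps$. By convexity, this tangent line lies below the graph on $[0, t_x]$, giving the lower bound $w_y \ge t_{\theta} w_x - \eps$. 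Likewise, since $f$ is convex with $f(0)=0$, the chord from $p$ to $t$ lies above the graph, so $w_y \le (t_y/t_x)\,w_x \le t_{\theta} w_x$.

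For part~(i), I would simply combine the chord upper bound $w_y/w_x \le t_{\theta}$ with the hypothesis $t_{\theta} \le \delta\sqrt{8} = \sqrt{8\eps}$ and the assumption $\eps \le 1/8$ to get $w_y/w_x \le 1$ directly; no additional structure is required.

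For part~(ii), the strategy is to rearrange the lower bound into $w_y/w_x \ge t_{\theta} - \eps/w_x$, so it suffices to control $\eps/w_x$. The constraint $w \notin B_{\delta}$ provides $w_x^2 + w_y^2 > \eps$, and combining this with $w_y \le t_{\theta} w_x$ yields $w_x^2(1 + t_{\theta}^2) > \eps$, hence
\[
  \frac{\eps}{w_x} < \sqrt{\eps (1 + t_{\theta}^2)}.
\]
So I need to verify that $\sqrt{\eps(1 + t_{\theta}^2)} \le t_{\theta}/2$, which after squaring becomes $t_{\theta}^2 (1 - 4\eps) \ge 4\eps$. Under the hypothesis $t_{\theta} > \delta\sqrt{8}$, we have $t_{\theta}^2 > 8\eps$, and under $\eps \le 1/8$ we have $1 - 4\eps \ge 1/2$, so $t_{\theta}^2(1-4\eps) > 4\eps$ as required. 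This gives $w_y/w_x \ge t_{\theta}/2$.

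The main obstacle is not really a conceptual one but rather ensuring the geometric setup is handled cleanly: one has to justify that $w_x \ge 0$ (it does, since $t$ was chosen on the side of $w$ and the $x$-axis points toward $v$), that $w_y \ge 0$ (immediate from $K$ lying above $h$), and that the two convexity inequalities apply even if $\partial K$ is not smooth at $w$ (they do, since both merely use the supporting-line inequality at $t$ and the chord bound from $p$ to $t$, neither of which invokes a supporting line at $w$ itself). Once these routine points are settled, the two claims reduce to the short calculations above.
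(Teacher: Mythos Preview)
Your proof is correct. Part~(i) is essentially identical to the paper's argument: both use the chord inequality $w_y/w_x \le t_y/t_x \le t_{\theta}$ together with $t_{\theta} \le \delta\sqrt{8} \le 1$.

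For part~(ii) you take a cleaner route than the paper. The paper also starts from the tangent-line lower bound $w_y \ge t_{\theta} w_x - \eps$, but then splits into two cases according to whether $w_x > \delta/\sqrt{2}$ or $w_y > \delta/\sqrt{2}$ (one of which must hold since $w \notin B_{\delta}$), handling each with a short separate estimate. You instead reuse the chord upper bound $w_y \le t_{\theta} w_x$ from part~(i) to convert $w_x^2 + w_y^2 > \eps$ directly into the single lower bound $w_x > \sqrt{\eps/(1+t_{\theta}^2)}$, and then a one-line algebraic check finishes the job. Both arguments are short, but yours avoids the case analysis entirely; the trade-off is that the paper's version never needs the chord bound in part~(ii), only the tangent bound.
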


The proof involves simple geometry and is given in the appendix.

\medskip

We are now in position to complete the proof of Lemma~\ref{lem:dual-subset}. Recall that our objective is to show that if $w \notin D_{\delta}$ then $v'_x < w_x + w_{\theta}$, where $v'_x = \max(\delta, t_{\theta})/8$. We consider two cases, depending on $t_{\theta}$. First, if $t_{\theta} \le \delta\sqrt{8}$, then $v'_x \le \max(\delta, \delta\sqrt{8})/8 = \delta/\sqrt{8}$. Since the line $p_{\eps} t$ has slope $t_{\theta}$ and $t_y \ge 0$, we have $t_x = (t_y + \eps)/t_{\theta} \ge \eps/t_{\theta} \ge \delta/\sqrt{8}$. We consider two subcases. If $w_x > t_x$, then we have $w_x + w_{\theta} > t_x \ge \delta/\sqrt{8} \ge v'_x$, as desired. On the other hand, if $w_x \le t_x$, then $w$ is inside the unrestricted cap $D$. Since by our hypothesis, $w$ is not in the restricted cap, it must be that $w \notin B_{\delta}$, that is, $w_x^2 + w_y^2 > \delta^2$. By Lemma~\ref{lem:dual-subset-helper}(i), we have $w_x \ge w_y$. Therefore, $2 w_x^2 \ge w_x^2 + w_y^2 > \delta^2$, which implies that $w_x > \delta/\sqrt{2}$. Therefore, $w_x + w_{\theta} \ge w_x > \delta/\sqrt{2} > v'_x$, as desired.

For the second case, assume that $t_{\theta} > \delta\sqrt{8}$. In this case $v'_x = t_{\theta}/8$. As before, we consider two subcases. If $w_x > t_x$, then by convexity $w_{\theta} \ge t_{\theta}$, and so $w_x + w_{\theta} \ge t_{\theta} > v'_x$, as desired. On the other hand, if $w_x \le t_x$, then since $w$ lies within the unrestricted cap, we may infer that $w \notin B_{\delta}$. By Lemma~\ref{lem:dual-subset-helper}(ii), we have $w_y/w_x \ge t_{\theta}/2$. Because the support line at $w$ passes below the origin, we also have $w_{\theta} \ge w_y/w_x$. Therefore $w_x + w_{\theta} \ge w_y/w_x \ge t_{\theta}/2 > v'_x$. This completes the proof of Lemma~\ref{lem:dual-subset}.

\medskip

Because it is easier to deal with flat objects than curved ones, before returning to the proof of Lemma~\ref{lem:dual}(i), we show that the area of the restricted dual cap is, up to a constant factor, bounded below by the area of its base. This result is straightforward for unrestricted caps, since it is easy to show that the base is contained within the orthogonal projection of the dual cap onto $h$. However, restriction complicates the analysis. The proof involves a technical geometric argument and is presented in the appendix.

\begin{restatable}{lemma}{BaseCapAreaStmt}\label{lem:base-cap-area}
Given the preconditions of Lemma~\ref{lem:dual}, it follows that $\area(D_{\delta}) \ge \area(\Gamma_{\delta}) / 2^{d-1}$.
\end{restatable}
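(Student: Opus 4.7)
The plan is to exhibit an area-nondecreasing injection from a sizable piece of $\Gamma_\delta$ into $D_\delta$ via the orthogonal projection $\pi_\perp : \RE^d \to h$. Since $h$ supports $K$ at $p$ and $K$ lies above $h$, the side of $K$ facing $p_\eps$ is the graph of a convex function $f : K_\perp \to [0,\infty)$ with $f(0) = 0$, where $K_\perp = \pi_\perp(K)$. Write $y^-(q) = (q, f(q))$ for the inverse of $\pi_\perp$ restricted to this lower boundary; since $f$ is convex and the area element on its graph is $\sqrt{1 + |\nabla f|^2}\, dq \ge dq$, $\pi_\perp$ is area-nonincreasing there. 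A quick outward-normal check shows that $D$ is in fact contained in this lower boundary (upward-facing pieces fail visibility from $p_\eps$, and vertical-wall pieces over $\partial K_\perp$ fail it because $p \in K_\perp$). Consequently $\area(D_\delta) \ge \area(\pi_\perp(D_\delta))$.

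The crux is the containment $\pi_\perp(D_\delta) \supseteq \Gamma_{\delta/2}$, which is obtained by proving $y^-(\Gamma_{\delta/2}) \subseteq D_\delta$. Two facts feed into this. First, $\Gamma \subseteq K_\perp$ and $f(q) \le \eps$ for every $q \in \Gamma$: unpacking the definition of $\Gamma$, any such $q$ lies on a segment $\overline{p_\eps y^*}$ for some $y^* \in K$, and by convexity the segment $\overline{p y^*}$ is in $K$; its point directly above $q$ has height $\eps\, y^*_d/(\eps + y^*_d) \le \eps$, upper-bounding $f(q)$. Second, $y^-(q) \in D$ for every $q \in \Gamma$; this is the visibility claim addressed below. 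Granting these, for $q \in \Gamma_{\delta/2}$ one has
\[
  \|y^-(q) - p\|^2 \;=\; \|q - p\|^2 + f(q)^2 \;\le\; \delta^2/4 + \eps^2 \;\le\; \delta^2/4 + \delta^2/8 \;<\; \delta^2,
\]
using $\delta^2 = \eps \le 1/8$, hence $y^-(q) \in D \cap B_\delta(p) = D_\delta$. Finally, convexity of $\Gamma$ combined with $p \in \Gamma$ shows that the homothety $\tau(q) = p + (q-p)/2$ sends $\Gamma_\delta$ injectively into $\Gamma_{\delta/2}$ with constant Jacobian $(1/2)^{d-1}$, so $\area(\Gamma_{\delta/2}) \ge \area(\Gamma_\delta)/2^{d-1}$. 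Chaining,
\[
  \area(D_\delta) \;\ge\; \area(\pi_\perp(D_\delta)) \;\ge\; \area(\Gamma_{\delta/2}) \;\ge\; \area(\Gamma_\delta)/2^{d-1}.
\]

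The main obstacle is the visibility claim $y^-(q) \in D$ for $q \in \Gamma$. Representing an outward normal at $y^-(q)$ as proportional to $(g, -1)$ for some subgradient $g \in \partial f(q)$, visibility from $p_\eps$ reduces to the algebraic inequality $q \cdot g - f(q) \le \eps$. One derives this by combining the subgradient inequality $f(y^*_h) \ge f(q) + g \cdot (y^*_h - q)$ with the radial relation $q = y^*_h \cdot \eps/(\eps + f(y^*_h))$ that encodes $q \in \overline{p_\eps y^*}$, followed by a short calculation. The subgradient formulation handles possibly non-smooth $K$ uniformly, so no further ideas are needed.
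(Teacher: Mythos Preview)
Your argument is correct and follows the same overall strategy as the paper's proof: both establish the containment $\tfrac{1}{2}\Gamma_\delta \subseteq \pi_\perp(D_\delta)$ (the paper writes this as $\tfrac{1}{2}\Gamma_\delta \subseteq D^{\downarrow}_\delta$) and conclude via the $(d-1)$-dimensional scaling factor. The execution differs in two respects. First, the paper reduces to a two-dimensional plane $\Phi$ through $p$, $p_\eps$, and $v$, then splits into cases according to whether the lift $w = y^-(v)$ already lies in $B_\delta$; only in the second case does it derive the height bound $w_y \le \eps$ (via tangent-line slopes) and then check that $y^-(v/2) \in B_\delta$. You instead prove the uniform height bound $f(q) \le \eps$ for \emph{every} $q \in \Gamma$ by a direct convex-combination argument, which lets you avoid the case split and work in full dimension throughout. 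Second, the paper's planar picture makes the visibility of $y^-(q)$ from $p_\eps$ geometrically evident (the lift lies on the arc between $p$ and the tangency point $t$), whereas you verify it algebraically via the subgradient inequality; your derivation of $g\cdot q - f(q) \le \eps$ goes through because the subgradient inequality at $q$ applied to $y^*_h$, together with $y^*_d \ge f(y^*_h)$ and the radial relation, forces $g\cdot q \le \eps$ whenever $y^*_d > 0$. Your route is a bit more streamlined; the paper's is more pictorial.
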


\medskip

We are now ready to prove Lemma~\ref{lem:dual}(i). Recall that $r = \sqrt{\eps/8}$. As observed earlier, $\polarX{r}{\Gamma_{\delta}}$ is a scaled copy of $\polar{\Gamma_{\delta}}$ by a factor of $r^2$, and therefore (since these are $(d-1)$-dimensional bodies) we have $\area(\polarX{r}{\Gamma_{\delta}}) = r^{2(d-1)} \cdot \area(\polar{\Gamma_{\delta}})$. By applying Lemma~\ref{lem:dual-subset}, we have
\[
	\area(\Vor(D_{\delta}) \cap H)
		~ \ge ~ \area(\polarX{r}{\Gamma_{\delta}})
		~  =  ~ r^{2(d-1)} \cdot \area(\polar{\Gamma_{\delta}}).
	\label{eqn:dual-subset-bound}
\]
By Lemma~\ref{lem:base-cap-area}, $\area(D_{\delta}) \ge \area(\Gamma_{\delta}) / 2^{d-1}$, and therefore
\begin{eqnarray*}
	\area(D_{\delta}) \cdot \area(\Vor(D_{\delta}) \cap H)
		& \ge & \frac{\area(\Gamma_{\delta})}{2^{d-1}} \cdot r^{2(d-1)} \cdot \area(\polar{\Gamma_{\delta}}) \\
		& \ge & \left( \frac{r^2}{2} \right)^{\N d-1} \area(\Gamma_{\delta}) \cdot \area(\polar{\Gamma_{\delta}}).
\end{eqnarray*}
We now apply the Mahler-volume bound. By Lemma~\ref{lem:mahler} (in $\RE^{d-1}$), there exists a constant $c_m$ (depending only on $d$) such that $\area(\Gamma_{\delta}) \cdot \area(\polar{\Gamma_{\delta}}) \ge c_m$. Therefore,
\[
	\area(D_{\delta}) \cdot \area(\Vor(D_{\delta}) \cap H)
		~ \ge ~ c_m \left( \frac{r^2}{2} \right)^{d-1} 
		~  =  ~ c_m \left( \frac{\eps}{16} \right)^{d-1}.
\]
Selecting any $c'_a \le c_m/16^{d-1}$ establishes Lemma~\ref{lem:dual}(i).

\bigskip

Next, let us establish Lemma~\ref{lem:dual}(ii). Recall that we assume that $K$ is fat and of diameter at least $2 \eps$. In particular, let us assume that $K$ is $\gamma$-fat, where $\gamma$ is a constant independent of $n$ and $\eps$ that lies in the interval $(0,1]$. (As a result of Lemma~\ref{lem:fat}, we may assume that $\gamma$ is $1/d$ when applying this result.)

It is natural to try to generalize the approach used in part~(i). First, we would show that 
\[
	\area(\Vor(D_{\delta}) \cap S) = \Omega\big(r^{2(d-1)} \cdot \area(\polar{\Gamma_{\delta}})\big)
	\quad\hbox{and}\quad
	\area(D_{\delta}) = \Omega(\area(\Gamma_{\delta})),
\]
and then we would apply the Mahler-volume bound to yield a lower bound on the product $\area(\Gamma_{\delta}) \cdot \area(\polar{\Gamma_{\delta}})$. A problem arises, however, if $K$ is not smooth. In particular, if some portion of the boundary of $K$ in $p$'s vicinity is nearly vertical, then the boundary of $\Gamma_{\delta}$ can be arbitrarily close to the origin (namely $p$), implying that $\polar{\Gamma_{\delta}}$ cannot be bounded, and hence its area can be arbitrarily large. This was not an issue in part~(i), because $H$ is also unbounded. But since $S$ is bounded, $\area(\Vor(D_{\delta}) \cap S)$ cannot be arbitrarily large. We will remedy this by smoothing $K$ by taking its Minkowski sum with a small Euclidean ball of radius $O(\eps)$. We shall see (in the proof of Lemma~\ref{lem:smooth-to-polar}) that this allows us to constrain the area of $\polar{\Gamma_{\delta}}$. This smoothing operation requires us to adapt many of the prior results of this section to this new context.

To construct the smoothed body, for the remainder of this section define $\eps' = \eps/2$, and let $K' = K \oplus \eps'$ (see Figure~\ref{fig:smooth}(a)). Recall that $h$ denotes the supporting hyperplane at $p$ and $p_{\eps}$ is the point at distance $\eps$ from $p$ in the direction orthogonal to $h$. As before, for the sake of illustration, let us assume that $p_{\eps}$ is vertically below $p$. Let $p'$ be the midpoint of the segment $p p_{\eps}$. Clearly, $p' \in \partial K'$, and the parallel hyperplane $h'$ passing through $p'$ is a supporting hyperplane for $K'$.

\begin{figure}[htbp]
  \centerline{\includegraphics[scale=0.40]{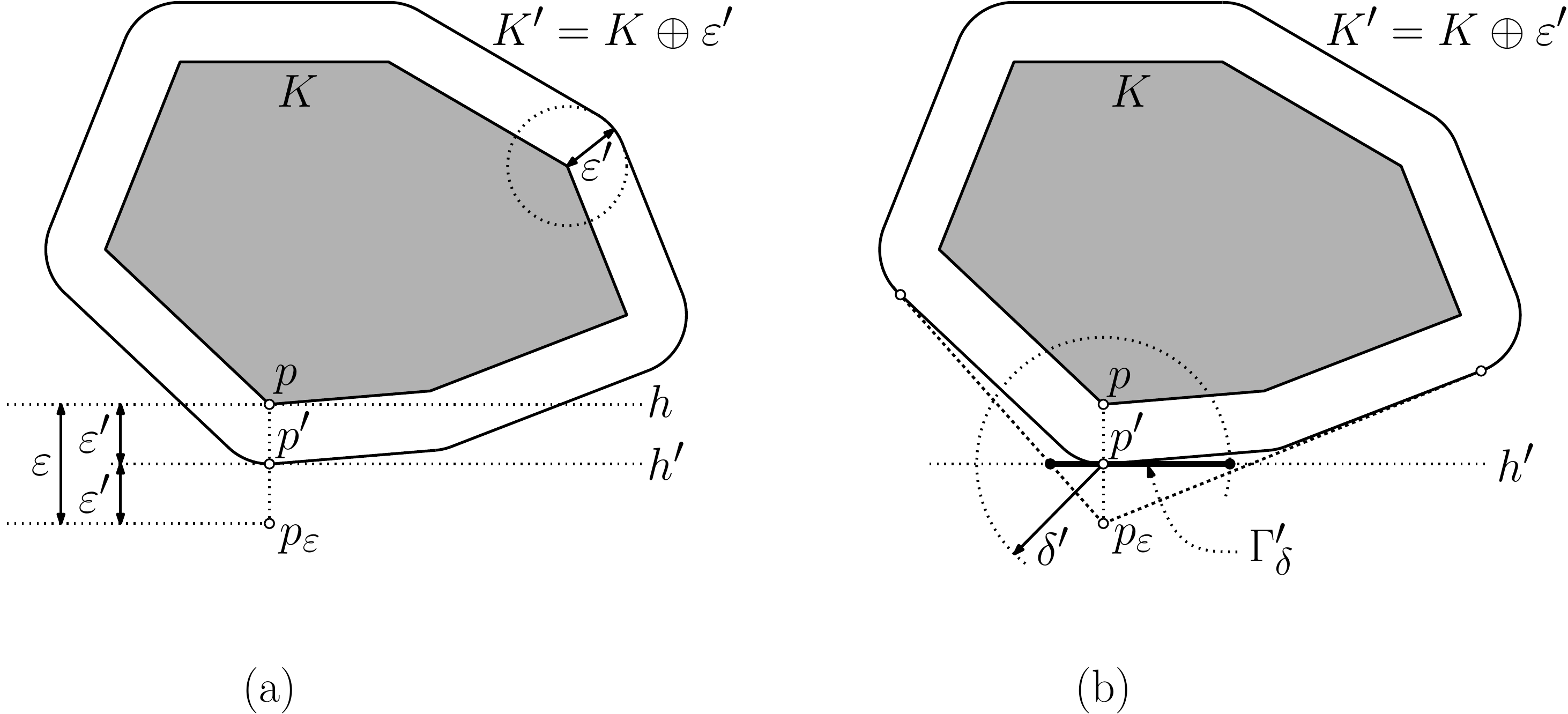}}
  \caption{The smoothed body $K'$.}
  \label{fig:smooth}
\end{figure}

Let us also define the dual base in this smoothed context. Define $\Gamma'$ to be the intersection of $h'$ and $\conv(K' \cup \{p_{\eps}\})$. Let $\delta' = \sqrt{\eps'} = \delta/\sqrt{2}$, and define the restricted base $\Gamma'_{\delta}$ to be the intersection of $\Gamma'$ and a ball of radius $\delta'$ centered at $p'$ (see Figure~\ref{fig:smooth}(b)). Our analysis will be based on $K'$ and $\Gamma'_{\delta}$, as opposed to $K$ and $\Gamma_{\delta}$. Our first objective will be to show that the area of $\Gamma'_{\delta}$ is not significantly larger than that of $\Gamma_{\delta}$. As before, we endow $h$ and $h'$ with parallel coordinate frames whose origins are located at $p$ and $p'$, respectively. Then we can think of $\Gamma_{\delta}$ and $\Gamma'_{\delta}$ as convex sets in $\RE^{d-1}$. The following lemma relates these two bodies.

\begin{restatable}{lemma}{SmoothAreaStmt}\label{lem:smooth-area}
Given a convex body $K$ that is $\gamma$-fat and of diameter at least $2 \eps$ and given $\Gamma_{\delta}$ and $\Gamma'_{\delta}$ as defined above, there exists a constant $c$ (depending on $\gamma$ and the dimension $d$) such that $\area(\Gamma'_{\delta}) \le c \cdot \area(\Gamma_{\delta})$.
\end{restatable}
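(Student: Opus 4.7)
The plan is to compare $\Gamma_\delta$ and $\Gamma'_\delta$ by projecting them onto a common hyperplane. Endow $h$ and $h'$ with parallel coordinate frames centered at $p$ and $p'$, identifying both with $\RE^{d-1}$. Since $h'$ is the midplane between $p_\eps$ and $h$, the perspective projection from $p_\eps$ sends a point $q$ above $h$ to an $h'$-crossing with exactly half the horizontal coordinate of its $h$-crossing. Consequently, letting $\tilde{\Gamma} := h \cap \conv(K' \cup \{p_\eps\})$ denote the perspective shadow of $K'$ on $h$, we have $\Gamma' = \tfrac{1}{2}\tilde{\Gamma}$ under this identification, and since $\delta' = \delta/\sqrt{2}$ the ball $B_{\delta'}(p')$ doubles to $B_{\sqrt{2\eps}}(p)$. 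Thus $\area(\Gamma'_\delta) = 2^{-(d-1)}\, \area(\tilde{\Gamma} \cap B_{\sqrt{2\eps}}(p))$, and it suffices to show $\area(\tilde{\Gamma} \cap B_{\sqrt{2\eps}}(p)) \le c_1\, \area(\Gamma_\delta)$ for a constant $c_1 = c_1(\gamma, d)$.

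Next, I relate $\tilde{\Gamma}$ to $\Gamma$ pointwise. For each $v \in \tilde{\Gamma}$, the ray from $p_\eps$ through $v$ hits $K' = K \oplus (\eps/2)$, so it also passes within distance $\eps/2$ of $K$. Choosing a nearest point $q \in K$ to the intersection point $q' \in K'$ on this ray, the shadow $w = \eps q_h/(\eps + q_y) \in \Gamma$ of $q$ satisfies, via direct substitution of $q' - q \in B_{\eps/2}(0)$ into the perspective formula,
\[
    |v - w| \ \le\ c_0(\eps + |v|) \qquad \text{and} \qquad |w| \ \le\ c_0 (|v| + \eps),
\]
for a universal constant $c_0$. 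In particular, for $v \in B_{\sqrt{2\eps}}(p)$ both $|v-w|$ and $|w|$ are bounded by $c_0\sqrt{\eps}$ (up to enlarging $c_0$), yielding the inclusion
\[
    \tilde{\Gamma} \cap B_{\sqrt{2\eps}}(p) \ \subseteq\ \bigl(\Gamma \cap B_{c_0\sqrt{\eps}}(p)\bigr) \oplus c_0\sqrt{\eps}.
\]

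Finally, I bound the right-hand side in terms of $\area(\Gamma_\delta)$ using $\gamma$-fatness of $K$. The convex-doubling inequality (which uses $p \in \Gamma$ and convexity) gives $\area(\Gamma \cap B_{c_0\sqrt{\eps}}(p)) \le c_0^{d-1}\, \area(\Gamma_\delta)$. For the Minkowski expansion by $c_0\sqrt{\eps}$, observe that by $\gamma$-fatness and $\diam(K) \ge 2\eps$, $K$ contains an inscribed ball of radius at least $\gamma\eps$, whose perspective shadow from $p_\eps$ onto $h$ is a convex region of diameter $\Omega(\gamma\eps)$ contained in $\Gamma$. Combining this with $p \in \Gamma$ and convexity, $\Gamma_\delta$ contains a wedge-like region emanating from $p$ with sufficient width in every direction relevant to the expansion that the $c_0\sqrt{\eps}$-expansion multiplies the area by only a $\gamma,d$-dependent constant. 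Combining the two bounds yields $\area(\tilde{\Gamma} \cap B_{\sqrt{2\eps}}(p)) \le c_1\, \area(\Gamma_\delta)$, and dividing by $2^{d-1}$ gives the lemma.

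The main obstacle is the final step, namely controlling the Minkowski expansion by $c_0\sqrt{\eps}$ when $\area(\Gamma_\delta)$ may be as small as $\Theta((\gamma\eps)^{d-1})$. The challenge is that $c_0\sqrt{\eps}$ is much larger than $\gamma\eps$, so a naive inradius-based bound $\vol(C \oplus r) \le (1 + r/\rho)^{d-1}\vol(C)$ would yield a factor $O(\eps^{-(d-1)/2})$ rather than a constant. Avoiding this requires carefully exploiting the position of the inscribed ball's perspective shadow within $\Gamma$ (which may be offset from $p$ by as much as $O(\sqrt\eps)$) together with the specific geometry of the ring $\tilde{\Gamma} \setminus \Gamma$, in order to show that the expansion occurs only in directions where $\Gamma_\delta$ is already thick.
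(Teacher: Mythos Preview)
The obstacle you flag in the final paragraph is genuine and is precisely where your argument breaks. With only an $\Omega(\gamma\eps)$ inradius available in $\Gamma_\delta$, a Minkowski expansion by $c_0\sqrt\eps$ inflates the area by a factor of order $(1/(\gamma\sqrt\eps))^{d-1}$, and no directional analysis of the ring $\tilde\Gamma \setminus \Gamma$ rescues this: your pointwise displacement $|v - w|$ really can be of order $|v| \sim \sqrt\eps$ (take $q$ and $q'$ both at height close to $\eps$ above $p_\eps$), so the $c_0\sqrt\eps$ expansion is tight for the map $v \mapsto w$ you use.

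The fix is to obtain an $O(\eps)$ displacement instead, and the device is to map each point of $\Gamma'_\delta$ not to $w = T(q)$ but to a \emph{scaled} copy of it, exploiting that $\Gamma$ is star-shaped about $p$. Work directly with $v' \in \Gamma'_\delta \subset h'$ (no detour through $\tilde\Gamma$): let $q' \in \partial K'$ be hit by the ray from $p_\eps$ through $v'$, let $q \in \partial K$ be nearest to $q'$ (so $u = q' - q$ has $\|u\| = \eps'$), and set $v = T(q) \in \Gamma$. Substituting $q' = q + u$ into the perspective formulas gives
\[
    v' \;=\; T'(q') \;=\; \frac{q_d}{2(q_d + u_d)}\, v \;+\; \frac{\eps'}{q_d + u_d}\, \overline{u}
    \;=\; \alpha'\, v \;+\; \alpha''\, \overline{u},
\]
and since $q_d \ge \eps$ and $|u_d| \le \eps'$, one has $\alpha' \in [\tfrac{1}{3},1]$ and $|\alpha''\,\overline u| \le \eps'$. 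Star-shapedness gives $\alpha' v \in \Gamma$, and $\|\alpha' v\| \le \|v'\| + \eps' \le \delta' + \eps' \le \delta$ gives $\alpha' v \in \Gamma_\delta$. Hence $\Gamma'_\delta \subseteq \Gamma_\delta \oplus \eps'$. Now your fatness argument works as written: $\Gamma_\delta$ contains a ball of radius $\eps'\gamma/2$, so scaling $\Gamma_\delta$ about that ball's center by $1 + 2/\gamma$ absorbs the $\eps'$-expansion and yields $c = (1 + 2/\gamma)^{d-1}$. The factor of $\tfrac12$ in $\alpha'$, which comes from projecting onto $h'$ rather than $h$, is exactly what forces $\alpha' \le 1$; your detour through $\tilde\Gamma$ on $h$ discards it, which is why your effective scaling coefficient can exceed $1$ and you are left with a raw translation of order $\sqrt\eps$.
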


The proof is rather technical, but it involves simple geometric reasoning. It is given in the appendix.

Recall that $\Vor(D_{\delta}) \cap S$ consists of the set of points on the sphere $S$ whose closest point on $\partial K$ lies within the restricted dual cap $D_{\delta}$. Let $D'_{\delta}$ denote the corresponding restricted dual cap for $K'$, that is, the set of points of $\partial K'$ that are visible from $p_{\eps}$ and lie within the ball $B_{\delta'}(p')$. Our analysis will be based on establishing a lower bound on the area of $\Vor(D'_{\delta}) \cap S$. The following lemma shows that this will provide a lower bound on the area of $\Vor(D_{\delta}) \cap S$.

\begin{lemma} \label{lem:smooth-voronoi}
Given the preconditions of Lemma~\ref{lem:dual}(ii), $\area(\Vor(D'_{\delta}) \cap S) \le \area(\Vor(D_{\delta}) \cap S)$.
\end{lemma}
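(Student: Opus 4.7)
The plan is to upgrade the area bound to the stronger pointwise inclusion $\Vor(D'_{\delta}) \cap S \subseteq \Vor(D_{\delta}) \cap S$, after which the lemma follows by monotonicity of area. The geometric anchor is the classical correspondence between nearest-point maps to $K$ and to its Minkowski expansion $K' = K \oplus \eps'$: for any $z$ exterior to $K'$, if $w(z) \in \partial K$ denotes its nearest point on $\partial K$ and $\hat{v}(z) = (z - w(z))/\|z - w(z)\|$, then $\hat{v}(z)$ is an outward unit normal of $K$ at $w(z)$ (by the standard convex projection theorem), while the nearest point on $\partial K'$ is $w'(z) = w(z) + \eps' \hat{v}(z)$, at which $K'$ is $C^{1,1}$ with $\hat{v}(z)$ as its unique outward unit normal. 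In parallel, the reference points satisfy $p' = p + \eps' \hat{n}$, where $\hat{n}$ is the outward normal of $h(p)$. I would fix $z \in \Vor(D'_{\delta}) \cap S$, so $w'(z) \in D'_{\delta}$, and verify the two defining conditions for $w(z) \in D_{\delta}$: a distance bound and a visibility condition.

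For the distance bound, the key identity is
\[
	\|w'(z) - p'\|^2 ~ = ~ \|w(z) - p\|^2 + 2 \eps' \ang{w(z) - p,\, \hat{v}(z) - \hat{n}} + (\eps')^2 \|\hat{v}(z) - \hat{n}\|^2.
\]
The cross term is nonnegative because $\ang{w(z) - p, \hat{v}(z)} \ge 0$ (since $p \in K$ and $\hat{v}(z)$ is outward at $w(z)$) and $\ang{w(z) - p, \hat{n}} \le 0$ (since $w(z) \in K$ and $\hat{n}$ is outward at $p$). Consequently $\|w(z) - p\| \le \|w'(z) - p'\| \le \delta' = \sqrt{\eps/2} < \sqrt{\eps} = \delta$, placing $w(z)$ in $B_{\delta}(p)$. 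The calibration $\eps' = \eps/2$ and $\delta' = \sqrt{\eps'}$ is precisely what creates the slack needed here.

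For visibility, since $K'$ is smooth at $w'(z)$, the hypothesis that $w'(z)$ is visible from $p_{\eps}$ with respect to $K'$ is equivalent to $\ang{p_{\eps} - w'(z), \hat{v}(z)} \ge 0$. Substituting $w'(z) = w(z) + \eps' \hat{v}(z)$ yields $\ang{p_{\eps} - w(z), \hat{v}(z)} \ge \eps' > 0$, whereas $\ang{p - w(z), \hat{v}(z)} \le 0$ holds automatically. Hence the supporting hyperplane to $K$ at $w(z)$ with normal $\hat{v}(z)$ has $p$ and $p_{\eps}$ on opposite closed halfspaces, so it crosses $[p, p_{\eps}]$, certifying the visibility of $w(z)$ with respect to $K$.

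I do not anticipate a serious obstacle; the argument reduces to elementary inner-product bookkeeping once the correspondence $w'(z) = w(z) + \eps' \hat{v}(z)$ is recorded. The only mild points of care are (i) verifying that every $z \in S$ is exterior to $K'$ under the hypotheses of Lemma~\ref{lem:dual}(ii), so that $w'(z)$ is defined on $\partial K'$, which follows since $S$ has radius $3$ while the relevant portion of $K'$ lies in a ball of radius about $1 + \eps'$ about the origin; and (ii) noting that even if $\partial K$ has a corner at $w(z)$, the vector $\hat{v}(z)$ remains a valid (if not unique) outward normal, so the supporting hyperplane used in the visibility argument still exists and serves its purpose.
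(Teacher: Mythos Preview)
Your proposal is correct and follows essentially the same route as the paper: both arguments upgrade to the pointwise inclusion $\Vor(D'_{\delta}) \cap S \subseteq \Vor(D_{\delta}) \cap S$ via the correspondence $w'(z) = w(z) + \eps'\hat v(z)$, then verify visibility (parallel supporting hyperplanes) and the distance bound $\|w(z)-p\|\le\|w'(z)-p'\|$. The only cosmetic difference is that the paper invokes the nonexpansiveness of the metric projection onto a convex set as a cited fact (Dudley, Lemma~4.3), whereas you reprove it inline with the inner-product computation.
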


\begin{proof}
We will prove the stronger result that $\Vor(D'_{\delta}) \cap S \subseteq \Vor(D_{\delta}) \cap S$. We begin by observing that both $D_{\delta}$ and $D'_{\delta}$ lie within the ball bounded by the Dudley hypersphere $S$. To see this, recall that by the conditions of Lemma~\ref{lem:dual}(ii), $p$ lies within unit distance of the origin, and so by the triangle inequality $p'$ lies within distance $1 + \eps'$ of the origin. The points of $D_{\delta}$ and $D'_{\delta}$ lie within distances $\delta$ and $\delta'$ of $p$ and $p'$, respectively. Therefore, the distance of any point of $D_{\delta}$ or $D'_{\delta}$ from the origin is at most $1 + \max(\delta, \eps' + \delta')$. As shown in the proof of Lemma~\ref{lem:smooth-area}, $\delta' + \eps' \le \delta$, and therefore this distance is at most $1 + \delta \le 1 + 1/\sqrt{8} < 3$. Therefore, both caps lie within $S$. 

Consider any point $q'' \in \Vor(D'_{\delta}) \cap S$. It suffices to show that $q'' \in \Vor(D_{\delta}) \cap S$. Let $q'$ be the closest point to $q''$ on $\partial K'$. By convexity, $q'$ is the closest point of $q''$ on $\partial K'$ if and only if the supporting hyperplane at $q'$, denoted $h(q')$, is orthogonal to the line $q' q''$.  Let $q$ be the closest point to $q'$ on $\partial K$. By basic properties of Minkowski sums, the segment $q q'$ is orthogonal to both the supporting hyperplanes $h(q)$ and $h(q')$ at $q$ and $q'$, respectively (see Figure~\ref{fig:smooth-voronoi}). It follows that all three points $q$, $q'$ and $q''$ are collinear, and $h(q)$ is orthogonal to the segment $q q''$. Therefore, $q$ is the closest point to $q''$ on $\partial K$.

\begin{figure}[htbp]
  \centerline{\includegraphics[scale=0.40]{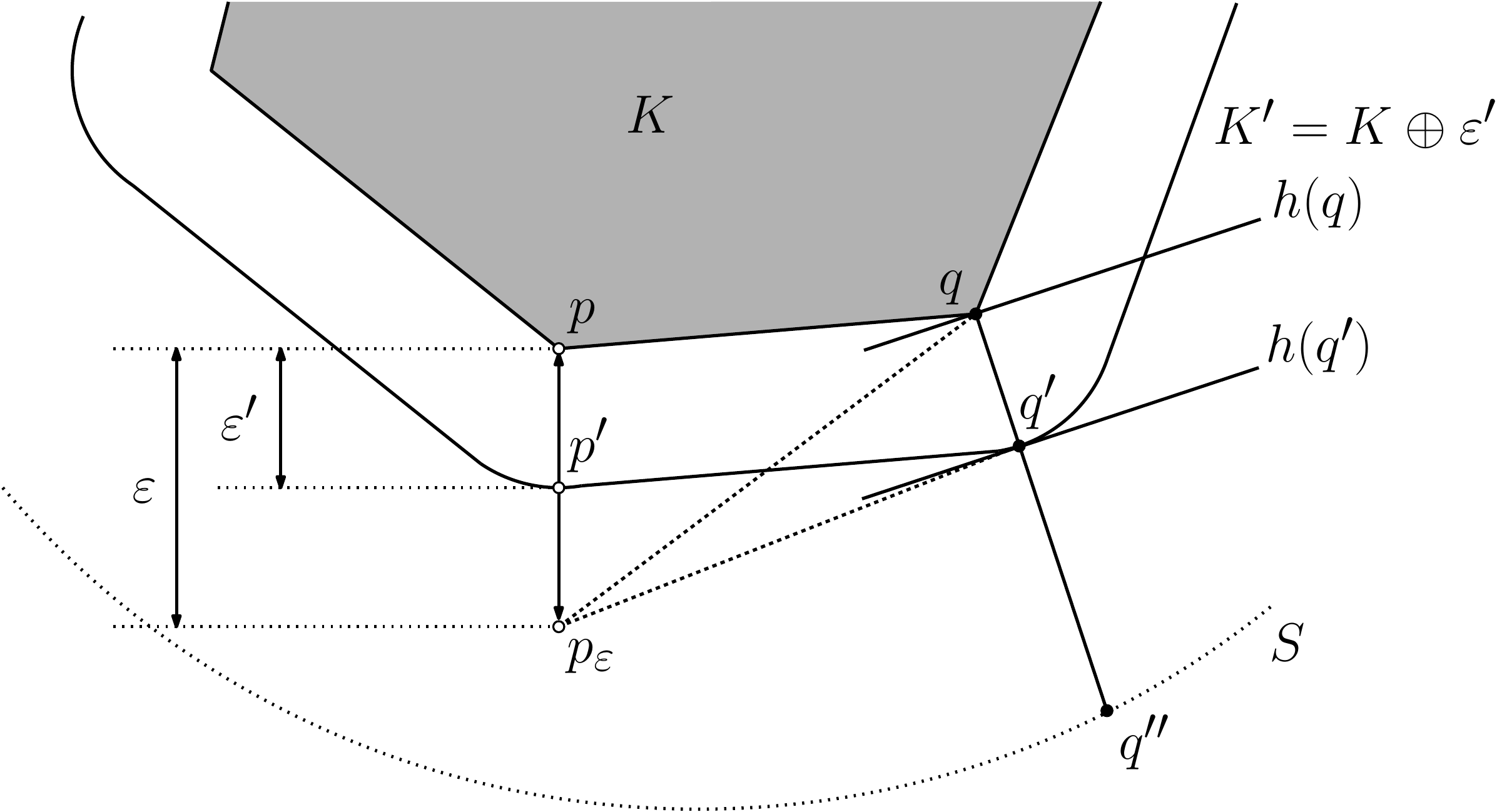}}
  \caption{Proof of Lemma~\ref{lem:smooth-voronoi}.}
  \label{fig:smooth-voronoi}
\end{figure}

Since $q'' \in \Vor(D'_{\delta}) \cap S$, we have $q' \in D'_{\delta}$, which means that $q'$ is visible from $p_{\eps}$ and $q'$ lies within distance $\delta'$ of $p'$. Because $h(q)$ and $h(q')$ are parallel, it follows that $q$ is also visible from $p_{\eps}$. Therefore, $q$ lies in $p$'s unrestricted dual cap $D$. To prove that $q$ lies within the restricted cap $D_{\delta}$, it suffices to show that $\|p q\| \le \delta$. We apply a standard result from convexity theory which states that for any convex body $K$ and two points $p'$ and $q'$ that are exterior to $K$, if $p$ and $q$ are their respective nearest neighbors on $\partial K$, then $\|p q\| \le \|p' q'\|$ (see, e.g., Lemma~{4.3} in \cite{Dudley}). Clearly, this applies in our situation, and so $\|p q\| \le \|p' q'\| \le \delta' < \delta$, which implies that $q \in D_{\delta}$. In summary, the closest point to $q''$ on $\partial K$ lies within $D_{\delta}$, implying that $q'' \in \Vor(D_{\delta}) \cap S$, as desired.
\end{proof}

Before completing the proof of Lemma~\ref{lem:dual}(ii), we exploit the smoothness of $K'$ to establish a relationship between the areas of $\Vor(D'_{\delta}) \cap S$ and $\polarX{r'}{\Gamma'_{\delta}}$, where the polar radius $r'$ is suitably modified for the smoothed context. This is given in the next lemma. The proof involves basic geometric reasoning and is given in the appendix.

\begin{restatable}{lemma}{SmoothToPolarStmt}\label{lem:smooth-to-polar}
Given the preconditions of Lemma~\ref{lem:dual}(ii) and $r' = \sqrt{\eps'/8}$, we have $\area(\Vor(D'_{\delta}) \cap S) \ge \area(\polarX{r'}{\Gamma'_{\delta}})$.
\end{restatable}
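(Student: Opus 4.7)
The plan is to construct an injective, area-non-decreasing map $\pi \colon \polarX{r'}{\Gamma'_{\delta}} \to \Vor(D'_{\delta}) \cap S$. Once such a $\pi$ is available, the change-of-variables formula immediately yields $\area(\Vor(D'_{\delta}) \cap S) \ge \area(\polarX{r'}{\Gamma'_{\delta}})$. The construction parallels Lemma~\ref{lem:dual-subset} but routes points out to the Dudley sphere $S$ along outward normals of the smoothed body, instead of onto a fixed hyperplane $H^{(y)}$.

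The key ingredient is the smoothness of $K' = K \oplus \eps'$: as a Minkowski sum with a Euclidean ball of radius $\eps' > 0$, $K'$ is strictly convex with a $C^1$ boundary whose outward-normal map is a continuous injection $\partial K' \to S^{d-1}$. In particular, every exterior point of $K'$ has a unique nearest neighbor on $\partial K'$, joined to it by an outward normal segment. For $z \in \polarX{r'}{\Gamma'_{\delta}}$, viewed as a point on $h'$, let $z_0 = z_0(z) \in \partial K'$ be this nearest neighbor and define $\pi(z)$ to be the unique outward intersection with $S$ of the ray based at $z_0$ and passing through $z$; this intersection is well-defined because $z_0$ lies within a ball of bounded radius about the origin while $S$ has radius $3$.

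To see $\pi(z) \in \Vor(D'_{\delta}) \cap S$, it suffices to show $z_0 \in D'_{\delta}$, since by construction $z_0$ remains the nearest neighbor of every point on the outward normal ray past $z$. The bound $|p' z_0| \le \delta'$ follows from the $1$-Lipschitz property of nearest-point projection onto $K'$, together with the fact that the polar body lies within a bounded neighborhood of $p'$ on $h'$. Visibility of $z_0$ from $p_{\eps}$ follows by the same planar reduction used in the proof of Lemma~\ref{lem:dual-subset}, now applied to the smoothed body: working in the plane through $p'$, $p_{\eps}$ and $z_0$, the polar condition $\langle z, v \rangle \le (r')^2$ for all $v \in \Gamma'_{\delta}$ forces the slope of the supporting hyperplane at $z_0$ to satisfy the analogues of Lemma~\ref{lem:dual-subset-helper}, with $\eps$, $p$, $K$ replaced by $\eps'$, $p'$, $K'$.

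The main obstacle is verifying that $\pi$ is injective and has Jacobian at least $1$. Parameterize both sets by the outward normal via the Gauss map of $\partial K'$: for each $n$ in the Gauss image of $D'_{\delta}$, write $z_0(n) \in \partial K'$ for its preimage and set $z(n) = z_0(n) + \tau_h(n)\, n$, $\pi(z(n)) = z_0(n) + \tau_S(n)\, n$, where $\tau_h(n)$ and $\tau_S(n)$ are the distances from $z_0(n)$ along $n$ to $h'$ and $S$ respectively. The differentials of $n \mapsto z(n)$ and $n \mapsto \pi(z(n))$ split into the parallel-surface contribution $(W^{-1} + \tau I)$ restricted to the tangent space of the Gauss map (where $W$ is the shape operator of $\partial K'$ at $z_0(n)$) plus rank-one corrections accounting for the gradients of $\tau_h$, $\tau_S$ and the tilts of $h'$ and $S$ relative to $n$. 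Because the polar body is confined close to $p'$, all $n$ remain close to the outward normal at $p'$ and all tilt angles stay bounded away from $\pi/2$; moreover $\tau_S(n) \ge \tau_h(n) + 1$ since $S$ has radius $3$ and $p'$ lies within unit distance of the origin. Combined with the non-negativity of the eigenvalues of $W^{-1}$ (convexity of $K'$), a determinant comparison yields $|J_\pi| \ge 1$ pointwise; injectivity falls out of the same computation since $\pi$ is a composition of the invertible Gauss parameterization with two injective maps along normal rays. The delicate portion is the tilt-factor bookkeeping, which must guarantee that no loss below $1$ occurs uniformly over the polar body.
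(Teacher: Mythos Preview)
Your approach has genuine gaps that prevent it from constituting a proof.

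\textbf{First, $K'$ is not strictly convex.} The Minkowski sum $K' = K \oplus \eps'$ has a $C^{1,1}$ boundary, but flat faces of $K$ produce flat faces of $K'$ (merely translated outward by $\eps'$). Consequently the Gauss map of $\partial K'$ is \emph{not} injective in general, the shape operator $W$ has zero eigenvalues on flat pieces, and $W^{-1}$ is undefined there. Your entire Jacobian analysis is built on the Gauss-map parameterization, so it does not go through as written.

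\textbf{Second, the bound $\|p'z_0\| \le \delta'$ does not follow from your argument.} The $1$-Lipschitz property of nearest-point projection onto $K'$ yields only $\|p'z_0\| \le \|p'z\|$. But $\polarX{r'}{\Gamma'_{\delta}}$, viewed on $h'$, has outer radius up to $(r')^2/(\eps'/\sqrt{3}) = \sqrt{3}/8 = \Theta(1)$ (since $\Gamma'_{\delta}$ is only guaranteed to contain a ball of radius $\eps'/\sqrt{3}$), whereas $\delta' = \sqrt{\eps'} = \Theta(\sqrt{\eps})$. So ``bounded neighborhood'' is not the same as ``$\delta'$-neighborhood,'' and the stated implication fails. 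The conclusion $z_0 \in D'_{\delta}$ may well be true, but establishing it requires an honest two-dimensional reduction in the spirit of Lemma~\ref{lem:dual-subset}, adapted to points on $h'$ rather than on a hyperplane at distance $\ge 1$; that adaptation is nontrivial because the key inequality $z'_x \ge w_x + w_\theta$ in that proof comes precisely from the unit separation, which you have discarded.

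\textbf{Third, the Jacobian comparison is asserted, not proved.} Even granting a parameterization, the map $\pi$ factors as $z \mapsto z_0 \mapsto \pi(z)$, where the first step (nearest-point projection) is area-\emph{contracting} and the second (flow along the normal to $S$) is area-expanding. Whether the expansion dominates the contraction uniformly over the polar body is exactly the content of the lemma; saying ``a determinant comparison yields $|J_\pi|\ge 1$'' is not a proof.

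The paper avoids all of this by placing $\polarX{r'}{\Gamma'_{\delta}}$ not on $h'$ but on the hyperplane $H' = H^{(1+\eps')}$ at unit distance below $p'$, so that Lemma~\ref{lem:dual-subset} applies verbatim to the smoothed body and gives $\polarX{r'}{\Gamma'_{\delta}} \subseteq \Vor(D'_{\delta}) \cap H'$. One then checks, using the containment of $\Gamma'_{\delta}$ in a ball of radius $\eps'/\sqrt{3}$, that the polar lies in a unit $(d-1)$-ball and hence the base of the vertical cylinder $C$ over it lies entirely inside $S$. Since $C \subseteq \Vor(D'_{\delta})$, we get $S\cap C \subseteq \Vor(D'_{\delta}) \cap S$, and the trivial fact that vertical projection does not increase area finishes the argument. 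No curvature computation, no Gauss map, no Jacobian estimate is needed.
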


\bigskip

We are now ready to prove Lemma~\ref{lem:dual}(ii). Recall that $r' = \sqrt{\eps'/8}$. By Lemmas~\ref{lem:smooth-voronoi} and~\ref{lem:smooth-to-polar}, we have
\[
	\area(\Vor(D_{\delta}) \cap S)
		~ \ge ~ \area(\Vor(D'_{\delta}) \cap S)
		~ \ge ~ \area(\polarX{r'}{\Gamma'_{\delta}}).
\]
As observed earlier, $\polarX{r'}{\Gamma'_{\delta}}$ is a scaled copy of $\polar{\Gamma'_{\delta}}$ by a factor of $(r')^2 = \eps'/8 = \eps/16$, and therefore (since these are $(d-1)$-dimensional bodies) we have
\[
	\area(\Vor(D_{\delta}) \cap S)
		~ \ge ~ \left(\frac{\eps}{16}\right)^{\N d-1} \cdot \area(\polar{\Gamma'_{\delta}}).
\]
By Lemma~\ref{lem:base-cap-area}, $\area(D_{\delta}) \ge \area(\Gamma_{\delta}) / 2^{d-1}$. Also, by Lemma~\ref{lem:smooth-area} there is a constant $c''$ (depending on the fatness parameter $\gamma$ and $d$) such that $\area(\Gamma'_{\delta}) \le c'' \cdot \area(\Gamma_{\delta})$. Therefore, we have
\[
	\area(D_{\delta})
		~ \ge ~ \frac{\area(\Gamma_{\delta})}{2^{d-1}}
		~ \ge ~ \frac{\area(\Gamma'_{\delta})}{c'' \cdot 2^{d-1}}.
\]
Combining these, we obtain
\begin{eqnarray*}
	\area(D_{\delta}) \cdot \area(\Vor(D_{\delta}) \cap S)
		& \ge & \frac{\area(\Gamma'_{\delta})}{c'' \cdot 2^{d-1}} \cdot \left(\frac{\eps}{16}\right)^{\N d-1} \cdot \area(\polar{\Gamma'_{\delta}}) \\
		&  =  & \inv{c''} \left( \frac{\eps}{32} \right)^{\N d-1} \area(\Gamma'_{\delta}) \cdot \area(\polar{\Gamma'_{\delta}}).
\end{eqnarray*}
By applying Lemma~\ref{lem:mahler} (in $\RE^{d-1}$) to $\Gamma'_{\delta}$, there exists a constant $c_m$ (depending on $d$) such that
\[
	\area(\Gamma'_{\delta}) \cdot \area(\polar{\Gamma'_{\delta}}) 
		~ \ge ~ c_m.
\]
Therefore, we have
\[
	\area(D_{\delta}) \cdot \area(\Vor(D_{\delta}) \cap S)
		~ \ge ~ \frac{c_m}{c''} \left( \frac{\eps}{32} \right)^{\N d-1}.
\]
Selecting any $c_a \le (c_m/c'') (1/32)^{d-1}$ establishes Lemma~\ref{lem:dual}(ii). This concludes our proof of the area bounds. 

\section{Concluding Remarks}

In this paper we have presented an efficient data structure for determining approximately whether a given query point lies within a convex body. Our solution is based on a simple and natural quadtree-based algorithm, called {\alg}. Our principal technical contribution has been an analysis of the space-time trade-offs for this algorithm. These are the first nontrivial space-time trade-offs for this problem. We do not know whether this analysis is tight, but we presented a lower bound example that demonstrates the limits of possible improvements. We also demonstrated the value of approximate polytope membership by showing that our data structure can be combined with an AVD data structure to produce significant improvements to the space-time trade-offs of approximate nearest neighbor searching in Euclidean space.

Our analysis of the trade-offs involved a combination of a number of novel techniques, which may be of broader interest. One notable example is the application of the Mahler volume as a means of analyzing the local structure of a convex body through consideration of both its primal and dual representations. This resulted in an efficient two-pronged sampling strategy for computing hitting sets of low cardinality for $\eps$-dual caps. The Mahler volume has also been applied in \cite{AFM12} to derive an optimal area-sensitive bound on the number of facets needed to approximate a convex body.

This work provokes a number of questions for further research. The first involves extending approximate polytope membership queries to other approximate query problems involving convex bodies. For example, in Section~\ref{sec:ann} we showed how to reduce approximate nearest neighbor searching in dimension $d$ to vertical ray shooting queries in dimension $d+1$. However, the polytope involved had a very restricted structure. It would be interesting to know whether there is a data structure exhibiting similar trade-offs for answering approximate ray-shooting queries for general convex bodies. Another example is answering approximate linear-programming queries, where a convex body is preprocessed, and the problem is to determine an  extreme point of the body approximately in a given query direction. A further generalization of this would be to extend the work of Barba and Langerman \cite{BaL15} to an approximate setting. It particular, is it possible to preprocess convex bodies so that given two such bodies that have been translated and rotated, it can be decided efficiently whether they intersect each other approximately.

Our result on approximate nearest neighbor searching relies on the lifting transformation to reduce the problem to approximate polytope membership. As a consequence, this approach is applicable only to Euclidean distances. This raises the question of whether there exists a more direct route to approximate nearest neighbor searching that achieves similar space-time improvements and yet avoids reliance on lifting. For example, Arya and Chan \cite{ArC14} have presented improvements to approximate nearest-neighbor searching that do not involve lifting. This raises the hope that generalizations to other norms may be possible. While their focus was different from ours (for example, space-time trade-offs are not considered), their results are inferior to our best bounds. These better bounds arise explicitly from concepts like the Mahler volume, which are only applicable in the context of convex approximation, and hence they rely crucially on lifting. A major challenge is whether it is possible to bypass this intermediate step in order to obtain analogous improvements for approximate nearest neighbor searching.

\section{Acknowledgments}

The authors would like to thank the anonymous reviewers for the journal version of the paper for their many insightful comments.

\subsection*{Note Added in Proof}

After the original submission of this paper, the authors have discovered a new approach to polytope membership that achieves query time $O(\log \frac{1}{\eps})$ with storage of only $O(1/\eps^{(d-1)/2})$~\cite{AFM17}. As a consequence, it is possible to answer $\eps$-approximate nearest neighbor queries for a set of $n$ points in $O(\log \frac{n}{\eps})$ time with storage of only $O(n/\eps^{d/2})$. While these new results surpass the results of this paper theoretically, the data structure presented there involves significantly larger constant factors and lacks the simplicity and practicality of the approach described here.


\bibliographystyle{abbrv}
\bibliography{shortcuts,polytope-journal}

\begin{thebibliography}{10}

\bibitem{AHV-coreset}
P.~K. Agarwal, S.~Har-Peled, and K.~R. Varadarajan.
\newblock Approximating extent measures of points.
\newblock {\em J.\ Assoc.\ Comput.\ Mach.}, 51:606--–635, 2004.

\bibitem{kernel-survey}
P.~K. Agarwal, S.~Har-Peled, and K.~R. Varadarajan.
\newblock Geometric approximation via coresets.
\newblock In J.~E. Goodman, J.~Pach, and E.~Welzl, editors, {\em Combinatorial
  and Computational Geometry}. MSRI Publications, 2005.

\bibitem{ray-shooting-NN}
P.~K. Agarwal and J.~Matou\v{s}ek.
\newblock Ray shooting and parametric search.
\newblock {\em SIAM J.\ Comput.}, 22(4):794--806, 1993.

\bibitem{probabilistic}
N.~Alon and J.~H. Spencer.
\newblock {\em The Probabilistic Method}.
\newblock Wiley-Interscience, 2000.

\bibitem{ArC14}
S.~Arya and T.~M. Chan.
\newblock Better $\eps$-dependencies for offline approximate nearest neighbor
  search, {Euclidean} minimum spanning trees, and $\eps$-kernels.
\newblock In {\em Proc.\ 30th Annu.\ Sympos.\ Comput.\ Geom.}, pages 416--425,
  2014.

\bibitem{proximity}
S.~Arya, G.~D. da~Fonseca, and D.~M. Mount.
\newblock A unified approach to approximate proximity searching.
\newblock In {\em Proc.\ 18th Annu.\ European Sympos.\ Algorithms}, pages
  374--385, 2010.

\bibitem{AFM12}
S.~Arya, G.~D. da~Fonseca, and D.~M. Mount.
\newblock Optimal area-sensitive bounds for polytope approximation.
\newblock In {\em Proc.\ 28th Annu.\ Sympos.\ Comput.\ Geom.}, pages 363--372,
  2012.

\bibitem{AFM17}
S.~Arya, G.~D. da~Fonseca, and D.~M. Mount.
\newblock Optimal approximate polytope membership.
\newblock In {\em Proc.\ 28th Annu.\ ACM-SIAM Sympos.\ Discrete Algorithms},
  pages 270--288, 2017.

\bibitem{AVD-JACM}
S.~Arya, T.~Malamatos, and D.~M. Mount.
\newblock Space-time tradeoffs for approximate nearest neighbor searching.
\newblock {\em J.\ Assoc.\ Comput.\ Mach.}, 57:1--54, 2009.

\bibitem{ARS}
S.~Arya and D.~M. Mount.
\newblock Approximate range searching.
\newblock {\em Comput.\ Geom.\ Theory Appl.}, 17:135--163, 2000.

\bibitem{Bal97}
K.~Ball.
\newblock An elementary introduction to modern convex geometry.
\newblock In S.~Levy, editor, {\em Flavors of Geometry}, pages 1--58. Cambridge
  University Press, 1997.
\newblock (MSRI Publications, Vol.~31).

\bibitem{BaL15}
L.~Barba and S.~Langerman.
\newblock Optimal detection of intersections between convex polyhedra.
\newblock In {\em Proc.\ 26th Annu.\ ACM-SIAM Sympos.\ Discrete Algorithms},
  pages 1641--1654, 2015.

\bibitem{BHP-bbox}
G.~Barequet and S.~Har-Peled.
\newblock Efficiently approximating the minimum-volume bounding box of a point
  set in three dimensions.
\newblock {\em J.\ Algorithms}, 38:91--–109, 2001.

\bibitem{BFP}
J.~L. Bentley, M.~G. Faust, and F.~P. Preparata.
\newblock Approximation algorithms for convex hulls.
\newblock {\em Commun.\ ACM}, 25(1):64--68, 1982.

\bibitem{BoMi}
J.~Bourgain and V.~D. Milman.
\newblock New volume ratio properties for convex symmetric bodies.
\newblock {\em Inventiones Mathematicae}, 88:319--340, 1987.

\bibitem{BrIv}
E.~M. Bronshteyn and L.~D. Ivanov.
\newblock The approximation of convex sets by polyhedra.
\newblock {\em Siberian Math. J.}, 16:852--853, 1976.

\bibitem{Bro08}
E.~M. Bronstein.
\newblock Approximation of convex sets by polytopes.
\newblock {\em Journal of Mathematical Sciences}, 153(6):727--762, 2008.

\bibitem{SVM}
C.~J.~C. Burges.
\newblock A tutorial on support vector machines for pattern recognition.
\newblock {\em Data Min. Knowl. Discov.}, 2(2):121--167, 1998.

\bibitem{lp-Chan}
T.~M. Chan.
\newblock Fixed-dimensional linear programming queries made easy.
\newblock In {\em Proc.\ 12th Annu.\ Sympos.\ Comput.\ Geom.}, pages 284--290,
  1996.

\bibitem{hull-Chan}
T.~M. Chan.
\newblock Output-sensitive results on convex hulls, extreme points, and related
  problems.
\newblock {\em Discrete Comput.\ Geom.}, 16:369--387, 1996.

\bibitem{Chan-ANN02}
T.~M. Chan.
\newblock Closest-point problems simplified on the {RAM}.
\newblock In {\em Proc.\ 13th Annu.\ ACM-SIAM Sympos.\ Discrete Algorithms},
  pages 472--473, 2002.

\bibitem{Chan-coreset}
T.~M. Chan.
\newblock Faster core-set constructions and data-stream algorithms in fixed
  dimensions.
\newblock {\em Comput.\ Geom.\ Theory Appl.}, 35(1):20--35, 2006.

\bibitem{lp-Chan2}
T.~M. Chan.
\newblock Optimal partition trees.
\newblock In {\em Proc.\ 26th Annu.\ Sympos.\ Comput.\ Geom.}, pages 1--10,
  2010.

\bibitem{ChD87}
B.~Chazelle and D.~P. Dobkin.
\newblock Intersection of convex objects in two and three dimensions.
\newblock {\em J.\ Assoc.\ Comput.\ Mach.}, 34:1--27, 1987.

\bibitem{ChM96}
B.~Chazelle and J.~Matou\v{s}ek.
\newblock On linear-time deterministic algorithms for optimization problems in
  fixed dimension.
\newblock {\em J.\ Algorithms}, 21:579--597, 1996.

\bibitem{Clarkson-polytope}
K.~L. Clarkson.
\newblock Algorithms for polytope covering and approximation.
\newblock In {\em Proc.\ Third Internat.\ Workshop Algorithms Data Struct.},
  pages 246--252, 1993.

\bibitem{Clarkson-ANN}
K.~L. Clarkson.
\newblock An algorithm for approximate closest-point queries.
\newblock In {\em Proc.\ Tenth Annu.\ Sympos.\ Comput.\ Geom.}, pages 160--164,
  1994.

\bibitem{CLRS}
T.~H. Cormen, C.~E. Leiserson, R.~L. Rivest, and C.~Stein.
\newblock {\em Introduction to Algorithms}.
\newblock MIT Press, Cambridge, Mass., 2001.

\bibitem{textbook}
M.~de~Berg, O.~Cheong, M.~van Kreveld, and M.~Overmars.
\newblock {\em Computational Geometry: Algorithms and Applications}.
\newblock Springer, 3rd edition, 2010.

\bibitem{DoK83}
D.~P. Dobkin and D.~G. Kirkpatrick.
\newblock Fast detection of polyhedral intersection.
\newblock {\em Theo.\ Comp.\ Sci.}, 27:241--253, 1983.

\bibitem{Dudley}
R.~M. Dudley.
\newblock Metric entropy of some classes of sets with differentiable
  boundaries.
\newblock {\em Approx. Theory}, 10(3):227--236, 1974.

\bibitem{DGK01}
C.~A. Duncan, M.~T. Goodrich, and S.~Kobourov.
\newblock Balanced aspect ratio trees: Combining the advantages of k-d trees
  and octrees.
\newblock {\em J.\ Algorithms}, 38:303--333, 2001.

\bibitem{edels}
H.~Edelsbrunner.
\newblock {\em Algorithms in Combinatorial Geometry}.
\newblock Springer-Verlag, 1987.

\bibitem{Egg58}
H.~G. Eggleston.
\newblock {\em Convexity}.
\newblock Cambridge University Press, Cambridge, UK, 1958.

\bibitem{collision2}
J.~Erickson, L.~J. Guibas, J.~Stolfi, and L.~Zhang.
\newblock Separation-sensitive collision detection for convex objects.
\newblock In {\em Proc.\ Tenth Annu.\ ACM-SIAM Sympos.\ Discrete Algorithms},
  pages 327--336, 1999.

\bibitem{Gru93}
P.~M. Gruber.
\newblock Asymptotic estimates for best and stepwise approximation of convex
  bodies {I}.
\newblock {\em Forum Math.}, 5:521--537, 1993.

\bibitem{HP-AVD}
S.~Har-Peled.
\newblock A replacement for {V}oronoi diagrams of near linear size.
\newblock In {\em Proc.\ 42nd Annu.\ IEEE Sympos.\ Found.\ Comput.\ Sci.},
  pages 94--103, 2001.

\bibitem{HP-book}
S.~Har-Peled.
\newblock {\em Geometric Approximation Algorithms}.
\newblock American Mathematical Society, Providence, Rhode Island, 2011.

\bibitem{Bor00}
J.~K.~Boroczky.
\newblock Approximation of general smooth convex bodies.
\newblock {\em Adv. Math.}, 153:325--341, 2000.

\bibitem{Kuperberg}
G.~Kuperberg.
\newblock From the {Mahler} conjecture to {Gauss} linking integrals.
\newblock {\em Geometric And Functional Analysis}, 18:870--892, 2008.

\bibitem{lp-Mat2}
J.~Matou{\v{s}}ek and O.~Schwarzkopf.
\newblock On ray shooting in convex polytopes.
\newblock {\em Discrete Comput.\ Geom.}, 10:215--232, 1993.

\bibitem{Mat92}
J.~Matou\v{s}ek.
\newblock Reporting points in halfspaces.
\newblock {\em Comput.\ Geom.\ Theory Appl.}, 2:169--186, 1992.

\bibitem{lp-Mat}
J.~Matou\v{s}ek.
\newblock Linear optimization queries.
\newblock {\em J.\ Algorithms}, 14(3):432--448, 1993.

\bibitem{Mat02}
J.~Matou\v{s}ek.
\newblock {\em Lectures on Discrete Geometry}.
\newblock Springer, Berlin, 2002.

\bibitem{MS}
J.~S.~B. Mitchell and S.~Suri.
\newblock Separation and approximation of polyhedral objects.
\newblock {\em Comput.\ Geom.\ Theory Appl.}, 5:95--114, 1995.

\bibitem{lp-Ramos}
E.~A. Ramos.
\newblock Linear programming queries revisited.
\newblock In {\em Proc.\ 16th Annu.\ Sympos.\ Comput.\ Geom.}, pages 176--181,
  2000.

\bibitem{SSS06}
Y.~Sabharwal, S.~Sen, and N.~Sharma.
\newblock Nearest neighbors search using point location in balls with
  applications to approximate {V}oronoi decompositions.
\newblock {\em J.\ Comput.\ Sys.\ Sci.}, 72:955--977, 2006.

\bibitem{Santalo}
L.~A. Santal{\' o}.
\newblock An affine invariant for convex bodies of $n$-dimensional space.
\newblock {\em Portugaliae Mathematica}, 8:155--161, 1949.
\newblock (In Spanish).

\end{thebibliography}


\appendix

\section{Omitted Proofs}

{\VCLemmaStmt*}

\begin{proof}
We may assume that $K$ contains the origin, since this clearly does not affect the VC-dimension of these range spaces. To prove~(1), consider the set of augmented points $(p,h) \in \partial K$ (recalling that $h$ is any supporting hyperplane passing through $p$). Under the polar transformation (recall Section~\ref{sec:prelim-polar}), these supporting hyperplanes are mapped to a set of points that form the boundary of $\polar{K}$, which is a convex body. Consider any point $q$ that is external to $K$ (see Figure~\ref{fig:cap-duality}(a)). If we treat $q$ as a vector, $\polar{q}$ is a hyperplane that intersects $\polar{K}$ (see Figure~\ref{fig:cap-duality}(b)). 

\begin{figure}[htbp]
  \centerline{\includegraphics[scale=0.40]{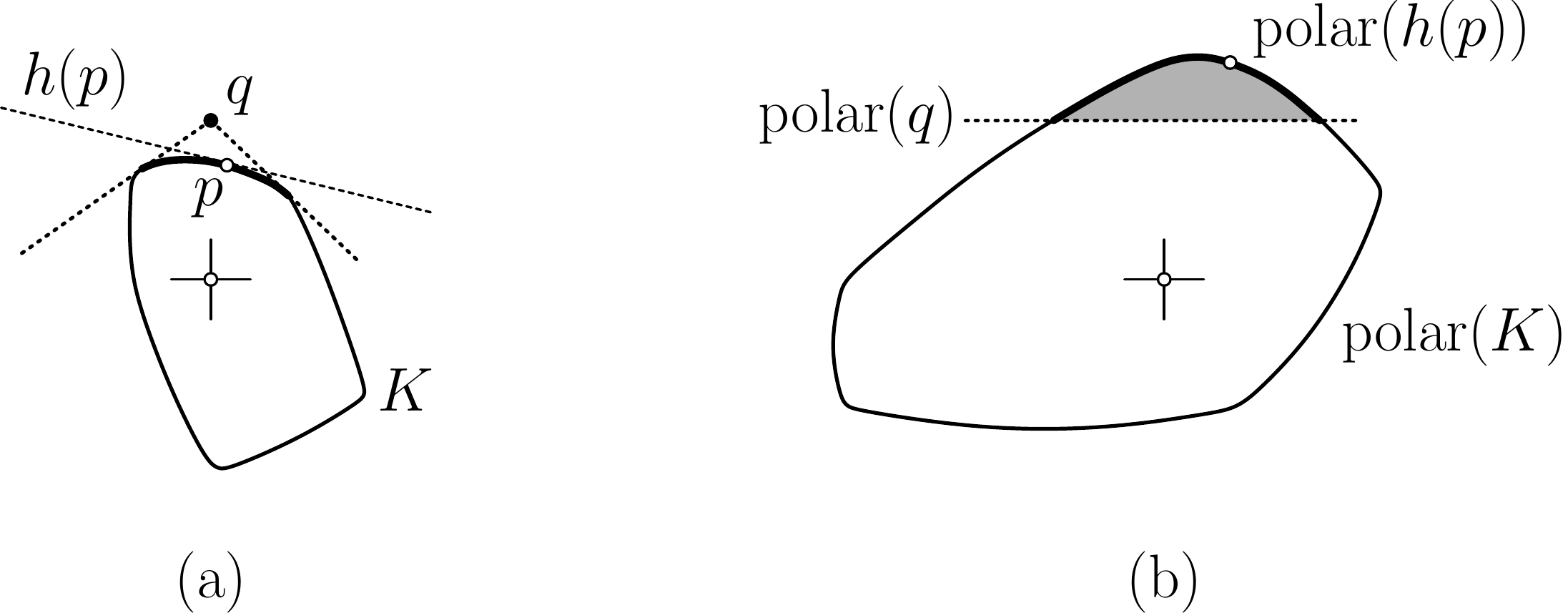}}
  \caption{Proof of Lemma~\ref{lem:VC}.}
  \label{fig:cap-duality}
\end{figure}

Consider the \emph{cap} of $\polar{K}$ induced by $\polar{q}$, which we define to be the points of the boundary of $\polar{K}$ that are separated from the origin by $\polar{q}$. By the inclusion-reversing properties of the polar, the points of this cap are in 1--1 correspondence with the supporting hyperplanes of $K$ that separate $q$ from $K$. It follows that the set of dual caps of $K$ is equivalent, through polarity, to the set of caps of $\polar{K}$. The range space of caps is equivalent to the range space of halfspaces, which is known to have constant VC-dimension, and therefore the VC-dimension of dual caps is equal.

To establish~(2), consider the function that maps each point $q \in S$ to its closest augmented point $(p,h) \in \partial K$, where $h$ is chosen to be orthogonal to the line segment $pq$. This function is a bijection (in fact, a homeomorphism) from the points of $S$ to the augmented points of $\partial K$. This function induces a 1--1 correspondence between the set of $\eps$-dual caps of $K$ (in fact, any set of surface patches) and the Voronoi patches associated with these dual caps. Therefore, $(X_2,R_2)$ and $(X_1,R_1)$ have the same VC-dimension.
 
To establish~(3) and~(4), observe that each range from $(X_3,R_3)$ (resp., $(X_4,R_4)$) results from intersecting a range of $(X_1,R_1)$ (resp., $(X_2,R_2)$) and a Euclidean ball. It is well known (see, e.g.,~\cite{Mat02}) that a range space that results by taking the intersection of ranges from two range spaces of constant VC-dimension has itself constant VC-dimension.
\end{proof}

{\BallLemmaStmt*}

\begin{proof}
Consider a $d$-dimensional ball $B$ of radius $\Delta/4$, and let $P$ be any polytope that is an outer $\eps$-approximation of $B$, that is, $B \subseteq P \subseteq B \oplus \eps$. Since $\eps \le \Delta/4$, $P$ is of diameter at most $\diam(B) + 2 \eps \le \Delta$. We will show that $P$ satisfies the conditions of the lemma.

Since the Hausdorff distance is a metric, it follows from the triangle inequality that any polytope $P'$ that is an $\eps$-approximation of $P$ is a $2 \eps$-approximation of $B$. To complete the proof, it suffices to show that any $2 \eps$-approximation of $B$ has at least the desired number of facets. To do this, define $\delta = 4\sqrt{\Delta \eps}$, and let $\Sigma$ denote a set of points on $\partial B$ such that the distance between any two points of $\Sigma$ is at least $\delta$. By a simple packing argument, for a suitable constant $c_b$ there exists such a set of size at least $c_b \big(\Delta / \sqrt{\Delta \eps} \big)^{d-1} = c_b (\Delta / \eps)^{(d-1)/2}$. It is easy to see that the function that maps each point of $\partial P'$ to its closest point on $\partial B$ induces a 1--1 correspondence between these two sets. Let $\Sigma'$ be the points corresponding to $\Sigma$ on $\partial P'$.

\begin{figure}[htbp]
	\centerline{\includegraphics[scale=0.40]{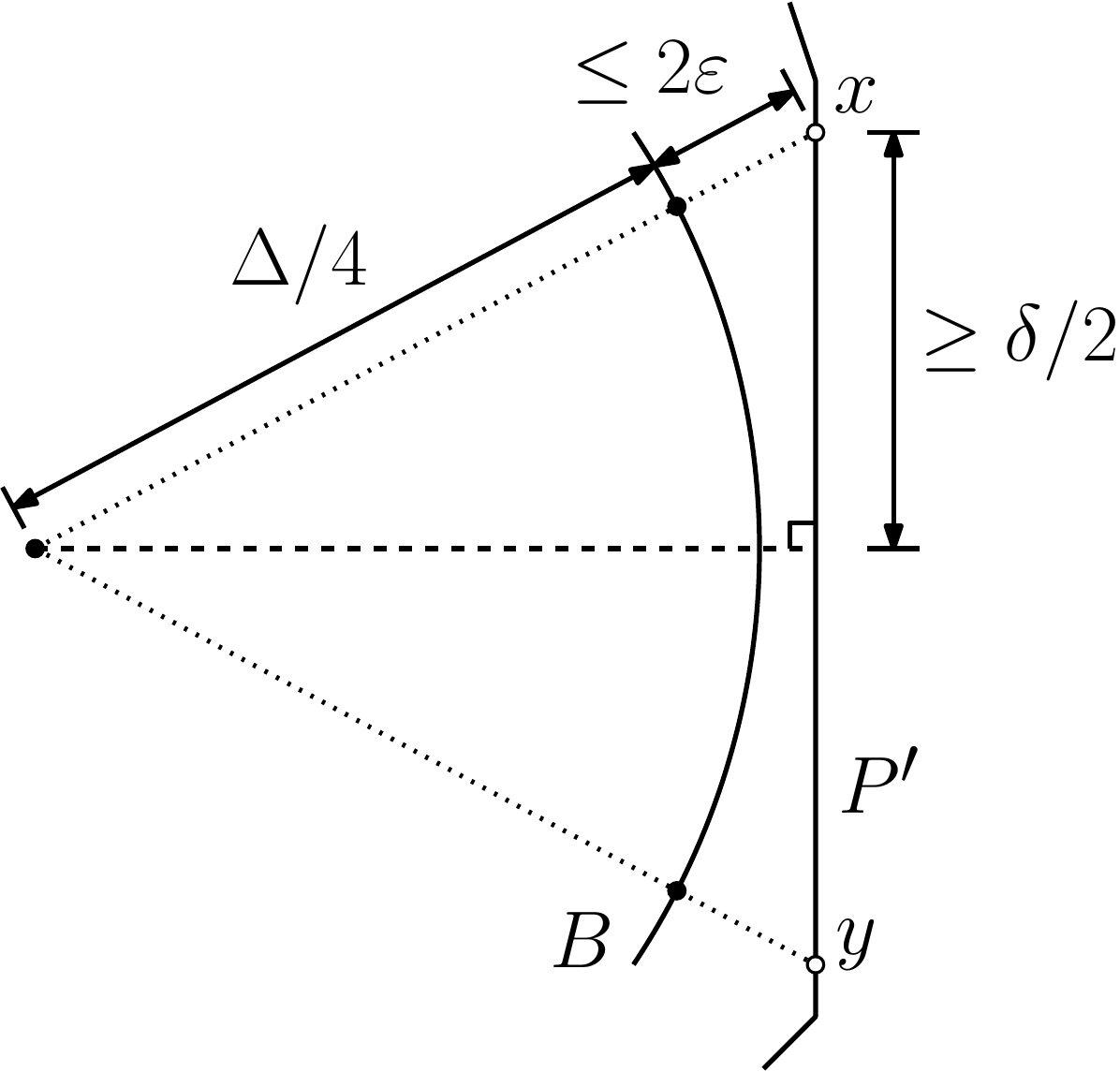}}
	\caption{Proof of Lemma~\ref{lem:ball}.}
	\label{fig:pythag}
\end{figure}

We assert that the points of $\Sigma'$ lie on distinct facets of $P'$. To see this, suppose to the contrary that two points $x,y \in \Sigma'$ were on the same facet of $P'$ (see Figure~\ref{fig:pythag}). Then the line segment $\overline{x y}$ lies on $\partial P'$. Because $P'$ is a $(2\eps)$-approximation of $B$, $x$ and $y$ are both within distance $\Delta/4 + 2\eps$ of $B$'s center. Therefore, by the Pythagorean theorem and the fact that $\eps \le \Delta/4$, it follows that the distance from the midpoint of the line segment $\overline{x y}$ to $B$'s center is at most
\[
	\sqrt{\left(\frac{\Delta}{4} + 2\eps\right)^2 - \left(\frac{\delta}{2}\right)^2}
		~  =  ~ \sqrt{(\Delta/4)^2 + \Delta \eps + 4 \eps^2 - 4 \Delta \eps}
		~ \le ~ \sqrt{(\Delta/4)^2 - 2 \Delta \eps}
		~  <  ~ \Delta/4.
\]

This implies that the line segment $\overline{x y}$ intersects $B$'s interior, which contradicts the hypothesis that $P'$ is an outer approximation. Therefore, $P'$ must have at least $|\Sigma'| \ge c_b (\Delta / \eps)^{(d-1)/2}$ facets, and this completes the proof.
\end{proof} 

{\DualSubsetHelperStmt*}

\begin{proof}
To prove part~(i), observe first that, by the preconditions of Lemma~\ref{lem:dual}, $\delta = \sqrt{\eps}$ and $\eps \le 1/8$, and so $\delta \sqrt{8} \le 1$. Since the supporting line passing through $t$ passes on or below the origin, we have $t_y/t_x \le t_{\theta}$. (It may be helpful to refer to Figure~\ref{fig:dual-subset}.) Since $w_x \le t_x$, by convexity we have $w_y/w_x \le t_y/t_x$. Combining this, we have $w_y/w_x \le t_{\theta} \le \delta \sqrt{8} \le 1$, as desired.

To prove part~(ii), observe that because $t$ is a point of tangency with respect to $p_{\eps}$, $w$ lies on or above the line $p_{\eps} t$. This implies that $w_y \ge -\eps + t_{\theta} w_x$. Since $t_{\theta}$ is positive, $w_x \le (w_y + \eps)/t_{\theta}$. Because $w \notin B_{\delta}$, we know that $w_x^2 + w_y^2 > \delta^2$, which implies that $\max(w_x,w_y) > \delta/\sqrt{2}$. If $w_x > \delta/\sqrt{2}$, then
\[
	\frac{w_y}{w_x} 
		~ \ge ~ \frac{-\eps + t_{\theta} w_x}{w_x} 
		~  =  ~ t_{\theta} - \frac{\eps}{w_x} 
		~  >  ~ t_{\theta} - \frac{\eps\sqrt{2}}{\delta}
		~  =  ~ t_{\theta} - \delta \sqrt{2}.
\]
By our bound on $t_{\theta}$, we have $\delta\sqrt{2} < t_{\theta}/2$, implying that $w_y/w_x > t_{\theta} - t_{\theta}/2 = t_{\theta}/2$, as desired. 

On the other hand, if $w_y > \delta/\sqrt{2}$, then 
\[
	\frac{w_y}{w_x} 
		~ \ge ~ \frac{w_y}{(w_y + \eps)/t_{\theta}}
		~  =  ~ \frac{t_{\theta}}{1 + \eps/w_y}
		~  >  ~ \frac{t_{\theta}}{1 + \eps\sqrt{2}/\delta}
		~  =  ~ \frac{t_{\theta}}{1 + \delta\sqrt{2}}.
\]
The definition of $\delta$ and our bounds on $\eps$ imply that $1 + \delta\sqrt{2} < 2$, and so $w_y/w_x \ge t_{\theta}/2$, which completes the proof of~(ii).
\end{proof}

{\BaseCapAreaStmt*}

\begin{proof}
Let $D^{\downarrow}_{\delta}$ denote the orthogonal projection of $D_{\delta}$ onto $h$. Since the area of the orthogonal projection of a set cannot exceed the area of the original set, $\area(D^{\downarrow}_{\delta}) \le \area(D_{\delta})$. Given a set $X$ in real space, let $\inv{2} X$ denote the set $\{v/2 : v \in X\}$. We assert that $\inv{2} \Gamma_{\delta} \subseteq D^{\downarrow}_{\delta}$ (using the aforementioned coordinate system on $h$ whose origin is at $p$). Since $\Gamma_{\delta}$ lies in $\RE^{d-1}$, from this assertion we will have
\[
	\area(\Gamma_{\delta})
		~  =  ~ 2^{d-1} \cdot \area\left(\textstyle \inv{2} \Gamma_{\delta}\right)
		~ \le ~ 2^{d-1} \cdot \area(D^{\downarrow}_{\delta})
		~ \le ~ 2^{d-1} \cdot \area(D_{\delta}),
\]
from which the result will follow.

It remains to prove the assertion. Consider any $v \in \Gamma_{\delta}$, and recall that $B_{\delta}$ denotes the ball of radius $\delta = \sqrt{\eps}$ centered at $p$. By definition of $\Gamma_{\delta}$, $v \in B_{\delta}$, and so $v_x \le \delta$. It suffices to show that $v/2 \in D^{\downarrow}_{\delta}$, that is, there exists a point $w' \in D_{\delta}$ whose projection onto $h$ is $v/2$. Note that this is trivially true if $v = p$, and so we may assume that $v \ne p$. Under this assumption, consider the unique plane $\Phi$ defined by the points $p$, $p_{\eps}$, and $v$. We will impose the same coordinate system on $\Phi$ as in the proof of Lemma~\ref{lem:dual-subset}, with $p$ as origin and $p_{\eps}$ on the negative $y$-axis. Henceforth, we restrict our attention to this plane.

We may assume by symmetry that $v$ lies in the positive $x$-halfplane. By definition of $\Gamma_{\delta}$, there exists a point $t$ on the lower surface of $\partial K$ such that the line segment $p_{\eps} t$ passes through $v$. Let $w$ be the point of $\partial K$ whose orthogonal projection onto $h$ equals $v$ (see Figure~\ref{fig:base-cap-area}). Note that $w$ exists along the portion of $\partial K \cap \Phi$ between $p$ and $t$, and therefore $0 < w_x \le t_x$. Recall that $w_{\theta}$ and $t_{\theta}$ denote the slopes of the support lines at $w$ and $t$, respectively. By basic properties of convexity, we have $w_y/w_x \le w_{\theta} \le t_{\theta}$.

\begin{figure}[htbp]
  \centerline{\includegraphics[scale=0.40]{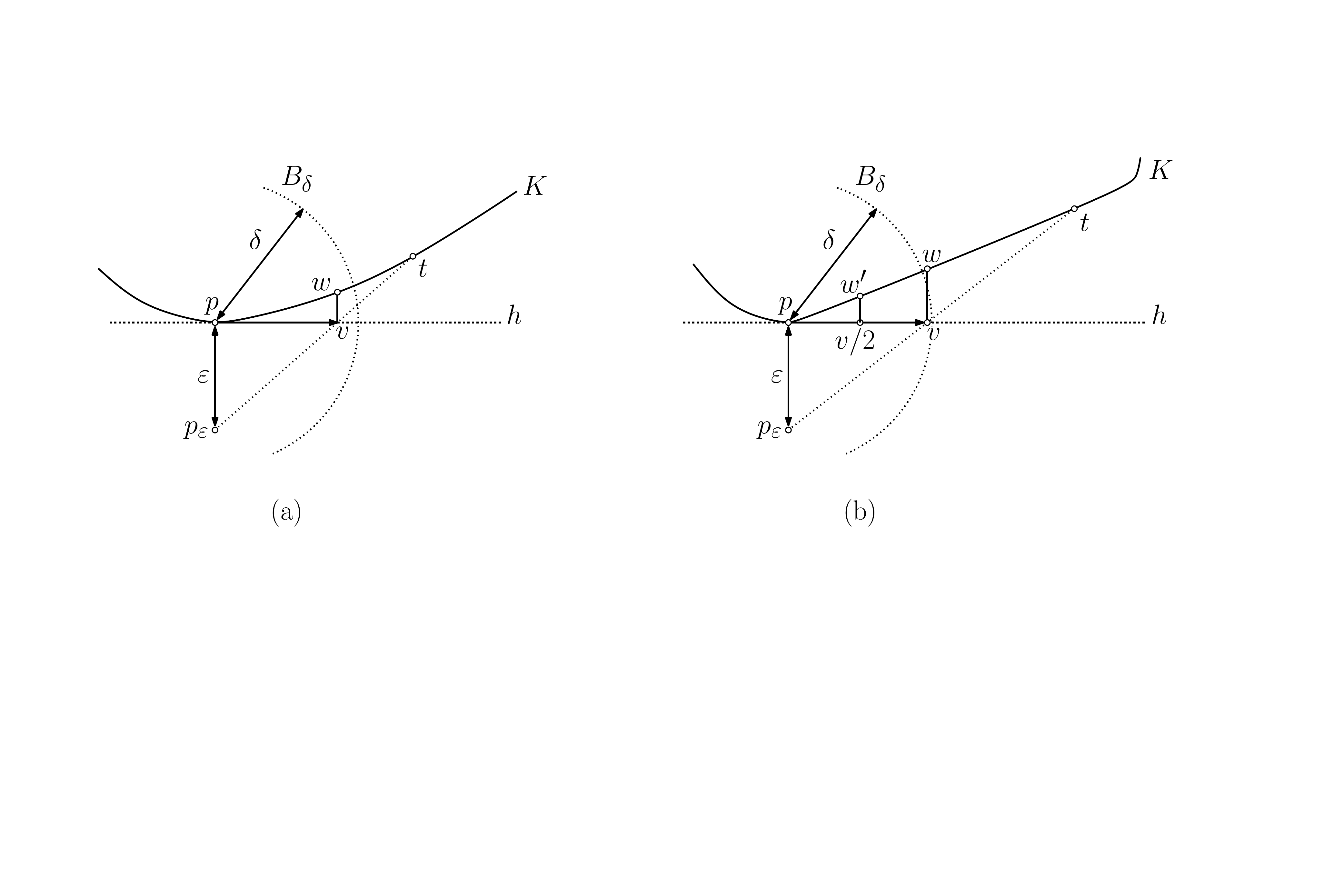}}
  \caption{Proof of Lemma~\ref{lem:base-cap-area}.}
  \label{fig:base-cap-area}
\end{figure}

We consider two cases, based on the location of $w$. First, if $w \in B_{\delta}$, then $w \in D_{\delta}$ (see Figure~\ref{fig:base-cap-area}(a)). This implies that $v \in D^{\downarrow}_{\delta}$. Clearly, $D^{\downarrow}_{\delta}$ is star-shaped with respect to $p$, and therefore $v/2 \in D^{\downarrow}_{\delta}$.

On the other hand, if $w \notin B_{\delta}$, then $w_x^2 + w_y^2 > \delta^2$ (see Figure~\ref{fig:base-cap-area}(b)). Because $v$ lies on the line $p_{\eps} t$, we have $v_y = -\eps + t_{\theta} v_x$. Also, since $v_x = w_x$ and $v_y = 0$, we have $t_{\theta} w_x = \eps$. Combined with the fact that $w_y/w_x \le t_{\theta}$, this yields $w_y \le \eps$. Let $w'$ be the point of $\partial K$ whose orthogonal projection onto $h$ is $v/2$. By convexity, we have $w'_y \le w_y \le \eps$ and $w'_x = v_x/2 \le \delta/2$. Because $\delta^2 = \eps \le \inv{8}$, we also have $\eps^2 \le \delta^2/8$. Thus, we obtain
\[
	\|w'\|^2
		~  =  ~ (w'_x)^2 + (w'_y)^2 
		~ \le ~ \left( \frac{\delta}{2} \right)^{\N 2} + \eps^2
		~ \le ~ \delta^2 \left( \inv{4} + \inv{8} \right) 
		~  <  ~ \delta^2,
\]
and therefore $w' \in B_{\delta}$, which implies that $v/2 \in D^{\downarrow}_{\delta}$, as desired.
\end{proof}

{\SmoothAreaStmt*}
\begin{proof}
Recall that $\eps' = \eps/2$. The proof is based on two assertions: 
\begin{enumerate}
\setlength{\itemsep}{-0.5ex}%
\setlength{\parsep}{0pt}%
\item[(1)] $\Gamma_{\delta}$ contains a $(d-1)$-dimensional Euclidean ball of radius $\eps'\gamma/2$.

\item[(2)] $\Gamma'_{\delta} \subseteq \Gamma_{\delta} \oplus \eps'$ (treating $\Gamma'_{\delta}$ and $\Gamma_{\delta}$ as subsets of $\RE^{d-1}$).
\end{enumerate}
To see why this suffices, observe that by (1), if we scale $\Gamma_{\delta}$ by a factor of $1 + 2/\gamma$ about the center of this ball, the scaled copy contains $\Gamma_{\delta} \oplus \eps'$. (To see why, observe that each supporting hyperplane of the original body is at distance at least $\eps'\gamma/2$ from the ball's center, and so the scaled body has a parallel supporting hyperplane at distance at least $(\eps'\gamma/2)(2/\gamma) = \eps'$ from the original supporting hyperplane.) Scaling increases $\Gamma_{\delta}$'s area by a factor of $(1 + 2/\gamma)^{d-1}$. By (2), $\Gamma'_{\delta}$ is contained within this scaled copy, and therefore its area cannot be larger. The result follows by setting $c = (1 + 2/\gamma)^{d-1}$.

We first prove assertion (1). Our approach will be to show that $K$ contains a ball that is sufficiently large and sufficiently close to $p$ that it contributes a ball of the desired radius to $\Gamma_{\delta}$. Since $K$ is $\gamma$-fat, there exist two concentric balls, $B^-$ and $B^+$, whose radii differ by a factor of $\gamma$ such that $B^- \subseteq K \subseteq B^+$ (see Figure~\ref{fig:smooth-area-1}(a)). Let $r$ and $r/\gamma$ denote the radii of $B^-$ and $B^+$, respectively, and let $y$ denote the distance of their common center to $h$. $B^-$ is the natural candidate for the desired ball, but it may be too far from $p$ to contribute to $\Gamma_{\delta}$ (due to restriction). Since $\diam(K) \ge 2 \eps$, $K$ does not lie entirely within a ball of radius $\eps$ centered at $p$. Let us scale space uniformly about $p$ so that $K$ just barely fits within this ball. Call the scaled body $K_0$, and let $B_0^-$ and $B_0^+$ denote the scaled copies of $B^-$ and $B^+$, respectively (see Figure~\ref{fig:smooth-area-1}(b)). We will show that $B_0^-$ is the desired ball. 

\begin{figure}[htbp]
  \centerline{\includegraphics[scale=0.40]{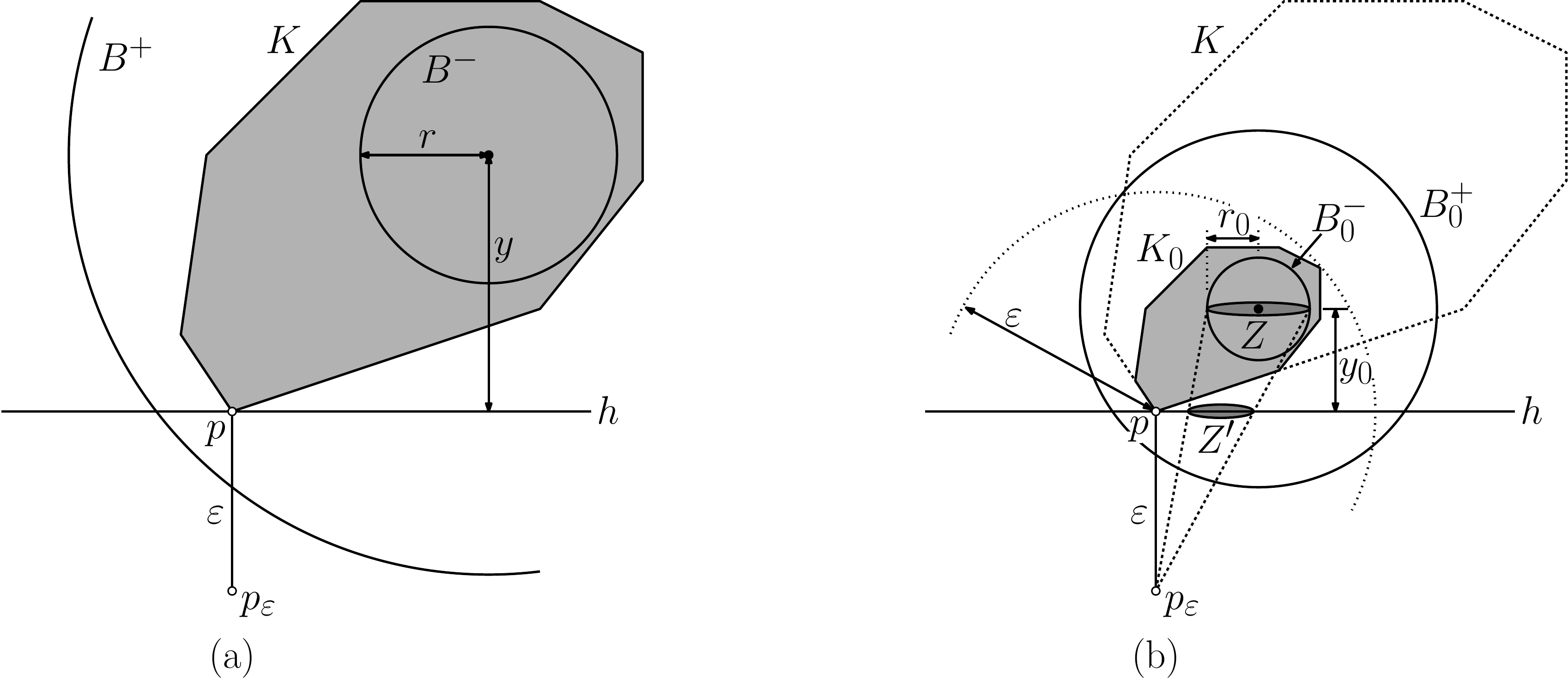}}
  \caption{Proof of assertion (1) of Lemma~\ref{lem:smooth-area}.}
  \label{fig:smooth-area-1}
\end{figure}

Since the scale factor is not greater than unity, $B_0^- \subseteq K_0 \subseteq K$. Let $r_0$ denote the radius of $B_0^-$, and let  $y_0$ denote its center's distance to $h$. Because the center of $B^-$ lies within $K$, we have
\[
	y
		~ \le ~ \diam(K)
		~ \le ~ \diam(B^+)
		~  =  ~ \frac{2 r}{\gamma},
\]
which implies that $y/r \le 2/\gamma$. Because the scaling is uniform, it follows that $y_0/r_0 = y/r \le 2/\gamma$. Another consequence of scaling is that
\[
	\eps
		~ \le ~ \diam(K_0) 
		~ \le ~ \diam(B_0^+) 
		~  =  ~ \frac{\diam(B_0^-)}{\gamma} 
		~  =  ~ \frac{2 r_0}{\gamma},
\]
which implies that $\eps/r_0 \le 2/\gamma$. Let $Z$ denote the $(d-1)$-dimensional horizontal diametrical disk of radius $r_0$ that lies within $B_0^-$ (see Figure~\ref{fig:smooth-area-1}(b)). It is easy to verify that if we project $Z$ centrally towards $p_{\eps}$ onto $h$, the result is a $(d-1)$-dimensional ball of radius $r' = r_0 \eps/(y_0 + \eps)$, which we denote by $Z'$. Clearly $Z'$ lies within the unrestricted dual-cap base. As a result of scaling, both $Z$ and $Z'$ lie within a ball of radius $\eps$ centered at $p$. Since $\eps < 1$, we have $\delta > \eps$, and therefore $Z'$ lies within the restricted base, that is $Z' \subseteq \Gamma_{\delta}$. By the above inequalities, the radius of $Z'$ is
\[
	\frac{r_0 \eps}{y_0 + \eps}
		~  =  ~ \frac{\eps}{y_0/r_0 + \eps/r_0}
		~ \ge ~ \frac{\eps}{2/\gamma + 2/\gamma}
		~  =  ~ \frac{\eps}{4/\gamma}
		~  =  ~ \frac{\eps' \gamma}{2},
\]
which establishes assertion~(1).

Next, we prove assertion~(2). Consider any point $v' \in \Gamma'_{\delta}$. By definition of $\Gamma'_{\delta}$ (and thinking of $v'$ as a vector in $\RE^d$), $v'$ is naturally associated with a point $q' \in \partial K'$ by shooting a ray from $p_{\eps}$ through $v'$ until it hits $\partial K'$ (see Figure~\ref{fig:smooth-area-2}). Since $K' = K \oplus \eps'$, there exists a unique point $q \in \partial K$ that is at distance $\eps'$ from $q'$. Let $u \in \RE^d$ denote the vector $q' - q$. Similarly, $q$ is associated with a point $v$ located where the line segment $q p_{\eps}$ intersects $h$. (The point $v$ lies on the base of the unrestricted dual cap $\Gamma$, but not necessarily on the restricted base $\Gamma_{\delta}$.) 

Now, thinking of $v$ and $v'$ as vectors in $\RE^{d-1}$, we claim (i) that there exists a scalar $0 \le \alpha' \le 1$ such that $v'$ lies within distance $\eps'$ of $\alpha' v$, and (ii) $\alpha' v$ is of length at most $\delta$. From (i) and the fact that $\Gamma$ is star-shaped with respect to the origin (namely $p$) it follows that $\alpha' v \in \Gamma$. From (ii), we have $\alpha' v \in \Gamma_{\delta}$. From these two claims we conclude that each point $v' \in \Gamma'_{\delta}$ is within distance $\eps'$ of a point in $\Gamma_{\delta}$, which implies assertion~(2). The remainder of the proof involves proving these two claims.

\begin{figure}[htbp]
  \centerline{\includegraphics[scale=0.40]{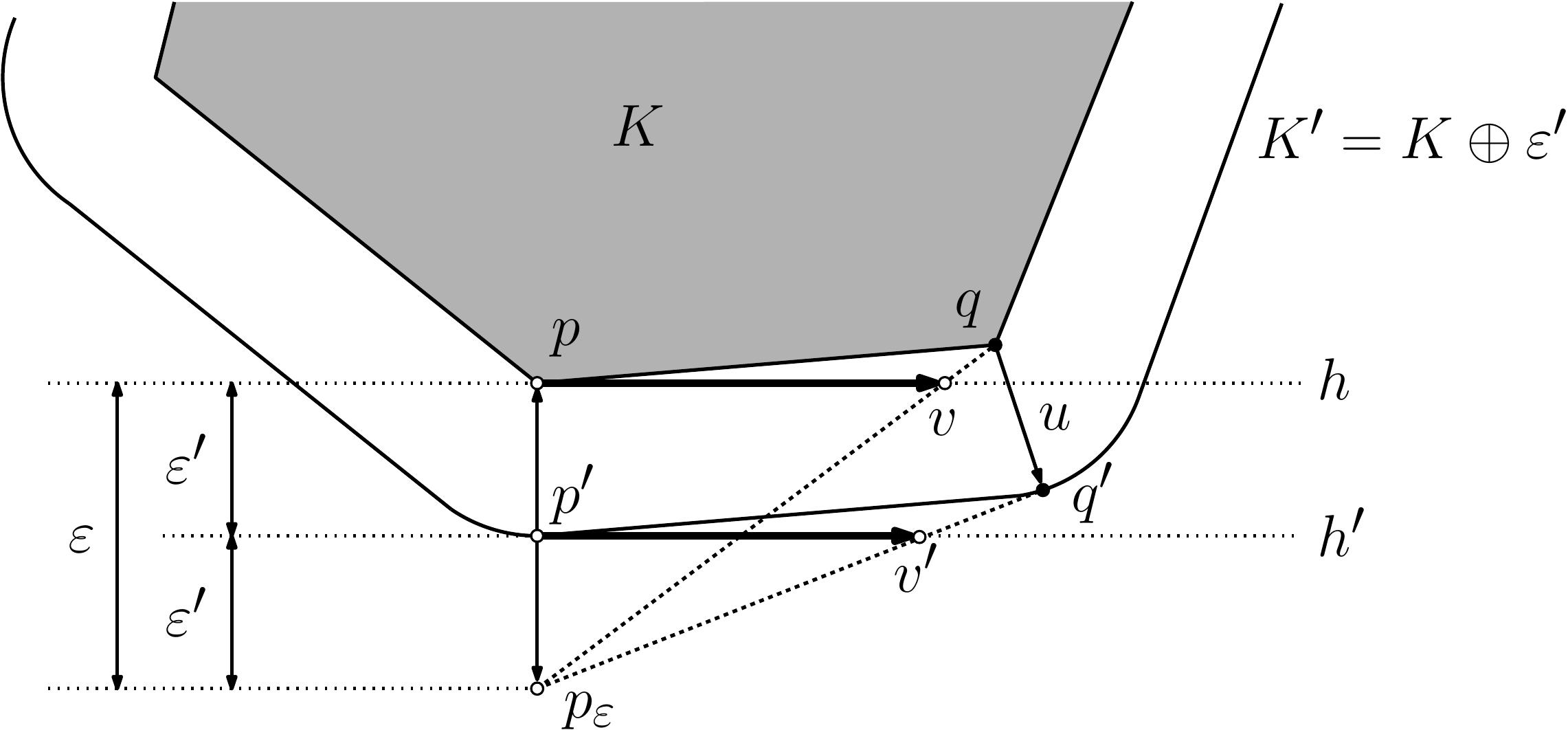}}
  \caption{Proof of assertion (2) of Lemma~\ref{lem:smooth-area}.}
  \label{fig:smooth-area-2}
\end{figure}

To establish claim~(i), let us consider a coordinate frame whose origin is $p_{\eps}$, whose $d$th coordinate axis points vertically upwards, and whose other $d-1$ coordinate vectors are taken from $h$'s coordinate frame. We may express any point $q \in \RE^d$ as a pair $(\overline{q}, q_d)$, where $\overline{q} \in \RE^{d-1}$ and $q_d$ is the vertical distance from $p_{\eps}$ to $q$. Define the transformation $T : \RE^d \rightarrow \RE^{d-1}$ that projects a point $q \in \RE^d$ centrally towards $p_{\eps}$ until it intersects $h$. Also define an analogous transformation $T'$ that projects a point $q'$ onto $h'$. It is easy to verify that
\[
	T(q) ~ = ~ \frac{\eps}{q_d} \cdot \overline{q}
	\qquad\mbox{and}\qquad
	T'(q') ~ = ~ \frac{\eps'}{q'_d} \cdot \overline{q}'.
\]
Since $q' = q + u$ and by definition of $\eps'$, we have
\begin{eqnarray*}
	T'(q')
		& = & T'(q + u)
		~ = ~ \frac{\eps'}{q_d + u_d} (\overline{q} + \overline{u}) 
		~ = ~ \frac{q_d}{2(q_d + u_d)} \left( \frac{\eps}{q_d} \cdot \overline{q} \right) + \frac{\eps'}{q_d + u_d} \cdot \overline{u} \\
		& = & \inv{2(1 + u_d/q_d)} T(q) + \frac{\eps'}{q_d + u_d} \cdot \overline{u}.
\end{eqnarray*}
Since $\|u\| = \eps'$, we have $-\eps' \le u_d \le \eps'$, and since $q \in \partial K$, we have $q_d \ge \eps$. It follows that 
\[
	\inv{3} ~\le~ \inv{2(1 + u_d/q_d)} ~\le~ 1
	\qquad\mbox{and}\qquad
	0 ~\le~ \frac{\eps'}{q_d + u_d} ~\le~ 1.
\]
Therefore, for some scalars $\inv{3} \le \alpha' \le 1$ and $0 \le \alpha'' \le 1$, we have
\[
	v'
		~ = ~ T'(q')
		~ = ~ \alpha' T(q) + \alpha'' \P \overline{u}
		~ = ~ \alpha' v + \alpha'' \P \overline{u}.
\]
Since $\|\overline{u}\| \le \eps'$ and $\alpha'' \le 1$, it follows that $v'$ lies within distance $\eps'$ of $\alpha' v$, which establishes claim~(i). 

To establish claim~(ii), observe that because $v' \in \Gamma'_{\delta}$ it lies within distance $\delta'$ of the origin on $h'$ (namely, $p'$). Therefore, $\alpha' v$ lies within distance $\delta' + \eps'$ of the origin on $h$ (namely, $p$). Since $\eps' = \eps/2 \le 1/16$ and $\delta' = \delta/\sqrt{2} = \sqrt{\eps/2}$, it is easy to verify that $\delta' + \eps' \le \delta$. Since $p$ is the origin, this implies that $\|\alpha' v\| \le \delta$ which establishes claim~(ii) and completes the proof.
\end{proof}

{\SmoothToPolarStmt*}
\begin{proof}
We begin by showing that $\Gamma'_{\delta}$ contains a $(d-1)$-dimensional Euclidean ball (centered at $p'$) of radius $\eps'/\sqrt{3}$. To see this, observe that because $K' = K \oplus \eps'$, $K'$ contains a ball $B$ of radius $\eps'$ that has $p'$ on its boundary (see Figure~\ref{fig:dual-2}(a)). By similar triangles it is easy to show that the hyperplane $h'$ intersects the ``ice cream cone'' shaped structure $\conv(B \cup \{p_{\eps}\})$ in a $(d-1)$-dimensional Euclidean ball of radius $\eps'/\sqrt{3}$.  This ball clearly lies within the unrestricted dual base, and since $\delta' = \sqrt{\eps'} < 1$, we have $\eps'/\sqrt{3} < (\delta')^2 < \delta'$, and so this ball also lies within the restricted dual base, $\Gamma'_{\delta}$.

\begin{figure}[htbp]
  \centerline{\includegraphics[scale=0.40]{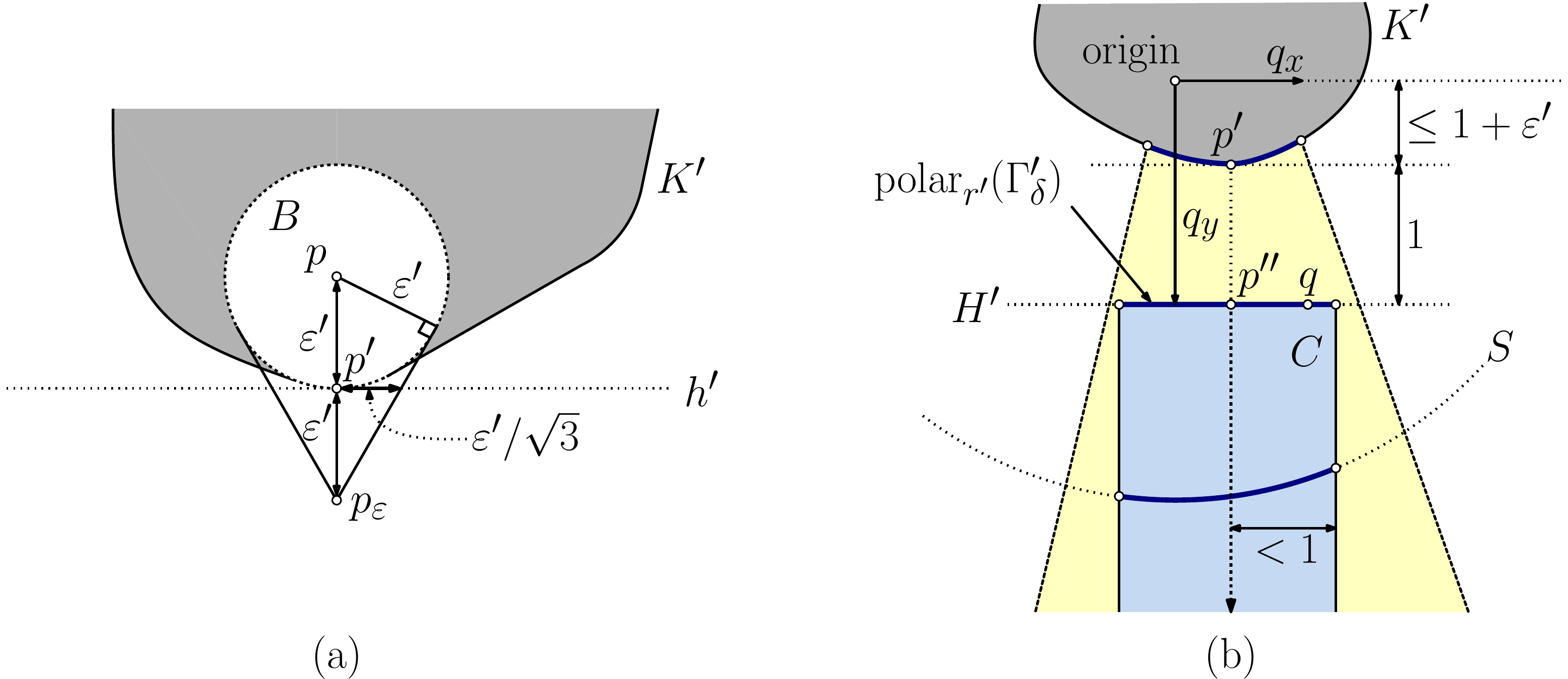}}
  \caption{Proof of Lemma~\ref{lem:smooth-to-polar}.}
  \label{fig:dual-2}
\end{figure}

For any $y \ge 1$, recall that $H^{(y)}$ is the hyperplane that is parallel to $h$ and at distance $y$ below $h$. Let $H' = H^{(1+\eps')}$ denote the hyperplane that is at unit distance below $p'$. Let $p''$ denote the vertical projection of $p'$ onto $H'$. Recall that we endow $h$ and $H'$ with parallel coordinate frames with origins at $p$ and $p''$, respectively. As a consequence of the above observation, for each vector $v$ of length $\eps'/\sqrt{3}$ in $\RE^{d-1}$, there is a halfspace bounding $\polarX{r'}{\Gamma'_{\delta}}$ that is orthogonal to $v$ and lies at distance $(\eps'/8)/\|v\| = (\eps'/8)/(\eps'/\sqrt{3}) = \sqrt{3}/8 < 1$ from the origin. Thus, $\polarX{r'}{\Gamma'_{\delta}}$ (when viewed as a subset of $H'$) is contained within a $(d-1)$-dimensional unit ball centered at $p''$ (see Figure~\ref{fig:dual-2}(b)).

Let $C$ denote the semi-infinite generalized cylinder whose horizontal cross section is $\polarX{r'}{\Gamma'_{\delta}}$, whose upper surface lies on $H'$, and which extends vertically downwards  (see Figure~\ref{fig:dual-2}(b)). Lemma~\ref{lem:dual-subset} (applied now to $K'$, $\eps'$, $\polarX{r'}{\Gamma'_{\delta}}$ and $\Vor(D'_{\delta}) \cap H'$) implies that $\polarX{r'}{\Gamma'_{\delta}} \subseteq \Vor(D'_{\delta}) \cap H'$. Since this applies not only to $H'$ but to any hyperplane lying below $H'$, it follows that $C \subseteq \Vor(D'_{\delta})$.

We will show that the orthogonal projection of $S \cap C$ onto $H'$ is equal to $\polarX{r'}{\Gamma'_{\delta}}$. It suffices to show that the base of $C$ (which lies on $H'$) lies entirely within $S$. We can express any point $q$ on $C$'s base as the sum of two perpendicular vectors $q_x + q_y$, where $q_x$ is horizontal and $q_y$ is vertical. Since $p$ lies within unit distance of the origin, $p'$ lies below $p$ at distance $\eps' = \frac{\eps}{2} \le \inv{16}$, and $H'$ lies at unit distance below $p'$, we have $\|q_y\| \le 1 + \inv{16} + 1 = 2 + \inv{16}$. As observed above, every point of $\polarX{r'}{\Gamma'_{\delta}}$ lies within unit distance of $p''$, and since $p$ lies within unit distance of the origin, we have $\|q_x\| \le 1 + 1 = 2$. Therefore, the squared distance from $q$ to the origin is
\[
	\|q_x\|^2 + \|q_y\|^2
		~ \le ~ \left( 2 + \inv{16} \right)^{\N 2} + 2^2
		~  <  ~ 9,
\]
which implies that $q$ lies within the sphere $S$ of radius $3$. Therefore, $C$'s base lies entirely within $S$.

Since $S \cap C \subseteq \Vor(D'_{\delta}) \cap S$, and since the area of the orthogonal projection of a set cannot be larger than the area of the original set, we have
\[
	\area(\Vor(D'_{\delta}) \cap S)
		~ \ge ~ \area(S \cap C)
		~ \ge ~ \area(\polarX{r'}{\Gamma'_{\delta}}),
\]
as desired.
\end{proof}
\end{document}